\documentclass[11pt]{article}
\setlength{\oddsidemargin}{0cm}
\setlength{\evensidemargin}{0cm}
\setlength{\textwidth}{16cm}

\addtolength{\topmargin}{-2cm}
\addtolength{\textheight}{3cm}

\usepackage{amsmath,amsthm,amsfonts,amssymb, graphicx,mathabx,epsfig}
\usepackage{bbm}
\usepackage{bm}
\usepackage{authblk}
\usepackage{setspace}
\usepackage[usenames]{color}
\usepackage[active]{srcltx}
\usepackage{hyperref}
\usepackage{graphicx,color}
\usepackage{footmisc}
\usepackage{amsmath}
\usepackage{amsfonts}
\usepackage{amssymb}
\usepackage{bbm}
\usepackage{epstopdf}
\usepackage{color}
\usepackage{dsfont}
\numberwithin{equation}{section}
\usepackage{tikz}
\usepackage{pgfplots}
\usepackage{subfig}
\usetikzlibrary{fit}
\usetikzlibrary{shapes.geometric}
\usetikzlibrary{decorations.pathmorphing}
\usepackage{slashed}
\usepackage{color}



\renewcommand{\d}{\mathrm{d}}

\newcommand{\e}{\varepsilon}
\newcommand{\dbar}{\kern-.1em{\raise.8ex\hbox{ -}}\kern-.6em{d}}

\def\?{\marginpar{not sure}}

\newcommand{\comment}[1]{}

\setcounter{secnumdepth}{2}
\newtheorem{thm}{Theorem}[section]

\newtheorem{lemma}[thm]{Lemma}
\newtheorem{prop}[thm]{Proposition}

\newtheorem{rem}[thm]{Remark}


\def \be{\begin{equation}}
\def \ee{\end{equation}}
\def \ben{\begin{equation*}}
\def \een{\end{equation*}}
\def \bea{\begin{eqnarray}}
\def \eea{\end{eqnarray}}

\def\qed{\hfill\raise1pt\hbox{\vrule height5pt width5pt depth0pt}}
\def\nn{\nonumber}

\def\Tr{\mathrm{Tr}}


\def\L{\Lambda}
\def\l{\lambda}

\def\s{\sigma}
\def\a{\alpha}
\def\b{\beta}
\def\d{\delta}
\def\D{\Delta}
\def\m{\mu}

\def\G{\Gamma}
\def\e{\varepsilon}
\def\t{\tau}


\definecolor{light}{gray}{.75}


\let\a=\alpha \let\b=\beta    \let\d=\delta \let\e=\varepsilon
       \let\l=\lambda
\let\m=\mu    \let\n=\nu             
\let\s=\sigma \let\t=\tau    
   \let\o=\omega
\let\G=\Gamma \let\D=\Delta  \let\L=\Lambda 
         
\let\O=\Omega

\newcommand{\xx}{{\bf x}}
\newcommand{\yy}{{\bf y}}
\newcommand{\zz}{{\bf z}}
\newcommand{\kk}{{\bf k}}

\def\nn{\nonumber}


\def\\{\hfill\break}
\def\={:=}

\let\0=\noindent

\def\tende#1{\,\vtop{\ialign{##\crcr\rightarrowfill\crcr\noalign{\kern-1pt
    \nointerlineskip} \hskip3.pt${\scriptstyle #1}$\hskip3.pt\crcr}}\,}
\def\otto{\,{\kern-1.truept\leftarrow\kern-5.truept\to\kern-1.truept}\,}

\def\to{\rightarrow}

\def\qed{\hfill\raise1pt\hbox{\vrule height5pt width5pt depth0pt}}

\def\be{\begin{equation}}
\def\ee{\end{equation}}
\def\bea{\begin{eqnarray}}
\def\eea{\end{eqnarray}}
\def\nn{\nonumber}

\def\Tr{\mathrm{Tr}}

\theoremstyle{plain}

\theoremstyle{definition}

\setcounter{secnumdepth}{5}

\begin{document}
\title{Universal edge transport in interacting Hall systems}

\author[1]{Giovanni Antinucci}
\affil[1]{University of Z\"urich, Institute for Mathematics, Winterthurerstrasse 190, 8057 Z\"urich, Switzerland}
\author[2]{Vieri Mastropietro}
\affil[2]{University of Milano, Department of Mathematics ``F. Enriquez'', Via C. Saldini 50, 20133 Milano, Italy}
\author[1]{Marcello Porta\footnote{New address: University of T\"ubingen, Department of Mathematics, Auf der Morgenstelle 10, 72076 T\"ubingen, Germany.}}

\maketitle

\begin{abstract}
We study the edge transport properties of $2d$ interacting Hall systems, displaying single-mode chiral edge currents. For this class of many-body lattice models, including for instance the interacting Haldane model, we prove the quantization of the edge charge conductance and the bulk-edge correspondence. Instead, the edge Drude weight and the edge susceptibility are interaction-dependent; nevertheless, they satisfy exact universal scaling relations, in agreement with the chiral Luttinger liquid theory. Moreover, charge and spin excitations differ in their velocities, giving rise to the spin-charge separation phenomenon. The analysis is based on exact renormalization group methods, and on a combination of lattice and emergent Ward identities. The invariance of the emergent chiral anomaly under the renormalization group flow plays a crucial role in the proof.
\end{abstract}

\section{Introduction}

It is a recent discovery that insulating systems, in the single-particle approximation, admit a topological classification. Hamiltonians labeled by different topological invariants cannot be continuously connected without destroying their insulating features or breaking their basic symmetry properties. The very first examples of {\it topological insulators} are the integer quantum Hall systems: there, the topological invariant is the bulk Hall conductivity, which can only take integer values \cite{Lau, TKNN}. The simplest example of Hall system is the {\it Haldane model} \cite{Hal}, a graphene-like system describing fermions on the honeycomb lattice exposed to a suitable zero-flux magnetic field.
 
While insulating in the bulk, Hall insulators support {\it edge currents}, \cite{Halp}; see \cite{Ch} for a comprehensive theoretical and experimental review. Bulk and edge transport properties are related by a remarkable duality, the {\it bulk-edge correspondence}, stating the equality of edge conductance and bulk conductivity. This duality is by now well understood from a mathematical viewpoint for singe-particle models, and has been extended to various classes of topological insulators \cite{Hat, SKR, EG, GP, GT, PSB}. 

Much less is known in the presence of many-body interactions, where the single particle description breaks down. If the strength of the interaction is much smaller than the gap of the single-particle Hamiltonian, one expects the bulk transport properties to be unaffected. This is in agreement with formal field-theoretic arguments \cite{CoH, I}, suggesting the universality of the bulk Hall conductivity. The situation becomes much less clear for the gapless edge excitations. Their behavior could be strongly affected by many-body interactions, no matter how small. In the case of single-mode edge currents, the {\it chiral Luttinger model} has been proposed as an effective field theory for the edge states of Hall systems \cite{Wen, F}. This effective continuum theory describes chiral $1+1$ dimensional massless relativistic fermions with short range interactions. Such approximate description suggests, in the weak coupling regime, the universality of edge conductance and the non-universality of other transport coefficients, like the susceptibility $\kappa$ and the Drude weight $D$. Remarkably, the nonuniversal transport coefficients of the chiral Luttinger model are connected by exact scaling relations, stating for instance that $D = \kappa v^2_{c}$, where $v_{c}$ is the dressed charge velocity of the chiral fermions.

The chiral Luttinger model can be solved using an exact mapping into free bosons, both in the spinless and spinful case. This effective theory, however, neglects important effects coming from the nonlinearity of the energy bands, from Umklapp scattering and from the bulk degrees of freedom, which {\it a priori} might affect the transport coefficients. Nevertheless, in analogy with the theory of Luttinger liquids \cite{HalLut}, whose predictions have been rigorously established in several non-solvable models \cite{BM,BM1,BFM, BMdrude, BMchains, BFMhub1, BFMhub2, GMT1, GMT2}, one is naturally led to conjecture that the scaling relations between transport coefficients found in the chiral Luttinger model hold true for a wide universality class, which includes quantum Hall edge states, at least if the interaction strength is not too large. Large interactions are expected to drastically modify the transport properties, see \cite{HK, QZ, HA} for reviews. The bosonic description of the chiral Luttinger model is the basis of a phenomenological theory for the fractional quantum Hall effect \cite{Wen, FK}, in which the edge modes are described by free chiral bosons parametrized by a rational number $\nu$ related to the Hall conductivity, see also \cite{KF, Ch, FS, F}.

Coming now to rigorous results, the geometrical approach of \cite{TKNN, AS2} has been generalized to many-body lattice Hamiltonians in \cite{HM} (see \cite{Ba} for a streamlined proof), where the exact quantization of the bulk Hall conductivity is proven under the assumption that the interacting spectrum is gapped. This assumption has been recently proven in \cite{Hgap}, for Hamiltonians obtained as perturbations of free Fermi gases. On a more field-theoretic side, the approach of \cite{CoH} can be implemented in a rigorous way \cite{GMP}, using constructive Quantum Field Theory methods to prove convergence of perturbation theory and lattice Ward identities to prove universality of the Hall conductivity in the interaction strength. The proof of \cite{GMP} works provided the interaction strength is smaller than the gap in the single-particle spectrum. More recently, the strategy of \cite{GMP} has been improved in \cite{GMPhald, GMPhald2}, to prove quantization of the Hall conductivity arbitrarily close to criticality, for models displaying conical intersections in the spectrum at the critical point, like the Haldane-Hubbard model. The same methods can be used to prove the universality of the longitudinal conductivity for interacting graphene-like models, \cite{GMPcond}, whose spectrum is gapless. Thus, the main advantage of the field-theoretic methods of \cite{GMPcond, GMPhald, GMPhald2} is that they allow to prove universality of transport coefficients {\it without assuming} the presence of a gap in the spectrum.

Concerning the edge transport properties of interacting topogical insulators, no rigorous results are available in the literature. Aim of this paper is to investigate the edge modes of a class of interacting Hall systems defined on a cylinder, supporting single-mode edge currents. That is, each edge of the cylinder supports one edge state, modulo the spin degeneracy. A famous example of such class of systems is the interacting Haldane model, in the nontrivial topological phases \cite{Hao}. We prove the exact quantization of the edge conductance, for weak interactions: all interaction corrections cancel out. Combined with \cite{GMP} and with the noninteracting bulk-edge correspondence \cite{Hat, SKR, EG}, this result provides the first proof of the bulk-edge correspondence for an interacting many-body quantum system. Moreover, we also consider the edge Drude weight and the edge susceptibility, both for charge and spin degrees of freedom; we find explicit expressions for these quantities, which turn out to be nonuniversal in the coupling strength. Nevertheless, the Drude weight $D$ and the susceptibility $\kappa$ satisfy the universal scaling relation $D = \kappa v_{c}^{2}$, as in the Luttinger model. Finally, we compute the two-point function, and we show that it exhibits spin-charge separation.

Notice that our analysis does not extend in a straightforward way to the case of multi-edge currents. The reason being the scattering between different edge modes. In the renormalization group terminology, the edge states scattering is a marginal process; our method allows to control the scattering between edge states with the same velocity, thanks to the comparison with the chiral Luttinger model (see below), but does not allow to control the scattering of edge states with different velocities. We leave the generalization to multi-edge channels Hall systems as a very interesting open problem, on which we plan to come back in the future.

The paper is organized as follows. In Section \ref{sec:setting} we introduce the class of interacting lattice models we will consider, and we define bulk and edge transport coefficients, in the linear response regime. In Section \ref{sec:nonintQH} we recall some known facts about noninteracting Hall systems. Then, in Section \ref{sec:main} we present our main result, Theorem \ref{thm:1}. In the rest of the paper, we discuss the proof of Theorem \ref{thm:1}. In Section \ref{sec:fint} we introduce a functional integral representation for fermionic lattice models, and in particular in Section \ref{sec:red1d} we derive a rigorous relationship between the model of interest and an interacting one-dimensional quantum field theory. This result actually applies to models with a general number of edge states. Starting from Section \ref{sec:ref}, we restrict the attention to single-channel Hall systems. In Section \ref{sec:ref} we introduce the chiral Luttinger model, rigorously constructed by renormalization group methods in \cite{FM, BFMhub2}, which plays the role of reference model in our proof, and in Proposition \ref{prp:relref} we state the precise connection between the correlation function of the reference model and those of the lattice model. This allows to use the reference model, whose correlations can be computed explicitly, to describe the large scale properties of the edge excitations. In Section \ref{sec:WI} we derive exact Ward identities for both the lattice and the reference model, following from $U(1)$ gauge symmetry. In Section \ref{sec:proof} we prove our main result, Theorem \ref{thm:1}, using the connection with the reference model, Proposition \ref{prp:relref}, and the Ward identities of Section \ref{sec:WI}. Finally, in Section \ref{sec:RG} we introduce an exact RG scheme, that allows to prove Proposition \ref{prp:relref} in Section \ref{sec:relref}. This is done by tuning the bare parameters of the reference model in order to match the asymptotic behavior of the lattice correlations, up to multiplicative and additive finite renormalization. All these finite renormalizations turn out to be completely determined by the lattice and emergent Ward identities. The advantage of the reference model with respect to the lattice one is the presence of extra, chiral, Ward identities. These relations are {\it anomalous}, and their anomaly satisfies a nonrenormalization property, analogous to the Adler-Bardeen theorem \cite{AB}, see \cite{M2} for a rigorous analysis in one-dimensional system. As a result, the anomaly is linear in the bare coupling, with an explicit prefactor. This, together with the computations of the finite renormalizations relating lattice and reference model, are the crucial facts behind the explicit expressions for edge transport coefficients of Theorem \ref{thm:1} and the validity of universal scaling relations. Finally, in Appendix \ref{app:1d} we collect some technical results needed for the RG analysis, and in Appendix \ref{app:wick} we prove a rigorous version of the Wick rotation for the edge transport coefficients.

\section{Setting}\label{sec:setting}

In this section we will introduce the class of interacting lattice models we will be interested in. In Section \ref{sec:1}, we will start by defining the models in first quantization. In Section \ref{sec:2ndqu}, we will switch to a grand-canonical Fock space description, and in Section \ref{sec:edgetrans} we will define bulk and edge transport coefficients.

\subsection{Many-body fermionic lattice models}\label{sec:1}

Given $L\in \mathbb{N}$, we consider a system of $N$ interacting fermions on a finite square lattice of side $L$:
\be
\L_{L} :=\{ \vec x \in \mathbb{Z}^{2} \mid \vec x = n_{1} \vec e_{1} + n_{2} \vec e_{2}\;,\quad 0\leq n_{i} \leq L\;,\quad i=1,2\}
\ee
where $\vec e_{1} = (1,\, 0)$ and $\vec e_{2} = (0,\, 1)$. For the moment, we leave the boundary conditions unspecified. At first, suppose that $N=1$. The single particle wave function is denoted by $\psi \equiv \psi(\vec x, r)$, with $\vec x\in \L_{L}$ and $r = 1,\ldots, M$ an internal degree of freedom. The spin of the particle, $\s = \uparrow\downarrow$, is an example of possible internal degree of freedom. The wave function satisfies the usual normalization condition $\|\psi\|_{2}^{2} = \sum_{\vec x, r} |\psi(\vec x, r)|^2 = 1$. The dynamics of the particle is generated by a Schr\"odinger operator $H: \frak{h}_{L}\to \frak{h}_{L}$, with $\frak{h}_{L} = \mathbb{C}^{L^2} \otimes \mathbb{C}^{M}$ the single particle Hilbert space. 

Suppose now that $N>1$. The state of the system is described by a normalized, antisymmetric wave function $\psi_{N}\equiv \psi_{N}(\vec x_{1}, r_{1}; \ldots; \vec x_{N}, r_{N})$, an element of $\frak{h}_{L}^{\wedge N}$ with $\wedge$ the antisymmetric tensor product. The dynamics of the $N$-particle system is generated by the many-body Hamiltonian $H_{N}: \frak{h}^{\wedge N}_{L}\to \frak{h}^{\wedge N}_{L}$:
\be\label{eq:HN}
H_{N} := \sum_{i=1}^{N} H^{(i)} + \lambda \sum_{i<j}^{N} w^{(ij)}
\ee
where $H^{(i)} := \mathbbm{1}^{\otimes (i-1)}\otimes H\otimes \mathbbm{1}^{\otimes(N-i)}$ with $H$ the single particle Hamiltonian and $\mathbbm{1}$ the identity on $\mathfrak{h}_{L}$; $w^{(ij)} := w(\hat x^{(i)}, \hat x^{(j)})$ is the pair interaction among the particles $i$, $j$, with $\hat x^{(i)} := \mathbbm{1}^{\otimes(i-1)} \otimes \hat x\otimes \mathbbm{1}^{\otimes(N-i)}$ where $\hat x$ is the position operator, and with $w(\vec x, \vec y)$ an $M\times M$ matrix; $\l\in \mathbb{R}$ is the coupling constant. In the following, we will assume that both the single particle Hamiltonian and the interaction potential are finite range. More precisely, we shall consider Hamiltonians allowing hoppings between nearest and next-to-nearest neighbours: $H_{rr'}(\vec x, \vec y) = w_{rr'}(\vec x, \vec y) = 0$ whenever $\| \vec x - \vec y \| >\sqrt{2}$, with $\|\cdot\|$ the Euclidean distance on the lattice. This is not a loss of generality: one can always enlarge the number of internal degrees of freedom $M$ in such a way that this holds true.

We assume periodic boundary conditions and translation invariance in the $\vec e_{1}$ direction. That is, the kernels of the single particle Schr\"odinger operator and of the interaction potential have the form:
\be
H_{rr'}(\vec x,\vec y) \equiv H_{rr'}(x_{1} - y_{1}; x_{2}, y_{2})\;,\qquad w_{rr'}(\vec x, \vec y) \equiv w_{rr'}(x_{1} - y_{1}; x_{2}, y_{2})\;,
\ee
and $H_{rr'}(\vec x, \vec y) \equiv H_{rr'}(\vec x + n L \vec e_{1}, \vec y + m L \vec e_{1})$, $w_{rr'}(\vec x, \vec y) \equiv w_{rr'}(\vec x + nL \vec e_{1}, \vec y+ mL \vec e_{1})$ for all $n, m\in \mathbb{Z}$. Let $S^{1}$ the circle of length $2\pi$, and let $S^{1}_{L} = S^{1}\cap \frac{2\pi}{L} \mathbb{Z}$ its discretization of mesh $2\pi/L$. Let $k_{1}\in S^{1}_{L}$. We define the Fourier transforms:
\be
\hat H_{rr'}(k_{1}; x_{2}, y_{2}) := \sum_{z_{1}=0}^{L} e^{i  k_{1} z_{1}} H_{rr'}(z_{1}; x_{2}, y_{2})\;,\qquad \hat w_{rr'}(k_{1}; x_{2}, y_{2}) := \sum_{z_{1}=0}^{L} e^{i k_{1} z_{1}} w_{rr'}(z_{1}; x_{2}, y_{2})\;.
\ee
For fixed $k_{1}$, the operator $\hat H(k_{1})$ defines an {\it effective one-dimensional Schr\"odinger operator}. It acts as, for all $\varphi\in \mathbb{C}^{L}\otimes \mathbb{C}^{M}$:
\be\label{eq:Hk1} (\hat H(k_{1})\varphi)_{x_{2}} = A(k_{1})\varphi_{x_{2}+1} + A(k_{1})^{*}\varphi_{x_{2} - 1} + V(k_{1})\varphi_{x_{2}}\;,\ee 
where $A(k_{1})$ and $V(k_{1})$ are $M\times M$ matrices, with $V(k_{1}) = V(k_{1})^{*}$, and $\det A(k_{1}) \neq 0$. We shall impose {\it Dirichlet boundary conditions} at $x_{2} = 0, L$ by setting $\hat H_{rr'}(k_{1}; x_{2}, y_{2}) = 0$ whenever $x_{2}, y_{2} \notin [1, L-1]$. Concerning the many-body Hamiltonian (\ref{eq:HN}), the Dirichlet boundary condition enters by setting $w_{rr'}(\vec x, \vec y) = 0$ whenever $x_{2}, y_{2} \notin [1, L-1]$. That is, this choice of boundary conditions reduces $\L_{L}$ to a {\it cylinder}.

In the following, we will also be interested in the case of periodic boundary conditions in both $\vec e_{1}$ and $\vec e_{2}$ directions. In this case, $\L_{L}$ will be equivalent to a {\it torus}. We shall denote by $H^{\text{(per)}}$ the corresponding one-particle Schr\"odinger operator, $H^{\text{(per)}}_{rr'}(\vec x, \vec y) = H^{\text{(per)}}_{rr'}(\vec x + \vec n L, \vec y + \vec m L)$ for all $\vec n,\vec m\in \mathbb{Z}^{2}$. We shall assume that $H^{\text{(per)}}$ is translation invariant, $H^{\text{(per)}}_{rr'}(\vec x, \vec y)\equiv H^{\text{(per)}}_{rr'}(\vec x - \vec y)$. As before, we can introduce the effective one-dimensional Sch\"odinger operator $\hat H^{\text{(per)}}(k_{1})$. Being the system translation invariant in the $\vec e_{2}$ direction as well, we can define the {\it Bloch Hamiltonian} as,  for $\vec k = (k_{1}, k_{2}) \in \mathbb{T}_{L}^{2} \equiv S^{1}_{L}\times S^{1}_{L}$:
\be
\hat H^{(\text{per})}(\vec k) := \sum_{\vec z\in \L_{L}} e^{i\vec z\cdot \vec k} H^{\text{(per)}}(\vec z)\;,\qquad \hat{H}^{(\text{per})}(\vec k): \mathbb{C}^{M}\to \mathbb{C}^{M}\;.
\ee
We will denote by $w^{\text{(per)}}$ the interaction potential in the presence of periodic boundary conditions, $w^{\text{(per)}}_{rr'}(\vec x,\vec y) \equiv w^{\text{(per)}}_{rr'}(\vec x + \vec n L, \vec y + \vec m L)$, $w^{\text{(per)}}_{rr'}(\vec x, \vec y) \equiv w^{\text{(per)}}_{rr'}(\vec x - \vec y)$, and by $H_{N}^{\text{(per)}} $the corresponding many-body Hamiltonian.

For finite $L$, the spectra of $\hat H(k_{1})$, $\hat H^{\text{(per)}}(k_{1})$ are discrete. The eigenfunctions of $\hat H^{\text{(per)}}(k_{1})$ are extended, and have the form $(1/\sqrt{L}) e^{-i k_{2} x_{2}} u_{\alpha}(\vec k)$, with $u_{\alpha}(\vec k)$ the {\it Bloch functions} of $\hat H^{\text{(per)}}(\vec k)$, with $\alpha = 1,2, \ldots, M$ the {\it band label}. The spectrum of $\hat H(k_{1})$ might be qualitatively different from the one of $\hat H^{(\text{per})}(k_{1})$: {\it edge states} might appear. These are solutions of the Schr\"odinger equation decaying in the bulk of the system as $e^{-c|x_{2}|}$ or as $e^{-c|x_{2} - L|}$. The edge modes are essential for the edge transport properties of topological insulators, which will be the focus of the present paper.

\medskip

\noindent{\it Example: the Haldane model.}\label{sec:Hal} The {\it Haldane model} \cite{Hal} is a paradigmatic example of topological insulator. Here, $M=2$: neglecting the spin for simplicity, the two internal degrees of freedom correspond to the two triangular sublattices forming the honeycomb lattice.
\begin{figure}[hbtp]
\centering
\includegraphics[width=.45
\textwidth]{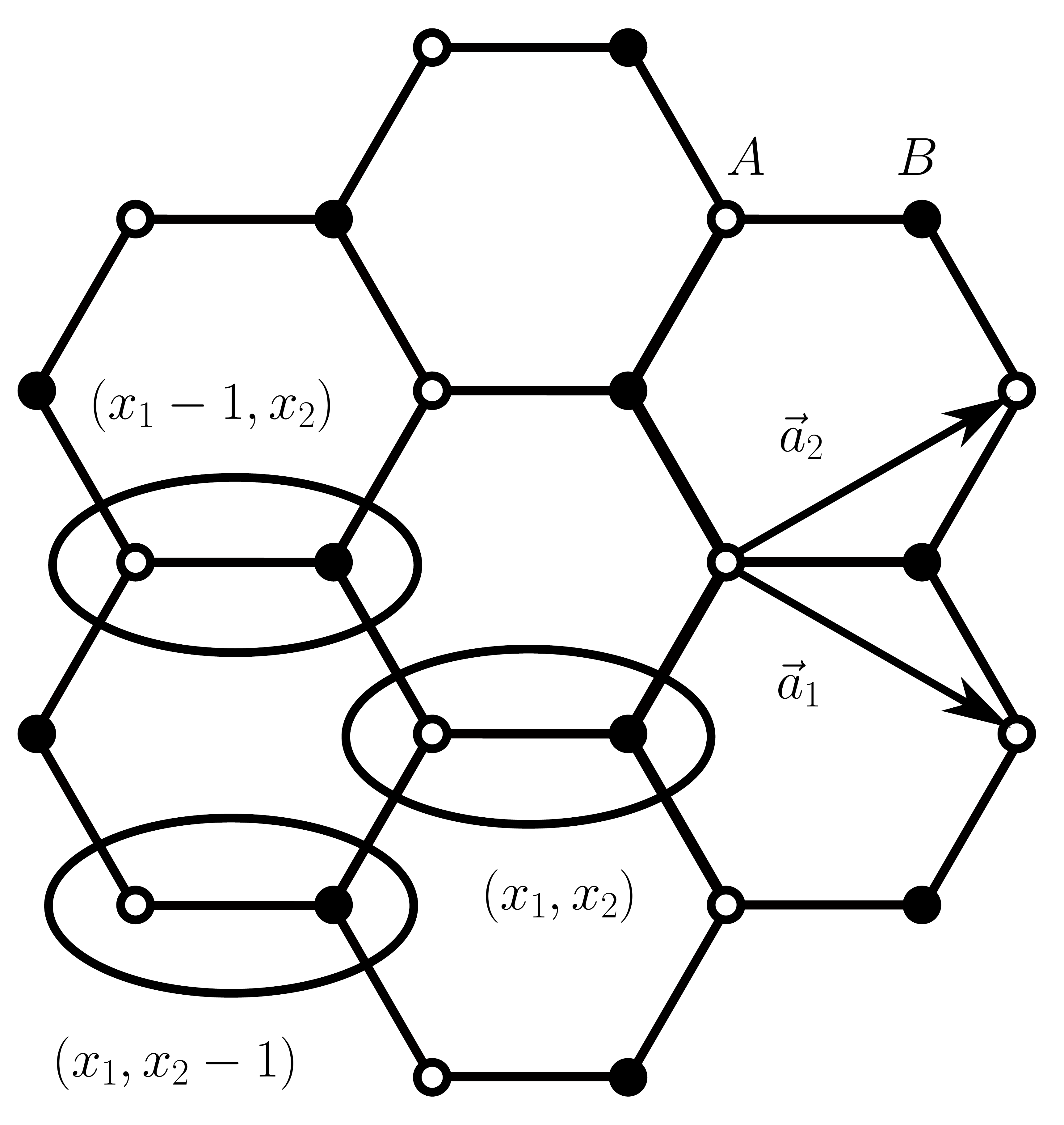}
\caption{The honeycomb lattice of the Haldane model. The empty dots belong to the $A$-sublattice, while the black dots belong to the $B$-sublattice. The ovals encircle the two sites of the fundamental cell, which are labeled by the coordinates $(x_{1}, x_{2})$, corresponding to the basis $\vec a_{1},\, \vec a_{2}$ of the $A$ triangular lattice.
}\label{fig:haldane}
\end{figure}
%
%
Let $\varphi^{A}_{\vec x},\, \varphi^{B}_{\vec x} \in \mathbb{C}$ be the values of the wave function corresponding to the two lattice sites enclosed by the oval in Fig. \ref{fig:haldane}. We define $\varphi^{T}_{\vec x} = (\varphi_{\vec x}^{A},\, \varphi_{\vec x}^{B})$. Let $(x_{1}, x_{2})$ be the coordinates on the $A$-triangular lattice in the basis $\vec a_{1}$, $\vec a_{2}$, see Figure \ref{fig:haldane}. The Haldane model describes fermions on the honeycomb lattice, hopping between nearest and next-to-nearest neighbours, in the presence of a zero-flux magnetic field and of a staggered chemical potential. Consider the model in the presence of periodic boundary conditions. The corresponding effective one-dimensional Schr\"odinger operator $\hat H^{\text{(per)}}(k_{1})$ has the form (\ref{eq:Hk1}), with:
\bea\label{eq:AV}
A^{\text{(per)}}(k_{1}) &=& \begin{pmatrix} -t_{2} e^{i\phi} e^{-ik_{1}} - t_{2} e^{-i\phi} & 0 \\ -t_{1} & -t_{2} e^{-i\phi} e^{-ik_{1}} - t_{2} e^{i\phi} \end{pmatrix} \nn\\
V^{(\text{per})}(k_{1}) &=& \begin{pmatrix} W - t_{2}e^{i\phi}e^{ik_{1}} - t_{2}e^{-i\phi}e^{-ik_{1}} & - t_{1} e^{-ik_{1}} - t_{1} \\ - t_{1} e^{ik_{1}} - t_{1} & -W - t_{2}e^{-i\phi} e^{ik_{1}} - t_{2} e^{i\phi}e^{-ik_{1}} \end{pmatrix}\;.
\eea
The energy bands and the eigenfunctions of $\hat H^{(\text{per})}(k_{1})$ can be computed explicitly. It turns out that its spectrum is {\it gapped}, for a generic choice of the parameters. The eigenvalues of $\hat H^{\text{(per)}}(k_{1})$ are labeled by $k_{2}\in S^{1}_{L}$ and by the band label $\alpha = \pm$. They are:
\be\label{eq:bandsHal}
e_{\pm}(\vec k) = -2 t_{2}\cos\phi [\cos(k_{1} - k_{2}) + \cos k_{2} + \cos k_{1}] \pm \sqrt{ m(\vec k)^{2} + t_{1}^{2} |\O(\vec k)|^{2} }\;,
\ee
where $\O(\vec k) = t_{1}(1 + e^{-ik_{1}} + e^{-ik_{2}})$ and $m(\vec k) = W - 2t_{2}\sin\phi [ \sin(k_{1} - k_{2}) + \sin k_{2} - \sin k_{1} ]$. The corresponding eigenfunctions are:
\be\label{eq:blochHal}
\phi_{x_{2}}^{\pm}(\vec k) = \frac{e^{-ik_{2} x_{2}}}{\sqrt{L}} u_{\pm}(\vec k)\;,\quad u_{\pm}(\vec k) = \frac{e^{i\gamma_{\pm}(\vec k)}}{N_{\pm}(\vec k)} \begin{pmatrix} t_{1}\O^{*}(\vec k) \\ \pm (m(\vec k)^{2} + t_{1}^{2}|\O(\vec k)|^{2})^{1/2} + m(\vec k)  \end{pmatrix}\;,
\ee
where $N_{\pm}(\vec k)$ ensures the normalization $\| u_{\pm} \|_{2} = 1$, and $\gamma_{\pm}(\vec k)$ is a phase. In the $L\to\infty$ limit, it is well known that it might be impossible to choose $\gamma_{+}(\vec k),\, \gamma_{-}(\vec k)$ as single-valued, continuous functions on the Brillouin zone $\mathbb{T}^{2} = S^{1}\times S^{1}$, \cite{TKNN, AS2}. This is the manifestation of a nontrivial topology of the {\it Bloch bundle} $\mathcal{E}^{\alpha}_{\text{B}} = \{ (\vec k, v_{\alpha}(\vec k)) \in \mathbb{T}^{2}\times \mathbb{C}^{2} \mid v_{\alpha}(\vec k) \in \text{Ran} P_{\alpha}(\vec k)\}$, with $P_{\alpha}(\vec k) = |u_{\alpha}(\vec k)\rangle \langle u_{\alpha}(\vec k)|$ the projector over the $\alpha$-th energy band. This fact is deeply related with the presence of a nonzero Hall conductivity, \cite{Hal}. Concerning the model on the cylinder, the eigenfunctions and the eigenvalues do not have simple expressions as on the torus. Nevertheless, choosing the parameters of the model in the topologically nontrivial phases, corresponding to a nonzero Hall conductivity, one can show that the Schr\"odinger equation admits two solutions, {\it edge modes}, decaying in the bulk respectively as $e^{-cx_{2}}$, $e^{-c(L - x_{2})}$ \cite{Hao}.

\subsection{Second quantization}\label{sec:2ndqu}

In order to study the system for $\l\neq 0$, it is convenient to switch to a {\it Fock space description}. Given $\frak{h}_{L} = \mathbb{C}^{L^2} \otimes \mathbb{C}^{M}$, the fermionic Fock space is defined as $\mathcal{F}_{L} := \bigoplus_{N\geq 0} \frak{h}_{L}^{\wedge N}$. Notice that, for $L<\infty$, $\text{dim}\,\mathcal{F}_{L}<\infty$ (by Pauli principle). It is convenient to introduce fermionic creation and annihilation operators, $a^{+}_{\vec x,r}$ resp. $a^{-}_{\vec x,r}$, as the operators that allow to move between different sectors of the fermionic Fock space. For any $\varphi = (\varphi^{(0)}, \varphi^{(1)}, \ldots, \varphi^{(n)}, \ldots, ) \in \mathcal{F}_{L}$, we define:
\bea
&&(a^{+}_{\vec y, r} \varphi)^{(n)}(\vec x_{1},r_{1}, \ldots, \vec x_{n}, r_{n}) \nn\\&&\qquad := \frac{1}{\sqrt{n}} \sum_{j=1}^{n} (-1)^{j} \delta_{\vec y, \vec x_{j}}\d_{r, r_{j}} \varphi^{(n-1)}(\vec x_{1},r_{1}, \ldots, \vec x_{j-1},r_{j-1}, \vec x_{j+1},r_{j+1},\ldots, \vec x_{n}, r_{n})\;,\nn\\
&&(a^{-}_{\vec y, r} \varphi )^{(n)}(\vec x_{1}, r_{1}, \ldots, \vec x_{n}, r_{n}) := (\sqrt{n+1})\varphi^{(n+1)}(\vec y, r, \vec x_{1}, r_{1}, \ldots, \vec {x}_{n}, r_{n})\;,
\eea
with $\d_{\cdot, \cdot}$ the Kronecker delta. It follows that $a^{+} = (a^{-})^{*}$. We will encode the boundary conditions in the fermionic operators by setting $a_{\vec x, r}^{\pm} = 0$ whenever $x_{2} = 0, L$, and $a^{\pm}_{\vec x,r} \equiv a^{\pm}_{x_{1} + n L, x_{2}, r}$ for all $n\in \mathbb{Z}$. Starting from these definitions, one can easily check the canonical anticommutation relations: 
\be
\{ a^{+}_{\vec x, r}\,, a^{-}_{\vec y, r'} \} = \delta_{\vec x,\vec y} \d_{r,r'}\;,\qquad \{ a^{+}_{\vec x, r}\, , a^{+}_{y, r'} \} = \{ a^{-}_{\vec x, r}\, , a^{-}_{\vec y, r'} \} = 0\;. 
\ee
We let the partial Fourier transforms of the fermionic creation/annihilation be:
\be
\hat a^{\pm}_{k_{1}, x_{2}} = \sum_{x_{1} = 0}^{L} e^{\mp i k_{1} x_{1}}\, a^{\pm}_{\vec x}\quad \forall k_{1}\in S^{1}_{L} \iff a^{\pm}_{\vec x} = \frac{1}{L} \sum_{k_{1} \in S^{1}_{L}} e^{\pm i k_{1} x_{1}} \hat a^{\pm}_{k_{1}, x_{2}}\quad \forall x_{1}\in \mathbb{Z}\;.
\ee
Let $H_{N}$ be the Hamiltonian acting on the $N$-particle sector of the Fock space, $\mathcal{F}_{L}^{(N)} \equiv \frak{h}_{L}^{\wedge N}$, defined in Eq. (\ref{eq:HN}). The second quantized Hamiltonian is defined as $\mathcal{H}_{L} := \bigoplus^{ML^{2}}_{N\geq 0} [H_{N} - \m N]$. In terms of the fermionic creation and annihilation operators,
\be
\mathcal{H}_{L} = \sum_{\vec x, \vec y}\sum_{r,r'} a^{+}_{\vec x,r} H_{rr'}(\vec x, \vec y)a^{-}_{\vec y, r'} + \lambda \sum_{\vec x, \vec y}\sum_{r, r'} \Big[\rho_{\vec x,r} - \frac{1}{2}\Big] w_{rr'}(\vec x, \vec y) \Big[\rho_{\vec y,r'} - \frac{1}{2}\Big] - \mu \mathcal{N}_{L}\;,
\ee
where $\rho_{\vec x,r} = a^{+}_{\vec x, r} a^{-}_{\vec x, r}$ and $\mathcal{N}_{L} = \sum_{\vec x, r} \rho_{\vec x, r}$. The factors $-1/2$ have been introduced in order to simplify the Grassmann representation of the model, see Section \ref{sec:fint}; these factors simply amount to a redefinition of the parameter $\m$. The Gibbs state associated with $\mathcal{H}_{L}$ is:
\be
\langle \cdot \rangle_{\beta,L} = \frac{\Tr_{\mathcal{F}_{L}}\cdot e^{-\beta \mathcal{H}_{L}}}{\mathcal{Z}_{\beta,L}}\;,\qquad \mathcal{Z}_{\beta,L} = \Tr_{\mathcal{F}_{L}}e^{-\beta\mathcal{H}_{L}}\;.
\ee
The parameter $\m$ plays the role of chemical potential, and it is chosen so to fix the average density of the system. We shall also use the notation $\langle \cdot \rangle_{\infty} := \lim_{\beta\to\infty}\lim_{L\to\infty} \langle \cdot \rangle_{\beta,L}$. 

We define the imaginary-time (or Euclidean) evolution of the fermionic operators as:
\be\label{eq:axx}
a^{\pm}_{\xx,r} := e^{x_{0} \mathcal{H}_{L}} a^{\pm}_{\vec x, r} e^{-x_{0} \mathcal{H}_{L}}\;,\qquad \xx = (x_{0}, \vec x)\quad \text{with}\quad x_{0}\in [0, \beta)\;.
\ee
In the following, we shall also denote by $a^{\pm}_{\xx, r}$ the antiperiodic extension of (\ref{eq:axx}) to all $x_{0}\in \mathbb{R}$. Let $\mathbb{M}^{\text{F}}_{\beta} = \frac{2\pi}{\beta}(\mathbb{Z} + \frac{1}{2})$ be the set of fermionic Matsubara frequencies. The time Fourier transform of $a^{\pm}_{\xx}$ is defined as, for $k_{0}\in \mathbb{M}^{\text{F}}_{\beta}$, $\hat a^{\pm}_{k_{0}, \vec x} = \int_{0}^{\beta} dx_{0}\, e^{\mp i k_{0} x_{0}}\, a^{\pm}_{\xx}$. Also, let $\mathbb{M}_{\b}^{\text{B}} = \frac{2\pi}{\beta} \mathbb{Z}$ be the set of bosonic Matsubara frequencies, and let $O$ be an even observable on the Fock space, {\it i.e.} a linear combination of monomials of even degree in the fermionic creation/annihilation operators. We define its time Fourier transform as, for $p_{0}\in \mathbb{M}_{\b}^{\text{B}}$, $\hat O_{p_{0}} = \int_{0}^{\beta} dx_{0}\, e^{i p_{0} x_{0}} O_{x_{0}}$.

Given any set of fermionic operators $a_{\xx_{i}, r_{i}}^{\e_{i}}$, $i=1,\ldots, n$, we define the fermionic time-ordered product as:
\be
{\bf T} a^{\e_{1}}_{\xx_{1}, r_{1}} a^{\e_{2}}_{\xx_{2}, r_{2}}\cdots a^{\e_{n}}_{\xx_{n}, r_{n}} = \text{sgn}(\pi) a^{\e_{\pi(1)}}_{\xx_{\pi(1)}, r_{\pi(1)}}\cdots\, a^{\e_{\pi(n)}}_{\xx_{\pi(n)}, r_{\pi(n)}}\;,
\ee
where the permutation $\pi$ is such that $x_{0,\pi(1)}\geq x_{0,\pi(2)}\geq \cdots \geq x_{0,\pi(n)}$; in case some times are equal the ambiguity is solved by normal ordering. The $n$-point {\it Schwinger function} is defined as:
\be
S^{\beta,L}_{n}(\xx_{1}, \e_{1}, r_{1}; \ldots; \xx_{n}, \e_{n}, r_{n}) := \langle {\bf T} a^{\e_{1}}_{\xx_{1}, r_{1}} a^{\e_{2}}_{\xx_{2}, r_{2}}\cdots\, a^{\e_{n}}_{\xx_{n}, r_{n}} \rangle_{\beta,L}\;.
\ee
Information on the ground state of the system can be obtained by studying the zero temperature, infinite volume limit of the Schwinger functions,
\be
S_{n}(\xx_{1}, \e_{1}, r_{1}; \ldots; \xx_{n}, \e_{n}, r_{n}) := \lim_{\beta\to\infty} \lim_{L\to\infty} S^{\beta,L}_{n}(\xx_{1}, \e_{1}, r_{1}; \ldots; \xx_{n}, \e_{n}, r_{n})\;.
\ee
%
%
%
%

In the absence of interactions, the Gibbs state is quasi-free, and all the Schwinger functions can be computed starting from the two-point function via the fermionic Wick rule.  The noninteracting two-point function $S^{\beta,L,(0)}_{2}$ can be computed explicitly as follows. Let $\underline{x} = (x_{0}, x_{1})$ and $\underline{k} = (k_{0}, k_{1})$. Suppose that $x_{0} - y_{0} \neq n\beta$. Then, setting $S^{\beta,L,(0)}_{2}(\xx,r; \yy, r') \equiv S^{\beta,L,(0)}_{2}(\xx,-,r; \yy, +,r')$:
\bea\label{eq:2pt}
S^{\beta,L,(0)}_{2}(\xx,r; \yy, r')  &=& \frac{1}{\beta L} \sum_{\underline{k} \in \mathbb{M}^{\text{F}}_{\beta}\times S^{1}_{L}} \frac{e^{-i\underline{k}\cdot (\underline{x} - \underline{y})}}{-ik_{0} + \hat H(k_{1}) - \m}(x_{2}, r; y_{2}, r')\nn\\
&\equiv& \frac{1}{\beta L} \sum_{\underline{k}\in \mathbb{M}^{\text{F}}_{\beta}\times S^{1}_{L}}e^{-i\underline{k}\cdot (\underline{x} - \underline{y})}\sum_{q=1}^{ML} \, \frac{\varphi_{x_{2}}^{q}(k_{1}; r) \overline{\varphi_{y_{2}}^{q}(k_{1}; r')}}{-ik_{0} + e_{q}(k_{1}) - \m}\;,
\eea
where $\varphi^{q}(k_{1}), e_{q}(k_{1})$ are the eigenfunctions and the eigenvalues of $\hat H(k_{1})$. If instead $x_{0} - y_{0} = n\beta$, then $S^{\beta,L,(0)}_{2}((y_{0}+n\beta, \vec x),-,r; \yy, +,r') = (-1)^{n} \lim_{x_{0} - y_{0}\to 0^-} S^{\beta,L,(0)}_{2}(\xx,-,r; \yy, +,r')$. We will also consider the analogous quantities in the presence of periodic boundary conditions. In this case, the Hamiltonian $H$ is replaced by $H^{(\text{per})}$, and the fermionic operators are compatible with the periodic boundary conditions: $a^{\pm}_{\vec x,r} = a^{\pm}_{\vec x + \vec n L, r}$ for all $\vec n \in \mathbb{Z}^2$. We will denote by $\langle \cdot \rangle^{(\text{per})}_{\beta,L}$ the corresponding Gibbs state. The two-point function can be obtained from Eq. (\ref{eq:2pt}) replacing $\varphi^{q}_{x_{2}}(k_{1}; r)$ with $(1/\sqrt{L}) e^{-ik_{2}x_{2}}u_{\alpha}(\vec k; r) $, with $u_{\alpha}(\vec k)$, $\alpha=1,\ldots, M$, the Bloch functions of $\hat H^{(\text{per})}(\vec k)$:
\be\label{eq:2ptper}
S^{\beta,L,(0)}_{2,\text{per}}(\xx, r; \yy, r') = \frac{1}{\beta L^{2}} \sum_{\kk\in \mathbb{M}^{\text{F}}_{\beta}\times \mathbb{T}_{L}^2}\sum_{\alpha=1}^{M}  \frac{e^{-i\kk\cdot (\xx - \yy)}}{-ik_{0} + e_{\alpha}(\vec k) - \m} u_{\alpha}(\vec k; r)\overline{u_{\alpha}(\vec k; r')}\;.
\ee
As an example, consider the Haldane model. The periodic two-point function can be computed explicitly starting from Eq. (\ref{eq:2ptper}) and the explicit expressions of the energy bands and the Bloch functions, given respectively in Eqs. (\ref{eq:bandsHal}), (\ref{eq:blochHal}). One gets:
\bea
&&S^{\beta,L,(0)}_{2,\text{per}}(\xx, r; \yy,r') = \frac{1}{\beta L^{2}} \sum_{\kk\in \mathbb{M}^{\text{F}}_{\beta}\times \mathbb{T}_{L}^2} e^{-i\kk\cdot (\xx - \yy)} \nn\\
&&\quad\cdot \begin{pmatrix} -ik_{0} + m(k) - \alpha_{1}( k) - \mu  & -t_{1} \Omega^*( k) \\ - t_{1}\Omega( k) & -ik_{0} - m(k) - \alpha_{1}( k) - \mu  \end{pmatrix}^{-1}_{rr'}
\eea
where $\alpha_{1}(\vec k) = 2t_{2} \cos\phi [\cos(k_{1} - k_{2}) + \cos k_{2} + \cos k_{1}]$.

\subsection{Linear response theory}\label{sec:edgetrans}

The Schwinger functions can be used to investigate the transport properties of the system in the framework of linear response theory. To this end, we define the {\it current operator} as follows. The time-derivative of the density operator $\rho_{\xx} = \sum_{r} \rho_{\xx, r}$ satisfies the continuity equation:
\be\label{eq:Jcons}
i\partial_{x_{0}} \rho_{\xx}+ \text{div}_{\vec x}\, \vec j_{\xx} = 0\;,
\ee
where $\text{div}_{\vec x}$ is the lattice divergence, $\text{div}_{\vec x} f_{\xx} = \text{d}_{1} f_{1,\xx} + \text{d}_{2} f_{2,\xx} = f_{1,\xx} - f_{1, \xx - {\bf e}_{1}} + f_{2,\xx} - f_{2,\xx - {\bf e}_{2}}$, with ${\bf e}_{1} = (0,\vec e_{1})$, ${\bf e}_{2} = (0, \vec e_{2})$, and the current density operator $\vec j_{\xx}$ is given by:
\bea\label{eq:curr}
&&\vec j_{\xx} = e^{x_{0}\mathcal{H}} \vec j_{\vec x} e^{-x_{0}\mathcal{H}}\;,\qquad \vec j_{\vec x} := \vec e_{1} j_{1,\vec x} + \vec e_{2} j_{2, \vec x}\\
&& j_{1,\vec x} := j_{\vec x, \vec x+\vec e_{1}} + \frac{1}{2}( j_{\vec x, \vec x + \vec e_{1} - \vec e_{2}} + j_{\vec x, \vec x + \vec e_{1} + \vec e_{2}} ) + \frac{1}{2}( j_{\vec x - \vec e_{2}, \vec x + \vec e_{1}} + j_{\vec x + \vec e_{2}, \vec x + \vec e_{1}} ) \nn\\
&&j_{2,\vec x} := j_{\vec x, \vec x+\vec e_{2}} + \frac{1}{2}( j_{\vec x, \vec x - \vec e_{1} + \vec e_{2}} + j_{\vec x, \vec x + \vec e_{1} + \vec e_{2}} ) + \frac{1}{2}( j_{\vec x - \vec e_{1}, \vec x +\vec e_{2}} + j_{\vec x + \vec e_{1}, \vec x + \vec e_{2}} )\nn
\eea
with the {\it bond current}:
\be
j_{\vec x, \vec y} := \sum_{r,r'} (i a^{+}_{\vec x,r} H_{rr'}(\vec x, \vec y) a^{-}_{\vec y,  r'} - i a^{+}_{\vec y,r} H_{rr'}(\vec y, \vec x) a^{-}_{\vec x,  r'})\;. 
\ee
%
%
%

For convenience, we collect the density and the current density operators in single three-dimensional Euclidean current $j_{\m, \xx}$, with $\mu = 0,1,2$, by setting $j_{0,\xx} := \rho_{\xx}$. With these notations, the continuity equation (\ref{eq:Jcons}) reads: $\text{d}_{\m} j_{\m,\xx} = 0$, with $d_{0} \equiv i\partial_{0}$.

Let us first discuss the {\it bulk} transport properties. The response of the system to bulk perturbations is expected to be insensitive to boundary effects. {\it Kubo formula} provides an expression for the bulk conductivity matrix; it is given by, choosing for convenience periodic boundary conditions:
\be\label{eq:sij}
\s_{ij} := \lim_{\eta \to 0^{+}} \frac{i}{\eta} \Big[\int_{-\infty}^{0} dt\, e^{t\eta }\, \pmb{\langle} [ j_{i}(t)\,, j_{j} ] \pmb{\rangle}^{(\text{per})}_{\infty} - \pmb{\langle} [ X_{i}\,, j_{j} ]  \pmb{\rangle}^{(\text{per})}_{\infty} \Big]\;,\qquad i, j = 1,2\;,
\ee
where $\vec j(t) = e^{i\mathcal{H} t} \vec j e^{-i\mathcal{H}t}$ is the real-time evolution of the total current $\vec j = \sum_{\vec x\in \L_{L}} \vec j_{\vec x}$ , $\vec X$ is the second quantization of the position operator, $\pmb{\langle} \cdot \pmb{\rangle}_{\infty}^{(\text{per})} := \lim_{\beta \to \infty} \lim_{L\to\infty} L^{-2} \langle \cdot \rangle_{\beta,L}^{(\text{per})}$, and, with a slight abuse of notations, the factor $i$ in front of the integral is $\sqrt{-1}$. The quantity $\s_{ii}$ is called the longitudinal conductivity, while $\s_{12} = -\s_{21}$ is called the transverse, of Hall, conductivity. For general insulating systems $\s_{ii} = 0$, while $\s_{12}$ might be different from zero. If so, we shall say that the system is a {\it Hall insulator}. Eq. (\ref{eq:sij}) describes the linear response of the system at the time $t=0$ after introducing an external perturbation $-e^{\eta t} \vec E\cdot \vec X$, for $t\leq 0$ (see \cite{Giu} for a formal derivation).

Let us now discuss the {\it edge} transport properties. These are very relevant for the physics of topological insulators. Here one is interested in the response of physical observables, such as the charge or the current, supported in the proximity of one edge, after exposing the system to a perturbation localized around the same edge. Let us suppose that the system is equipped with cylindric boundary conditions, as described in Section \ref{sec:1}. We will focus on the transport of charge or of spin in the vicinity of the $x_{2} = 0$ edge. Given an operator $O_{\vec x}$ on $\mathcal{F}_{L}$, compatible with the cylindric boundary conditions, let us introduce the notation $\hat O^{\leq a}_{p_{1}} := \sum_{x_{1} = 0}^{L}\sum_{x_{2} = 0}^{a} e^{ip_{1} x_{1}} O_{1,\vec x}$, for $p_{1} \in \frac{2\pi}{L}\mathbb{Z}$. Then, we define, for $a> a'$ and $\underline{a} = (a, a')$:
\be\label{eq:G}
G^{\underline{a}}(\eta, p_{1}) :=  -i \int_{-\infty}^{0} dt\, e^{t \eta}\, \pmb{\langle} [ \hat \rho^{\leq a}_{p_{1}}(t)\,, \hat j^{\leq a'}_{1,-p_{1}} ] \pmb{\rangle}_{\infty}
\ee
where now $\pmb{\langle} \cdot \pmb{\rangle}_{\infty} := \lim_{\beta, L\to \infty} L^{-1} \langle \cdot \rangle_{\beta,L}$ (notice the change of normalization with respect to Eq. (\ref{eq:sij})). The {\it edge charge conductance} is defined as:
\be\label{eq:Glim}
G := \lim_{a'\to\infty}\lim_{a\to\infty} \lim_{p_{1}\to 0}\lim_{\eta\to 0^{+}} G^{\underline{a}}(\eta, p_{1}) \;.
\ee
The above quantity measures the response at $t=0$ of the current along the $\vec e_{1}$ direction, supported in a strip of width $a'$, after introducing at $t=-\infty$ a perturbation of the form $e^{\eta t} \d \m \hat \rho_{p_{1}}^{\leq a}$, for $L\gg a\gg a' \gg 1$. The motivation for this range of parameters is as follows: we are interested in the effect of macroscopic perturbations, on observables that capture the presence of edge states on just one boundary of the system, corresponding to $x_{2} = 0$. The decay of the edge modes takes place on a microscopic scale, which explains the choice $a\gg a'$. As we shall see, the dependence of $G^{\underline{a}}$ on $a, a'$ is exponentially small for $a, a'$ large, which allows to take the $a',a\to \infty$ limit. 

The order of the $p_{1}, \eta \to 0$ limit in Eq. (\ref{eq:Glim}) corresponds to a static situation, in which the edge current is driven by the space modulation of a local  chemical potential. Also, notice that the limit $\beta\to\infty$ is performed before $\eta \to 0^+$: for this reason, $G^{\underline{a}}$ defines a zero temperature transport coefficient. Positive temperature transport could be studied by taking the limit $\eta \to 0^+$ for $\beta <\infty$; this choice of limits poses additional difficulties, \cite{SPA}, which will not be addressed in this paper. 

One can also study the variation of the density as a result of the application of an electric field in a strip of width $a$ from the edge $x_{2} = 0$:
\be\label{eq:tildeG}
\widetilde{G}^{\underline{a}}(\eta, p_{1}) := i \int_{-\infty}^{0} dt\,e^{t \eta}\, \pmb{\langle} [ \hat j^{\leq a}_{1, p_{1}}(t)\,, \hat \rho^{\leq a'}_{-p_{1}} ] \pmb{\rangle}_{\infty}\;,\quad \widetilde{G} := \lim_{a'\to\infty} \lim_{a\to \infty} \lim_{\eta\to 0^{+}} \lim_{p_{1}\to 0} \widetilde{G}^{\underline{a}}(\eta, p_{1})\;.
\ee
In this case, the order of the limits $\eta, p_{1}\to 0$ is reversed with respect to Eq. (\ref{eq:Glim}). This order of limits corresponds to the situation in which the variation of the density is driven by a dynamical perturbation. Similarly, we define:
\bea\label{eq:kD}
\kappa^{\underline{a}}(\eta, p_{1}) &:=& -i \int_{-\infty}^{0}dt\, e^{t \eta}\, \pmb{\langle} [ \hat \rho^{\leq a}_{p_{1}}(t)\,,  \hat \rho^{\leq a'}_{-p_{1}} ] \pmb{\rangle}_{\infty} \;,\\
D^{\underline{a}}(\eta, p_{1}) &:=& i \Big[ \int_{-\infty}^{0} dt\,e^{t \eta}\, \pmb{\langle} [ \hat j^{\leq a}_{1,p_{1}}(t)\,, \hat j^{\leq a'}_{1, -p_{1}} ] \pmb{\rangle}_{\infty} - \pmb{\langle} [X_{1}^{\leq a}, j^{\leq a'}_{1}] \pmb{\rangle}_{\infty}\Big]\;.\nn
\eea
Notice that, due to the presence of the periodic boundary condition in the $\vec e_{1}$ direction, the position operator $X_{1}$ is not well defined; however, the commutator in Eq. (\ref{eq:kD}) is a well defined object. More precisely, the commutator in Eq. (\ref{eq:kD}) has to be understood as the periodization on $\L_{L}$ of the analogous quantity on $\mathbb{Z}\times [0, L]$. This gives rise to a well defined operator, provided $a> a'$. To see this, we compute (using Eq. (\ref{eq:schwinger})):
\bea\label{eq:XJ}
&&[X_{1}^{\leq a}, j^{\leq a'}_{1}] = \sum_{y_{1}}\sum_{y_{2}\leq a'} \sum_{x_{1}} \sum_{x_{2}\leq a} x_{1} [\rho_{\vec x}, j_{1,\vec y}]  \nn\\
&&= \sum_{y_{1}, y_{2}\leq a'} \Big[ -i \mathbbm{1}(y_{2}\leq a) \tau_{\vec y, \vec y + \vec e_{1}} + \frac{i}{2}\big( y_{1}\mathbbm{1}(y_{2}\leq a) - (y_{1} + 1) \mathbbm{1}(y_{2}\leq a+1) \big)\tau_{\vec y, \vec y + \vec e_{1} - \vec e_{2}}\nn\\
&& \qquad\qquad +\frac{i}{2} \big( y_{1}\mathbbm{1}(y_{2}\leq a) - (y_{1} + 1) \mathbbm{1}(y_{2}\leq a - 1) \big) \tau_{\vec y, \vec y + \vec e_{1} + \vec e_{2}}\nn\\
&&\qquad\qquad +\frac{i}{2}\big( y_{1}\mathbbm{1}(y_{2}\leq a+1) - (y_{1} + 1) \mathbbm{1}(y_{2}\leq a) \big)\tau_{\vec y - \vec e_{2}, \vec y + \vec e_{1}} \nn\\
&&\qquad\qquad + \frac{i}{2} \big( y_{1}\mathbbm{1}(y_{2}\leq a-1) - (y_{1} + 1)\mathbbm{1}(y_{2}\leq a) \big)\tau_{\vec y + \vec e_{2}, \vec y + \vec e_{1}}\Big]\;,
\eea
with $\tau_{\vec x, \vec y}$ the kinetic energy associated to the hopping between $\vec x$ and $\vec y$:
\be\label{eq:tauxy}
\tau_{\vec x, \vec y} = \sum_{r,r'} ( a^{+}_{\vec x, r} H_{rr'}(\vec x, \vec y) a^{-}_{\vec y,  r'} + a^{+}_{\vec y, r} H_{rr'}(\vec y, \vec x) a^{-}_{\vec x,  r'} )\;.
\ee
In Eq. (\ref{eq:XJ}), the $y_{1}$ factors cancel out for $a>a'$. One gets the well defined expression $\pmb{\langle} [X_{1}^{\leq a}, \hat j^{\leq a'}_{1}] \pmb{\rangle}_{\infty} = -i \sum_{y_{2}\leq a'} \Delta_{1, y_{2}}$, with:
\be\label{eq:Delta}
\Delta_{1,y_{2}} = \lim_{\b, L\to \infty} \Delta^{\b, L}_{1,y_{2}}\;,\quad \D^{\b, L}_{1, y_{2}} = \langle \tau_{\vec y, \vec y + \vec e_{1}} + \frac{1}{2}\sum_{\vec z: \| \vec y - \vec z \| = \sqrt{2}} \tau_{\vec y, \vec z}\rangle_{\b, L}\;.
\ee
Notice that $\D^{\b, L}_{1, y_{2}}$ does not depend on $y_{1}$ by translation invariance in the $\vec e_{1}$ direction. The {\it edge charge susceptivity} and the {\it Drude weight} are defined as:
\be\label{eq:kappa}
\kappa := \lim_{a'\to\infty} \lim_{a\to \infty}\lim_{p_{1}\to 0} \lim_{\eta \to 0^{+}} \kappa^{\underline{a}}(\eta, p_{1})\;,\qquad D := \lim_{a'\to\infty} \lim_{a\to \infty}\lim_{\eta\to 0^{+}} \lim_{p_{1}\to 0} D^{\underline{a}}(\eta, p_{1})\;.
\ee
In the same way, one can also define {\it edge spin transport coefficients}. Suppose that $H$ commutes with the spin operator, namely that $H = H^{\uparrow} \oplus H^{\downarrow}$, where $H^{\s}$ acts on the spin sector labeled by $\s = \uparrow,\downarrow$. In the following, it will also be convenient to label the internal degrees of freedom as $r = (\bar r, \s)$, with $\s = \uparrow\downarrow$ the spin degree of freedom, and $\bar r = 1,\ldots, M/2$ the internal degree of freedom of $H^{\s}$ (notice that, by spin symmetry, $M$ is even). More precisely, we shall identify the internal degree of freedom $r = 1,\ldots, M/2$ with $(1,\uparrow), \ldots, (M/2, \uparrow)$, and the edge states $r = M/2+1, \ldots, M$ with $(1,\downarrow), \ldots, (M/2, \downarrow)$. Let $\rho_{\xx, \s} = \sum_{\bar r} \rho_{\xx, (\bar r, \s)}$. We define the spin density operator as $\rho^{s}_{\xx} := \rho_{\xx, \uparrow} - \rho_{\xx, \downarrow}$. The continuity equation for the spin density is:
\be
i\partial_{x_{0}} \rho^{s}_{\xx} + \text{div}_{\vec x} \vec j^{s}_{\xx} = 0\;,
\ee
where $\vec j^{s}_{\vec x} := \vec j_{\vec x,\uparrow} - \vec j_{\vec x,\downarrow}$ and $\vec j_{\vec x, \s}$ is the current density associated to the Hamiltonian $H^{\s}$. 
%
%
We define the edge spin transport coefficients $G^{\underline{a}}_{s}$, $\widetilde{G}^{\underline{a}}_{s}$, $D^{\underline{a}}_{s}$, $\kappa^{\underline{a}}_{s}$, and their limits $G_{s}$, $\widetilde{G}_{s}$, $D_{s}$, $\kappa_{s}$, by simply replacing in Eqs. (\ref{eq:G})--(\ref{eq:kappa}) the charge density $\rho_{\vec x}$ and the density of charge current $\vec j_{\vec x}$ by the spin density $\rho^{s}_{\vec x}$ and the density of spin current $\vec j^{s}_{\vec x}$. For later convenience, we shall collect all edge transport coefficients in a matrix: 
\bea\label{eq:defGG}
&&G_{01} \equiv G\;,\qquad G_{10}\equiv \widetilde{G}\;,\qquad G_{00} \equiv \kappa\qquad G_{11}\equiv D\;,\\
&&G^{s}_{01} \equiv G_{s}\;,\qquad G^{s}_{10}\equiv \widetilde{G}_{s}\;,\qquad G^{s}_{00} \equiv \kappa_{s}\qquad G^{s}_{11}\equiv D_{s}\;.\nn
\eea
Also, we shall set $G^{c}_{\m\n}\equiv G_{\m\n}$, and we will collect both charge and spin transport coefficients in $G^{\sharp}_{\m\n}$, $\sharp  = c,s$.
\begin{rem}
In general, one could also be interested in mixed transport coefficients, describing for instance the response of the spin current after introducing a voltage drop at the edges of the system. This is precisely the relevant setting for {\em spin Hall systems}, such as the Kane-Mele model \cite{KM}. We refer the reader to \cite{MP2}, where the interacting, spin-conserving Kane-Mele model has recently been studied, via an extension of the methods introduced in this paper. Here we shall focus on a class of models for which all these transport coefficients are zero by spin symmetry.
\end{rem}





\section{Noninteracting Hall systems}\label{sec:nonintQH}

Here we recall some known facts about noninteracting Hall insulators. Let us consider the Hamiltonian $H^{\text{(per)}}$ for a system on the torus. Suppose that $H^{\text{(per)}}$ is gapped, and let us place the Fermi level $\m$ in the gap: $\mu \notin \s(\hat H^{(\text{per})}(k_{1}))$ for all $k_{1}\in S^{1}_{L}$. Thanks to the gap condition, in the absence of interactions all correlation functions decay exponentially:
\be\label{eq:decS0}
\big | S^{\beta,L,(0)}_{2,\text{per}}(\xx,r;\yy,r') \big| \leq Ce^{-c\| \xx - \yy \|_{\beta, L}}
\ee
for some $C,c>0$ independent of $\beta, L$, and where $\| \cdot \|_{\beta, L}$ is the distance on the torus of sides $\beta, L$: $\| {\bf a} \|_{\beta,L} := \min_{{\bf n}\in \mathbb{Z}^{3}} \sqrt{|a_{0} - n_{0}\beta|^{2} + |a_{1} - n_{1} L|^{2} + |a_{2} - n_{2}L|^{2}}$. The constant $c$ is proportional to the distance of $\m$ to the spectrum of $H^{(\text{per})}$.

Consider the noninteracting conductivity matrix $\s^{(0)}_{ij} \equiv \s_{ij}|_{\l=0}$, as given by Eq. (\ref{eq:sij}). Let $P(\vec k) = \sum_{\alpha:\, e_{\a}(\vec k) < \m} | u_{\alpha}(\vec k)\rangle \langle u_{\alpha}(\vec k)|$ be the Fermi projector associated with $\hat H^{(\text{per})}(\vec k)$. Then, \cite{AS2}:
\be\label{eq:pdpdp}
\s^{(0)}_{ij} = i \int_{\mathbb{T}^{2}} \frac{d^{2} \vec k}{(2\pi)^{2}}\, \Tr\, P(\vec k) [ \partial_{i} P(\vec k)\,, \partial_{j} P(\vec k) ]\;.
\ee
This shows immediately that $\s^{(0)}_{ii} = 0$. Moreover, it turns out that $\s^{(0)}_{12} \in  \frac{1}{2\pi}\mathbb{Z}$, \cite{TKNN, AS2}. The quantization of the transverse, or Hall, conductivity $\s^{(0)}_{12}$ has a deep topological interpretation: $\s^{(0)}_{12}$ is the {\it Chern number} of the Bloch bundle.

Now, consider the Hamiltonian $H$ for the system on the cylinder. The spectrum of $\hat H(k_{1})$ might differ from the one of $\hat H^{(\text{per})}(k_{1})$ by the appearance of isolated eigenvalues in the gap of $\hat H^{(\text{per})}(k_{1})$. As a function of $k_{1}$, they describe discrete eigenvalue branches $k_{1}\mapsto \e(k_{1})$. For a fixed $k_{1}$, the eigenvalue branch corresponds to a solution of the Schr\"odinger equation $\hat H(k_{1}) \xi(k_{1}) = \e(k_{1}) \xi(k_{1})$, with $\| \xi (k_{1})\|_{2}^{2} = \sum_{x_{2}, r} |\xi_{x_{2}}(k_{1}; r)|^{2} = 1$, satisfying the Dirichlet boundary condition, $\xi_{0}(k_{1}; r) = \xi_{L}(k_{1}; r) = 0$. We will be interested in the eigenvalue branches that cross the Fermi level $\m$. The corresponding solutions of the Schr\"odinger equation are called the {\it edge states} of $H$. The value of $k_{1}$ for which the eigenvalue branch crosses the Fermi level is called the {\it Fermi point} $k_{F}$ of the edge state. More precisely, $k_{F}$ is defined as the element $k_{1}\in S^{1}_{L}$ that minimizes $|\e(k_{1}) - \m|$ (which tends to zero as $L\to\infty$). 

In order to deal with the presence of multiple edge states, we introduce an edge state label $e = 1,\ldots, n_{\text{edge}}$, that counts the number of eigenvalue branches intersecting the Fermi level. The value $k_{F}^{e}$ is the Fermi point of the edge state labeled by $e$. We shall consider the class of Hamiltonians specified by the following assumption.

\medskip

\noindent{\bf Assumption 1.} {\it There exists $\tilde \d, \d, L_{0} > 0$, with $\tilde \d \geq \d$ and $\tilde \d$, $\d$ independent of $L$, such that, for $L\geq L_{0}$, the following is true. The Hamiltonian $H$ has the form $H = H^{\uparrow}\oplus H^{\downarrow}$, with $H^{\s}$ the Hamiltonian acting on the $\s \in \{\uparrow,\, \downarrow\}$ spin sector, and $H^{\uparrow} = H^{\downarrow}$. The number of edge states $n_{\text{edge}}$ is $L$-independent. For every $e = 1,\ldots, n_{\text{edge}}$, the set $\{ k_{1} \in S^{1}_{L}\mid |\e_{e}(k_{1}) - \m|\leq \tilde\d \}$ supports two and only two edge states of $H$, spin degenerate. Moreover, let $\partial_{k_{1}}$ be the discrete derivative, $\partial_{k_{1}} f(k_{1}) = (L/2\pi)(f(k_{1} + \frac{2\pi}{L}) - f(k_{1}))$. The edge states of $H$ satisfy, for all $n\in \mathbb{N}$, $e = 1,\ldots, n_{\text{edge}}$ and $k_{1}\in S^{1}_{L}$ such that $|\e_{e}(k_{1}) - \m| < \d$:
\bea\label{eq:dec1}
&&\qquad\qquad\qquad\qquad\qquad\qquad\qquad |\partial^{n}_{k_{1}} \e_{e}(k_{1})|\leq C_{n}\;,\\
&&\text{either}\qquad |\partial^{n}_{k_{1}}\xi^{e}_{x_{2}}(k_{1};r)| \leq C_{n}e^{-cx_{2}}\qquad \text{or}\qquad | \partial^{n}_{k_{1}}\xi^{e}_{x_{2}}(k_{1}; r)|\leq C_{n}e^{-c(L - x_{2})}\;,\nn
\eea
with $C_{n}, C$ independent of $L$. Moreover, $v_{e} := \partial_{k_{1}} \e_{e}(k_{F}^{e}) \neq 0$ for all $e = 1,\ldots, n_{\text{edge}}$.}
\begin{rem}
\begin{itemize}
\item The bounds (\ref{eq:dec1}) are generically true for the class of Hamiltonians introduced in Section \ref{sec:1}. The proof is straightforward, and will be omitted.
\item In the following, it will be convenient to label the edge states by $e = (\bar e, \s)$, with $\s = \uparrow\downarrow$ the spin degree of freedom, and $\bar e = 1,\ldots, n_{\text{edge}}/2$ the edge state label for $H^{\s}$ (notice that, by spin symmetry, $n_{\text{edge}}$ is even). More precisely, we shall identify the edge states $e = 1,\ldots, n_{\text{edge}}/2$ with $(1,\uparrow), \ldots, (n_{\text{edge}}/2, \uparrow)$, and the edge states $e = n_{\text{edge}}/2+1, \ldots, n_{\text{edge}}$ with $(1,\downarrow), \ldots, (n_{\text{edge}}/2, \downarrow)$. 
\end{itemize}
\end{rem}
In this setting, the two-point function, Eq. (\ref{eq:2pt}), can be written as:
\be\label{eq:2ptmassless}
S^{\beta,L,(0)}_{2}(\xx,r;\yy,r') = \sum_{e=1}^{n_{\text{edge}}}  e^{-i k_{F}^{e}(x_{1} - y_{1})}\frac{\xi^{e}_{x_{2}}(k_{F}^{e}; r) \overline{\xi^{e}_{y_{2}}(k_{F}^{e}; r')} }{-i(x_{0} - y_{0}) + v_{e} (x_{1} - y_{1})} + R(\xx,r;\yy,r')\;,
\ee
where $v_{e} = \partial_{1} \e_{e}(k_{F}^{e})$ is the velocity of the edge state, and $|R(\xx,r;\yy,r')| \leq C\| \underline{x} - \underline{y} \|_{\b, L}^{-1 -\theta}e^{-c|x_{2} - y_{2}|}$ for some $\theta >0$. Due to the presence of the edge states, the Hamiltonian is {\it gapless} at the Fermi level, and the correlations decay algebraically. From the transport viewpoint, the system supports metallic currents on its boundaries. The edge transport properties of the system can be described by the transport coefficients introduced in Section \ref{sec:edgetrans}. For $\l=0$, they can be explicitly computed.
\begin{prop}{\bf (Noninteracting edge transport coefficient.)}\label{prop:edgenonint}
Let $H$ be a Hamiltonian satisfying Assumption 1. Then, the following is true.
\begin{enumerate}
\item \underline{{\it Noninteracting edge transport coefficients.}} Let $\o_{e} = \text{sgn}(v_{e})$, $\underline{p} = (\eta, p_{1})$, $a > a'$. The noninteracting edge charge and spin transport coefficients are given by, for $\sharp = c, s$:
\bea\label{eq:G0kk'}
\kappa^{\underline{a},(0)}_{\sharp}(\underline{p}) &=& \sum^{*}_{e} \frac{1}{\pi |v_{e}|} \frac{v_{e} p_{1}}{-i\eta + v_{e} p_{1}}+R^{\underline{a},(0)}_{\kappa,\sharp}(\underline{p})\;,\nn\\
G^{\underline{a},(0)}_{\sharp}(\underline{p}) &=& -\sum^{*}_{e}\frac{\o_{e}}{\pi}  \frac{v_{e} p_{1}}{-i\eta + v_{e} p_{1}}+R^{\underline{a},(0)}_{G,\sharp}(\underline{p})\;,\nn\\
\widetilde{G}^{\underline{a},(0)}_{\sharp}(\underline{p}) &=& -\sum^{*}_{e}\frac{\o_{e}}{\pi} \frac{-i \eta}{-i\eta + v_{e} p_{1}}+R^{\underline{a},(0)}_{\widetilde{G},\sharp}(\underline{p})\;,\nn\\
D^{\underline{a},(0)}_{\sharp}(\underline{p}) &=& \sum^{*}_{e}\frac{|v_{e}|}{\pi} \frac{-i \eta}{-i\eta + v_e p_{1}}+R^{\underline{a},(0)}_{D,\sharp}(\underline{p})\;,
\eea
where the asterisk restricts the sums to the edge states satisfying the first bound in the second line of Eq. (\ref{eq:dec1}), and the error terms are bounded as $|R^{\underline{a},(0)}(\underline{p})| \leq C_{n}a'(|\underline{p}|^{\theta} + |{a'} - a|^{-n}) + C_{n}a'^{-n}$ for some $\theta >0$ and all $n\in \mathbb{N}$. In particular:
\be\label{eq:Gresnonint}
\kappa^{(0)}_{\sharp} = \sum_{e}^{*} \frac{1}{2 \pi |v_{e}|}\;,\qquad D^{(0)}_{\sharp} = \sum^{*}_{e}\frac{|v_{e}|}{2\pi}\;,\qquad G^{(0)}_{\sharp} = \widetilde{G}^{(0)}_{\sharp} = -\sum_{e}^{*}\frac{\o_{e}}{2\pi}\;.
\ee
\item \underline{{\it Noninteracting bulk-edge correspondence.}} The following relation holds:
\be\label{eq:benonint}
G^{(0)} =  \s^{(0)}_{21}\;.
\ee
\end{enumerate}
\end{prop}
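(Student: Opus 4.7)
My plan is to evaluate each of the Kubo-type expressions in Section \ref{sec:edgetrans} directly, using the explicit free two-point function (\ref{eq:2ptmassless}) together with the Wick rotation established in Appendix \ref{app:wick}. After rotation to imaginary time, each of $\kappa^{\underline a,(0)}$, $G^{\underline a,(0)}$, $\widetilde G^{\underline a,(0)}$, $D^{\underline a,(0)}$ becomes, via the fermionic Wick rule at $\l=0$, a Matsubara bubble of two copies of $S^{\b,L,(0)}_{2}$ contracted with density or bond-current vertices supported in the strips $x_2\le a$, $y_2\le a'$.

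I then substitute (\ref{eq:2ptmassless}) into both propagators and isolate the diagonal edge-state contribution. By Assumption 1 each $\xi^{e}_{x_2}$ decays either as $e^{-cx_2}$ or as $e^{-c(L-x_2)}$; only the former survives the restriction to the strip in the large-$a,a'$ limit, which explains the asterisk in the sums, while the latter produces errors $O(e^{-c(L-a)})$. Off-diagonal terms with $e\neq e'$ carry oscillating phases $e^{i(k_F^{e}-k_F^{e'})x_1}$ that vanish to any order in $p_1$ by summation by parts in $x_1$, and the smooth remainder $R$ in (\ref{eq:2ptmassless}) produces a bubble of strictly better $\underline p$-regularity than the chiral piece; all such contributions are absorbed into $R^{\underline a,(0)}$.

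What remains is a sum, indexed by the edge modes $e$ localized at $x_2=0$, of chiral $1{+}1$-dimensional bubbles
\[
I_e(\underline p)=\lim_{\b,L\to\infty}\frac{1}{\b L}\sum_{\underline k}\frac{1}{-ik_0+v_e k_1}\,\frac{1}{-i(k_0+p_0)+v_e(k_1+p_1)}\;=\;\frac{1}{2\pi v_e}\,\frac{v_e p_1}{-ip_0+v_e p_1},
\]
computed by residues in $k_0$, multiplied by a vertex factor which, by Assumption 1, is $1$ at a density vertex and $v_e$ at a bond-current vertex (since the bond current projected on the edge mode at $k_F^{e}$ equals $v_e=\partial_{k_1}\e_e(k_F^{e})$ times the projected density). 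Inserting the corresponding density/current vertex factors and Wick-rotating $ip_0\to\eta$ gives the four expressions in (\ref{eq:G0kk'}); taking the limits $\eta,p_1\to 0$ in the orders stipulated in (\ref{eq:Glim}), (\ref{eq:tildeG}), (\ref{eq:kappa}) then yields (\ref{eq:Gresnonint}). For $D^{(0)}$, the subtraction $\pmb{\langle}[X_1^{\le a},j_1^{\le a'}]\pmb{\rangle}_{\infty}$, rewritten via (\ref{eq:XJ})--(\ref{eq:Delta}), exactly cancels the $p_1$-independent, infrared-singular part of the current--current bubble and leaves a well-defined static Drude weight.

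Part (ii) I do not reprove from scratch: by (\ref{eq:Gresnonint}) the LHS equals $-(2\pi)^{-1}\sum_e^{*}\o_e$, i.e.\ minus $(2\pi)^{-1}$ times the signed count of the edge modes at the boundary $x_2=0$ crossing the Fermi level, weighted by the chirality $\o_e=\mathrm{sgn}(v_e)$; by (\ref{eq:pdpdp}) the RHS equals $(2\pi)^{-1}$ times the Chern number of the Fermi projector of $H^{(\text{per})}$. Under Assumption 1 these two quantities coincide by the noninteracting bulk--edge correspondence of \cite{Hat, SKR, EG}, whose hypotheses apply directly to our setup. The main obstacles in part (i) are the uniform-in-$\underline p$ control of the remainder $R^{\underline a,(0)}$ needed to interchange the $a,a'\to\infty$ limits with the $\eta,p_1\to 0$ ones in all four orderings, and the proof that the oscillating cross-terms at Fermi-point differences $k_F^{e}-k_F^{e'}$ produce only an $O(|\underline p|^{\theta})$ contribution rather than a constant; the chiral bubble computation itself is otherwise routine.
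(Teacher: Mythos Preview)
Your route is genuinely different from the paper's. The paper does \emph{not} evaluate the free bubble directly; it treats Proposition~\ref{prop:edgenonint}(i) as the $\l=0$ specialization of the machinery behind Theorem~\ref{thm:1}. After the Wick rotation, the current--current correlation is written as a sum over the edge modes $e$ localized at $x_2=0$ of a \emph{reference model} density--density correlation $\pmb{\langle}\hat n_{\underline p,\o};\hat n_{-\underline p,\o}\pmb{\rangle}^{(\text{ref}),(0)}_{\infty,e}= -\tfrac{1}{2\pi|v_e|}\,D_{-\o}(\underline p)/D_{\o}(\underline p)$, times vertex factors $Z_{\mu,e}(x_2)$, plus a regular remainder. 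The constants (the remainder at $\underline p=0$, the Schwinger term, and the vertex factors $Z_{\mu,e}$) are then fixed by the \emph{lattice} Ward identities of Section~\ref{sec:WIlat}; only after this step do the clean expressions \eqref{eq:G0kk'} emerge. The advantage of the paper's approach is that it is the same computation as in the interacting case, just with $\tau=0$; the advantage of yours is that it avoids the reference model altogether.

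Your sketch, however, has a real gap precisely where the paper invokes Ward identities. The Matsubara computation of the diagonal edge bubble is fine (modulo a sign: the prefactor should be $1/(2\pi|v_e|)$, not $1/(2\pi v_e)$), and it indeed vanishes at $p_1=0$. But the full lattice bubble also contains the pieces built from the regular remainder $R$ in \eqref{eq:2ptmassless}: bulk--bulk, edge--bulk, and the subleading edge corrections. These produce a Fourier transform that is \emph{continuous} in $\underline p$, yes, but with an a~priori nonzero value at $\underline p=0$; such a constant cannot be absorbed into an error bounded by $C_n a'(|\underline p|^\theta+|a'-a|^{-n})+C_n a'^{-n}$, which is what \eqref{eq:G0kk'} requires. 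What actually kills this constant is the lattice conservation law: for instance, summing the first identity in \eqref{eq:sumWI} over $x_2$ at $p_1=0$ gives $\sum_{x_2}\pmb{\langle}\hat j_{0,\underline p,x_2};\hat j_{\nu,-\underline p,y_2}\pmb{\rangle}_\infty\big|_{p_1=0}=0$, and the truncation $x_2\le a$ costs only $O(|a-a'|^{-n})$. You gesture at this only for $D$ via the Schwinger-term subtraction, but offer no mechanism for $\kappa$, $G$, $\widetilde G$; without it, the claim that ``all such contributions are absorbed into $R^{\underline a,(0)}$'' is unjustified and the argument does not close. For part~(ii) your deferral to \cite{Hat,SKR,EG} matches the paper.
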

The proof of item $i)$ of Proposition \ref{prop:edgenonint} will be a corollary of the proof of item $i)$ of our main result, Theorem \ref{thm:1}. The proof of item $ii)$ is well known \cite{SKR, EG}. Finally, notice that, for the class of models specified by Assumption 1, all mixed transport coefficients (for instance, involving the spin current - charge density correlation function) are zero; this follows from the fact that they are odd under the symmetry transformation $\s\to -\s$.

\section{Edge transport in interacting Hall systems}\label{sec:main}

In this section we consider interacting systems, $\l\neq 0$, and we present our main result, Theorem \ref{thm:1}. We restrict the attention to the class of models specified by the following assumption, stronger than Assumption 1. 

\medskip
\noindent{\bf Assumption 2.} {\it Let $H$ be a Hamiltonian satisfying Assumption 1. We have $n_{\text{edge}} = 4$: the edge states $e = (1,\s)$ are localized on the $x_{2} = 0$ edge, while the edge states $e = (2,\s)$ are localized on the $x_{2} = L$ edge,
\be\label{eq:dec22}
|\partial^{n}_{k_{1}}\xi^{(1,\s)}_{x_{2}}(k_{1};r)| \leq C_{n}e^{-cx_{2}}\;,\qquad |\partial^{n}_{k_{1}}\xi^{(2,\s)}_{x_{2}}(k_{1}; r)|\leq C_{n}e^{-c(L - x_{2})}\;,\qquad \forall n\in \mathbb{N}\;.
\ee
Morever, the many-body interaction is spin independent: $v_{rr'}(\vec x,\vec y) \equiv v_{\bar r\bar r'}(\vec x,\vec y)$, if $r = (\bar r, \s)$, $ r' = (\bar r', \s')$, real valued and symmetric: $v_{rr'}(\vec x, \vec y) = v_{ r'r}(\vec y,\vec x)$.}
\medskip

A well known example of model satisfying these assumptions is the spinful Haldane model in the topologically nontrivial phase \cite{Hao}, see Section \ref{sec:Hal}. We shall say that these models support {\it single-channel} edge currents. We are now ready to state our main result.
\begin{thm}\label{thm:1}{\bf (Main result.)} Let $H$ be a Hamiltonian satisfying Assumption 2. Then, there exists $\bar \l>0$ such that the Schwinger functions are analytic in $\l$, for $|\l|<\bar \l$. Moreover, the following is true.
\begin{enumerate}
\item \underline{Interacting edge transport coefficients.} Let $\o = \text{sgn}(v_{(1,\s)})$, $\underline{p} = (\eta, p_{1})$, $a > a'$. The edge charge and spin transport coefficients are given by, for $\sharp = c,s$:
\bea\label{eq:edgeint}
&&\kappa^{\underline{a}}_{\sharp}(\underline{p}) = \frac{1}{\pi v_{\sharp}} \frac{\o v_{\sharp} p_{1}}{-i\eta + \o v_{\sharp} p_{1}}+R^{\underline{a}}_{\kappa,\sharp}(\underline{p})\;,\qquad G^{\underline{a}}_{\sharp}(\underline{p}) = -\frac{\o}{\pi}  \frac{\o v_{\sharp} p_{1}}{-i\eta + \o v_{\sharp} p_{1}}+R^{\underline{a}}_{G,\sharp}(\underline{p}) \;,\nn\\
&&\widetilde{G}^{\underline{a}}_{\sharp}(\underline{p}) = -\frac{\o}{\pi} \frac{-i \eta}{-i\eta + \o v_{\sharp} p_{1}}+R^{\underline{a}}_{\widetilde{G},\sharp}(\underline{p})\;,\qquad 
D^{\underline{a}}_{\sharp}(\underline{p}) = \frac{v_{\sharp}}{\pi} \frac{-i \eta}{-i\eta + \o v_\sharp p_{1}}+R^{\underline{a}}_{D,\sharp}(\underline{p})
\eea
where: $v_{c} \equiv v_{c}(\l) = |v_{(1,\s)}| + O(\l)$ is the renormalized charge velocity and $v_{s}\equiv v_{s}(\l) = v_{c}(\l) - (A/\pi) \l + O(\l^2)$ is the renormalized spin velocity, with 
\be\label{eq:A}
A = \sum_{\substack{x_{2}, y_{2} \\ r,r'}} \hat w_{rr'}(0; x_{2}, y_{2}) \overline{\xi^{(1,\s)}_{x_{2}}(k_{F}; r)} \xi^{(1,\s)}_{x_{2}}(k_{F}; r) \overline{\xi^{(1,\s)}_{y_{2}}(k_{F};  r')} \xi^{(1,\s)}_{y_{2}}(k_{F};  r')\;,
\ee
and $k_{F} \equiv k_{F}(\l) = k_{F}^{(1,\s)} + O(\l)$; the error terms are bounded as $|R^{a}(\underline{p})| \leq C_{n}a(|\underline{p}|^{\theta} + ({a'} - a)^{-n}) + C_{n}a'^{-n}$, for some $\theta >0$ and for all $n\in\mathbb{N}$. In particular:
\be\label{eq:Gres}
\kappa_{\sharp} = \frac{1}{\pi v_{\sharp}}\;,\qquad D_{\sharp} = \frac{v_{\sharp}}{\pi}\;,\qquad G_{\sharp} = \widetilde{G}_{\sharp} = -\frac{\o}{\pi}\;.
\ee

\item \underline{Spin-charge separation.} The $2$-point Schwinger function is given by, for $\underline{x}\neq \underline{y}$:
\bea\label{eq:SC}
S_{2}(\xx,r; \yy, r') &=& \frac{1}{Z} \frac{e^{-i k_{F} (x_{1} - y_{1}) } \xi^{(1,\s)}_{x_{2}} (k_{F}; r) \overline{\xi^{(1,\s)}_{y_{2}}(k_{F}; r')}}{\sqrt{(v_{s} (x_{0} - y_{0}) + i\o(x_{1} - y_{1}))(v_{c} (x_{0} - y_{0}) + i\o (x_{1} - y_{1}))}}\nn\\&& + R(\xx,r;\yy, r')\;,
\eea
where $|R(\xx,r;\yy,r') | \leq C_{n} \| \underline{x} - \underline{y} \|^{-1-\theta} |x_{2} - y_{2}|^{-n}$ for all $n\in \mathbb{N}$ and some $\theta > 0$, $C>0$, and $Z \equiv Z(\l) = 1 + O(\l)$.
\item \underline{Bulk-edge correspondence.} The following relation holds true:
\be\label{eq:beint}
G = \s_{21}\;.
\ee
\end{enumerate}
\end{thm}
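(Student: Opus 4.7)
The strategy is to reduce the computation of the edge transport coefficients and of the $2$-point Schwinger function to an exactly solvable continuum theory (the chiral spinful Luttinger model of Section~\ref{sec:ref}), and to exploit exact Ward identities combined with an Adler--Bardeen type nonrenormalization of the chiral anomaly. First, using the Wick rotation of Appendix~\ref{app:wick}, I would rewrite each transport coefficient $G_{\mu\nu}^{\sharp,\underline{a}}(\underline{p})$ as a suitable $p_{1},\eta\to 0$ limit of a Euclidean current--current, density--density, or mixed Schwinger function supported in a strip $x_{2}\le a,a'$ near the edge $x_{2}=0$. Next, following Section~\ref{sec:fint}, I would represent these Schwinger functions as convergent Grassmann functional integrals, and use the multiscale renormalization group of Section~\ref{sec:RG} to integrate out, scale by scale, the bulk Grassmann modes away from the Fermi points of the edge bands. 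The outcome, which is the content of Proposition~\ref{prp:relref}, is that the edge correlations of the lattice model coincide, up to multiplicative renormalizations $Z,Z_{\mu}^{\sharp}$, additive finite counterterms, and error terms subleading in $|\underline{p}|$ and $a'^{-1},|a-a'|^{-1}$, with the analogous correlations of the chiral Luttinger reference model with renormalized velocities $v_{c}(\l),v_{s}(\l)$ and effective couplings $\l_{c,\infty},\l_{s,\infty}$. Analyticity in $\l$ on a disk $|\l|<\bar{\l}$ follows from the convergence of the renormalized expansion, controlled uniformly in $\beta,L$ by Gram--Hadamard bounds.

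At this point, part (i) reduces to the computation of the current--current and density--density correlations of the reference model, which are known in closed form thanks to its chiral symmetry and to the exact solvability afforded by bosonization. The lattice $U(1)$ symmetries of charge and spin yield exact Ward identities for the vertex functions of the lattice model, which via Proposition~\ref{prp:relref} translate into identities for the reference model dressed by the constants $Z,Z_{\mu}^{\sharp}$. The reference model enjoys additional chiral Ward identities, whose anomaly, by the Adler--Bardeen mechanism proved in the one-dimensional setting of \cite{M2}, is exactly linear in the effective coupling with an explicit prefactor depending only on $v_{c},v_{s}$. Combining the ordinary and chiral Ward identities algebraically determines the ratios $Z_{\mu}^{\sharp}/Z$ in terms of $v_{\sharp}$ and $\l_{\sharp,\infty}$, and substitution into the explicit chiral propagators yields directly the formulas in Eq.~(\ref{eq:edgeint}), together with the scaling relation $D_{\sharp}=\kappa_{\sharp}v_{\sharp}^{2}$. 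The universality $G_{\sharp}=-\omega/\pi$ then arises because the anomaly prefactor and the ratio $Z_{1}^{\sharp}/Z_{0}^{\sharp}$ appear in the transverse response in a combination that, by the chiral WI, equals $1$ identically in $\l$.

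For part (ii), the bosonic factorization of the reference two-point function into decoupled spin and charge chiral components, transported back to the lattice through Proposition~\ref{prp:relref} with the corresponding wave-function amplitudes $\xi^{(1,\sigma)}(k_{F})$ and renormalization $Z(\l)$, gives the product of propagators with distinct velocities appearing in Eq.~(\ref{eq:SC}); the first-order computation of the renormalization $v_{s}-v_{c}$ is a direct evaluation of the forward-scattering integral of the interaction projected onto the edge state, producing the constant $A$ in Eq.~(\ref{eq:A}). For part (iii), the bulk--edge correspondence Eq.~(\ref{eq:beint}) follows by chaining three identities: part (i) gives $G=-\omega/\pi$, the noninteracting bulk--edge correspondence of Proposition~\ref{prop:edgenonint} gives $G^{(0)}=\sigma^{(0)}_{21}$, and the universality of the bulk Hall conductivity proved in \cite{GMP} gives $\sigma_{21}=\sigma^{(0)}_{21}$, so that $G=G^{(0)}=\sigma^{(0)}_{21}=\sigma_{21}$.

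The main obstacle is the construction of the reference-model reduction, Proposition~\ref{prp:relref}. One must simultaneously prove convergence of the renormalized expansion in a model that is gapless on the edge and gapped in the bulk, handle the infinite tower of irrelevant operators generated by lattice effects and by the boundary (including nonlocal couplings induced by integrating out bulk modes), and choose the counterterms so that the emergent chiral symmetry and anomaly nonrenormalization of the reference model actually govern the lattice correlations. Once this is achieved, the universal statements of Theorem~\ref{thm:1} are purely algebraic consequences of the Ward identities and the linearity of the anomaly.
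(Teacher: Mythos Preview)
Your proposal is correct and follows essentially the same architecture as the paper: Wick rotation to Euclidean correlations, RG reduction to the chiral Luttinger reference model via Proposition~\ref{prp:relref}, and the combination of lattice Ward identities with the anomalous chiral Ward identities of the reference model to fix all renormalization constants, with part~(iii) obtained by chaining universality of $G$ with the noninteracting bulk--edge correspondence and the result of \cite{GMP}. One small point of emphasis: the paper does not invoke bosonization to compute the reference-model density--density correlations but obtains them directly from the anomalous Ward identities (Proposition~\ref{eq:anWIref}, Eq.~(\ref{eq:ncns})), and the Schwinger term $\Delta_{\mu,y_2}$ together with the additive remainder $A^\infty_{\mu\nu}(y_2)$ is pinned down by applying the lattice current--current Ward identity (\ref{eq:sumWI}) to the decomposition (\ref{eq:JJ2}), which is the step that makes the universal ratios drop out cleanly.
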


The proof of item $i), ii)$ will be discussed below, starting from Section \ref{sec:fint}. In particular, the proof of item $ii)$ follows from the analogous result of \cite{FM}, for the chiral Luttinger model. The relativistic chiral Luttinger model of \cite{FM} will play the role of {\it reference model} for our analysis, and it will be discussed in Section \ref{sec:ref}. The comparison with this effective relativistic model will allow to control the renormalization group flow of the edge states scattering, which is marginal in the renormalization group sense. The proof of item $iii)$ immediately follows from the combination of item $i)$ together with the bulk-edge correspondence for noninteracting systems, Eq. (\ref{eq:benonint}), and the universality of the Hall conductivity \cite{GMP}.

A similar result can be proven for spinless fermions, after multiplying all (charge) transport coefficients by $1/2$. In this case, the proof turns out to be much simpler. In fact, one of the main technical challenges we have to face is to control the renormalization group flow of the edge states scattering, which is marginal in the RG sense; the infrared behavior of the theory is governed by a {\it nontrivial} RG fixed point. In the spinless case, instead, the quartic interaction turns out to be {\it irrelevant} in the RG terminology, thanks to Pauli exclusion principle: local quartic interactions are zero.

Our result proves the quantization of the edge conductance and provides the first proof of the bulk-edge correspondence for an interacting system. This is particularly remarkable, in view of the fact that the interaction changes the value of the other transport coefficients, and produces a drastic modification of the 2-point Schwinger function. In particular, the interaction gives rise to {\it two} different velocities for charge and spin excitations, $v_{s} = v_{c} - A\l + O(\l^2)$, thus resolving the degeneracy between charge and spin transport coefficients found in the noninteracting case. Despite all these renormalizations, the following exact relations, analogous to the Luttinger liquid relations of \cite{HalLut}, hold true:
\be
D = \kappa v_{c}^{2}\;,\qquad D_{s} = \kappa_{s} v_{s}^{2}\;.
\ee
%
%
Finally, note also that  Eqs. (\ref{eq:edgeint}) are the correlation functions of the space-time derivatives of a massless Gaussian free field, thus justifying the chiral Luttinger liquid theory of quantum Hall edge states of \cite{Wen}.

\section{Functional integral representation}\label{sec:fint}

In this section we will introduce a Grassmannian functional integral representation for the lattice model, which will allow to rewrite all Euclidean correlation functions as the derivatives of a suitable generating functional. This generating functional will then be studied via cluster expansion techniques and renormalization group. The connection with the real-time transport coefficients will be provided by Proposition \ref{prp:wick}, via a rigorous version of the Wick rotation. Finally, in Section \ref{sec:red1d} we will discuss how to integrate the bulk degrees of freedom. The outcome of the integration will be the generating functional of a suitable one-dimensional Grassmann field. 

We will set up the strategy by considering the general class of models specified by Assumption 1. Later, we will restrict the attention to the single-channel case, Assumption 2.

\subsection{Euclidean quantum field theory}

Let $\chi(s)$ be a smooth, even, compactly supported function, such that $\chi(s) = 0$ for $|s|>2$ and $\chi(s) = 1$ for $|s|<1$. Let $N\in \mathbb{N}$, and $\mathbb{M}^{\text{F}*}_{\beta} = \{ k_{0}\in \mathbb{M}^{\text{F}}_{\beta} \mid \chi(2^{-N}k_{0}) > 0 \}$. Let $\mathbb{D}^{*}_{\beta,L} = \mathbb{M}^{\text{F}*}_{\beta}\times S^{1}_{L}$. We consider the finite Grassmann algebra generated by the the Grassmann variables $\{\hat\Psi^{\pm}_{\underline{k}, q}\}$ with $\underline{k}\in \mathbb{D}^{*}_{\beta,L}$, $q=1,\ldots, ML$. The Grassmann Gaussian integration $\int P_{ N}(d\Psi)$ is a linear functional acting on the Grassmann algebra as follows. Its action on a given monomial $\prod_{j=1}^{n} \hat\Psi^{\e_{j}}_{\underline{k}_{j}, q_{j}}$ is zero unless $| \{j: \e_{j} =+ \} | = |\{ j: \e_{j} = - \}|$, in which case:
\be \int P_{N}(d\Psi) \hat \Psi^-_{\underline{k}_{1},q_{1}}\hat \Psi^+_{\underline{p}_1,q'_{1}}\cdots
\hat \Psi^-_{\underline{k}_n,q_{n}}\hat \Psi^+_{\underline{p}_n,q'_{n}}=\det[C(\underline{k}_j,q_{j};\underline{p}_k,q'_{k})]_{j,k=1,\ldots,n},
\ee
where $C(\underline{k},q;\underline{p},q')=\b L \d_{\underline{k},\underline{p}}\delta_{q,q'} \hat g_{\b,L,N}(\underline{k},q)$ and
\bea\label{eq:prop}
&&\hat g_{\beta,L,N}(\underline{k},q) := \frac{\chi_{N}(k_{0})}{-ik_{0} + e_{q}(k_{1}) - \m}\;,\qquad \chi_{N}(k_{0}) \equiv \chi_{0}(2^{-N}k_{0})\;,
\eea
with $\{e_{q}(k_{1})\}_{q=1}^{ML}$ the eigenvalues of $\hat H(k_{1})$. Eq. (\ref{eq:prop}) is the definition of the {\it Green's function}, or {\it free propagator} of the noninteracting lattice model. We define the configuration space Grassmann fields as:
\be
\Psi^{+}_{\xx,r} := \frac{1}{\beta L}\sum_{\underline{k}\in \mathbb{D}^{*}_{\beta,L}} \sum_{q=1}^{ML} e^{i\underline{k}\cdot \underline{x}}\, \overline{\varphi^{q}_{x_{2}}(k_{1}; r)}\hat\Psi^{+}_{\underline{k}, q}\;,\qquad \Psi^{-}_{\xx,r} := \frac{1}{\beta L}\sum_{\underline{k}\in \mathbb{D}^{*}_{\beta,L}} \sum_{q=1}^{ML} e^{-i\underline{k}\cdot \underline{x}}\, \varphi^{q}_{x_{2}}(k_{1}; r)\hat\Psi^{-}_{\underline{k}, q}\;,
\ee
where $\varphi^{q}(k_{1})$, $q = 1, \ldots , ML$, are the normalized eigenfunctions of $\hat H(k_{1})$. We have:
\be \int P_{N}(d\Psi)\Psi_{\xx,r}^-\Psi^+_{\yy,r'}= g_{\b,L,N}(\xx, r; \yy, r'),\label{eq:g14.b}\ee
where
\be
g_{\beta,L,N}(\xx, r; \yy, r') = \frac{1}{\beta L}\sum_{\underline{k} \in \mathbb{D}^{*}_{\beta,L}}\sum_{q=1}^{ML} e^{-i\underline{k}\cdot (\underline{x} - \underline{y})} \varphi^{q}_{x_{2}}(k_{1}; r) \overline{\varphi^{q}_{y_{2}}(k_{1}; r')} \hat g_{\beta,L,N}(\underline{k}, q).
\ee
As $N\to \infty$ and for $x_{0} \neq y_{0}$, the propagator converges pointwise to the two-point Schwinger function of the noninteracting lattice model, Eq. (\ref{eq:2pt}). 

If needed, $\int P_{N}(d\Psi)$ can be written explicitly in terms of the usual Berezin integral $\int d\Psi$,
which is the linear functional on the Grassmann algebra acting non trivially on a monomial 
only if the monomial is of maximal degree, in which case 
$$\int d\Psi \prod_{\underline{k}\in\mathbb{D}_{\b,L}^*}\prod_{q=1}^{ML}
\hat \Psi^-_{\underline{k},q}\hat \Psi^+_{\underline{k},q}=1.$$ The explicit expression of 
$\int P_{N}(d\Psi)$ in terms of $\int d\Psi$ is
\bea\label{eq:P} &&\int P_{N}(d\Psi)\big(\cdot\big) = \frac1{\mathcal{N}_{\b,L,N}}\int d\Psi \exp\Big\{-\frac1{\b L}
\sum_{\underline{k}\in\mathbb{D}_{\b,L}^*}\sum_{q=1}^{ML}
\hat\Psi^{+}_{\underline{k},q}\,\big[{\hat g}_{\beta,L,N}(\underline{k}, q)\big]^{-1}\hat\Psi^{-}_{\underline{k},q}\Big\}\big(\cdot\big),
\nonumber\\
&& \quad {\rm with}\qquad \mathcal{N}_{\b,L,N}=\prod_{\underline{k}\in\mathbb{D}_{\b,L}^*}\prod_{q=1}^{ML} [\b L]
{\hat g}_{\beta,L,N}(\underline{k},q)\;.
\label{2.3}\eea
The Grassmann counterpart of the many-body interaction is:
\be V_{\b,L}(\Psi) := \l\int_{0}^\b dx_{0}\sum_{\vec x, \vec y \in\L_L} \sum_{r, r' =1}^{M} n_{\xx,r} w_{rr'}(\vec x, \vec y) n_{(x_{0}, \vec y),r'}\;,
\ee
where $n_{\xx,r} = \Psi^{+}_{\xx,r} \Psi^{-}_{\xx,r}$ is the Grassmann counterpart of the density operator. We define $n_{\xx,\s} := \sum_{\bar r} n_{\xx, (\bar r, \s)}$ and $n^{c}_{\xx} := \sum_{\s} n_{\xx,\s}$, $n^{s}_{\xx} := \sum_{\s} \s n_{\xx,\s}$ as the Grassmann counterparts of the charge and spin densities, respectively (performing the identification $\s = \uparrow \equiv +$, $\s = \downarrow \equiv -$). We also introduce the Grassmann counterpart of the charge and spin currents as $\vec J^{c}_{\xx} := \sum_{\s} \vec J_{\xx,\s}$, $\vec J^{s}_{\xx} := \sum_{\s} \s \vec J_{\xx,\s}$, with, recalling that ${\bf e}_{i} = (0, \vec e_{i})$, $i=1,2$:
\bea
&& J_{1,\xx,\s} := J_{\xx,\xx+{\bf e}_{1},\s} + \frac{1}{2}( J_{\xx, \xx+{\bf e}_{1}-{\bf e}_{2},\s} + J_{\xx,\xx+{\bf e}_{1}+{\bf e}_{2},\s} ) + \frac{1}{2}( J_{\xx-{\bf e}_{2}, \xx+{\bf e}_{1},\s} + J_{\xx+{\bf e}_{2}, \xx+{\bf e}_{1},\s} ) \nn\\
&&J_{2,\xx,\s} := J_{\xx,\xx+{\bf e}_{2},\s} + \frac{1}{2}( J_{\xx,\xx-{\bf e}_{1}+{\bf e}_{2},\s} + J_{\xx, \xx+{\bf e}_{1}+{\bf e}_{2},\s} ) + \frac{1}{2}( J_{\xx-{\bf e}_{1}, \xx+{\bf e}_{2},\s} + J_{\xx+{\bf e}_{1}, \xx+{\bf e}_{2},\s} )\;,\nn
\eea
with the Grassmann bond current:
\be
J_{\xx,\yy} := \sum_{r,r'} i \Psi^{+}_{\xx,r} H_{rr'}(\vec x, \vec y) \Psi^{-}_{\yy,  r'} - i\Psi^{+}_{\yy,r} H_{rr'}(\vec y, \vec x) \Psi^{-}_{\xx,  r'}\;.
\ee
For convenience, we set $J^{c}_{0,\xx} \equiv n^{c}_{\xx}$, $J^{s}_{0,\xx} \equiv n^{s}_{\xx}$, and we collect Grassmann densities and Grassmann current densities in $J^{\sharp}_{\m,\xx}$, $\m = 0,1,2$. We then define the {\it source terms} as:
\be
B(\Psi; \phi) := \int_{0}^{\beta} dx_{0}\sum_{\vec x\in \L_{L}} \sum_{r=1}^{M} \phi^{+}_{\xx,r} \Psi^{-}_{\xx,r} + \Psi^{+}_{\xx,r} \phi^{-}_{\xx,r}\;,\quad \G(\Psi; A) := \int_{0}^{\beta} dx_{0}\sum_{\vec x\in \L_{L}}  \sum_{\m = 0,1,2}\sum_{\sharp=c,s} A^{\sharp}_{\m,\xx} J^{\sharp}_{\m,\xx}\;,
\ee
where $\phi^{\pm}_{\xx},\, A^{\sharp}_{\m,\xx}$  are, respectively, Grassmann and a complex valued external fields. The {\it generating functional} of the correlation functions $\mathcal{W}_{\beta,L}(A,\phi)$ is:
\be\label{eq:genfcn}
\mathcal{W}_{\beta,L}(A,\phi) := \lim_{N\to\infty} \mathcal{W}_{\beta,L,N}(A,\phi) := \lim_{N\to \infty}\log \int P_{ N} (d\Psi) e^{-V_{\beta,L}(\Psi) + \G(\Psi; A) + B(\Psi; \phi)}\;,
\ee
provided the limit exists. Thanks to Eq. (\ref{eq:P}), we can also rewrite the functional integral in the right-hand side as:
\be
\int P_{ N} (d\Psi) e^{-V_{\beta,L}(\Psi) + \G(\Psi; A) + B(\Psi; \phi)} = \frac{1}{\mathcal{N}_{\beta,L,N}}\int d\Psi\, e^{-S_{\beta,L}(\Psi) + \G(\Psi; A) + B(\Psi; \phi)}
\ee
where 
\be\label{eq:action}
S_{\beta,L}(\Psi) := \frac{1}{\beta L} \sum_{\underline{k} \in \mathbb{D}^{*}_{\beta,L}} \sum_{x_{2}, y_{2}} \Psi^{+}_{\underline{k}, x_{2}, r} \big[ (-ik_{0} - \m)\delta_{r,r'}\delta_{x_{2}, y_{2}} + \hat H_{rr'}(k_{1}; x_{2}, y_{2}) \big]\Psi^{-}_{\underline{k}, y_{2},r'} + V_{\beta,L}(\Psi)\;.
\ee
It is a well-known fact in quantum statistical mechanics that the Schwinger functions of the Gibbs state defined in Section \ref{sec:2ndqu} can be obtained as functional derivatives of the generating functional. Setting $j^{c}_{\m} \equiv j_{\m}$, we have:
\be
\langle {\bf T} a^{\e_{1}}_{\xx_{1}, r_{1}}\,; \cdots \,; a^{\e_{n}}_{\xx_{n}, r_{n}}\,; j^{\sharp_{1}}_{\m_{1}, \yy_{1}}\,; \cdots \,; j^{\sharp_{m}}_{\m_{m}, \yy_{m}} \rangle_{\beta,L}  = \frac{\partial^{n+m}\mathcal{W}_{\beta,L}(A,\phi)}{\partial \phi^{\e_{1}}_{\xx_{1}, r_{1}} \cdots\, \partial \phi^{\e_{n}}_{\xx_{n}, r_{n}} \partial A^{\sharp_{1}}_{\m_{1}, \yy_{1}} \cdots\, \partial A^{\sharp_{m}}_{\m_{m}, \yy_{m}}} \Big|_{\substack{A=0 \\ \phi=0}}\;.
\ee
We refer the reader to, e.g., Section 5.1 of \cite{GMP} for a proof of this statement. The usefulness of this result is that the Grassmann representation of the model can be investigated using cluster expansion techniques and rigorous renormalization group. Moreover, one can recover the real-time transport coefficients starting from the Euclidean correlation functions thanks to the following proposition.

\begin{prop}{\bf (Wick rotation.)}\label{prp:wick} Under the same assumption of Theorem \ref{thm:1}, the following is true. Let $\eta \neq 0$, $p_{1}\neq 0$. Let $\eta_{\beta}\in \frac{2\pi}{\beta} \mathbb{Z}$ such that it minimizes $|\eta_{\beta} - \eta |$. The charge transport coefficients can be rewritten as, for $\m,\n = 0,1$ (recall Eq. (\ref{eq:defGG})):
\be\label{eq:Dkk'0}
G^{\underline{a}}_{\m\n}(\eta, p_{1}) = \sum_{y_{2}\leq a'}\Big[ \sum_{x_{2}\leq a} \lim_{\beta, L\to \infty} \pmb{\langle} {\bf T} \hat j_{\m,\underline{p}, x_{2}}\,; \hat j_{\n,-\underline{p}, y_{2}} \pmb{\rangle}_{\beta, L} + \D_{\m,y_{2}} \d_{\m\n}  \Big](-1)^{\d_{\m,1}}\;,
\ee
where $\D_{0,y_{2}} = 0$, $\D_{1,y_{2}}$ is given by Eq. (\ref{eq:Delta}), $\pmb{\langle} \cdot \pmb{\rangle}_{\beta, L} = (\beta L)^{-1} \langle \cdot \rangle_{\beta, L}$ and $\underline{p} = (\eta_{\beta}, p_{1})$.
%
%
Similarly, the spin transport coefficients are obtained from Eq. (\ref{eq:Dkk'0}) replacing $j_{\m,\xx}$ with $j_{\m,\xx}^{s}$, for $\m=0,1$.
\end{prop}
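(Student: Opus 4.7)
The plan is to establish (\ref{eq:Dkk'0}) via analytic continuation (Wick rotation) from real to imaginary time, working first at finite $\beta, L$ through the Lehmann spectral representation and then passing to the thermodynamic limit. Since $\mathcal{H}_{L}$ acts on the finite-dimensional Fock space $\mathcal{F}_{L}$, it admits a complete eigenbasis $\{|n\rangle\}$ with eigenvalues $E_{n}$, and every correlator can be written as a finite sum over matrix elements weighted by the Boltzmann factors $P_{n} = e^{-\beta E_{n}}/\mathcal{Z}_{\beta,L}$. Writing $A = \hat j_{\mu,\underline{p},x_{2}}$, $B = \hat j_{\nu,-\underline{p},y_{2}}$ (with the charge or spin label, as appropriate), one obtains
\begin{equation*}
\langle [A(t), B]\rangle_{\beta,L} = \sum_{n,m}(P_{n}-P_{m})\,e^{it(E_{n}-E_{m})}\,A_{nm}B_{mn}\,.
\end{equation*}

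The real-time integration, at $\eta > 0$, yields $\int_{-\infty}^{0} dt\, e^{t\eta}e^{it(E_{n}-E_{m})} = [\eta + i(E_{n}-E_{m})]^{-1}$, whereas the Euclidean Fourier transform at the bosonic Matsubara frequency $\eta_{\beta} \in \frac{2\pi}{\beta}\mathbb{Z}$ gives
\begin{equation*}
\int_{0}^{\beta} d\tau\, e^{i\eta_{\beta}\tau}\,\langle {\bf T}\, A(\tau) B\rangle_{\beta,L} = \sum_{n,m}\frac{(P_{m}-P_{n})A_{nm}B_{mn}}{i\eta_{\beta} + E_{n} - E_{m}}\,,
\end{equation*}
after using the KMS identity $P_{n}e^{\beta(E_{n}-E_{m})} = P_{m}$ and $e^{i\beta\eta_{\beta}}=1$. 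Both expressions are the same rational function of the frequency, evaluated at $i\omega = \eta$ and $\omega = \eta_{\beta}$ respectively; since $|\eta_{\beta}-\eta|=O(1/\beta)$ and the denominators are uniformly bounded away from zero for $\eta \neq 0$, the two expressions coincide up to an error that vanishes as $\beta\to\infty$.

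A separate point is the Schwinger term $\Delta_{1,y_{2}}\delta_{\mu\nu}$, which is relevant only for $\mu=\nu=1$ (the Drude-weight case, since $\Delta_{0,y_{2}}=0$). The real-time Kubo formula (\ref{eq:kD}) for $D^{\underline{a}}$ contains the explicit subtraction $-\pmb{\langle}[X_{1}^{\leq a}, \hat j_{1}^{\leq a'}]\pmb{\rangle}$; by the computation (\ref{eq:XJ})--(\ref{eq:Delta}), which crucially uses $a > a'$ to cancel the $y_{1}$-dependent contributions via periodization, this commutator equals $-i\sum_{y_{2}\leq a'}\Delta_{1,y_{2}}^{\beta,L}$. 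In the Lehmann representation this constant shift is precisely what is needed to convert the ``integral of a real-time commutator'' into the ``Fourier transform of a time-ordered Euclidean correlator'', by absorbing the equal-time contact contribution intrinsic to ${\bf T}$-ordering. For $\mu=0$ or $\mu\neq\nu$, no subtraction arises, consistent with $\Delta_{\mu,y_{2}}\delta_{\mu\nu}=0$ in such cases. The overall sign $(-1)^{\delta_{\mu,1}}$ reflects the relative signs of the $-i$ and $i$ prefactors in (\ref{eq:G})--(\ref{eq:kD}). Exactly the same argument, with $j_{\mu,\xx}$ replaced by $j_{\mu,\xx}^{s}$, covers the spin-transport case.

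Finally, the thermodynamic limit $\beta, L\to\infty$ at fixed $a, a', \eta, p_{1}$ with $\eta, p_{1}\neq 0$ is performed term-by-term inside the finite sum over $x_{2}\leq a$, $y_{2}\leq a'$. The denominators $i\eta_{\beta}+E_{n}-E_{m}$ are uniformly bounded below by $|\eta_{\beta}|$, so the spectral sum is uniformly bounded in $\beta, L$ by the operator norms of the local currents $\hat j_{\mu,\underline{p},x_{2}}$ (which grow at most polynomially in $a$); dominated convergence then lets one exchange $\lim_{\beta, L\to\infty}$ with the spectral sum. The main obstacle I expect is the bookkeeping of the Schwinger-term step: verifying that the periodization prescription introduced below (\ref{eq:kD}) reproduces exactly $\sum_{y_{2}\leq a'}\Delta_{1,y_{2}}$ with the correct sign, and checking that the $O(1/\beta)$ discrepancy between $\eta$ and $\eta_{\beta}$, together with the finite-$\beta$ boundary artifacts of the Wick rotation, produces no surviving contribution in the thermodynamic limit.
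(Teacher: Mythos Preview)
Your Lehmann-representation route is a legitimate alternative to the paper's contour-integration argument (Lemma~B.1), and the spectral identities you write down are correct at finite $\beta,L$. Two points, however, deserve correction.

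First, your explanation of the Schwinger term is off. The $\Delta_{1,y_{2}}$ appearing in (\ref{eq:Dkk'0}) is \emph{not} an equal-time contact term generated by time-ordering; it is simply the $[X_{1}^{\leq a},j_{1}^{\leq a'}]$ piece already present in the real-time definition (\ref{eq:kD}), carried over verbatim via (\ref{eq:XJ})--(\ref{eq:Delta}). Your own spectral computation confirms that the commutator integral and the Matsubara transform agree without any extra shift; the $\Delta$ just rides along on both sides.

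Second, and more seriously, the step where you control the $\eta\to\eta_{\beta}$ discrepancy by operator norms does not close. The operators $\hat j_{\mu,p_{1},x_{2}}=\sum_{x_{1}}e^{ip_{1}x_{1}}j_{\mu,\vec x}$ have norm of order $L$, so after the $L^{-1}$ normalization your mean-value bound on the spectral sum gives an error of order $L/(\beta\eta^{2})$, which is not uniform in $L$. The paper avoids this by checking the hypotheses (\ref{eq:AB}) of Lemma~\ref{lem:wick} using the RG-derived decay estimate (\ref{eq:JJbd}), $|\langle{\bf T}j_{\mu,\xx};j_{\nu,\yy}\rangle|\leq C/(1+\|\underline x-\underline y\|^{2})$; this yields $L$-uniform constants and hence an error $O(\beta^{-1})$ independent of $L$. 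If you want to salvage the Lehmann approach, you must feed in the same decay input at this point rather than raw operator norms. Similarly, the existence of the $\beta,L\to\infty$ limits on the real-time side is not a dominated-convergence statement on a fixed space; it requires the infinite-volume dynamics (Lieb--Robinson) and the convergence of the Gibbs state, which the paper imports from \cite{GMP}.
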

Proposition \ref{prp:wick} is proven in Appendix \ref{app:wick}. The proof is a slight modification of the Wick rotation of \cite{GMP}, inspired by Theorem 5.4.12 of \cite{BR}. Proposition \ref{prp:wick} will be the starting for the evaluation of the edge transport coefficients.

\subsection{Reduction to a one-dimensional theory}\label{sec:red1d}
In this section we start discussing the evaluation of the generating functional of the correlation functions. In particular, we shall describe the integration of the bulk degrees of freedom, that allows to rewrite the generating functional in terms of a suitable $1+1$-dimensional Grassmann field, associated with the edge modes. This is the content of the next proposition.
\begin{prop}\label{prp:red}{\bf (Reduction to a one-dimensional theory.)}\label{prp:1d}  
Let $\m_{e}$, $e= 1, \ldots, n_{\text{edge}}$, be a sequence of real numbers such that $|\m_{e} - \m|\leq C|\l|$. Let, for $\underline{x} = (x_{0}, x_{1})$, $\underline{k} = (k_{0}, k_{1})$:
\be\label{eq:prop1d}
g^{\text{(1d)}}_{e, N}(\underline{x} - \underline{y}) := \frac{1}{\beta L} \sum_{\underline{k} \in \mathbb{D}_{\beta,L}^{*}}e^{-i\underline{k}\cdot(\underline{x} - \underline{y})} \frac{\chi_{N}(k_{0})\chi_{e}(k_{1})}{-ik_{0} + \e_{e}(k_{1}) - \m}
\ee
where $\chi_{e}(k_{1}) \equiv \chi((2/\d) |\e_{e}(k_{1}) - \m_{e}|)$ with $\d$ as in Assumption 1. Then, there exists $\bar \l>0$ such that for $|\l|< \bar \l$:
\be\label{eq:1d}
\mathcal{W}_{\beta,L}(A,\phi) = \mathcal{W}^{\text{(bulk)}}_{\beta,L}(A,\phi) + \lim_{N\to\infty} \log \int P_{N}(d\psi) e^{V^{\text{(1d)}}(\psi; A, \phi)}\;,
\ee
where $\psi\equiv \psi^{\pm}_{\underline{x}, e}$ is a $1+1$-dimensional Grassmann field and $P_{N}(d\psi)$ is a Gaussian Grassmann integration, with propagator:
\be\label{eq:cov1d}
\int P_{N}(d \psi) \psi^{-}_{\underline{x}, e} \psi^{+}_{\underline{y}, e'} = \delta_{e,e'}\, g^{(1d)}_{e,N}(\underline{x} - \underline{y})\;.
\ee
The effective one dimensional action is:
\be\label{eq:V1d}
V^{\text{(1d)}}(\psi; A, \phi) = \sum_{\G} \int_{\beta, L} D \underline{X}  D {\bf Y}  D {\bf Z}\,\psi_{\G}(\underline{X}) \phi_{\G}({\bf Y}) A_{\G}({\bf Z})  W_{\G}^{\text{(1d)}}(\underline{X},{\bf Y},{\bf Z})
\ee
where: $\G = \{ \G_{\psi}, \G_{\phi}, \G_{A}\}$, with $\G_{\psi} = \{(e_{1}, \e_{1}), \ldots, (e_{n}, \e_{n})\}$, $\G_{\phi} = \{ (r_{1}, \kappa_{1}), \ldots, (r_{s}, \kappa_{s}) \}$, $\G_{A} =  \{ (\m_{1},\sharp_{1}), \ldots, (\m_{m},\sharp_{m}) \}$, collects all field indices; $\sum_{\G}$ sums over all field configurations, for any $n\geq 1$, $m, s\geq 0$; $\int_{\beta, L} d\xx \equiv \int_{0}^{\beta} dx_{0} \sum_{\vec x \in \L_{L}}$; $D\underline{X} = \prod_{i=1}^{n} d\underline{x}_{i}$, $D{\bf Y} = \prod_{j=1}^{s} d \yy_{j}$, $D{\bf Z} = \prod_{k=1}^{m} d\zz_{k}$; $\psi_{\G}(\underline{X}) = \prod_{i=1}^{n} \psi^{\e_{i}}_{\underline{x}_{i}, e_{i}}$, $\phi_{\G}({\bf Y})  = \prod_{j=1}^{s} \phi^{\kappa_{j}}_{\yy_{j}, r_{j}}$, with $\e_{i}, \kappa_{j} = \pm$, and $A_{\G}({\bf Z}) = \prod_{k=1}^{m} A^{\sharp_{k}}_{\zz_{k}, \m_{k}}$; the kernels $W_{\G}^{\text{(1d)}}(X,{\bf Y},{\bf Z}) \equiv W_{\G}^{\text{(1d)}}(\{ \underline{x}_{i} \},\{ \yy_{j} \}, \{\zz_{k}\})$ are analytic in $|\l| < \bar \l$, are translation invariant in the ${\bf e}_0 = (1,0,0)$ and ${\bf e}_{1} = (0, \vec e_{1})$ directions and satisfy the bounds, for some constant $C>0$ depending only $\{n_{ij},\, m_{ij}\}$:
\be\label{eq:decW1d}
\int_{\beta, L} D \underline{Q} D\widetilde{Q}_{2}\, \prod_{i<j} \| \underline{q}_{i} - \underline{q}_{j} \|_{\b, L}^{n_{ij}}\prod_{l<k} |\tilde q_{l,2} - \tilde q_{k,2}|^{m_{lk}} |W^{(1d)}_{\G}(\underline{Q}, \widetilde{Q}_{2})| \leq \beta L C\;,\quad \forall n_{ij},\, m_{ij} \in \mathbb{N}
\ee
where $\| \cdot \|_{\beta, L}$ is the distance on the $2$-torus of sides $\beta , L$, and $\underline{Q} = \{ \underline{x}_{1},\ldots, \underline{x}_{n}, \underline{y}_{1}, \ldots, \underline{y}_{s}, \underline{z}_{1}, \ldots, \underline{z}_{m}\}$, $\widetilde{Q}_{2} = \{ y_{1,2},\ldots, y_{s,2}, z_{1,2},\ldots, z_{m,2} \}$. The formula (\ref{eq:V1d}) has to be understood with the conventions $\psi_{\emptyset} = A_{\emptyset} = \phi_{\emptyset} = 1$, $W_{\emptyset} = 0$. Finally, the generating functional $\mathcal{W}^{\text{(bulk)}}_{\beta,L}(A,\phi)$ is given by an expression of the form (\ref{eq:V1d}), with $n=0$; its kernels satisfy a bound like (\ref{eq:decW1d}) with $\beta L$ replaced by $\beta L^{2}$. 
\end{prop}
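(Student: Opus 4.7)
The plan is to integrate out the bulk degrees of freedom from the fermionic functional integral, keeping the edge modes explicit, by a convergent fermionic cluster expansion. The key structural input is that for the energy bands away from the Fermi level, the propagator has uniform exponential decay of rate proportional to $\delta$ (the gap), whereas the branches $\e_e(k_1)$ crossing $\mu$ are carried by eigenfunctions $\xi^e_{x_2}(k_1)$ that decay exponentially in $|x_2|$ or in $|L-x_2|$ by Assumption~1.

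First I would split the free propagator (\ref{eq:prop}) into an \emph{edge} part and a \emph{bulk} part. Using the smooth cutoff $\chi_e(k_1)=\chi((2/\delta)|\e_e(k_1)-\m_e|)$ and partition-of-unity identities, write
\begin{equation*}
g_{\beta,L,N}(\xx,r;\yy,r')=g^{\text{(edge)}}_{\beta,L,N}(\xx,r;\yy,r')+g^{\text{(bulk)}}_{\beta,L,N}(\xx,r;\yy,r'),
\end{equation*}
where the edge part collects the contributions of those $e_q(k_1)=\e_e(k_1)$ with the cutoff $\chi_e(k_1)$, and the bulk part collects everything else. Inserting a counterterm $(\m_e-\m)\,\psi^+\psi^-$ on the edge branches (to be fixed later by the renormalization group analysis of Section~\ref{sec:RG}), I can arrange the edge propagator to be exactly $g^{(1d)}_{e,N}$ times the projector $|\xi^e\rangle\langle\xi^e|$ onto the edge state, up to a remainder that decays exponentially in $|x_2-y_2|$ and is absorbable into the bulk part. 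By standard Combes–Thomas / resolvent estimates applied to $(-ik_0+\hat H(k_1)-\m)^{-1}\chi_N(k_0)(1-\sum_e\chi_e(k_1))$, the bulk propagator $g^{(\text{bulk})}_{\beta,L,N}$ satisfies $|g^{(\text{bulk})}_{\beta,L,N}(\xx,r;\yy,r')|\le C e^{-c\|\underline{x}-\underline{y}\|_{\beta,L}}e^{-c'|x_2-y_2|}$ uniformly in $N$, with $c,c'>0$ depending on $\delta$.

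Next I would invoke the addition principle for Gaussian Grassmann integrations,
\begin{equation*}
\int P_{g^{(\text{edge})}+g^{(\text{bulk})}}(d\Psi)\,F(\Psi)=\int P_{g^{(\text{edge})}}(d\psi^{(e)})\int P_{g^{(\text{bulk})}}(d\psi^{(b)})\,F(\psi^{(e)}+\psi^{(b)}),
\end{equation*}
and perform the bulk integration first. This produces an effective action
\begin{equation*}
\mathcal{V}^{\text{eff}}(\psi^{(e)};A,\phi)=\log\int P_{g^{(\text{bulk})}}(d\psi^{(b)})\,e^{-V_{\beta,L}(\psi^{(e)}+\psi^{(b)})+\Gamma(\psi^{(e)}+\psi^{(b)};A)+B(\psi^{(e)}+\psi^{(b)};\phi)}
\end{equation*}
which I expand in a convergent Battle–Brydges–Federbush–Kennedy tree expansion: the exponential decay of $g^{(\text{bulk})}$ yields Gram-representable covariances, so the standard determinant bounds (Gram inequality) for the truncated expectations, combined with the small factor $\lambda$ from each vertex of $V_{\beta,L}$ and the smallness of $|\mu_e-\mu|$, give convergence for $|\lambda|<\bar\lambda$; see Appendix~\ref{app:1d}. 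The resulting kernels of $\mathcal{V}^{\text{eff}}$ satisfy exponential tree bounds in all arguments, including $|x_2-y_2|$, which will yield the polynomial bound (\ref{eq:decW1d}) after projecting on the edge states.

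Finally, on the remaining edge propagator $g^{(\text{edge})}$ I would use the factorization $\xi^e_{x_2}(k_1;r)\overline{\xi^e_{y_2}(k_1;r')}$ and Assumption~1 to trade the lattice Grassmann variables $\psi^{(e),\pm}_{\xx,r}$ for the $(1+1)$-dimensional fields $\psi^{\pm}_{\underline{x},e}$ of Eq.~(\ref{eq:cov1d}), absorbing the exponentially decaying $\xi^e_{x_2}$-factors into the kernels $W^{(1d)}_{\Gamma}$. The translation invariance in $(x_0,x_1)$ is manifest from the use of Fourier sums in $(k_0,k_1)$, and the $n=0$ contribution—which contains only bulk and source Grassmann variables together with the $(\mu_e-\mu)$ counterterms—is isolated and called $\mathcal{W}^{(\text{bulk})}_{\beta,L}$, with a factor $\beta L^2$ from the two-dimensional integration domain replacing $\beta L$. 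The main obstacle is the bookkeeping of the cluster expansion so as to recover the bounds (\ref{eq:decW1d}) with the precise dependence on $x_2$-differences through arbitrarily high powers: this requires that each tree propagator line carries the full exponential decay in $x_2$, coming either from $g^{(\text{bulk})}$ or from the edge-state envelopes $\xi^e_{x_2}$, and that these decays survive the composition along trees. This is achieved by standard Gevrey-type estimates on products of exponentials, which replace a single exponential decay factor with any polynomial bound after a finite loss in the decay rate, yielding (\ref{eq:decW1d}) for every choice of $n_{ij},m_{lk}\in\mathbb{N}$.
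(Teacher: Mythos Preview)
Your proposal is essentially the same strategy as the paper's: split $g=g^{(\text{edge})}+g^{(\text{bulk})}$, integrate the bulk by a Brydges--Battle--Federbush cluster expansion, then factor the edge propagator through $|\xi^e\rangle\langle\xi^e|$ to obtain the $1{+}1$-dimensional field $\psi_{\underline x,e}$. Two small corrections are worth noting. First, the bulk propagator does \emph{not} decay exponentially in $\|\underline x-\underline y\|_{\beta,L}$: since the cutoffs $\chi_N(k_0)$ and $\chi_e(k_1)$ are only $C^\infty$, one gets faster-than-any-polynomial decay in the $(x_0,x_1)$ directions together with exponential decay in $|x_2-y_2|$ (Eq.~(\ref{eq:gbound})); this is still ample for the Gram/determinant bounds. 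Second, no $(\mu_e-\mu)$ counterterm is inserted at this stage---the shifted $\mu_e$ enters only through the support of $\chi_e$, and the actual counterterm $\nu_e$ is introduced \emph{after} the reduction, in the infrared RG.
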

\begin{rem}\label{rem:red}
The choice of replacing $\m$ with $\m_{e}$ at the argument of the cutoff functions is motivated by the renormalization of the chemical potential, that will be introduced later.
\end{rem}
\begin{proof}
We rewrite the propagator as:
\be
g_{\beta, L, N}(\xx,r; \yy, r') = g^{\text{(edge)}}_{\beta, L,N}(\xx, r; \yy, r') + g^{\text{(bulk)}}_{\beta, L,N}(\xx,r; \yy, r')
\ee
where, for $\delta$ as in Assumption 1,
\be\label{eq:gedge}
g^{\text{(edge)}}_{\beta, L,N}(\xx, r; \yy, r') := \frac{1}{\beta L}\sum_{\underline{k} \in \mathbb{D}^{*}_{\beta,L}} \sum_{e=1}^{n_{\text{edge}}} e^{-i\underline{k}\cdot (\underline{x} - \underline{y})} \xi^{e}_{x_{2}}(k_{1}; r) \overline{\xi^{e}_{y_{2}}(k_{1}; r')} \frac{\chi_{N}(k_{0})\chi_{e}(k_{1})}{-ik_{0} + \e_{e}(k_{1}) - \m}\;,
\ee
with $\xi^{e}(k_{1})$ the edge state with dispersion relation $\e_{e}(k_{1})$ and with Fermi point $k_{F}^{e}$ (recall the discussion after Eq. (\ref{eq:pdpdp})). Since $|\m_{e} - \m|\leq C|\l|$, for $|\l|$ small enough the support of $\chi_{e}(k_{1}) \equiv \chi((2/\d)|\e_{e}(k_{1}) - \m_{e}|)$ is contained in the support of $\chi((1/\d)|\e_{e}(k_{1}) - \m|)$. Notice that, by Assumption 1, for fixed $k_{1}$ the sum over $e$ involves one edge state, whose spin is determined by the internal degrees of freedom $r,  r'$. Instead, the bulk propagator is given by:
\bea\label{eq:gbulk}
g^{\text{(bulk)}}_{\beta, L,N}(\xx, r; \yy, r') &=& \frac{1}{\beta L}\sum_{\underline{k} \in \mathbb{D}^{*}_{\beta,L}} \sum_{e=1}^{n_{\text{edge}}} e^{-i\underline{k}\cdot (\underline{x} - \underline{y})}  \Big(\frac{\chi_{N}(k_{0})\chi_{e}(k_{1})}{-ik_{0} + \hat H(k_{1}) - \m}P_{\perp}^{e}(k_{1})\Big)(x_{2},r; y_{2},  r')\nn\\
&& + \frac{1}{\beta L}\sum_{\underline{k} \in \mathbb{D}^{*}_{\beta,L}} e^{-i\underline{k}\cdot (\underline{x} - \underline{y})}  \Big(\frac{\chi_{N}(k_{0}) \chi_{\geq}(k_{1})}{-ik_{0} + \hat H(k_{1}) - \m}\Big)(x_{2},r; y_{2},  r')\;,
\eea
with $P^{e}_{\perp}(k_{1}) = 1 - P^{e}(k_{1})$, $P^{e}(k_{1}) = |\xi^{e}(k_{1})\rangle \langle \xi^{e}(k_{1})|$, and $\chi_{\geq}(k_{1}) = 1 - \sum_{\bar e} \chi_{(\bar e, \s)}(k_{1})$. Thus, by construction, the bulk propagator is massive, and satisfies the bound, see Appendix \ref{app:bulk}:
\be\label{eq:gbound}
| g^{\text{(bulk)}}_{\beta, L,N}(\xx, r; \yy, r') | \leq \frac{C_{n} \d^{-2}}{1 + (\d \| \underline{x} - \underline{y} \|_{\beta, L})^{n}} e^{- c \d |x_{2} - y_{2}| }\;,\qquad \forall n\in \mathbb{N}\;,
\ee
for some constants $c,\,C_{n}$ independent of $\beta, L, N, \d$. We now use the addition principle of Grassmann Gaussian variables to write the Grassmann field as sum of two independent fields, $\Psi^{\pm}_{\xx} = \Psi^{(\text{edge})\pm}_{\xx} + \Psi^{(\text{bulk})\pm}_{\xx}$, having propagators $g^{\text{(edge)}}$, $g^{\text{(bulk)}}$, respectively. The integration of the ``massive'' field $\Psi^{(\text{bulk})}$ can be performed in a standard way, following for example Section 5.2 of \cite{GMP} with some minor changes discussed in  Appendix \ref{app:bulk}. The result is:
\be\label{eq:Weff}
\mathcal{W}_{\beta,L,N}(A,\phi) = \mathcal{W}^{(\text{bulk})}_{\beta,L,N}(A,\phi) + \log \int P_{N}(d\Psi^{\text{(edge)}}) e^{V^{\text{(edge)}}(\Psi^{\text{(edge)}}; A, \phi)}
\ee
with:
\be\label{eq:Vedge}
V^{\text{(edge)}}(\Psi; A, \phi) =\sum_{\G'} \int_{\beta, L} D {\bf X}  D {\bf Y}  D {\bf Z}\,\Psi_{\G'}({\bf X}) \phi_{\G'}({\bf Y}) A_{\G'}({\bf Z}) W^{(\text{edge})}_{\G'}({\bf X},{\bf Y},{\bf Z})
\ee
where $\G' = \{\G'_{\psi}, \G'_{\phi}, \G'_{A}\}$, with $\G'_{\psi} = \{(r_{1}, \e_{1}), \ldots, (r_{n}, \e_{n})\}$, $\G'_{\phi} = \{ ( r'_{1}, \kappa_{1}), \ldots, ( r'_{s}, \kappa_{s}) \}$, $\G'_{A} = \{ (\m_{1},\sharp_{1}), \ldots, (\m_{m},\sharp_{m}) \}$, and $n\geq 1$. The kernels $W^{(\text{edge})}_{\G'}({\bf X},{\bf Y},{\bf Z})$ are explicit and analytic in $|\l|< \bar \l$, for some $\bar \l>0$ independent of $\beta, L$. They satisfy the bound:
\be\label{eq:wedge}
\int_{\beta, L} D \underline{X} D {\bf Y} D{\bf Z}\, \prod_{i<j} \| \underline{q}_{i} - \underline{q}_{j} \|_{\beta, L}^{n_{ij}} |q_{i,2} - q_{j,2}|^{m_{ij}} |W^{(\text{edge})}_{\G'}({\bf Q})| \leq \beta L C\;,\qquad \forall n_{ij}, m_{ij} \in \mathbb{N}
\ee
with ${\bf Q} = \{ {\bf X }, {\bf Y}, {\bf Z}\}$, and for some $C$ only dependent on $\{n_{ij}, m_{ij}\}$. The presence of the $\beta L$ factor follows from translation invariance in the ${\bf e}_{0}$, ${\bf e}_{1}$ directions\footnote{Notice that we are not integrating in the $x_{i,2}$ variables: this, together with the fast decay in $x_{2} - y_{2}$ of (\ref{eq:gbound}), is the reason why the right-hand side of (\ref{eq:wedge}) is bounded proportionally to $\beta L$ and not $\beta L^{2}$.}. The bulk effective action $\mathcal{W}^{(\text{bulk})}_{\beta,L,N}(A,\phi)$ is given by an expression like (\ref{eq:Vedge}), with $n=0$; the kernels satisfy the bounds (\ref{eq:wedge}), with $\beta L$ replaced by $\beta L^{2}$. Now we use that, as proven in Appendix \ref{app:2d1d}:
\be\label{eq:2d1d}
\int P_{N}(d\Psi^{\text{(edge)}}) e^{V^{\text{(edge)}}(\Psi^{\text{(edge)}}; A, \phi)} = \int P'_{N}(d\psi) e^{V^{\text{(edge)}}(\psi*\check\xi; A, \phi)}\;,
\ee
where $P'_{N}(d\psi)$ is a Grassmann Gaussian integration for some new $1+1$-dimensional Gaussian Grassmann fields $\psi^{\pm}_{\underline{x}, e}$ with covariance (\ref{eq:cov1d}), and
\be\label{eq:conv}
(\psi^{-} * \check{\xi})_{\xx, r} :=  \sum_{e}\sum_{h} \psi^{-}_{(x_{0}, h),e}\, \overline{\check{\xi}^{e}_{x_{2}}(x_{1} - h; r)}\;,\qquad (\psi^{+} * \check{\xi})_{\xx, r} :=  \sum_{e}\sum_{h} \psi^{+}_{(x_{0}, h),e}\, \check{\xi}^{e}_{x_{2}}(x_{1} - h; r)\;,
\ee
where we introduced:
\be\label{eq:checkxi}
\check{\xi}^{e}_{x_{2}}(x_{1};r) := \frac{1}{L}\sum_{k_{1} \in S^{1}_{L}} e^{-ik_{1} x_{1}}\, \xi^{e}_{x_{2}}(k_{1}; r) \chi((1/\d) |\e_{e}(k_{1}) - \m_{e}|)\;.
\ee
By the exponential decay of $\xi^{e}_{x_2}$, Eq. (\ref{eq:dec1}), and by the compact support of $\chi$ we easily get that, using the regularity in $k_{1}$ of both $\xi^{e}$ and $\chi$: 
\be\label{eq:checkbd}
|\check{\xi}^e_{x_2}(x_{1};r)|\leq C_{n} \frac{e^{-cx_{2}}}{1 + (\d |x_{1}|_{L})^{n}}\qquad \text{or}\qquad |\check{\xi}^e_{x_2}(x_{1};r)|\leq C_{n} \frac{e^{-c(L-x_{2})}}{1 + (\d |x_{1}|_{L})^{n}}\;,\qquad \forall n\in \mathbb{N}\;,
\ee
with $|\cdot|_{L}$ the distance on the circle of length $L$.  Finally, defining $V^{\text{(1d)}}(\psi; A, \phi) := V^{\text{(edge)}}(\psi*\check\xi; A, \phi)$ we get an expression of the form (\ref{eq:V1d}), with kernels:
\be\label{eq:edgeto1d}
W_{\G}^{\text{(1d)}}(\underline{X}, {\bf Y}, {\bf Z}) = \sum_{\{x_{k, 2},\, h_{k},\,  r'_{k}\}} \Big[\prod_{k=1}^{n} \tilde{\xi}^{e_{k}}_{x_{k,2}}(h_{k}; r'_{k})\Big]W^{(\text{edge})}_{\G'}( {\bf X} + H, {\bf Y}, {\bf Z})
\ee
with $\G_{\phi} = \G'_{\phi}$, $\G_{A} = \G'_{A}$ and $\G_{\psi} = \{ (e_{1}, \e_{1}), \ldots, (e_{n}, \e_{n}) \}$; $\tilde{\xi}^{e_{k}} = \check{\xi}^{e_{k}}$ if $\e_{k} = -$ or $\tilde{\xi}^{e_{k}} = \overline{\check{\xi}^{e_{k}}}$ if $\e_k = +$; ${\bf X} + H$ is a shorthand notation for $\{x_{i,0}, x_{i,1} + h_{i}, x_{i,2}\}_{i=1}^{n}$. As a consequence of the translation invariance of $W^{(\text{edge})}_{\G'}$, the kernels $W_{\G}^{\text{(1d)}}$ are also translation invariant in the ${\bf e}_{0}$ and ${\bf e}_{1}$ directions. To prove the bound (\ref{eq:decW1d}) we proceed as follows. We write (omitting the $\b, L$ labels in the distances):
\bea
&&\Big[ \prod_{i<j} \| \underline{q}_{i} - \underline{q}_{j} \|^{n_{ij}}\Big]\Big[ \prod_{l<k} |\tilde q_{l,2} - \tilde q_{k,2}|^{m_{l,k}}\Big] |W_{\G}^{\text{(1d)}}(\underline{Q}, \widetilde{Q}_{2})| \leq \sum_{\{ h_{k} \}} \Big[\prod_{i<j} \frac{\| \underline{q}_{i} - \underline{q}_{j} \|^{n_{ij}}}{\| \underline{q}_{i} - \underline{q}_{j} + \underline{h}_{ij} \|^{n_{ij}}} \frac{1}{1 + |h_{i}|^{m}}\Big] \nn\\&&\sum_{\substack{\{x_{k,2}\} \\  \{ r'_{k}\}}}\Big[ \prod_{k=1}^{n} |\tilde{\xi}^{e_{k}}_{x_{k,2}}(h_{k}; r'_{k})| |h_{k}|^{K} \Big] \big[ \prod_{i<j} \| \underline{q}_{i} - \underline{q}_{j} + \underline{h}_{ij} \|^{n_{ij}}\big] \big[ \prod_{k<l} | \tilde q_{k,2} - \tilde q_{l,2} |^{m_{kl}} \big] |W^{(\text{edge})}_{\G'}( {\bf Q} + H)|\nn
\eea
with ${\bf Q} + H_{1} = ({\bf X} + H_{1}, {\bf Y}, {\bf Z})$, $\underline{h}_{ij} = (0, h_{i} - h_{j})$, for $K\equiv K(m)$ and $m$ suitably large. Using that $\| \underline{q}_{i} - \underline{q}_{j} \|^{n_{ij}}\leq C( \| \underline{q}_{i} - \underline{q}_{j} + \underline{h}_{ij}\|^{n_{ij}} + \| \underline{h}_{ij} \|^{n_{ij}})$, the first product of the right-hand side can be bounded proportionally to $\prod_{i<j} \| \underline{h}_{ij} \|^{n_{ij}}/(1 + |h_{i}|^{m})$, which is summable in $\{h_{k}\}$ for $m$ large enough. Finally, using the bounds (\ref{eq:checkbd}) together with the bound on the edge kernels (\ref{eq:wedge}), Eq. (\ref{eq:decW1d}) follows.
\end{proof}
As mentioned in Remark \ref{rem:red}, the parameters $\m_{e}$ appearing in Proposition \ref{prp:1d} play the role of renormalized chemical potentials. We rewrite the Gaussian integration as:
\be
\int P'_{ N}(d\psi) e^{V^{(\text{1d})}(\psi; A, \phi)} = \int \widetilde{P}_{N}(d\psi) e^{V^{\text{(1d)}}(\psi; A, \phi) + \sum_{e} \n_{e} (\psi^{+}_{e}, \psi^{-}_{e})}\;,
\ee
where $(\psi^{+}_{e}, \psi^{-}_{e}) = \int_{0}^{\beta} dx_{0} \sum_{x_{1}} \psi^{+}_{\underline{x}, e} \psi^{-}_{\underline{x}, e}$, and the new Grassmann Gaussian integration $\widetilde{P}_{N}(d\psi)$ has covariance:
\be
\int \widetilde{P}_{N}(d\psi) \hat \psi^{-}_{\underline{k},e} \hat \psi^{+}_{\underline{q},e'} = \beta L \delta_{e,e'} \delta_{\underline{k},\underline{q}} \frac{\chi_{N}(k_{0}) \chi_{e}(k_{1})}{-ik_{0} + \e_{e}(k_{1}) - \m + \n_{e}(\underline{k})}
\ee
with $\n_{e}(\underline{k}) = \n_{e}\chi_{N}(k_{0}) \chi_{e}(k_{1})$. The parameters $\n_{e}\equiv \n_{e}(0, k_{F}^{e})$ will be chosen later on; for the moment, we shall only suppose that $|\n_{e}|\leq C\l$, and we shall set $\m_{e} := \m - \nu_{e}$. From now on, we shall denote by $k_{F}^{e} \equiv k_{F}^{e}(\l)$ the {\it interacting Fermi momentum}, that is the solution of $\e_{e}(k_{F}^{e}) = \m - \n_{e} \equiv \m_{e}$. 

We are now ready to integrate the field $\psi$. Being the covariance massless, we cannot integrate the field in a single step. Instead, we will proceed in a multiscale fashion, by decomposing the field over single scale fields, living on well defined energy scales. We start by writing $\psi = \psi^{\text{(u.v.)}} + \psi^{(\text{i.r.})}$, where the $\psi^{\text{(u.v.)}}$ has propagator given by, setting $\underline{k} = \underline{k}_{F}^{e} + \underline{k}'$:
\bea\label{eq:1N}
\int P_{[1,N]}(d\psi^{(\text{u.v.})}) \hat \psi^{\text{(u.v.)}-}_{\underline{k}, e} \hat \psi^{\text{(u.v.)}+}_{\underline{q}, e'} &=&  \beta L \delta_{\underline{k}, \underline{q}} \d_{e,e'} \hat g^{\text{(u.v.)}}_{e}(\underline{k})\\
\hat g^{\text{(u.v.)}}_{e}(\underline{k}) &=& \frac{[\chi_{N}(k_{0})\chi_{e}(k_{1}) - \chi_{0,e}(\underline{k}')]}{-ik_{0} + \e_{e}(k_{1}) - \m + \n_{e}(k_{1})}\;,\nn
\eea
where $\chi_{0,e}(\underline{k}') := \chi\big((1/\d'_{e})\sqrt{k^2_{0} + v^2_{e} {{k}_{1}^{'2}}}\big)$ and $\d_{e}'>0$ small enough such that the support of $\chi_{0,e}(\underline{k}')$ is contained in the support of $\chi_{N}(k_{0})\chi_{e}(k_{1})$, and
\be\label{eq:gir}
\int P_{\text{i.r.}}(d\psi^{(\text{i.r.})}) \hat \psi^{\text{(i.r.)}-}_{\underline{k}, e} \hat \psi^{\text{(i.r.)}+}_{\underline{q}, e'} =  \beta L \delta_{\underline{k}, \underline{q}}\d_{e,e'} \hat g^{\text{(i.r.)}}_{e}(\underline{k})\;,\quad \hat g^{\text{(i.r.)}}_{e}(\underline{k}) = \frac{\chi_{0,e}(\underline{k}')}{-ik_{0} + \e_{e}(k_{1}) - \m_{e}}\;.
\ee
Here we used that, by possibly taking a smaller $\d'_{e}$, $\n_{e}(\underline{k}) = \n_{e}(0, k_{F}^{e})\equiv \nu_{e}$ for $\underline{k}'$ in the support of $\chi_{0,e}(\underline{k}')$. Now, the propagator $g^{\text{(u.v.)}}_{e}$ satisfies the bound
\be
| g^{\text{(u.v.)}}_{e}(\underline{x} - \underline{y}) | \leq \frac{C_{n} \d^{-1}}{1 + (\d \| \underline{x} - \underline{y} \|_{\b, L})^{n}}\;,\qquad \forall n\in \mathbb{N}\;,
\ee
which can be used to integrate the $\psi^{\text{(u.v.)}}$ in a single step, proceeding as for $\Psi^{(\text{bulk})}$. The integration of the $\psi^{(\text{u.v.)}}$ field can be performed uniformly in the ultraviolet cutoff $N$; we refer to, e.g., Section 5.3 of \cite{GMP} for the details. The result is:
\be\label{eq:RGstart}
\mathcal{W}_{\beta,L}(A,\phi) =  \mathcal{W}^{(\text{bulk})}_{\beta,L}(A,\phi) + \mathcal{W}^{(\text{u.v.})}_{\beta,L}(A,\phi) + \log \int P_{\text{i.r.}}(d\psi^{\text{(i.r.)}}) e^{V^{\text{(i.r.)}}(\psi^{\text{(i.r.)}}; A, \phi)}
\ee
where $V^{\text{(i.r.)}}(\psi^{\text{(i.r.)}}; A, \phi), \mathcal{W}^{(\text{u.v.})}_{\beta,L}(A,\phi)$ have the form (\ref{eq:V1d}), with kernels $W^{\text{(i.r.)}}_{\G}$ satisfying the bounds (\ref{eq:decW1d}).

\section{The reference model} \label{sec:ref}

The result of the previous section is that the generating functional of the correlations can be expressed in terms of the generating functional of a suitable one-dimensional quantum field theory, with propagator given by $g^{(\text{i.r.})}_{e}$, Eq. (\ref{eq:gir}), and effective action $V^{\text{(i.r.)}}(\psi^{\text{(i.r.)}}; A, \phi)$, involving arbitrarily high powers of the fields monomials. Being $V^{\text{(i.r.)}}(\psi^{\text{(i.r.)}}; A, \phi)$ a very involved object, computing directly the correlation functions of this effective one-dimensional theory is, in general, an impossible task.

In this section we shall introduce a {\it reference model} for the effective one-dimensional theory obtained in Section \ref{sec:red1d}, which captures the asymptotic behavior of the correlations of the class of models specified by Assumption 2, and for which the current-current correlation functions and the two-point Schwinger function can be computed explicitly. This model is the {\it chiral Luttinger model}, investigated in \cite{FM, BFMhub2} via rigorous RG methods. Proposition \ref{prp:relref} will then allow to write the correlation functions of the full lattice model in terms of those of the reference model, after a suitable fine tuning of its bare parameters. The proof of Proposition \ref{prp:relref} will be postponed to Section \ref{sec:relref}.

\subsection{Chiral relativistic fermions}

Let $\L_{\beta, L,\frak{M}} \subset \frac{\beta}{\frak{M}} \mathbb{Z} \times \frac{L}{\frak{M}} \mathbb{Z}$ be the lattice made of the points $\underline{x} = (n_{0} \frac{\beta}{\frak{M}}\, , n_{1} \frac{L}{\frak{M}})$, $n_{i} = 0,1, \ldots, \frak{M}$. With each lattice site $\underline{x} \in \L_{\beta, L, \frak{M}}$ we associate a Grassmann field $\psi^{\pm}_{\underline{x}, \o, \s}$, with $\o = \pm$ and $\s = \uparrow\downarrow$. Let $\chi_{N}\big(\underline{k}) := \chi(2^{-N}\sqrt{k_{0}^{2} + v^{\text{(ref)}2} k_{1}^{2}}\big)$, $v^{\text{(ref)}}\in \mathbb{R}$. We introduce the Grassmann Gaussian measure $P^{\text{(ref)}}_{\leq N}(d\psi)$ as:
\bea
P^{\text{(ref)}}_{\leq N}(d\psi) &:=& \mathcal{N}_{\beta,L,N,\frak{M}}^{-1} \big[ \prod_{\o =  \pm}\prod_{\s = \uparrow\downarrow} \prod_{\underline{x}\in \L_{\beta, L, \frak{M}}} d\psi^{+}_{\underline{x}, \o, \s} d\psi^{+}_{\underline{x}, \o, \s}\big]\, e^{-(\psi^{+}, g^{(\leq N)-1} \psi^{-})}\nn\\
(\psi^{+}, g^{(\leq N)-1} \psi^{-}) &:=& \sum_{\o, \s} \int_{\beta, L,\frak{M}} \frac{d\underline{k}}{(2\pi)^2}\,  \hat\psi^{+}_{\underline{k}, \o, \s} \hat g^{(\leq N)}_{\o}(\underline{k})^{-1} \hat\psi^{-}_{\underline{k}, \o, \s}\;,
\eea
where: $\mathcal{N}_{\beta,L,N,\frak{M}}$ is a suitable normalization factor; $\int_{\beta, L, \frak{M}} \frac{d\underline{k}}{(2\pi)^2}$ is a short-hand notation for $\frac{1}{\beta L} \sum_{\underline{k} \in \mathbb{D}^{*}_{\beta, L, \frak{M}}}$, with $\mathbb{D}^{*}_{\beta, L, \frak{M}} := \{ \underline{k} = (\frac{2\pi}{\beta}(n_{0} + \frac{1}{2}), \frac{2\pi}{L}(n_{1} + \frac{1}{2}) ) \mid n_{i} = 1,\ldots, \frak{M}\;,\; \chi_{N}(\underline{k})>0\}$; the Fourier transform of the field is $\hat\psi^{\pm}_{\underline{k},\o, \s} = (\beta L/\frak{M}^2) \sum_{\underline{x}\in \L_{\b, L, \frak{M}}} e^{\mp i \underline{k}\cdot \underline{x}} \psi^{\pm}_{\underline{x}, \o, \s}$ for $\underline{k}\in \mathbb{D}^{*}_{\beta, L, \frak{M}}$ and zero otherwise; and, finally, the propagator is:
\be
\hat g^{(\leq N)}_{\o}(\underline{k}) := \frac{1}{Z^{\text{(ref)}}} \frac{\chi_{N}(\underline{k})}{-ik_{0} + \o v^{\text{(ref)}}k_{1}}\;.
\ee
We define the many-body interaction of the reference model as:
\be
V(\sqrt{Z^{\text{(ref)}}}\psi) := \l^{\text{(ref)}} {Z^{\text{(ref)}}}^{2} \sum_{\s,\s'} \sum_{\o} \int_{\beta, L, \frak{M}} d\underline{x}d\underline{y}\, w(\underline{x} - \underline{y})\, n_{\underline{x}, \s, \o} n_{\underline{y}, \s', \o}\;,
\ee
where: $\int_{\beta, L, \frak{M}}\, d\underline{x}$ is a shorthand notation for $(\beta L/\frak{M}^2)\sum_{\underline{x} \in \L_{\beta,L, \frak{M}}}$, $w(\underline{x} - \underline{y})$ is the restriction to $\L_{\b, L, \frak{M}}$ of a smooth, short-ranged, rotation invariant interaction potential on $\mathbb{R}^{2}$, and $n_{\underline{x}, \s, \o} = \psi^{+}_{\underline{x}, \s, \o}\psi^{-}_{\underline{x}, \s, \o}$. The real parameters $\l^{\text{(ref)}}$, $Z^{\text{(ref)}}>0$, $v^{\text{(ref)}}>0$, so far unknown, will be chosen later on. For later convenience, we will assume that $\hat w(\underline{0}) = 1$. The generating functional of the Schwinger functions of the reference model is
\be
\mathcal{W}_{\beta, L, N, \frak{M}}^{(\text{ref})}(\psi; A, \phi) := \log \int P_{\leq N}^{(\text{ref})}(d\psi) e^{-V(\sqrt{Z^{\text{(ref)}}}\psi) + \G(\sqrt{Z^{\text{(ref)}}}\psi; A) + B(\psi; \phi)}
\ee
where, for $\hat n_{\underline{p},\o,\s} := \int_{\beta, L, \frak{M}} \frac{d\underline{k}}{(2\pi)^2}\, \hat\psi^{+}_{\underline{k}+\underline{p},\o,\s}\hat\psi^{-}_{\underline{k}, \o, \s}$, $\hat n^{c}_{\underline{p}, \o} = \hat n_{\underline{p}, \o, \uparrow} + \hat n_{\underline{p}, \o, \downarrow}$, $\hat n^{s}_{\underline{p}, \o} = \hat n_{\underline{p}, \o, \uparrow} - \hat n_{\underline{p}, \o, \downarrow}$:
\bea\label{eq:source}
\G(\psi; A) &:=& \sum_{\o,\sharp}\sum_{\m=0,1} \sum_{x_{2} = 0}^{L}\int_{\beta, L, \frak{M}} \frac{d\underline{p}}{(2\pi)^{2}} \, \hat A^{\sharp}_{\underline{p}, x_{2}, \m, \o} \hat n^{\sharp}_{\underline{p}, \o} Z_{\sharp,\m}^{(\text{ref})}(x_{2}) \\
B(\psi; \phi) &:=& \sum_{\substack{r = (\bar r, \s) \\ \o}} \sum_{x_{2} = 0}^{L}\int_{\beta, L, \frak{M}} \frac{d\underline{k}}{(2\pi)^2}\, \big[ \hat\psi^{+}_{\underline{k}, \o, \s} Q^{(\text{ref})-}_{r}(x_{2})\hat \phi^{-}_{\underline{k},x_{2}, r,\o} + \hat \phi^{+}_{\underline{k}, x_{2},r,\o} Q^{\text{(ref)}+}_{r}(x_{2}) \hat \psi^{-}_{\underline{k}, \o, \s}\big]\nn
\eea
with $\int_{\beta, L, \frak{M}} \frac{d\underline{p}}{(2\pi)^{2}}$ a short-hand notation for $\frac{1}{\beta L} \sum_{\underline{k} \in \widetilde{\mathbb{D}}_{\beta, L, \frak{M}}}$ where $\widetilde{\mathbb{D}}_{\beta, L, \frak{M}} := \{ \underline{p} = (\frac{2\pi}{\beta}n_{0}, \frac{2\pi}{L}n_{1}) \mid n_{i} = 1,\ldots, \frak{M}\}$. As for the other parameters, the functions $Z^{\text{(ref)}}_{\sharp,\m}(x_{2})$, $Q^{\text{(ref)}\pm}_{r}(x_{2})$ will be chosen later on. We define the correlation functions of the reference model as derivatives with respect to the external fields of $\mathcal{W}_{\beta, L, N}^{(\text{ref})}(\psi; A, \phi) := \lim_{\frak{M}\to\infty} \mathcal{W}_{\beta, L, N, \frak{M}}^{(\text{ref})}(\psi; A, \phi)$. For instance:
\bea
&&Q^{(\text{ref})-}_{ r'}(y_{2}) Q^{(\text{ref})+}_{r}(x_{2}) \pmb{\langle} \hat \psi^{-}_{\underline{k},\o,\s} \hat \psi^{+}_{\underline{k}, \o', \s'} \pmb{\rangle}^{\text{(ref)}}_{\beta, L, N} := \frac{\partial^{2} \mathcal{W}_{\beta, L, N}^{(\text{ref})}(\psi; A, \phi)}{\partial \hat \phi^{+}_{\underline{k}, x_{2}, r, \o} \partial \hat \phi^{-}_{\underline{k}, y_{2},  r', \o'}}\Big|_{A= \phi = 0}\nn\\
&&Z^{(\text{ref})}_{\sharp,\m}(z_{2}) Q^{(\text{ref})+}_{ r'}(y_{2})Q^{(\text{ref})-}_{ r''}(x_{2}) \pmb{\langle} \hat n^{\sharp}_{\underline{p}, \o}\,; \hat \psi^{-}_{\underline{k}+\underline{p},\o',\s'} \hat \psi^{+}_{\underline{k}, \o'', \s''} \pmb{\rangle}^{\text{(ref)}}_{\beta, L, N} \nn\\&& \qquad\qquad\qquad\qquad\qquad\qquad\qquad\qquad := \frac{\partial^{3} \mathcal{W}_{\beta, L, N}^{(\text{ref})}(\psi; A, \phi)}{\partial \hat A^{\sharp}_{\underline{p},z_{2}, \m,\o}\partial \hat \phi^{+}_{\underline{k}+\underline{p}, x_{2},  r',\o'} \partial \hat \phi^{-}_{\underline{k},y_{2},  r'',\o''}}\Big|_{A= \phi = 0} \nn\\
&&Z^{(\text{ref})}_{\sharp,\m}(z_{2}) Z^{(\text{ref})}_{\sharp',\n}(w_{2}) \pmb{\langle} n^{\sharp}_{\underline{p}, \o}\,; n^{\sharp'}_{-\underline{p}, \o'} \pmb{\rangle}_{\beta, L, N}^{\text{(ref)}} :=   \frac{\partial^{2} \mathcal{W}_{\beta, L, N}^{(\text{ref})}(\psi; A, \phi)}{\partial \hat A^{\sharp}_{\underline{p}, z_{2},\m, \o} \partial A^{\sharp'}_{-\underline{p},w_{2},\n, \o'}}\Big|_{A= \phi = 0}\;.
\eea
Also, we shall set $\pmb{\langle}\cdots \pmb{\rangle}_{\beta, L} := \lim_{N\to\infty} \pmb{\langle}\cdots \pmb{\rangle}_{\beta, L, N}$. The generating functional of the correlation functions can be constructed using rigorous renormalization group methods, \cite{FM}. As discussed in the next proposition, the reference model can be used to capture the asymptotic behaviour of the correlation funciton of interacting Hall insulators exhibiting single-channel edge models, by properly tuning the bare parameters $\l^{\text{(ref)}}$, $Z^{\text{(ref)}}$, $v^{\text{(ref)}}$, $Z^{\text{(ref)}}_{\sharp,\m}(z_{2})$, $Q^{(\text{ref})\pm}(x_{2})$. 

Recall that $e = (1,\s)$ are the edge states localized around $x_{2} = 0$, Eq. (\ref{eq:dec22}). In the following, we shall set $\o := \text{sgn}(v_{(1,\s)})$, and $v_{\o} := v_{(1,\s)}$. Similarly, we shall set $k_{F}^{\o} \equiv k_{F}^{(1,\s)}$ and $\xi^{\o}_{x_{2},r} \equiv \xi^{(1,\s)}_{x_2}(k_{F}^{(1,\s)}; r)$.

\begin{prop}{\bf (Relation with the lattice model.)}\label{prp:relref} Let $\l$ be the bare coupling constant of the lattice model. There exists $\bar \l>0$ independent of $\beta ,L$ such that for $|\l| <\bar \l$ there exists choices of the bare parameters of the reference model satisfying, for all $n\in \mathbb{N}$:
\bea
&&\l^{(\text{ref})} = A \l + O(\l^2)\;,\quad v^{(\text{ref})} = |v_{\o}| + O(\l)\;,\quad Z^{(\text{ref})} = 1+O(\l)\;,\quad |Z^{(\text{ref})}_{\sharp,\m}(x_{2})|\leq \frac{C_{n}}{1 + x_{2}^{n}}\;,\nn\\
&& |Q^{(\text{ref})+}_{r}(x_{2})| \leq \frac{C_{n}}{1 + x_{2}^{n}}\;,\quad \| Q^{(\text{ref})+}_{r} \|_{2} = 1+O(\l)\;,\quad Q^{(\text{ref})-}_{r}(\underline{k}, x_{2}) = \overline{Q^{(\text{ref})+}_{r}(\underline{k}, x_{2})}\;,
\eea
with $A$ given by Eq. (\ref{eq:A}), such that the following is true. For $\|\underline{k}'\| = \kappa$, $\| \underline{p} \|\ll \kappa$, $\| \underline{k}' + \underline{p} \| \leq \kappa$ and $\kappa$ small enough,
\bea\label{eq:latWI}
&&\pmb{\langle} {\bf T} \hat a^{-}_{\underline{k}' + \underline{k}_{F}^{\o},x_{2}, r} \hat a^{+}_{\underline{k}' + \underline{k}_{F}^{\o},y_{2}, r'} \pmb{\rangle}_{\beta, L} \\&& = \pmb{\langle} \hat\psi^{-}_{\underline{k}',\o,\s} \hat\psi^{+}_{\underline{k}', \o,\s'} \pmb{\rangle}_{\beta, L}^{(\text{ref})} \Big[Q^{(\text{ref})+}_{r}(x_{2}) \overline{Q^{(\text{ref})-}_{r'}(y_{2})} + R^{\beta, L}_{\o,r,r'}(\underline{k}'; x_{2}, y_{2})\Big]\nn\\
&&\pmb{\langle} {\bf T} \hat j^{\sharp}_{\m, \underline{p}, z_{2}}\,; \hat a^{-}_{\underline{k}' + \underline{p}+\underline{k}_{F}^{\o},x_{2}, r} \hat a^{+}_{\underline{k}' + \underline{k}_{F}^{\o},y_{2}, r'} \pmb{\rangle}_{\beta, L} \nn\\&& = \pmb{\langle} \hat n^{\sharp}_{\underline{p}, \o}\,; \hat \psi^{-}_{\underline{k}'+\underline{p} ,\o,\s} \hat \psi^{+}_{\underline{k}' ,\o,\s'}  \pmb{\rangle}^{\text{(ref)}}_{\beta, L}\Big[ Z^{(\text{ref})}_{\sharp,\m}(z_{2}) Q^{(\text{ref})+}_{r}(x_{2}) \overline{Q^{(\text{ref})-}_{r'}(y_{2})} + R^{\beta, L}_{\m, \o, r,  r'}(\underline{k}', \underline{p}; x_{2}, y_{2}, z_{2}) \Big] \nn\\
&&\pmb{\langle} {\bf T} \hat j^{\sharp}_{\m, \underline{p}, x_{2}}\,; \hat j^{\sharp'}_{\n, -\underline{p}, y_{2}} \pmb{\rangle}_{\beta, L} = Z^{(\text{ref})}_{\sharp,\m}(x_{2})Z^{(\text{ref})}_{\sharp',\n}(y_{2}) \pmb{\langle} \hat n^{\sharp}_{\underline{p}, \o}\,; \hat n^{\sharp'}_{-\underline{p},\o} \pmb{\rangle}_{\beta, L}^{(\text{ref})} + R^{\beta, L}_{\m,\n,\underline{\sharp}}(\underline{p}; x_{2}, y_{2})\nn
\eea
%
%
%
where $\m,\n =0,1$; the error terms are bounded as, for all $n\in \mathbb{N}$, and for some $\theta,\, C_{n},\, C>0$ independent of $\beta, L$:
\bea\label{eq:bdR}
&& |R^{\infty}_{\o, r,  r'}(\underline{k}'; x_{2}, y_{2})| \leq \frac{C_{n} \kappa^{\theta}}{1+|x_{2} - y_{2}|^{n}}\;,\quad |R^{\infty}_{\m, \o, r,  r'}(\underline{k}', \underline{p}; x_{2}, y_{2}, z_{2})|\leq \frac{C_{n} \kappa^{\theta}}{1 + d(x_{2}, y_{2}, z_{2})^{n}} \nn\\
&&|R^{\infty}_{\m,\n,\underline{\sharp}}(\underline{p}; x_{2}, y_{2})|\leq \frac{C_{n}}{1 + |x_{2} - y_{2}|^{n}} \;,\quad  \sum_{y_{2}} | R^{\infty}_{\m,\n,\underline{\sharp}}(\underline{p}; x_{2}, y_{2}) - R^{\infty}_{\m,\n,\underline{\sharp}}(\underline{0}; x_{2}, y_{2}) | \leq C|\underline{p}|^{\theta}
\eea
with $d(x_{2}, y_{2}, x_{2}) = \min\{ |x_{2} - y_{2}| + |y_{2} - z_{2}|\;, |x_{2} - z_{2}|+ |z_{2} - y_{2}| \}$.
\end{prop}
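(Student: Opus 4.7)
The strategy is to represent both models' generating functionals through a common multiscale renormalization group scheme (the one to be set up in Section \ref{sec:RG}) and to tune the bare parameters of the reference model so that, scale by scale, its running couplings match those of the lattice effective theory, leaving behind only irrelevant corrections.

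First I would decompose the infrared propagator $g^{(\text{i.r.})}_e$ in (\ref{eq:gir}) as a sum $\sum_{h\leq 0} g^{(h)}_\omega$ over single-scale fields supported on momenta of size $2^h$ around the Fermi point $\underline{k}_F^\omega$, and likewise for the reference propagator. Iterating single-scale integrations produces effective potentials parametrized by running couplings: a marginal four-fermion coupling $\lambda_h$, wave-function renormalizations $Z_h$ together with external vertex renormalizations $Z^{(h)}_{\sharp,\mu}(x_2)$ and $Q^{(h)\pm}_r(x_2)$, dressed velocities $v_h$, and a relevant counterterm $\nu^\omega_h$. The counterterm $\nu_e$ on the lattice side is determined by requiring that the interacting Fermi momentum coincide with the singularity of the dressed propagator; a standard fixed-point argument in a suitable Banach space of sequences $\{\nu^\omega_h\}_{h\leq 0}$ yields $\nu_e$ analytic in $\lambda$. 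No such counterterm is needed on the reference side by construction.

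Next I would fix the bare parameters of the reference model by matching the asymptotic values of the running couplings as $h\to-\infty$: $v^{(\text{ref})}$, $Z^{(\text{ref})}$ and $\lambda^{(\text{ref})}$ are chosen so that the two flows share a common infrared limit. The key computation behind $\lambda^{(\text{ref})} = A\lambda + O(\lambda^2)$ is that, after integrating the bulk field $\Psi^{(\text{bulk})}$ and projecting the resulting local quartic vertex onto the edge eigenfunctions $\xi^{(1,\sigma)}_{x_2}(k_F^\omega; r)$, the leading coefficient is precisely the form-factor overlap $A$ of (\ref{eq:A}). Similarly, $Z^{(\text{ref})}_{\sharp,\mu}(x_2)$ and $Q^{(\text{ref})\pm}_r(x_2)$ are fixed by matching the localized lattice vertex contributions at scale $h=0$; their rapid decay in $x_2$ is inherited from the edge-state decay (\ref{eq:dec22}) transported through the convolution (\ref{eq:conv}).

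The main obstacle is showing that, despite the apparent freedom in the marginal flow, both theories actually flow to the \emph{same} nontrivial infrared fixed point. Here I would invoke the emergent chiral Ward identities of the reference model (Section \ref{sec:WI}) together with the Adler--Bardeen-type non-renormalization of the chiral anomaly established in \cite{M2}: this forces the beta function of $\lambda_h$ to have the form $O(\lambda_h \delta_h)$, where $\delta_h$ measures the mismatch between the two flows at scale $h$, and a bootstrap argument then yields $\delta_h = O(2^{\theta h})$ for some $\theta>0$. The analysis parallels that of \cite{FM, BFMhub2} for related spinful Luttinger models. The error estimates in (\ref{eq:bdR}) then follow from standard dimensional gain: each irrelevant contribution on scale $h$ is suppressed by a factor $2^{\theta h}$, which summed over scales with $2^h \gtrsim \kappa$ produces the $\kappa^\theta$ factor (and analogously the $|\underline{p}|^\theta$ factor for the current-current comparison), while the fast decay in the spatial variables $|x_2 - y_2|$, $|x_2 - z_2|$ is inherited from the exponential decay of the edge-state form factors (\ref{eq:checkbd}) and the massive bulk propagator bound (\ref{eq:gbound}).
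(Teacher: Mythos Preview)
Your overall strategy is correct and matches the paper's: both run a multiscale RG on the infrared field, track the marginal and relevant running couplings, and fix the bare parameters of the reference model so that the two flows coincide (the paper matches at the lowest scale $h_\beta$ via the implicit function theorem, using Proposition~\ref{prp:rccref}, rather than at $h\to-\infty$, but since the flows converge this is the same thing). The error bounds then follow from the short-memory and continuity properties of the Gallavotti--Nicol\`o trees (Remarks~\ref{rem:sm}, \ref{rem:cont}), exactly as you describe.

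The one substantive difference is the mechanism you propose for controlling the marginal flow of $\lambda_h$. You invoke the chiral Ward identities and the Adler--Bardeen non-renormalization of the anomaly; that is the route one takes for \emph{non-chiral} Luttinger liquids (two Fermi points, opposite chiralities), where the vanishing of the beta function is a deep consequence of local gauge invariance. Here the effective model is \emph{single-chirality}: every propagator in a given tree carries the same $\omega$, and the relativistic single-scale propagator transforms as $\hat g^{(k)}_{\omega,\text{rel}}(R_\alpha\underline{k}') = e^{-i\omega\alpha}\hat g^{(k)}_{\omega,\text{rel}}(\underline{k}')$ under rotations. A direct phase-counting argument (Proposition~\ref{prp:flow}, Eqs.~(\ref{eq:rot})--(\ref{eq:oo})) then shows that the $p$-th order coefficient of the relativistic beta function for $\lambda$ picks up a nontrivial phase under a rotation by $\pi/(2p)$ and hence vanishes for every $p\geq 2$; the same argument handles $\beta^z$, $\beta^v$, $\beta^\sharp$, and $\beta^\nu$. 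This is considerably more elementary than the Ward-identity route and is what the paper actually does; the Ward identities of Section~\ref{sec:WI} enter only later, in Section~\ref{sec:proof}, to pin down the finite renormalization constants $Z^{(\text{ref})}_{\sharp,\mu}$, not to control the RG flow itself.
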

The faster than any power decays could be improved to exponential decay, but it will not be needed here. This proposition is the main technical ingredient of the paper, and will be proved in Section \ref{sec:relref}. Before discussing this, let us show how to use Proposition \ref{prp:relref}
 to prove Theorem \ref{thm:1}. To do this, we will need to compute explicitly the coefficients $Z^{\text{(ref)}}_{\sharp,\m}(z_{2})$, together with the correlation functions of the reference model. Both problems will be solved using {\it Ward identities}.
\section{Ward identities}\label{sec:WI}
{\it Ward identities} are nonperturbative relations among correlation functions, implied by conservation laws. In Section \ref{sec:WIlat} we will derive Ward identities for the correlation functions of the lattice model, implied by the conservation of current. In Section \ref{sec:WIref} we shall derive Ward identities for the reference model, as a consequence of $U(1)$ gauge symmetry. As we shall see, in this last case the presence of the momentum regularization gives rise to {\it anomalies}.
\subsection{Ward identities for the lattice model}\label{sec:WIlat}
Let us consider the continuity equation, Eq. (\ref{eq:Jcons}). Writing ${\bf T} \rho_{\xx} j_{\n,\yy} = \theta(x_{0} - y_{0}) \rho_{\xx} j_{\n,\yy} + \theta(y_{0} - x_{0}) j_{\n,\yy} \rho_{\xx}$, it implies:
\be
i\partial_{x_{0}} \langle {\bf T} \rho_{\xx}\,; j_{\n,\yy} \rangle_{\beta, L} = -\text{div}_{x} \langle {\bf T}j_{\xx} \,; j_{\n,\yy}\rangle_{\beta,L} + i\langle [\rho_{\xx}\,, j_{\n,\yy}]\rangle_{\beta,L}\delta(x_{0}-y_{0})\;,
\ee
where the last term follows from the derivative of the $\theta$ functions, and $\d(x_{0}- y_{0})$ is the Dirac delta. The commutator in the right-hand side can be computed explicitly (using the notation $\vec e_{3} \equiv \vec e_{1}$):
\bea\label{eq:schwinger}
\langle [\rho_{\vec x}\, , j_{i,\vec y}] \rangle_{\beta,L} &=& i(\delta_{\vec x, \vec y} - \delta_{\vec x, \vec y + \vec e_{i}}) \tau_{\vec y, \vec y + \vec e_{i}} + \frac{i}{2}( \delta_{\vec x, \vec y} - \delta_{\vec x,\vec y + \vec e_{i} - \vec e_{i+1}}) \tau_{\vec y, \vec y +\vec e_{i} - \vec e_{i+1}}\nn\\
&& + \frac{i}{2}( \delta_{\vec x, \vec y} - \delta_{\vec x, \vec y + \vec e_{i} + \vec e_{i+1}} )\tau_{\vec y, \vec y + \vec e_{i} + \vec e_{i+1}} + \frac{i}{2}( \delta_{\vec x,\vec y - \vec e_{i+1}} - \delta_{\vec x, \vec y + \vec e_{i}} )\tau_{\vec y - \vec e_{i+1}, \vec y + \vec e_{i}}\nn\\&& + \frac{i}{2}( \delta_{\vec x, \vec y + \vec e_{i+1}} - \delta_{\vec x, \vec y + \vec e_{i}} )\tau_{\vec y + \vec e_{i+1}, \vec y + \vec e_{i}}\;,\qquad i =1,2\;,
\nn\\
\langle [\rho_{\vec x}\,, j_{0,\vec y}] \rangle_{\beta,L} &=& 0\;,
\eea
with $\delta_{\vec x, \vec y}$ the Kronecker delta and $\tau_{\vec x, \vec y}$ given by Eq. (\ref{eq:tauxy}).
%
%
Therefore, after summing over $x_{2}$ one gets, recalling the Dirichlet boundary conditions $\sum_{x_{2} = 0}^{L} d_{2} j_{2,\xx} = -j_{2,(\underline{x}, 0)} + j_{2,(\underline{x}, L)} = 0$:
\bea
\text{d}_{x_{0}} \sum_{x_{2} = 0}^{L} \langle {\bf T} \rho_{\xx}\,; j_{0,\yy} \rangle_{\beta, L} + \text{d}_{x_1} \sum_{x_{2} = 0}^{L} \langle {\bf T} j_{1,\xx}\,; j_{0,\yy} \rangle_{\beta, L} &=& 0\\
\text{d}_{x_{0}} \sum_{x_{2} = 0}^{L} \langle {\bf T} \rho_{\xx}\,; j_{1,\yy} \rangle_{\beta, L} + \text{d}_{x_1} \sum_{x_{2} = 0}^{L} \langle {\bf T} j_{1,\xx}\,; j_{1,\yy} \rangle_{\beta, L} &=& -\delta(x_{0} - y_{0}) ( \delta_{x_{1}, y_{1}} - \delta_{x_{1}, y_{1} + 1} ) \D^{\beta, L}_{1, y_{2}}\;,\nn
\eea
where $\D^{\b,L}_{1, y_{2}}$ has been defined in Eq. (\ref{eq:Delta}). For short, setting $\D_{0, y_{2}}^{\beta, L} \equiv 0$, we have:
\be\label{eq:WI}
\sum_{\m = 0,1}\text{d}_{x_\m} \sum_{x_{2} = 0}^{L}\langle{\bf T} j_{\m,\xx}\,; j_{\n,\yy} \rangle_{\beta,L} = -\delta(x_{0} - y_{0}) (\delta_{x_{1}, y_{1}} - \delta_{x_{1}, y_{1} + 1}) \D^{\beta,L}_{\n,y_{2}}\;,\qquad \n = 0,1\;. 
\ee
The $\D^{\beta,L}_{\n, y_{2}}$ term is called the {\it Schwinger term}. Eq. (\ref{eq:WI}) is the Ward identity for the current-current correlation functions. 
%
%
One can also derive a Ward identity relating the vertex function of the lattice model to the two point correlation functions. Setting $\d_{\xx,\yy} \equiv \d_{\vec x, \vec y}\d(x_{0} - y_{0})$:
\bea\label{eq:WIvertex}
&&\sum_{\m=0,1,2}d_{z_\m} \langle {\bf T} j_{\m, \zz}\,; a^{-}_{\yy, r'} a^{+}_{\xx, r} \rangle_{\beta,L} = i\big[\langle {\bf T} a^{-}_{\yy ,r'} a^{+}_{\xx, r}\rangle_{\beta,L}\delta_{\xx, \zz} - \langle {\bf T} a^{-}_{\yy , r'} a^{+}_{\xx, r}\rangle_{\beta,L}\delta_{\yy,\zz}\big]\;.
\eea
Taking the Fourier transform, Eqs. (\ref{eq:WI}), (\ref{eq:WIvertex}) imply, for $p_{0} \in \frac{2\pi}{\beta}\mathbb{Z}$ and $p_{1}\in \frac{2\pi}{L}\mathbb{Z}$, introducing $\eta_{0}(\underline{p}) := i$, $\eta_{1}(\underline{p}) := (1 - e^{ip_{1}})/(-ip_{1})$:
\bea\label{eq:sumWI}
&& \sum_{\m = 0,1}\eta_{\m}(\underline{p}) p_{\m}\sum_{x_{2} = 0}^{L} \pmb{\langle} {\bf T} \hat j_{\m,(\underline{p}, x_{2})}\,; \hat j_{\n,(-\underline{p},y_{2})} \pmb{\rangle}_{\beta,L}  = -\eta_{\n}(p) p_{\n} \D^{\beta,L}_{\n, y_{2}}\;,\qquad \n=0,1\;,\nn\\
&& -\sum_{\m=0,1} \eta_{\m}(\underline{p}) p_{\m} \sum_{z_{2} = 0}^{L} \pmb{\langle} {\bf T} \hat j_{\m, \underline{p}, z_{2}}\,;  \hat a^{-}_{\underline{k}+\underline{p}, y_{2}, r'} \hat a^{+}_{\underline{k}, x_{2},r} \pmb{\rangle}_{\beta,L} \nn\\&&\qquad\qquad\qquad\qquad = \pmb{\langle} {\bf T} \hat a^{-}_{\underline{k}, y_{2}, r'} \hat a^{+}_{\underline{k}, x_{2}, r}\pmb{\rangle}_{\beta,L} -  \pmb{\langle} {\bf T} \hat a^{-}_{\underline{k}+\underline{p}, y_{2}, r'} \hat a^{+}_{\underline{k} + \underline{p}, x_{2}, r} \pmb{\rangle}_{\beta,L}\;
\eea
with $\pmb{\langle} \cdot \pmb{\rangle}_{\beta, L} = (\beta L)^{-1}\Tr \cdot e^{-\beta (\mathcal{H} - \mu \mathcal{N})}/\mathcal{Z}_{\beta, L}$. One can derive similar identities for the spin densities and spin currents:
\bea\label{eq:WIspinlat}
&& \sum_{\m=0,1}\eta_{\m}(\underline{p}) p_{\m}\sum_{x_{2} = 0}^{L} \pmb{\langle} {\bf T} \hat j^{s}_{\m,(\underline{p}, x_{2})}\,; \hat j^{s}_{\n,(-\underline{p},y_{2})} \pmb{\rangle}_{\beta,L}  = -\eta_{\n}(p) p_{\n} \D^{\beta,L}_{\n, y_{2}}\;,\qquad \n = 0,1\;,\nn\\
&& -\sum_{\m=0,1} \eta_{\m}(\underline{p}) p_{\m} \sum_{z_{2} = 0}^{L} \pmb{\langle} {\bf T} \hat j^{s}_{\m, \underline{p}, z_{2}}\,; \hat a^{-}_{\underline{k}+\underline{p}, y_{2}, r'} \hat a^{+}_{\underline{k}, x_{2},r} \pmb{\rangle}_{\beta,L} \nn\\&&\qquad\qquad\qquad\qquad = \s[\pmb{\langle} {\bf T} \hat a^{-}_{\underline{k}, y_{2}, r'} \hat a^{+}_{\underline{k}, x_{2}, r}\pmb{\rangle}_{\beta,L} -  \pmb{\langle} {\bf T} \hat a^{-}_{\underline{k}+\underline{p}, y_{2}, r'} \hat a^{+}_{\underline{k} + \underline{p}, x_{2}, r} \pmb{\rangle}_{\beta,L} ]\;,
\eea
where $r = (\s, \bar r)$. 
%
%
%
%
All these lattice Ward identities will allow to compute the renormalized coefficients that connect the lattice correlations to those of the reference model, recall Proposition \ref{prp:relref}.
\subsection{Ward identities for the reference model}\label{sec:WIref}
Ward identities can be derived for the reference model as well. They follow from the fact that the $U(1)$ transformation $\psi^{\pm}_{\underline{x}, \o, \s} \to e^{i\alpha_{\underline{x},\s,\o}} \psi^{\pm}_{\underline{x}, \o, \s}$ has Jacobian equal to one. Thus, we have:
\bea\label{eq:WIref}
&&\int P_{\leq N}^{(\text{ref})}(d\psi) e^{-V(\sqrt{Z^{\text{(ref)}}}\psi) + \G(\sqrt{Z^{\text{(ref)}}}\psi; A) + B(\psi; \phi)} \nn\\&&\qquad\qquad\qquad\qquad= \int P_{\leq N}^{(\text{ref})}(d\psi) e^{-V(\sqrt{Z^{\text{(ref)}}}\psi) + \G(\sqrt{Z^{\text{(ref)}}}\psi; A) + B(\psi; e^{i\alpha}\phi) + \widetilde\G(\psi; \alpha)}
\eea
where $B(\psi; e^{i\alpha}\phi)$ is given by Eq. (\ref{eq:source}) with $\phi_{\underline{x}, x_{2}, r, \o}^{\mp}\to e^{\pm i\alpha_{\underline{x}, \s, \o}}\phi_{\underline{x}, x_{2}, r, \o}^{\mp}$, and
\be
\widetilde\G(\psi; \alpha) := (e^{i\alpha}\psi^{+}, g^{(\leq N)-1} e^{-i\alpha}\psi^{-}) - (\psi^{+}, g^{(\leq N)-1} \psi^{-})\;.
\ee
The identity (\ref{eq:WIref}) can be used to obtain relations among the correlation functions of the reference model. Let $\alpha_{\underline{x}, \o, \s}$ be a periodic function on $\L_{\beta, L, \frak{M}}$. Differentiating both sides of Eq. (\ref{eq:WIref}) with respect to $\hat \alpha_{\underline{p}, \o, \s'}$, $\hat \phi^{+}_{\underline{k}+\underline{p},x_{2}, r, \o}$, $\hat \phi^{-}_{\underline{k}, y_{2}, r', \o'}$ one gets the following identity, Eq. (91) of \cite{FM}:
\bea
D_{\o}(\underline{p}) Z^{\text{(ref)}} \pmb{\langle} \hat n_{\underline{p}, \o, \s'}\,; \hat \psi^{-}_{\underline{k}+\underline{p}, \o, \s} \hat \psi^{+}_{\underline{k}, \o, \s} \pmb{\rangle}_{\beta, L, N}^{(\text{ref})} &=& \d_{\s\s'} [ \pmb{\langle} \hat \psi^{-}_{\underline{k}, \o, \s} \hat \psi^{+}_{\underline{k}, \o, \s} \pmb{\rangle}_{\beta, L, N}^{(\text{ref})} - \pmb{\langle} \hat \psi^{-}_{\underline{k}+\underline{p}, \o, \s} \hat \psi^{+}_{\underline{k}+\underline{p}, \o, \s} \pmb{\rangle}_{\beta, L, N}^{\text{(ref)}} ]\nn\\&& + \D^{\text{(ref)}}_{\beta, L, N; \o, \s,\s'}(\underline{k}, \underline{p})\;,
\eea
where:
\bea
\D^{(\text{ref})}_{\beta, L, N; \o, \s,\s'}(\underline{k}, \underline{p}) &:=& \int_{\beta, L} \frac{d\underline{q}}{(2\pi)^2}\, Z^{\text{(ref)}} C_{N; \o}(\underline{q} + \underline{p}, \underline{q}) \pmb{\langle} \hat\psi^{+}_{\underline{p}+\underline{q},\o,\s'}\hat\psi^{-}_{\underline{p}, \o, \s'}; \hat\psi^{-}_{\underline{k}+\underline{p}, \o, \s} \hat \psi^{+}_{\underline{k},\o, \s} \pmb{\rangle}_{\beta, L, N}^{(\text{ref})}\nn\\
C_{N; \o}(\underline{q} + \underline{p}, \underline{q}) &:=& D_{\o}(\underline{q}+\underline{p}) [1 - \chi_{N}(\underline{q}+\underline{p})^{-1}] - D_{\o}(\underline{q})[1 - \chi_{N}(\underline{q})^{-1}]
\eea
with $D_{\o}(\underline{k}) = -ik_{0} + \o v^{(\text{ref})} k_{1}$. Similarly, differentiating with respect to $\hat \alpha_{\underline{p}, \o, \s}$, $\hat A_{-\underline{p}, \o, \s'}$ one finds:
\be
D_{\o}(\underline{p}) {Z^{\text{(ref)}}}^2 \pmb{\langle} \hat n_{\underline{p}, \o, \s}\,; \hat n_{-\underline{p}, \o, \s'} \pmb{\rangle}_{\beta,L,N}^{(\text{ref})} = \widetilde{\D}^{(\text{ref})}_{\beta,L,N; \o, \s, \s'}(\underline{p})\;,
\ee
where:
\be
\widetilde{\D}^{(\text{ref})}_{\beta,L,N; \o, \s, \s'}(\underline{p}) := \int_{\beta, L} \frac{d\underline{q}}{(2\pi)^2}\, {Z^{\text{(ref)}}}^{2}C_{N; \o}(\underline{q} + \underline{p}, \underline{q}) \pmb{\langle} \hat\psi^{+}_{\underline{p}+\underline{q},\o,\s}\hat\psi^{-}_{\underline{p}, \o, \s}\,; \hat n_{-\underline{p}, \o, \s'} \pmb{\rangle}_{\beta, L, N}^{(\text{ref})}
\ee
It turns out that the correction terms $\D^{(\text{ref})}_{\beta, L, N}$, $\widetilde\D^{(\text{ref})}_{\beta, L, N}$ {\it do not vanish} in the limit $N\to\infty$. Instead, they produce {\it anomalies} in the Ward identities.

\begin{prop}\label{eq:anWIref}{\bf (Anomalous Ward identities for the reference model.)}\label{prp:WIan} Let $\pmb{\langle}\cdot \pmb{\rangle}_{\infty}^{\text{(ref)}} := \lim_{\beta,L\to\infty} \pmb{\langle} \cdot \pmb{\rangle}_{\beta, L}^{\text{(ref)}}$. There exists $\bar \l>0$ such that for $|\l^{\text{(ref)}}| < \bar \l$ the following identities hold true:
\bea\label{eq:WIan}
\pmb{\langle} \hat n_{\underline{p}, \o, \s'}\,; \hat \psi^{-}_{\underline{k}+\underline{p}, \o, \s} \hat \psi^{+}_{\underline{k}, \o, \s} \pmb{\rangle}_{\infty}^{(\text{ref})} &=& \frac{a_{\o}(\underline{p}) + \s\s' \tilde{a}_{\o}(\underline{p})}{2 Z^{\text{(ref)}}} \big[ \pmb{\langle} \hat \psi^{-}_{\underline{k}, \o, \s} \hat \psi^{+}_{\underline{k}, \o, \s} \pmb{\rangle}^{(\text{ref})}_{\infty} - \pmb{\langle} \hat \psi^{-}_{\underline{k}+\underline{p}, \o, \s} \hat \psi^{+}_{\underline{k}+\underline{p}, \o, \s} \pmb{\rangle}_{\infty}^{(\text{ref})} \big]\nn\\
D_{\o}(\underline{p}) \pmb{\langle} \hat n_{\underline{p}, \o, \s}\,; \hat n_{-\underline{p},\o, \s'} \pmb{\rangle}_{\infty}^{\text{(ref)}} &=& -\frac{\d_{\s\s'}}{4\pi v^{(\text{ref})}{Z^{(\text{ref})}}^2} D_{-\o}(\underline{p}) \nn\\&&+ \frac{1}{4\pi v^{(\text{ref})}} D_{-\o}(\underline{p}) \o(\underline{p}) \l^{\text{(ref)}} \sum_{\bar \s} \pmb{\langle} \hat n_{\underline{p}, \o, \bar\s}\,; \hat n_{-\underline{p},\o, \s'} \pmb{\rangle}_{\infty}^{\text{(ref)}}
\eea
where $a_{\o}(\underline{p}) := \frac{1}{D_{\o}(\underline{p}) - \tau \hat w(\underline{p}) D_{-\o}(\underline{p})}$ and $\tilde a_{\o}(\underline{p}) := \frac{1}{D_{\o}(\underline{p})}$, with $\t = \frac{\l^{(\text{ref})}}{2 \pi v^{(\text{ref})}}$.
\end{prop}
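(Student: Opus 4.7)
The plan is to compute explicitly the $N\to\infty$ limits of the correction terms $\Delta^{(\text{ref})}_{\beta,L,N;\o,\s,\s'}(\underline{k},\underline{p})$ and $\widetilde\Delta^{(\text{ref})}_{\beta,L,N;\o,\s,\s'}(\underline{p})$ entering the regularized Ward identities derived just before the statement, and to show that they produce additive anomalies that are both \emph{local} (polynomial in $\underline{p}$) and \emph{linear} in the bare coupling $\lambda^{(\text{ref})}$. Substituting these anomalies into the two regularized identities and rearranging algebraically will then yield the stated closed forms, with $a_{\o}(\underline{p})$ arising from the RPA-like resummation of the interaction and $\tilde a_{\o}(\underline{p})$ from the undressed spin channel.

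The computation is driven by the observation that $C_{N;\o}(\underline{q}+\underline{p},\underline{q})$ vanishes identically unless at least one of $\underline{q}$, $\underline{q}+\underline{p}$ lies outside the support of $\chi_{N}$: for fixed $\underline{p}$ and large $N$, this confines $\underline{q}$ to an annulus at scale $2^{N}$. Using the multiscale RG construction of [FM, BFMhub2] for the correlations of the reference model, I would isolate the leading contribution to the truncated expectations $\pmb{\langle} \hat\psi^{+}_{\underline{q}+\underline{p}}\hat\psi^{-}_{\underline{q}}\,;\, \hat\psi^{-}_{\underline{k}+\underline{p}}\hat\psi^{+}_{\underline{k}}\pmb{\rangle}^{(\text{ref})}$ and $\pmb{\langle} \hat\psi^{+}_{\underline{q}+\underline{p}}\hat\psi^{-}_{\underline{q}}\,;\, \hat n_{-\underline{p},\o,\s'}\pmb{\rangle}^{(\text{ref})}$ when $(\underline{q},\underline{q}+\underline{p})$ is pinned at that scale. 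A direct residue computation of the resulting one-loop bubble then produces the universal prefactor $1/(4\pi v^{(\text{ref})})$ multiplying $D_{-\o}(\underline{p})$, exactly matching the coefficient predicted by the bosonization solution of the chiral Luttinger model.

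The nontrivial step, and the main obstacle, is to establish that no higher-scale dressing of the anomaly vertex survives in the limit $N\to\infty$: this is the Adler--Bardeen non-renormalization property, whose rigorous version in an RG framework has been developed in [M2]. The mechanism is that every additional internal line added to the one-loop bubble produces extra scaling factors $2^{-N}$ arising from the localization of $\underline{q}$ at scale $2^{N}$, and these suppress all multi-loop corrections after summing over intermediate scales. Once the anomaly is identified in the form
$$\lim_{N\to\infty}\widetilde\Delta^{(\text{ref})}_{\beta,L,N;\o,\s,\s'}(\underline{p}) = -\frac{\d_{\s\s'}}{4\pi v^{(\text{ref})}}D_{-\o}(\underline{p}) + \frac{\lambda^{(\text{ref})}\hat w(\underline{p})}{4\pi v^{(\text{ref})}}\,D_{-\o}(\underline{p})\,{Z^{(\text{ref})}}^{2}\!\sum_{\bar\s}\pmb{\langle} \hat n_{\underline{p},\o,\bar\s}\,;\,\hat n_{-\underline{p},\o,\s'}\pmb{\rangle}^{(\text{ref})}_{\infty},$$
the second identity in \eqref{eq:WIan} follows at once after dividing by $D_{\o}(\underline{p}){Z^{(\text{ref})}}^{2}$. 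For the three-point identity I would then project the source $\hat n_{\underline{p},\o,\s'}$ onto charge and spin combinations $\tfrac12(\hat n_{\o,\uparrow}\pm\hat n_{\o,\downarrow})$: the charge channel couples through the density-density interaction and, via the density-density identity just derived, gets RPA-dressed into the denominator $D_{\o}(\underline{p}) - \tau\hat w(\underline{p})D_{-\o}(\underline{p})$ of $a_{\o}(\underline{p})$, whereas the spin channel is orthogonal to the interaction and remains bare, producing $\tilde a_{\o}(\underline{p}) = 1/D_{\o}(\underline{p})$. Recombining the two channels gives the announced $(a_{\o}+\s\s'\tilde a_{\o})/2$ structure, completing the proof.
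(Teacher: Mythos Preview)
Your proposal is essentially correct and follows the same strategy as the cited references: the paper itself does not give a self-contained proof of this proposition but defers to Theorem~3 of \cite{FM} for the first identity and to \cite{FM, BFMhub2} for the second, with the Adler--Bardeen mechanism referenced to \cite{M2}. The outline you give---localizing $C_{N;\o}$ to the UV annulus, extracting the one-loop bubble $-\tfrac{1}{4\pi v^{(\text{ref})}}D_{-\o}(\underline{p})$, invoking the non-renormalization argument to kill higher-loop dressings, and then separating charge/spin channels to obtain the $a_{\o}$ versus $\tilde a_{\o}$ structure---is precisely the content of those papers, so there is nothing to correct.
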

The first Ward identity is the content of Theorem 3 of \cite{FM}. The proof of the second identity can be reconstructed from the proof of Theorem 3 of \cite{FM}, see also Section 3 of \cite{BFMhub2}. We will omit the details. Notice that, with respect to \cite{FM}, our coupling constant is $-\l$ instead of $\l$.

These Ward identities are called anomalous due to the presence of a nontrivial prefactor in the right-hand side of the first of (\ref{eq:WIan}), and a nonzero right-hand side in the last of (\ref{eq:WIan}); $\t$ is called the {\it chiral anomaly}. It satisfies a nonrenormalization property, as in the Adler-Bardeen theorem \cite{AB}, see \cite{M2} for a rigorous analysis for one-dimensional system. 

Finally, the last equation in (\ref{eq:WIan}) can be used to derive the following exact identities for the correlation functions of $\hat n^{c}_{\underline{p}, \o}$ and $\hat n_{\underline{p}, \o}^{s}$, which will be used in the next section:
\bea\label{eq:ncns}
\widetilde{D}_{\o}(\underline{p}) \pmb{\langle} \hat n^{c}_{\underline{p}, \o}\,; \hat n^{c}_{-\underline{p},\o} \pmb{\rangle}_{\infty}^{\text{(ref)}} &=& -\frac{1}{2\pi v^{(\text{ref})} {Z^{(\text{ref})}}^{2}} \frac{D_{-\omega}(\underline{p})}{1 - \tau \hat w(\underline{p})}\nn\\ D_{\o}(\underline{p}) \pmb{\langle} \hat n^{s}_{\underline{p}, \o}\,; \hat n^{s}_{-\underline{p}, \o} \pmb{\rangle}_{\infty}^{\text{(ref)}} &=& -\frac{D_{-\o}(\underline{p})}{2\pi v^{(\text{ref})} {Z^{(\text{ref})}}^{2}} \;,
\eea
with $\widetilde{D}_{\o}(\underline{p}) := -ip_{0} + \o v^{(\text{ref})}\Big( \frac{1 + \tau \hat w(p)}{1 - \tau \hat w(p)}\Big) p_{1}$.

\section{Proof of Theorem \ref{thm:1}}\label{sec:proof}

We have now all the ingredients to prove Theorem \ref{thm:1}. In Section \ref{sec:edgecond} we will compute the edge transport coefficients, item $i)$ of Theorem \ref{thm:1}. In Section \ref{sec:SC} we will prove the spin-charge separation, item $ii)$ of Theorem \ref{thm:1}. As discussed after Theorem \ref{thm:1}, the bulk-edge correspondence, item $iii)$, follows from the analogous result for noninteracting systems, Eq. (\ref{eq:benonint}), together with the universality of $G$ and of $\s_{21}$. The universality of the Hall conductivity has been proven in \cite{GMP}.

\subsection{The edge transport coefficients}\label{sec:edgecond}

Let us start by discussing the charge transport coefficients. By Proposition \ref{prp:wick} we can rewrite the edge transport coefficients in terms of Euclidean correlation functions. Let $L> a > a'> 0$. We have, setting $p_{0} \equiv \eta$:
\be\label{eq:Dkk'}
G^{\underline{a}}_{\m\n}(\underline{p}) =  \sum_{y_{2} = 0}^{a'}\Big[ \sum_{x_{2}=0}^{a} \pmb{\langle}{\bf T} \hat j_{\m,\underline{p}, x_{2}}\,; \hat j_{\n,-\underline{p}, y_{2}}\pmb{\rangle}_{\infty} + \D_{\m,y_{2}} \d_{\m\n}  \Big](-1)^{\d_{\m,1}}\;,\qquad \m,\n=0,1\;,
\ee
where $\pmb{\langle}{\bf T} \cdot \pmb{\rangle}_{\infty} = \lim_{\beta, L\to \infty} (\beta L)^{-1} \langle {\bf T} \cdot \rangle_{\beta, L}$. The existence of the limits can be proven as in Lemma 2.6 of \cite{BFMhub1}. We will first use Proposition \ref{prp:relref} to express Eq. (\ref{eq:Dkk'}) in terms of the correlation functions of the reference model, that can be computed explicitly thanks Proposition \ref{prp:WIan}, see Eqs. (\ref{eq:ncns}), up to some renormalized coefficients. We will then exploit the lattice Ward identities, Section \ref{prp:WIan}, and the anomalous Ward identity, Proposition \ref{prp:WIan}, to fix the values of the coefficients. By Proposition \ref{prp:relref}, we have:
\be\label{eq:JJ}
\pmb{\langle} {\bf T} \hat j_{\m, \underline{p}, x_{2}}\,; \hat j_{\n, -\underline{p}, y_{2}} \pmb{\rangle}_{\infty} = Z^{(\text{ref})}_{c,\m}(x_{2})Z^{(\text{ref})}_{c,\n}(y_{2}) \pmb{\langle} \hat n^{c}_{\underline{p}, \o}\,; \hat n^{c}_{-\underline{p},\o} \pmb{\rangle}_{\infty}^{(\text{ref})} + R^{\infty}_{\m\n,c}(\underline{p}; x_{2}, y_{2})\;,
\ee
for some exponentially decaying functions $Z^{(\text{ref})}_{\m,c}(x_{2})$, so far unknown, and an error term satisfying the bounds (\ref{eq:bdR}). Plugging Eq. (\ref{eq:JJ}) into Eq. (\ref{eq:Dkk'}) we have:
\be\label{eq:Dkk'2}
G^{\underline{a}}_{\m\n}(\underline{p}) = \sum_{y_{2} = 0}^{a'}\Big[Z^{(\text{ref})}_{c,\n}(y_{2}) \sum_{x_{2} = 0}^{a}Z^{(\text{ref})}_{c,\m}(x_{2})  \pmb{\langle} \hat n^{c}_{\underline{p}, \o}\,; \hat n^{c}_{-\underline{p},\o} \pmb{\rangle}_{\infty}^{(\text{ref})} + A^{a}_{\m\n}(y_{2}) \Big](-1)^{\d_{\m,1}} + \widetilde R_{\m\n,c}^{\underline{a}}(\underline{p})\;,
\ee
where:
\bea\label{eq:JJ2}
A^{a}_{\m\n}(y_{2}) &=& \D_{\m,y_{2}} \d_{\m\n} + \sum_{x_{2} = 0}^{a} R^{\infty}_{\m\n,c}(\underline{0}; x_{2}, y_{2})\\
|\widetilde R_{\m\n,c}^{\underline{a}}(\underline{p})| &\leq& \sum_{x_{2} = 0}^{a}\sum_{y_{2} = 0}^{a'} |R^{\infty}_{\m\n,c}(\underline{p}; x_{2}, y_{2}) - R^{\infty}_{\m\n,c}(\underline{0}; x_{2}, y_{2}) |\;.\nn
\eea
Thanks to the third of Eq. (\ref{eq:bdR}), 
\be
|A^{a}_{\m\n}(y_{2}) - A^{\infty}_{\m\n}(y_{2})| \leq \sum_{x_{2} > a} | R^{\infty}_{\m\n,c}(\underline{0}; x_{2}, y_{2}) | \leq C_{n}/(1 + |y_{2} - a|^{n} )
\ee
for all $n\in \mathbb{N}$. Therefore, $\sum_{y_{2}\leq a'} |A^{a}_{\m\n}(y_{2}) - A^{\infty}_{\m\n}(y_{2})|\leq C_{n} a'/{|a- a'|}^{n} \to 0$ as $a\to \infty$. Moreover, the last of Eq. (\ref{eq:bdR}) implies that $|\widetilde{R}^{\underline{a}}_{\m\n,c}(\underline{p})|\leq Ca' |\underline{p}|^{\theta}$.  The next step is to combine the lattice Ward identity together with the anomalous Ward identity to compute explicitly $A^{\infty}_{\m\n}(x_{2})$. In the $L\to \infty$ limit, the Ward identity reads:
%
%
%
\be\label{eq:WIJJ}
\sum_{\m=0,1}\eta_{\m}(\underline{p}) p_{\m} \Big[ \sum_{x_{2}=0}^{\infty} \pmb{\langle} {\bf T} \hat j_{\m,(\underline{p}, x_{2})}\,; \hat j_{\n,(-\underline{p},y_{2})} \pmb{\rangle}_{\infty} + \d_{\m\n} \D_{\m, y_{2}}\Big] = 0\;.
\ee
Let $Z^{(\text{ref})}_{c,\m} := \sum_{x_{2} = 0}^{\infty} Z^{(\text{ref})}_{c,\m}(x_{2})$; the sum exists thanks to the bound $|Z^{(\text{ref})}_{c,\m}(y_{2})| \leq C_{n} /(1 + |y_{2}|^{n})$ for all $n\in \mathbb{N}$, recall Proposition \ref{prp:relref}. Rewriting the current-current correlation in Eq. (\ref{eq:WIJJ}) in terms of the reference model we get:
\bea\label{eq:Akk'}
&&\sum_{\m=0,1} \eta_{\m}(\underline{p}) p_{\m} \Big( Z^{(\text{ref})}_{c,\m} Z^{(\text{ref})}_{c,\n}(y_{2})\pmb{\langle} \hat n^{c}_{\underline{p}, \o}\,; \hat n^{c}_{-\underline{p},\o} \pmb{\rangle}_{\infty}^{(\text{ref})} + A^{\infty}_{\m\n}(y_{2})\nn\\&&\qquad\qquad + \sum_{x_{2} = 0}^{\infty} [R^{\infty}_{\m\n,c}(\underline{p}; x_{2}, y_{2}) - R^{\infty}_{\m\n,c}(\underline{0}; x_{2}, y_{2})]\Big) = 0\;.
\eea
The density-density correlation functions of the reference model have been explicitly computed in Eq. (\ref{eq:ncns}). We will use this computation together with the identity Eq. (\ref{eq:Akk'}) to determine $A^{\infty}_{\m\n}(y_{2})$. Let us define:
\be\label{eq:vsvc}
v_{s} := v^{(\text{ref})}\;,\qquad v_{c} :=  v^{(\text{ref})} \Big( \frac{1 + \tau}{1 - \tau}\Big)\;.
\ee
As it will be clear from the discussion below, these quantities play the role of {\em spin velocity} $v_{s}$ and the {\em charge velocity} $v_{c}$ for the edge excitations. By the first of Eq. (\ref{eq:ncns}), recalling that $\hat w(\underline{0}) = 1$ and that $|\hat w(\underline{p}) - \hat w(\underline{0})|\leq C|\underline{p}|$, we have:
\be\label{eq:nn}
\pmb{\langle} \hat n^{c}_{\underline{p}, \o}\,; \hat n^{c}_{-\underline{p},\o}  \pmb{\rangle}_{\infty}^{(\text{ref})} = -\frac{1}{2\pi v_{s} {Z^{(\text{ref})}}^2} \frac{1}{1 - \t} \frac{-ip_{0} - \o v_{s} p_{1}}{-ip_{0} + \o v_{c} p_{1}} + R_{\o,c}(\underline{p})\;,
\ee
with $|R_{\o,c}(\underline{p})|\leq C|\underline{p}|^{\theta}$ for some $\theta >0$. Plugging Eq. (\ref{eq:nn}) into Eq. (\ref{eq:Akk'}) we get, for some new error term $|\widehat{R}_{\m\n,c}(\underline{p})|\leq  C|\underline{p}|^{\theta}$:
\be\label{eq:nn2}
\sum_{\m=0,1} \eta_{\m}(\underline{p}) p_{\m} \Big( -\frac{ Z^{(\text{ref})}_{c,\m} Z^{(\text{ref})}_{c,\n}(y_{2})}{2\pi v_{s} {Z^{(\text{ref})}}^2} \frac{1}{1 - \t} \frac{-ip_{0} - \o v_{s} p_{1}}{-ip_{0} + \o v_{c} p_{1}}   + A^{\infty}_{\m\n,\o}(y_{2}) + \widehat{R}_{\m\n,c}(\underline{p}) \Big) = 0\;.
\ee
The crucial remark now is that the error term $\widehat{R}_{\m\n,c}(\underline{p})$ is continuous in $\underline{p}$, and vanishes for $\underline{p}\to \underline{0}$. Let us set $p_{0} = 0$ in Eq. (\ref{eq:nn2}). We get:
\be
\eta_{1}(\underline{p}) p_{1} \Big( \frac{ Z^{(\text{ref})}_{c,1}Z^{(\text{ref})}_{c,\n}(y_{2}) }{2\pi v_{s} {Z^{(\text{ref})}}^2} \frac{1}{1 - \t} \frac{v_{s}}{v_{c}}   + A^{\infty}_{1\n}(y_{2}) + \widehat{R}_{1\n,c}(0,p_{1})\Big) = 0\;;
\ee
therefore, dividing by $p_{1}$ and taking the limit $p_{1}\to 0$:
\be\label{eq:Ak1}
A^{\infty}_{1\n}(y_{2}) = -\frac{Z^{(\text{ref})}_{c,1} Z^{(\text{ref})}_{c,\n}(y_{2})}{2\pi v_{c} {Z^{(\text{ref})}}^2} \frac{1}{1 - \t}\;,\qquad \n=0,1\;.
\ee
In order to compute $A^{\infty}_{0\n}(y_{2})$, we repeat the strategy interchanging the roles of $p_{0}$ and $p_{1}$. We get:
\be\label{eq:Ak0}
A^{\infty}_{0\n}(y_{2}) =  \frac{Z^{(\text{ref})}_{c,0} Z^{(\text{ref})}_{c,\n}(y_{2})}{2\pi v_{s} {Z^{(\text{ref})}}^2} \frac{1}{1-\tau}\;,\qquad \n =0,1\;.
\ee
Therefore, plugging Eqs. (\ref{eq:nn}), (\ref{eq:Ak1}), (\ref{eq:Ak0}) into Eq. (\ref{eq:Dkk'2}), we obtain, after also writing $\sum_{y_{2}\leq a'} Z^{(\text{ref})}_{c,\n}(y_{2}) \sum_{x_{2} \leq a} Z^{(\text{ref})}_{c,\m}(x_{2}) =  Z^{\text{(ref)}}_{c,\n}Z^{\text{(ref)}}_{c,\m} + O(a^{-n}) + O({a'}^{-n})$:
%
%
\bea\label{eq:comp}
G^{\underline{a}}_{1\n}(\underline{p}) &=& \frac{Z_{c,1}^{\text{(ref)}} Z_{c, \n}^{\text{(ref)}} }{\pi v_c Z^{\text{(ref)}2}} \frac{1}{(1 - \tau)^2} \frac{-ip_{0}}{ -ip_{0} + \omega v_{c} p_{1} } + R^{\underline{a}}_{1\n,c}(\underline{p})\nn\\
G^{\underline{a}}_{0\n}(\underline{p}) &=& \frac{Z_{c,0}^{\text{(ref)}} Z_{c,\n}^{\text{(ref)}} }{\pi v_{c} Z^{\text{(ref)}2}} \frac{1}{(1-\tau)^2} \frac{\omega p_{1} v_{c}}{-ip_{0} + \omega v_{c} p_{1}} + R^{\underline{a}}_{0\n,c}(\underline{p})
\eea
for some new error term $R^{\underline{a}}_{\m\n,c}(\underline{p})$, $|R^{\underline{a}}_{\m\n,c}(\underline{p})|\leq C_{n}a'(|\underline{p}|^{\theta} + |{a'} - a|^{-n}) + C_{n}a'^{-n}$. To conclude, we are left with computing the parameters $Z^{\text{(ref)}}_{c,\m},\, Z^{\text{(ref)}}$. This will be done by comparing the vertex Ward identities for the lattice and reference model. Recall the lattice vertex Ward identity, Eq. (\ref{eq:sumWI}):
\be\label{eq:WIvert}
-\sum_{\m=0,1} \eta_{\m}(\underline{p}) p_{\m} \sum_{z_{2}=0}^{\infty} \pmb{\langle} {\bf T} \hat j_{\m, \underline{p}, z_{2}}\,; \hat a^{-}_{\underline{k}+\underline{p}, y_{2}, r'} \hat a^{+}_{\underline{k}, x_{2},r}\pmb{\rangle}_{\infty}  = \pmb{\langle} {\bf T} \hat a^{-}_{\underline{k}, y_{2}, r'} \hat a^{+}_{\underline{k}, x_{2}, r} \pmb{\rangle}_{\infty} -  \langle {\bf T} \hat a^{-}_{\underline{k}+\underline{p}, y_{2}, r'} \hat a^{+}_{\underline{k} + \underline{p}, x_{2}, r} \pmb{\rangle}_{\infty}\;.
\ee
We can use Proposition \ref{prp:relref} to rewrite the correlations appearing in both sides of Eq. (\ref{eq:WIvert}) in terms of those of the reference model. We have, for $\| \underline{k}' \| = \kappa$, $\| \underline{k}' + \underline{p} \|\leq \kappa$ and $\kappa$ small enough, recalling that $r = (\s, \bar r)$:
\bea\label{eq:WIvert2}
&&-\sum_{\m=0,1}\eta_{\m}(\underline{p}) p_{\m} \pmb{\langle} \hat n^{c}_{\underline{p}, \o}\,; \hat \psi^{-}_{\underline{k}'+\underline{p} ,\o,\s'} \hat \psi^{+}_{\underline{k}' ,\o,\s} \pmb{\rangle}_{\infty}^{(\text{ref})}\big( Z^{(\text{ref})}_{c,\m}Q^{(\text{ref})-}_{r}(x_{2}) Q^{(\text{ref})+}_{ r'}(y_{2})  + O(\kappa^{\theta})\Big) \\
&&= \d_{\s\s'}\Big( \pmb{\langle} \hat\psi^{-}_{\underline{k}',\o,\s} \hat\psi^{+}_{\underline{k}',\o,\s} \pmb{\rangle}_{\infty}^{\text{(ref)}} - \pmb{\langle} \hat\psi^{-}_{\underline{k}'+\underline{p},\o,\s}\hat\psi^{+}_{\underline{k}'+\underline{p}, \o,\s} \pmb{\rangle}_{\infty}^{(\text{ref})}\Big) \Big(Q^{(\text{ref})-}_{r}(x_{2}) Q^{(\text{ref})+}_{ r'}(y_{2})  + O(\kappa^{\theta}) \Big)\;.\nn
\eea
Multiplying left-hand side and right-hand side of Eq. (\ref{eq:WIvert2}) by $\overline{Q^{(\text{ref})-}_{r}(x_{2})} \overline{Q^{(\text{ref})+}_{ r'}(y_{2})}$, summing over $x_{2}, \bar r, y_{2}, \bar r'$, and using that $\| Q^{(\text{ref})\pm} \|_{2} = 1 + O(\l) >0$, $\| Q^{(\text{ref})\pm} \|_{1} \leq C$ we get:
\bea\label{eq:WIlatan}
&&-\sum_{\m=0,1} \eta_{\m}(\underline{p}) p_{\m}  \pmb{\langle} \hat n^{c}_{\underline{p}, \o}\,; \hat\psi^{-}_{\underline{k}'+\underline{p},\o,\s} \hat \psi^{+}_{\underline{k}', \o,\s} \pmb{\rangle}_{\infty}^{(\text{ref})}\big( Z^{(\text{ref})}_{c,\m} + O(\kappa^{\theta}) \big)\nn\\
&& \qquad = \Big( \pmb{\langle} \hat\psi^{-}_{\underline{k}',\o,\s} \hat\psi^{+}_{\underline{k}',\o,\s} \pmb{\rangle}_{\infty}^{\text{(ref)}} - \pmb{\langle} \hat\psi^{-}_{\underline{k}'+\underline{p},\o,\s}\hat\psi^{+}_{\underline{k}'+\underline{p}, \o,\s} \pmb{\rangle}_{\infty}^{(\text{ref})}  \Big) \Big( 1 + O(\kappa^{\theta}) \Big)\;,
\eea
At the same time, recall the anomalous Ward identity for the vertex function of the reference model, Eq. (\ref{eq:WIan}); summing it over $\s'$ we get:
\bea\label{eq:WIvert3}
Z^{(\text{ref})}\widetilde{D}_{\o}(\underline{p}) \pmb{\langle} \hat n^{c}_{\underline{p}, \o}\,; \hat \psi^{-}_{\underline{k}+\underline{p}, \o, \s} \hat \psi^{+}_{\underline{k}, \o, \s} \pmb{\rangle}_{\infty}^{(\text{ref})} &=& \frac{1}{1 - \hat w(\underline{p}) \tau}\big[ \pmb{\langle} \hat \psi^{-}_{\underline{k}, \o, \s} \hat \psi^{+}_{\underline{k}, \o, \s} \pmb{\rangle}^{(\text{ref})}_{\infty} - \pmb{\langle} \hat \psi^{-}_{\underline{k}+\underline{p}, \o, \s} \hat \psi^{+}_{\underline{k}+\underline{p}, \o, \s} \pmb{\rangle}_{\infty}^{(\text{ref})} \big]\nn\\
\eea
where we recall that $\widetilde{D}_{\o}(\underline{p}) = -ip_{0} + \o v_{s}\Big( \frac{1 + \tau \hat w(p)}{1 - \tau \hat w(p)}\Big) p_{1}$. Therefore, comparing Eqs. (\ref{eq:WIlatan}), (\ref{eq:WIvert3}), we easily find, for $\kappa\to 0$:
\be\label{eq:Zc}
Z^{(\text{ref})}_{c,0} = Z^{(\text{ref})}(1 - \tau)\;,\qquad Z^{(\text{ref})}_{c,1} = -Z^{(\text{ref})} \o v_{c}(1-\tau) \equiv -Z^{(\text{ref})} \o v_{s} (1 + \tau)\;.
\ee
Plugging these relations into Eq. (\ref{eq:comp}), we finally get:
\bea\label{eq:comp2}
&&G^{\underline{a}}_{00}(\underline{p}) = \frac{1}{\pi v_{c}} \frac{\o v_{c} p_{1}}{-ip_{0} + \o v_{c} p_{1}} + R^{\underline{a}}_{00,c}(\underline{p})\;,\quad G^{\underline{a}}_{01}(\underline{p}) =  -\frac{\o}{\pi}  \frac{\o v_{c} p_{1}}{-ip_{0} + \o v_{c} p_{1}} + R^{\underline{a}}_{01,c}(\underline{p})\;,\nn\\
&&G^{\underline{a}}_{10}(\underline{p}) = -\frac{\o}{\pi} \frac{-i p_{0}}{-ip_{0} + \o v_{c} p_{1}} + R_{10,c}^{\underline{a}}(\underline{p})\;,\quad G^{\underline{a}}_{11}(\underline{p}) = \frac{v_{c}}{\pi} \frac{-i p_{0}}{-ip_{0} + \o v_c p_{1}} + R^{\underline{a}}_{11,c}(\underline{p})\;.
\eea
This proves the desired claim, Eqs. (\ref{eq:edgeint}), for the charge transport coefficients. Let us now consider the spin transport coefficients. After Wick rotation:
\be
G^{\underline{a},s}_{\m\n}(\underline{p}) = \sum_{y_{2} = 0}^{a'}\Big[ \sum_{x_{2}=0}^{a} \pmb{\langle}{\bf T} \hat j^{s}_{\m,\underline{p}, x_{2}}\,; \hat j^{s}_{\n,-\underline{p}, y_{2}}\pmb{\rangle}_{\infty} + \D_{\m,y_{2}} \d_{\m\n}\Big](-1)^{\d_{\m,1}}\;.
\ee
The strategy to compute $G^{\underline{a},s}_{\m\n}(\underline{p})$ will be identical to the one followed for $G^{\underline{a}}_{\m\n}(\underline{p})$. We write:
\be\label{eq:Ds}
G^{\underline{a},s}_{\m\n}(\underline{p}) = \sum_{y_{2} = 0}^{a'}\Big[ Z^{(\text{ref})}_{s,\n}(y_{2}) \sum_{x_{2} = 0}^{a} Z^{(\text{ref})}_{s,\m}(x_{2}) \pmb{\langle} \hat n^{s}_{\underline{p}, \o}\,; \hat n^{s}_{-\underline{p},\o}  \pmb{\rangle}_{\infty}^{(\text{ref})} + \widetilde{A}^a_{\m\n}(y_{2}) \Big](-1)^{\d_{\m,1}} + \widetilde R_{\m\n,s}^{\underline{a}}(\underline{p})\;,
\ee
where, by the second of Eq. (\ref{eq:ncns}):
\be
\pmb{\langle} n^{s}_{\underline{p}, \o}\,; n^{s}_{-\underline{p}, \o} \pmb{\rangle}_{\infty}^{\text{(ref)}} = -\frac{1}{2\pi v_{s} {Z^{(\text{ref})}}^2} \frac{-ip_{0} - \o v_{s} p_{1} }{-ip_{0} + \o v_{s} p_{1}}\;.
\ee
Then, proceeding as in Eqs. (\ref{eq:nn2})--(\ref{eq:Ak0}), this time using the first Ward identity in Eq. (\ref{eq:WIspinlat}), we get:
\be
\tilde A^{\infty}_{1\n}(y_{2}) = -\frac{Z^{(\text{ref})}_{s,1} Z^{(\text{ref})}_{s,\n}(y_{2})}{2\pi v_{s} {Z^{(\text{ref})}}^2}\;,\qquad \tilde A^{\infty}_{0\n}(y_{2}) = \frac{Z^{(\text{ref})}_{s,0} Z^{(\text{ref})}_{s,\n}(y_{2})}{2\pi v_{s} {Z^{(\text{ref})}}^2}\;.
\ee
To compute the $Z^{(\text{ref})}_{s,\m}$ coefficients, we proceed as in Eqs. (\ref{eq:WIvert})--(\ref{eq:Zc}). The first of Eq. (\ref{eq:WIan}) implies:
\bea
D_{\o}(\underline{p}) \pmb{\langle} \hat n^{s}_{\underline{p}, \o}\,; \hat \psi^{-}_{\underline{k}+\underline{p}, \o, \s} \hat \psi^{+}_{\underline{k}, \o, \s} \pmb{\rangle}_{\infty}^{(\text{ref})} &=& \s\big[ \pmb{\langle} \hat \psi^{-}_{\underline{k}, \o, \s} \hat \psi^{+}_{\underline{k}, \o, \s} \pmb{\rangle}^{(\text{ref})}_{\infty} - \pmb{\langle} \hat \psi^{-}_{\underline{k}+\underline{p}, \o, \s} \hat \psi^{+}_{\underline{k}+\underline{p}, \o, \s} \pmb{\rangle}_{\infty}^{(\text{ref})} \big]\;.
\eea
Then, recalling the vertex Ward identity in Eq. (\ref{eq:WIspinlat}), we get:
\be
Z^{(\text{ref})}_{s,0} = Z^{(\text{ref})}\;,\qquad Z^{(\text{ref})}_{s,1} = -Z^{(\text{ref})}\o v_{s}\;. 
\ee
Plugging these last expressions in Eq. (\ref{eq:Ds}) we finally find:
\bea
&&G^{\underline{a},s}_{00}(\underline{p}) = \frac{1}{\pi v_{s}} \frac{\o v_{s} p_{1}}{-ip_{0} + \o v_{s} p_{1}} + R^{\underline{a}}_{00,s}(\underline{p})\;,\quad G^{\underline{a},s}_{01}(\underline{p}) =  -\frac{\o}{\pi}  \frac{\o v_{s} p_{1}}{-ip_{0} + \o v_{s} p_{1}} + R^{\underline{a}}_{01,s}(\underline{p})\;,\nn\\
&&G^{\underline{a},s}_{10}(\underline{p}) = -\frac{\o}{\pi} \frac{-i p_{0}}{-ip_{0} + \o v_{s} p_{1}} + R_{10,s}^{\underline{a}}(\underline{p})\;,\quad G^{\underline{a},s}_{11}(\underline{p}) = \frac{v_{s}}{\pi} \frac{-i p_{0}}{-ip_{0} + \o v_s p_{1}} + R^{\underline{a}}_{11,s}(\underline{p})
\eea
with $|R_{\m\n,s}^{\underline{a}}(\underline{p})|\leq C_{n}a'(|\underline{p}|^{\theta} + |{a'} - a|^{-n}) + C_{n}a'^{-n}$. This proves the claim about the spin transport coefficients, and concludes the proof of item $i)$ of Theorem \ref{thm:1}.
\qed
\medskip

To conclude, the same strategy can of course be repeated to evaluate the edge transport coefficients in the (much simpler) case $\l = 0$. This time, we can assume the presence of an arbitrary number of edge states, as specified by Assumption 1. The only difference is that the current-current correlations have now the form, for $\sharp = c,s$:
\be
\pmb{\langle} {\bf T} \hat j^{\sharp}_{\m, \underline{p}, x_{2}}\,; \hat j^{\sharp}_{\n, -\underline{p}, y_{2}} \pmb{\rangle}^{(0)}_{\infty} = \sum_{e}^{*} Z^{(\text{ref})}_{\sharp,\m,e}(x_{2})Z^{(\text{ref})}_{\sharp,\n,e}(y_{2}) \pmb{\langle} \hat n^{\sharp}_{\underline{p}, \o}\,; \hat n^{\sharp}_{-\underline{p},\o} \pmb{\rangle}_{\infty,e}^{(\text{ref}),(0)} + R^{\infty}_{\m\n,\sharp}(\underline{p}; x_{2}, y_{2})\;,
\ee
where the asterisk restricts the sum to the edge states localized around $x_{2} = 0$, and $\pmb{\langle}\cdot \pmb{\rangle}_{\infty,e}^{(\text{ref}),(0)}$ is a noninteracting reference model with velocity $v^{\text{(ref)}} = |v_{e}|$. By repeating the same argument described above, item $i)$ of Proposition \ref{prop:edgenonint} follows.

\subsection{Spin-charge separation}\label{sec:SC}

The proof of spin-charge separation immediately follows from Proposition \ref{prp:relref}, together with the spin-charge separation for the reference model. One has, \cite{FM, BFMhub2}:
\be\label{eq:SCref}
\langle \psi^{-}_{\underline{x}, \o, \s} \psi^{+}_{\underline{y}, \o, \s'} \rangle^{(\text{ref})}_{\infty} = \frac{\d_{\s\s'}}{Z} \frac{(1 + R_{\o,\s}(\underline{x}, \underline{y}))}{\sqrt{(v_{s} (x_{0} - y_0) + i \o (x_{1} - y_1))(v_{c} (x_{0} - y_0) + i \o (x_{1} - y_1))}}\;,
\ee
where $Z = 1 + O(\l)$, $|R_{\o, \s}(\underline{x},\underline{y})|\leq C\|\underline{x} - \underline{y}\|^{-\theta}$ for some $\theta >0$, and $v_{s}$, $v_{c}$ given by Eq. (\ref{eq:vsvc}). The analogous claim for the lattice model, Eq. (\ref{eq:SC}), follows after plugging Eq. (\ref{eq:SCref}) into the first of Eq. (\ref{eq:latWI}). This concludes the proof of item $ii)$ of Theorem \ref{thm:1}.\qed

\section{Renormalization group analysis}\label{sec:RG}

In the remaining part of the paper, we will discuss the renormalization group analysis leading to the proof of Proposition \ref{prp:relref}. The method can be used to construct all correlation function of both lattice and reference model.

Our goal will be to set up a convergent expansion for the generating functional of the correlation functions, $\mathcal{W}_{\beta, L}(A,\psi)$, uniformly $\beta$ and $L$. The starting point is the outcome of the integration of the bulk and ultraviolet degrees of freedom, Eq. (\ref{eq:RGstart}). The trouble in evaluating the Grassmann integral in Eq. (\ref{eq:RGstart}) is that, due to the absence of a mass gap, the Grassmann field $\psi^{(\text{i.r.})}$ cannot be integrated in a single step. In fact, its propagator decays as $|g^{\text{(i.r.)}}_{e}(\underline{x} - \underline{y})| \leq C\|\underline{x} - \underline{y}\|^{-1}$, a bound which gives rise to apparent {\it infrared divergences} in the fermionic cluster expansion. To solve this problem, we will perform a multiscale analysis, by decomposing the field into a sum of single scale fields. The single scale fields will be integrated in a progressive way, starting from the high energy scales until the lowest energy scales. The covariance and the effective action at a given scale will be defined inductively, via the combination of suitable {\it localization} and {\it renormalization} operations. This will ultimately allow to exploit {\it cancellations} in the naive expansion in a systematic way, and to prove analyticity of the correlation functions for $|\l| < \bar \l$ for some $\bar \l$ independent of $\beta, L$.

\subsection{Quasi-particles}\label{sec:qp}

Let $\psi^{(\leq 0)\pm}_{\underline{x}, e} = e^{\mp i \underline{k}_{F}^{e} x_{1}} \psi^{\text{(i.r.)}\pm}_{\underline{x}, e}$. The field $\psi^{(\leq 0)}_{e}$ is called the {\it quasi-particle field}.
 Let $\underline{k}' = \underline{k} - \underline{k}_{F}^{e}$. Recalling the short-hand notation $\int_{\beta, L} d\underline{k}' \equiv \frac{1}{\beta L}\sum_{\underline{k}'\in \mathbb{D}^{*}_{\beta,L}}$, the propagator of this new field is:
\bea
\int P_{ 0}(d\psi^{(\leq 0)}) \psi^{(\leq 0)-}_{\underline{x}, e} \psi^{(\leq 0)+}_{\underline{y}, e'}
&=& \delta_{ee'} \int_{\beta,L} \frac{d\underline{k}'}{(2\pi)^2}\, e^{-i \underline{k}'\cdot (\underline{x} - \underline{y})} \frac{\chi_{0,e}(\underline{k}')}{-ik_{0} + \e_{e}(k_{F} + k'_{1}) - \m_{e}}\nn\\
&\equiv& \delta_{ee'} g_{e}^{(\leq 0)}(\underline{x} - \underline{y})\;,
\eea
that is $g_{e}^{(\leq 0)}(\underline{x} - \underline{y}) = e^{-ik_{F}^{e}(\underline{x} - \underline{y})} g_{e}^{(\text{i.r.})}(\underline{x} - \underline{y})$. After rewriting the effective action in terms of the quasi-particle fields, we get:
\be\label{eq:leq0}
\mathcal{W}_{\beta,L}(A,\phi) = \mathcal{W}^{( 0)}_{\beta,L}(A,\phi) + \log \int P_{ 0}(d\psi^{(\leq 0)}) e^{V^{( 0)}(\psi^{(\leq 0)}; A, \phi)}
\ee
where $\mathcal{W}^{( 0)}_{\beta,L}(A,\phi) = \mathcal{W}^{(\text{bulk})}_{\beta,L}(A,\phi) + \mathcal{W}^{(\text{u.v.})}_{\beta,L}(A,\phi)$ and
\be\label{eq:V0}
V^{(0)}(\psi; A, \phi) = \sum_{\G} \int_{\beta, L} D \underline{X}  D {\bf Y}  D {\bf Z}\,\psi_{\G}(\underline{X}) \phi_{\G}({\bf Y}) A_{\G}({\bf Z})  W_{\G}^{\text{(0)}}(\underline{X},{\bf Y},{\bf Z})
\ee
with
\be
W_{\G}^{(0)}(\underline{X},{\bf Y},{\bf Z}) = \Big[\prod_{k=1}^{n} e^{i\e_{i}k_{F}^{e_{k}} x_{k,1}}\Big] W_{\G}^{\text{(1d)}}(\underline{X},{\bf Y},{\bf Z})\;.
\ee
Note that, by translation invariance,
\be
\widehat{W}_{\G}^{(0)}(\underline{K}', (\underline{Q}', y_{2}),(\underline{P}, z_{2})) = \delta\Big(\sum_{i} (-1)^{\e_{i}}\underline{k}_{i} + \sum_{j}(-1)^{\kappa_{j}}\underline{q}_{j} + \sum_{s} \underline{p}_{s} \Big) \widehat{W}_{\G}^{(0)}(\underline{K}', (\underline{Q}', y_{2}),(\underline{P}, z_{2}))\nn
\ee
where $\delta(\cdot)$ is the Kronecker delta function. In the following, we shall only write the independent momenta at the argument of the Fourier transforms of the kernels. For instance: $\widehat{W}^{(0)}_{\G}(\underline{k}'_{1}, \underline{k}'_{2}) \equiv \delta(\underline{k}_{1} - \underline{k}_{2})\widehat{W}^{(0)}_{\G}(\underline{k}'_{1})$. The support of the fermionic field $\hat \psi_{e}^{(\leq 0)}$ coincides with the support of the propagator $\hat g^{(\leq 0)}_{e}$. In the following, we shall choose the parameter $\d'_{e}$ appearing in the definition of the cutoff function $\chi_{0, e}(\underline{k}')$, Eq. (\ref{eq:1N}), small enough so that the only nonvanishing contributions to the effective action (\ref{eq:V0}) are those verifying $\sum_{i} (-1)^{\e_{i}} k_{F}^{e_{i}} = 0$. 

For later convenience, we rewrite the effective action as:
\be\label{eq:VGBV}
V^{(0)}(\psi; A, \phi) = -V^{(0)}(\psi) + \G^{(1)}(\psi; A,\phi) + B^{(1)}(\psi; \phi) + V_{R}^{(0)}(\psi; A,\phi)\;,
\ee
where: $V^{(0)}(\psi) = -V^{(0)}(\psi; 0, 0)$; $\hat n^{c}_{\underline{p}, \bar e} = \hat n_{\underline{p}, (\bar e, \uparrow)} + \hat n_{\underline{p}, (\bar e, \downarrow)}$, $\hat n^{s}_{\underline{p}, \bar e} = \hat n_{\underline{p}, (\bar e, \uparrow)} - \hat n_{\underline{p}, (\bar e, \downarrow)}$;
\bea
\G^{(1)}(\psi; A, \phi) &=& \sum_{\bar e,\sharp}\sum_{\m} \sum_{z_{2}} \int_{\beta, L} \frac{d\underline{p}}{(2\pi)^2}\,\hat A_{\m, \underline{p}, z_{2}}^{\sharp} \hat n^{\sharp}_{\underline{p},\bar e} Z_{1, \sharp, \m, e}(z_{2}) \nn\\
B^{(1)}(\psi; \phi) &=& \sum_{r, e}\sum_{x_{2}} \int_{\beta, L} \frac{d\underline{k}'}{(2\pi)^2}\, \Big[\hat \phi^{+}_{\underline{k}'+\underline{k}_{F}^{e}, x_{2}, r} Q^{+}_{1,r,e}(\underline{k}'_{1}, x_{2}) \hat\psi^{-}_{\underline{k'},e} + \hat\psi^{+}_{\underline{k}',e} Q^{-}_{1,r,e}(\underline{k}', x_{2})\hat\phi^{-}_{\underline{k}'+\underline{k}_{F}^{e},x_{2}, r} \Big]\;. \nn
\eea
Eq. (\ref{eq:edgeto1d}), together with the fast decay of the edge kernels, easily implies the estimates: 
\be\label{eq:ZQ}
|Z_{1,\sharp,\m,e}(z_{2})|\leq \frac{C_{n}}{1 + |z_{2}|_{e}^{n}}\;,\qquad |Q^{\pm}_{1,e,r}(\underline{k}',x_{2})|\leq \frac{C_{n}}{1 + |x_{2}|_{e}^{n}}\;,
\ee
with $|\cdot|_{e} = |\cdot|, |\cdot - L|$, depending on whether the edge state labeled by $e$ satisfies the first or the second bound in Eq. (\ref{eq:dec1}). In particular, $Q^{\pm}_{1,r,e}(\underline{k}',x_{2}) = \xi^{e}_{x_{2}}(k_{1}; r) + \zeta^e_{x_{2}}(\underline{k}'; r)$, with $\| \zeta^e(\underline{k}') \|_{p} = O(\l)$ for all $p\geq 1$. In the absence of interactions, $\l=0$, these functions can be computed explicitly. For simplicity, let us suppose that only nearest neighbour hoppings are present: $H(\vec x, \vec y) = 0$ for $\|\vec x - \vec y\| > 1$. Recalling that $H_{rr'}(\vec x, \vec y) \equiv H_{rr'}(x_{1} - y_{1}; x_{2}, y_{2})$:
\bea\label{eq:Q1}
&&Q^{(0)+}_{1,r,e}(\underline{k}', x_{2}) = \xi^e_{x_{2}}(k'_{1} + k_{F}^{e}; r)\;,\quad Q^{(0)-}_{1,r,e} = \overline{Q^{(0)+}_{1,r,e}}\;, \quad Z^{(0)}_{1,\sharp,0, e}(z_{2}) = \sum_{r} |\xi^{e}_{z_{2}}(k_{F}^{e}; r) |^2\;,\\&& Z^{(0)}_{1,\sharp,1, e}(z_{2}) = \sum_{r,r'} \overline{\xi^{e}_{z_{2}}(k_{F}^{e}; r)} \big(ie^{-ik_{F}^{e}} H_{rr'}(-1; z_{2}, z_{2}) - ie^{ik_{F}^{e}} H_{rr'}(1; z_{2}, z_{2})\big) \xi^{e}_{z_{2}}(k_{F}^{e}; r')\;.\nn
\eea
The coefficient $Z^{(0)}_{1,\sharp,2, e}(z_{2})$ can be computed in a similar way. Notice that, as $L\to \infty$, we get $\sum_{z_{2} = 0}^{\infty} Z^{(0)}_{1,\sharp,1,e}(z_{2}) = -v_{e}$. In fact:
\bea
\sum_{z_{2} = 0}^{\infty} Z^{(0)}_{1,\sharp,1,e}(z_{2}) &=& \sum_{z_{2}=0}^{\infty} \sum_{r,r'} \overline{\xi^{e}_{z_{2}}(k_{F}^{e}; r)} \big( ie^{-ik_{F}^{e}} H^{\s}_{rr'}(-1; z_{2}, z_{2}) - ie^{ik_{F}^{e}} H^{\s}_{rr'}(1; z_{2}, z_{2}) \big) \xi^{e}_{z_{2}}(k_{F}^{e};  r')\nn\\
&\equiv& -\sum_{z_{2}=0}^{\infty} \sum_{r,r'} \overline{\xi^{e}_{z_{2}}(k_{F}^{e}; r)} \partial_{k_{1}} \hat H^{\s}_{rr'}(k_{F}^{e}; z_{2}, z_{2}) \xi^{e}_{z_{2}}(k_{F}^{e};  r')\nn\\
&=& -\partial_{k_{1}} \langle \xi^{e}(k_{1}), \hat H^{\s}(k_{1}) \xi^{e}(k_{1})\rangle|_{k_{1} = k_{F}^{e}} \equiv -v_{e}\;,
\eea
where in the last step we used that, since $H^{\s}_{rr'}(\vec x, \vec y) = 0$ whenever $\| \vec x - \vec y \|>1$, $\partial_{k_{1}} \hat H^{\s}(k_{1}; x_{2}, y_{2}) = \sum_{z_{1}\neq 0} e^{ik_{1}z_{1}} iz_{1} H^{\s}(z_{1}; x_{2}, y_{2}) = \d_{x_{2},y_{2}} \sum_{z_{1}\neq 0} e^{ik_{1}z_{1}} iz_{1} H^{\s}(z_{1}; x_{2}, x_{2})$. Finally, the effective potential $V^{(0)}(\psi)$ has the form:
\be\label{eq:V00}
V^{(0)}(\psi) =  \int_{\beta,L} d\underline{x}\, \Big[ \sum_{e} \psi^{+}_{\underline{x}, e} \psi^{-}_{\underline{x},e}  n_{0,e} + \sum_{\underline{e}}\psi^{+}_{\underline{x},e_{1}}\psi^{-}_{\underline{x}, e_{2}}\psi^{+}_{\underline{x}, e_{3}}\psi^{+}_{\underline{x}, e_{4}} u_{0,\underline{e}}\Big] + \text{remainder}
\ee
where $n_{0,e} = \n_{e} + O(\l)$ and
\be\label{eq:lambda0}
u_{0,\underline{e}} = \l \d\big(\sum_{i=1}^{4} (-1)^{i+1} k_{F}^{e_{i}}\big)\sum_{\substack{x_{2}, y_{2} \\ r,r'}} \hat w_{rr'}(0; x_{2}, y_{2}) \overline{\xi^{e_{1}}_{x_{2}}(k_{F}^{e_{1}}; r)} \xi^{e_{2}}_{x_{2}}(k_{F}^{e_{2}}; r) \overline{\xi^{e_{3}}_{y_{2}}(k_{F}^{e_{3}};  r')} \xi^{e_{4}}_{y_{2}}(k_{F}^{e_{4}};  r') + O(\l^2)\;,
\ee
while the remainder term is {\it irrelevant} in the RG sense, as discussed in the next sections.

\subsection{Multiscale analysis}\label{sec:multi} In order to compute the functional integral in the right-hand side of Eq. (\ref{eq:leq0}), we will proceed in an inductive way. Let $h \in \mathbb{Z}_{-}$, $0\geq h \geq h_{\beta} := \min_{e}\lfloor \log_{2} (\pi/\d'_{e}\beta) \rfloor$. Suppose that the generating functional can be written as:
\be\label{eq:leqh}
\mathcal{W}_{\beta,L}(A,\phi) = \mathcal{W}^{( h)}_{\beta,L}(A,\phi) + \log \int P_{ h}(d\psi^{(\leq h)}) e^{V^{( h)}(\sqrt{Z_{h}}\psi^{(\leq h)}; A, \phi)}\;,
\ee
where: 
\begin{enumerate}
\item the notation $\sqrt{Z_{h}}\psi^{(\leq h)}$ means that every field $\psi^{(\leq h)\pm}_{\underline{x},e}$ appearing in the effective action $V^{(h)}$ is multiplied by a factor $\sqrt{Z_{h,e}}$.

\item The effective interaction and the generating functional on scale $h$ have the form:
\bea\label{eq:Vh}
V^{(h)}(\psi; A, \phi) &=& \sum_{\G} \int_{\beta, L} D \underline{X}  D {\bf Y}  D {\bf Z}\,\psi_{\G}(\underline{X}) \phi_{\G}({\bf Y}) A_{\G}({\bf Z})  W_{\G}^{(h)}(\underline{X},{\bf Y},{\bf Z}) \nn\\\mathcal{W}^{( h)}_{\beta,L}(A,\phi) &=& \sum_{\G:\, \G_{\psi} = \emptyset} \int_{\beta,L} D {\bf Y}  D {\bf Z}\, \phi_{\G}({\bf Y}) A_{\G}({\bf Z})  W_{\G}^{(h)}({\bf Y},{\bf Z})
\eea
for some kernels analytic in $\l < |\bar \l|$. The effective interaction can be further rewritten as:
\be\label{eq:TQ}
V^{(h)}(\psi; A, \phi) = -V^{(h)}(\psi) + \G^{(h+1)}(\psi; A) +  B^{(h+1)}(\psi; \phi) + V_{R}^{(h)}(\psi; A,\phi)\;,
\ee
where: $V^{(h)}(\psi) = -V^{(h)}(\psi; 0, 0)$,
\bea
&&\G^{(h+1)}(\psi; A) = \sum_{\bar e,\m,\sharp} \sum_{z_{2}} \int_{\beta,L}  \frac{d\underline{p}}{(2\pi)^2} \, \hat A^{\sharp}_{\m, \underline{p}, z_{2}} \hat n^{\sharp}_{\underline{p}, \bar e} Z_{h+1,\sharp,\m, e}(z_{2})\nn\\
&&B^{(h+1)}(\sqrt{Z_{h}}\psi; \phi) =\nn\\&& \sum_{r, e}\sum_{x_{2}} \int_{\beta,L} \frac{d\underline{k}'}{(2\pi)^2}\, \Big[\hat \phi^{+}_{\underline{k}'+\underline{k}_{F}^{e}, x_{2}, r} Q^{+}_{h+1,r,e}(\underline{k}'_{1}, x_{2}) \hat\psi^{-}_{\underline{k'},e} + \hat\psi^{+}_{\underline{k}',e} Q^{-}_{h+1,r,e}(\underline{k}', x_{2})\hat\phi^{-}_{\underline{k}'+\underline{k}_{F}^{e},x_{2}, r} \Big] \nn
\eea
where $Q^{\pm}_{h+1}$, $Z_{h+1,\sharp,\m, e}$ are analytic functions of $\l$, to be defined inductively and for a suitable $V^{(h)}_{R}$, to be defined inductively as well.
\item The Grassmann Gaussian measure $P_{ h}(d\psi^{(\leq h)})$ has covariance given by:
\bea
\int P_{ h}(d\psi) \psi^{(\leq h)-}_{\underline{x}, e} \psi^{(\leq h)+}_{\underline{y}, e'} &=& \delta_{ee'} \int_{\beta,L} \frac{d\underline{k}'}{(2\pi)^2}\, e^{-i \underline{k}'\cdot (\underline{x} - \underline{y})} \hat g^{(\leq h)}_{e}(\underline{k}')\nn\\
\hat g^{(\leq h)}_{e}(\underline{k}') &=& \frac{1}{Z_{h,e}} \frac{\chi_{h,e}(\underline{k}')}{-ik_{0} + v_{h,e} k'_{1}}(1 + r_{h,e}(k'_{1}))\;,
\eea
where $\chi_{h,e}(k'_{1}) \equiv \chi\big((2^{-h}/\d_{e}')\sqrt{k_{0}^{2} + v_{h,e}^{2} {{k}_{1}^{'2}}}\big)$, and $\hat g^{(\leq h)}_{e}(\underline{k}')$ is the {\it renormalized propagator} on scale $h$; the quantities $r_{h,e}$, $Z_{h,e}$, $v_{h,e}$ are analytic functions of $\l$ for $|\l| < \bar \l$ such that:
\be\label{eq:rcc}
|r_{h,e}(k'_{1})|\leq C|k'_{1}|^{\theta}\quad \text{with $\theta >0$,}\qquad  |v_{h,e} - v_{e}|\leq C|\l|\;,\qquad \Big| \frac{Z_{h,e}}{Z_{h-1,e}} \Big| \leq e^{c |\l|}\;.
\ee
\end{enumerate}
The parameter $v_{h,e}$ is the {\it effective Fermi velocity} while $Z_{h,e}$ is the {\it wave function renormalization}. The inductive assumption is trivially true for $h=0$, see Section \ref{sec:qp}. We claim that it is true for $h$ replaced by $h-1$. To prove this, we proceed as follows.

\subsubsection{Localization and renormalization: $A= 0$, $\phi =0$}\label{sec:Lnoext} In this section we will discuss the {\it localization} and {\it renormalization} procedure, that will allow us to integrate the single scale field. We will start by discussing the case  $A=0$, $\phi = 0$. In order to inductively prove Eq. (\ref{eq:leqh}), we split $V^{(h)}(\psi)$ as $\mathcal{L} V^{(h)}(\psi) + \mathcal{R} V^{(h)}(\psi)$, where $\mathcal{R} = 1-\mathcal{L}$ and $\mathcal{L}$, the {\it localization operator}, is a linear operator on monomials  of the Grassmann fields, defined in the following way. Let:
\be
V_{n}^{(h)}(\psi) =  \sum_{\G:\, \G_{A} = \G_{\phi} = \emptyset,\, |\G_{\psi}| = n} \int_{\beta,L} D \underline{X}\,\psi_{\G}(\underline{X}) W_{\G}^{(h)}(\underline{X})\;.
\ee
Then, we define:
\bea\label{eq:locdef}
\mathcal{L} V^{(h)}_{2}(\psi) &=&  \sum_{\underline{e}}\int_{\beta,L} d\underline{x}_{1} d\underline{x}_{2}\,  \psi^{+}_{\underline{x}_{1}, e_{1}} W^{(h)}_{2, 0, 0 ; e_{1}, e_{2}}(\underline{x}_{1}, \underline{x}_{2})(\psi^{-}_{\underline{x}_{1},e_{2}} + (\underline{x}_{2} - \underline{x}_{1})\cdot {\underline{\partial}}\psi^{-}_{\underline{x}_{1},e_{2}}) \;,\nn\\
\mathcal{L} V^{(h)}_{4}(\psi) &=& \sum_{\underline{e}} \int_{\beta,L} d\underline{x}_{1} d\underline{x}_{2} d\underline{x}_{3} d\underline{x}_{4}\, \psi^{+}_{\underline{x}_{1},e_{1}}\psi^{-}_{\underline{x}_{1},e_{2}}\psi^{+}_{\underline{x}_{1},e_{3}}\psi^{-}_{\underline{x}_{1}, e_{4}} W^{(h)}_{4, 0,0; \underline{e}}(\underline{x}_{1}, \underline{x}_{2}, \underline{x}_{3}, \underline{x}_{4})\;,\nn\\
\mathcal{L} V_{n}^{(h)}(\psi) &=& 0\;,\qquad \forall n>4\;,
\eea
where $\partial_{1}$ is the symmetrized discrete derivative, defined as $\partial_{1} f(x_{0}, x_{1}) = (1/2) ( f(x_{0}, x_{1} + 1) - f(x_{0}, x_{1} - 1))$. That is, the {\it renormalization operator} $\mathcal{R} = 1-\mathcal{L}$ acts on a given Grassmann monomial as:
\bea\label{eq:R}
&&\mathcal{R} \psi^{+}_{\underline{x}_{1}, e_{1}} \psi^{-}_{\underline{x}_{2}, e_{2}} =  \psi^{+}_{\underline{x}_{1}, e_{1}} \psi^{-}_{\underline{x}_{2}, e_{2}} - \psi^{+}_{\underline{x}_{1}, e_{1}}(\psi^{-}_{\underline{x}_{1},e_{2}} + (\underline{x}_{2} - \underline{x}_{1})\cdot \underline{\partial}\psi^{-}_{\underline{x}_{1},e_{2}}) \nn\\
&&\mathcal{R} \psi^{+}_{\underline{x}_{1},e_{1}}\psi^{-}_{\underline{x}_{2},e_{2}}\psi^{+}_{\underline{x}_{3},e_{3}}\psi^{-}_{\underline{x}_{4}, e_{4}} = \psi^{+}_{\underline{x}_{1},e_{1}}\psi^{-}_{\underline{x}_{2},e_{2}}\psi^{+}_{\underline{x}_{3},e_{3}}\psi^{-}_{\underline{x}_{4}, e_{4}} - \psi^{+}_{\underline{x}_{1},e_{1}}\psi^{-}_{\underline{x}_{1},e_{2}}\psi^{+}_{\underline{x}_{1},e_{3}}\psi^{-}_{\underline{x}_{1}, e_{4}} 
\eea
Will abbreviate ``$n, 0, 0$'' at the subscript of the kernels by just ``$n$''. In the $\beta,L\to\infty$ limit, one can think the operator $\mathcal{L}$ as acting directly on the Fourier transforms as follows:
\bea\label{eq:locF}
\mathcal{L}\widehat W^{(h)}_{2;\underline{e}}(\underline{k}') &=& \widehat{W}^{(h)}_{2;e_{1}, e_{2}}(\underline{0}) + \underline{k}'\cdot \partial_{\underline{k}'} \widehat{W}^{(h)}_{2;e_{1}, e_{2}}(\underline{0})\nn\\
\mathcal{L}\widehat W^{(h)}_{4; \underline{e}} (\underline{k}'_{1}, \underline{k}'_{2}, \underline{k}'_{3}) &=& \widehat W^{(h)}_{4; \underline{e}} (\underline{0}, \underline{0}, \underline{0}) \nn\\
\mathcal{L} \widehat{W}^{(h)}_{n;\underline{e}}(\underline{k'}_{1}, \ldots, \underline{k}'_{n-1}) &=& 0\qquad \forall n>4.\;
\eea
Notice that, by momentum conservation, the kernels are automatically zero unless $\sum_{i}(-1)^{\e_{i}} k_{F}^{e_{i}} = 0$. It is convenient to define:
\bea\label{eq:L}
&&\widehat{W}^{(h)}_{2; e,e}(\underline{0}) = 2^{h} n_{h,e}\;,\qquad \partial_{k_{0}} \widehat{W}^{(h)}_{2; e,e}(\underline{0}) =  -i z_{0,h,e}\nn\\
&&\partial_{k'_{1}}  \widehat{W}^{(h)}_{2; e,e}(\underline{0}) =  z_{1,h,e}\;,\qquad \widehat W^{(h)}_{4; \underline{e}} (\underline{0}, \underline{0}, \underline{0}) = u_{h, \underline{e}}\;.
\eea
We assume inductively that:
\be\label{eq:asspt}
n_{h,e} \in \mathbb{R},\quad z_{i,h,e} \in \mathbb{R},\quad u_{h,\underline{e}} \in \mathbb{R}\;,\quad |n_{h,e}|\leq C|\l|, \quad |z_{i,h,e}|\leq C|\l|,\quad  |u_{\underline{e}}|\leq C|\l|\;.
\ee
The fact that the coefficients are real is a straightforward consequence of the fact $S_{\beta,L}(\Psi)$, defined in Eq. (\ref{eq:action}), is invariant under the following transformation, which has Jacobian equal to $1$:
\be\label{eq:sym}
\Psi^{+}_{\underline{k}, x_{2}, r} \to -\Psi^{-}_{\underline{k}, x_{2}, r}\;,\qquad \Psi^{-}_{\underline{k}, x_{2}, r} \to \Psi^{+}_{\underline{k}, x_{2}, r}\;,\qquad c\to \overline{c}\;,\qquad k_{0}\to -k_{0}\;,
\ee
where $c$ denotes a generic constant appearing in $S_{\beta,L}(\Psi)$. This symmetry property can be readily checked using that $\overline{H_{rr'}(k_{1}; x_{2}, y_{2})} = H_{r'r}(k_{1}; y_{2}, x_{2})$ (selfadjointness). Instead, the bounds in Eq. (\ref{eq:asspt}) are harder to prove; they will be proven in Section \ref{sec:rcc}, for the case of models exhibiting single-channel edge modes, recall Assumption 2. We now reabsorb part of $\mathcal{L} V_{2}^{(h)}(\sqrt{Z_{h}}\psi)$ in a redefinition of the Gaussian integration. We have:
\be\label{eq:tildeV2}
P_{h}(d \psi^{(\leq h)}) e^{-V^{(h)}(\sqrt{Z_{h}}\psi^{(\leq h)})} = e^{\beta L t_{h}}\widetilde{P}_{h}(d\psi^{(\leq h)})e^{-\widetilde{V}^{(h)}(\sqrt{Z_{h}}\psi^{(\leq h)})}
\ee
where: $\widetilde{P}_{h}(d\psi^{(\leq h)})$ is a Gaussian integration with covariance:
%
%
\bea
\int \widetilde{P}_{ h}(d\psi^{(\leq h)}) \psi^{(\leq h)-}_{\underline{x}, e} \psi^{(\leq h)+}_{\underline{y}, e'} &=& \delta_{ee'} \int_{\beta, L} \frac{d\underline{k}'}{(2\pi)^2}\, e^{-i \underline{k}'\cdot (\underline{x} - \underline{y})} \tilde{g}^{(\leq h)}_{e}(\underline{k}')\nn\\
\tilde{g}^{(\leq h)}_{e}(\underline{k}') &=& \frac{1}{\widetilde{Z}_{h-1,e}(\underline{k}')} \frac{\chi_{h,e}(\underline{k}')}{-ik_{0} + \widetilde{v}_{h-1,e}(\underline{k}') k'_{1}}(1 + r_{h-1,e}(\underline{k}'))\;,
\eea
with $|r_{h-1,e}(\underline{k}')|\leq C|k'_{1}|^{\theta}$ for $\theta>0$, and new renormalized parameters
\bea
\widetilde{Z}_{h-1,e}(\underline{k}') &=& \widetilde{Z}_{h,e}(\underline{k}') + Z_{h,e} z_{0,h,e} \chi_{h,e}(\underline{k}')\nn\\
\widetilde{Z}_{h-1,e}(\underline{k}') \widetilde{v}_{h-1,e}(\underline{k}') &=& \widetilde{Z}_{h,e}(\underline{k}') \widetilde{v}_{h,e}(\underline{k}') + Z_{h,e}z_{1,h,e} \chi_{h,e}(\underline{k}')\;;
\eea
the new effective action is:
\be\label{eq:tildeV}
\widetilde{V}^{(h)}(\psi) = \mathcal{L} V^{(h)}_{4}(\psi) + \sum_{e} 2^{h} n_{h, e} \int_{\beta, L} d\underline{x}\, \psi^{+}_{\underline{x}, e}\psi^{-}_{\underline{x}, e} +\mathcal{R}V^{(h)}(\psi)\;;
\ee
the constant $t_{h}$ takes into account the change of normalization. Defining $Z_{h-1,e} = \widetilde{Z}_{h-1,e}(\underline{0})$, we {\it rescale} the fermionic fields,
\be\label{eq:hatV}
\widetilde{V}^{(h)}(\sqrt{Z_{h}} \psi) =: \widehat V^{(h)}(\sqrt{Z_{h-1}}\psi)\;;
\ee
therefore, the local part of the new effective action becomes:
\bea
&&\mathcal{L}\widehat{V}^{(h)}(\psi) \nn\\
&& = \int_{\beta,L} d\underline{x}\, \Big[ \sum_{e}\psi^{+}_{\underline{x}, e} \psi^{-}_{\underline{x},e} 2^{h}\frac{Z_{h,e}}{Z_{h-1,e}}n_{h,e}\nn\\&&\quad + \sum_{\underline{e}}\psi^{+}_{\underline{x},e_{1}}\psi^{-}_{\underline{x}, e_{2}}\psi^{+}_{\underline{x}, e_{3}}\psi^{+}_{\underline{x}, e_{4}} \frac{ \sqrt{ Z_{h,e_{1}} Z_{h,e_{2}} Z_{h,e_{3}} Z_{h,e_{4}}}} {\sqrt{Z_{h-1,e_{1}} Z_{h-1,e_{2}} Z_{h-1,e_{3}}Z_{h-1,e_{4}}}}u_{h,\underline{e}}\Big]\nn\\
&&\equiv \int_{\beta,L} d\underline{x}\, \Big[ \sum_{e} \psi^{+}_{\underline{x}, e} \psi^{-}_{\underline{x},e} 2^{h} \n_{h,e} + \sum_{\underline{e}}\psi^{+}_{\underline{x},e_{1}}\psi^{-}_{\underline{x}, e_{2}}\psi^{+}_{\underline{x}, e_{3}}\psi^{+}_{\underline{x}, e_{4}} \l_{h,\underline{e}}\Big]\;,
\eea
where we defined:
\be
\n_{h,e} := \frac{Z_{h,e}}{Z_{h-1,e}} n_{h,e}\;,\qquad \l_{h,\underline{e}} := \frac{ \sqrt{ Z_{h,e_{1}} Z_{h,e_{2}} Z_{h,e_{3}} Z_{h,e_{4}}}} {\sqrt{Z_{h-1,e_{1}} Z_{h-1,e_{2}} Z_{h-1,e_{3}}Z_{h-1,e_{4}}}}u_{h,\underline{e}}\;. 
\ee
This concludes the discussion of the localization operation in the absence of external fields. 

\subsubsection{Localization and renormalization: $A\neq 0$, $\phi\neq 0$} Let us now suppose that $A\neq 0$, $\phi \neq 0$. Let:
\be\label{eq:Vh0}
V^{(h)}_{n,r,m}(\psi; A, \phi) = \sum_{\G:\, |\G_{\psi}| = n,\, |\G_{\phi}| = r,\, |\G_{A}| = m} \int_{\beta,L} D \underline{X}  D {\bf Y}  D {\bf Z}\,\psi_{\G}(\underline{X}) \phi_{\G}({\bf Y}) A_{\G}({\bf Z})  W_{\G}^{(h)}(\underline{X},{\bf Y},{\bf Z})
\ee
In addition to Eq. (\ref{eq:locdef}), we define:
\bea\label{eq:locA}
\mathcal{L} V^{(h)}_{R; 2,0,1}(\psi; A,\phi) &=& \sum_{e,\m,\sharp} \int_{\beta,L} d\underline{x}_{1} d\underline{x}_{2} d\zz \, \psi^{+}_{\underline{z}, e} \psi^{-}_{\underline{z}, e}A_{\zz,\m}^{\sharp} W_{2,0,1;e,e,\m,\sharp}^{(h)}(\underline{x}_{1}, \underline{x}_{2}, \zz)\;,\nn\\
\mathcal{L} V^{(h)}_{R; 1,1,0} (\psi; A, \phi) &=& \sum_{e,r} \int_{\beta,L} d \yy d\underline{x}\, \big[ \phi^{+}_{\yy, r} W^{(h)+}_{1,1,0; r, e}(\underline{x}, \yy) \psi^{-}_{\underline{x}, e} + \psi^{+}_{\underline{x}, e} W^{(h)-}_{1,1,0; r, e}(\underline{x}, \yy) \phi^{-}_{\yy, r} \big]\;, \nn\\
\mathcal{L} V^{(h)}_{R; n,r, m}(\psi; A,\phi) &=& 0\qquad \text{otherwise.}
\eea
By spin symmetry $W_{2,0,1;(\bar e, \uparrow),(\bar e, \uparrow)\m,c}^{(h)} = W_{2,0,1;(\bar e, \downarrow),(\bar e, \downarrow)\m,c}^{(h)}$, $W_{2,0,1;(\bar e, \uparrow),(\bar e, \uparrow)\m,s}^{(h)} = -W_{2,0,1;(\bar e, \downarrow),(\bar e, \downarrow)\m,s}^{(h)}$. Hence:
\bea
\sum_{e,\m}  \psi^{+}_{\underline{z}, e} \psi^{-}_{\underline{z}, e}A_{\zz,\m}^{c} W_{2,0,1;e,e,\m,c}^{(h)}(\underline{x}_{1}, \underline{x}_{2}, \zz) &=& \sum_{\bar e,\m} n^{c}_{\underline{z}, \bar e} A_{\zz,\m}^{c} W_{2,0,1;(\bar e,\uparrow),(\bar e,\uparrow),\m,c}^{(h)}(\underline{x}_{1}, \underline{x}_{2}, \zz)\nn\\
\sum_{e,\m}  \psi^{+}_{\underline{z}, e} \psi^{-}_{\underline{z}, e}A_{\zz,\m}^{s} W_{2,0,1;e,e,\m,s}^{(h)}(\underline{x}_{1}, \underline{x}_{2}, \zz) &=& \sum_{\bar e,\m} n^{s}_{\underline{z}, \bar e} A_{\zz,\m}^{c} W_{2,0,1;(\bar e,\uparrow),(\bar e, \uparrow),\m,s}^{(h)}(\underline{x}_{1}, \underline{x}_{2}, \zz)\;.
\eea
Proceeding as in Eqs. (\ref{eq:tildeV2})--(\ref{eq:tildeV}), we have:
\be
\int P_{h}(d\psi^{(\leq h)}) e^{V^{( h)}(\sqrt{Z_{h}}\psi^{(\leq h)}; A, \phi)} = e^{\beta L t_{h}}\int \widetilde{P}_{ h}(d\psi^{(\leq h)}) e^{\widetilde{V}^{( h)}(\sqrt{Z_{h}}\psi^{(\leq h)}; A, \phi)}
\ee
where: $\widetilde{P}_{h}$ has been defined in Eq. (\ref{eq:tildeV2}), and the effective potential $\widetilde{V}^{(h)}$ is:
\bea
\widetilde{V}^{(h)}(\psi; A,\phi) &=& -\widetilde{V}^{(h)}(\psi) + \widetilde{\G}^{(h+1)}(\psi; A) + \widetilde{B}^{(h+1)}(\psi; \phi) + \mathcal{R} V_{R}^{(h)}(\psi; A,\phi)\nn\\
\widetilde{\G}^{(h+1)}(\psi; A) &=& \G^{(h+1)}(\psi; A) + \mathcal{L} V^{(h)}_{R; 2,0,1}(\psi; A,\phi)\nn\\
\widetilde{B}^{(h+1)}(\psi; \phi) &=& B^{(h+1)}(\psi; \phi) + \mathcal{L} V^{(h)}_{R; 1,1,0}(\psi; A, \phi)\;,
\eea
with $\widetilde{V}^{(h)}$ given by Eq. (\ref{eq:tildeV}). We finally rescale the fermionic field:
\bea
\widetilde{V}^{(h)}(\sqrt{Z_{h}} \psi; A,\phi) &=:& \widehat V^{(h)}(\sqrt{Z_{h-1}}\psi; A,\phi)\nn\\
&\equiv& -\widehat{V}^{(h)}(\sqrt{Z_{h-1}}\psi) + \G^{(h)}(\sqrt{Z_{h-1}}\psi; A)\nn\\&& + B^{(h)}(\sqrt{Z_{h-1}}\psi; \phi) + \mathcal{R} \widehat{V}_{R}^{(h)}(\sqrt{Z_{h-1}}\psi; A,\phi)
\eea
where:
\bea
&&\G^{(h)}(\psi; A, \phi) = \sum_{\m,\bar e,\sharp}\sum_{z_{2}}\int_{\beta,L}  \frac{d\underline{p}}{(2\pi)^2}\, \hat A^{\sharp}_{\m,\underline{p},z_{2}} \hat n^{\sharp}_{\underline{p}, e} Z_{h,\sharp,\m, e}(z_{2}) \\
&&B^{(h)}(\sqrt{Z_{h-1}}\psi; \phi) =\nn\\&& \sum_{r, e}\sum_{x_{2}} \int_{\beta,L} \frac{d\underline{k}'}{(2\pi)^2}\, \Big[\hat \phi^{+}_{\underline{k}'+\underline{k}_{F}^{e}, x_{2}, r}  Q^{+}_{h,r,e}(\underline{k}'_{1}, x_{2}) \hat\psi^{-}_{\underline{k'},e} + \hat\psi^{+}_{\underline{k}',e} Q^{-}_{h,r,e}(\underline{k}', x_{2})\hat\phi^{-}_{\underline{k}'+\underline{k}_{F}^{e},x_{2}, r} \Big] \nn
\eea
and we introduced:
\bea\label{eq:Qgamma}
Q^{+}_{h,r, e}(\underline{k}',y_{2}) &=& Q^{+}_{h+1,r, e}(\underline{k}',y_{2}) + \widehat{W}^{(h)+}_{1,1,0; r, e}(\underline{k}',y_{2})\nn\\
Q^{-}_{h,r,e}(\underline{k}',y_{2}) &=& Q^{-}_{h+1,e, r}(\underline{k}',y_{2}) + \widehat{W}^{(h)-}_{1,1,0; r,e}(\underline{k}',y_{2})\nn\\
Z_{h-1} Z_{h,\sharp,\m, e}(z_{2}) &=& Z_{h} Z_{h+1,\sharp,\m, e}(z_{2}) + Z_{h} \widehat{W}_{2,0,1;(\bar e,\uparrow), (\bar e,\uparrow),\m,\sharp}^{(h)}(\underline{0}, \underline{0},z_{2})\;.
\eea

\subsubsection{Single-scale integration} We are now ready to perform the single scale integration. We write the Grassmann field $\psi^{(\leq h)}$ as a sum of two independent Grassmann fields, $\psi^{(\leq h)} = \psi^{(\leq h-1)} + \psi^{(h)}$, where the {\it single-scale field $\psi^{(h)}$} has covariance given by:
\bea
\int \widetilde{P}_{h}(d\psi^{(h)}) \psi^{(h)-}_{\underline{x}, e} \psi^{(h)+}_{\underline{y}, e'} &=& \delta_{ee'} \int_{\beta,L} \frac{d\underline{k}'}{(2\pi)^2}\, e^{-i \underline{k}'\cdot (\underline{x} - \underline{y})} g^{(h)}_{e}(\underline{k}')\nn\\
g^{(h)}_{e}(\underline{k}') &=& \frac{1}{\widetilde{Z}_{h-1,e}(\underline{k}')} \frac{f_{h,e}(\underline{k}')}{-ik_{0} + \widetilde{v}_{h-1,e}(\underline{k}') k'_{1}}(1 + r_{h-1,e}(\underline{k}'))
\eea
where $f_{h,e}(\underline{k}') = \chi_{h,e}(\underline{k}') - \chi_{h-1,e}(\underline{k}')$ with $\chi_{h-1,e}(\underline{k}') = \chi\big((2^{-h+1}/\d_{e}') \sqrt{k_{0}^{2} + v_{h-1,e}^{2} {{k}_{1}^{'2}}} \big)$ and $v_{h-1,e} = \widetilde{v}_{h-1,e}(\underline{0})$. Instead, the field $\psi^{(\leq h-1)}$ has covariance:
\bea\label{eq:covleq}
\int P_{ h-1}(d\psi^{(\leq h-1)}) \psi^{(\leq h-1)-}_{\underline{x}, e} \psi^{(\leq h-1)+}_{\underline{y}, e'} &=& \delta_{ee'} \int_{\beta,L} \frac{d\underline{k}'}{(2\pi)^2}\,  e^{-i \underline{k}'\cdot (\underline{x} - \underline{y})} \hat g^{(\leq h-1)}_{e}(\underline{k}')\nn\\
\hat g^{(\leq h-1)}_{e}(\underline{k}') &=& \frac{1}{Z_{h-1,e}} \frac{\chi_{h-1,e}(\underline{k}')}{-ik_{0} + v_{h-1,e} k'_{1}}(1 + r_{h-1,e}(k'_{1}))\;.
\eea
In deriving Eq. (\ref{eq:covleq}), we used that $Z_{h-1,e}(\underline{k}') = Z_{h-1,e}(\underline{0})$, $v_{h-1,e}(\underline{k}') = v_{h-1,e}(\underline{0})$ for $\underline{k}'$ in the support of $\chi_{h-1,e}(\underline{k}')$. We then write:
\be
\int \widetilde{P}_{ h}(d\psi^{(\leq h)}) e^{\widetilde{V}^{( h)}(\sqrt{Z_{h}}\psi^{(\leq h)}; A, \phi)} = \int P_{h-1}(d\psi^{(\leq h-1)}) \int \widetilde{P}_{h}(d\psi^{(h)}) e^{\widehat{V}^{(h)}(\sqrt{Z_{h-1}}(\psi^{(\leq h-1)} + \psi^{(h)}); A, \phi)}
\ee
and we define the new effective potential on scale $h-1$ as:
\be\label{eq:Vh-1}
e^{\mathcal{W}^{( h-1)}(A,\phi) + V^{(h-1)}(\sqrt{Z_{h-1}}\psi^{(\leq h-1)}; A,\phi)} = e^{\mathcal{W}^{(h)}(A,\phi) - \beta L t_{h}} \int \widetilde{P}_{h}(d\psi) e^{\widehat{V}^{(h)}(\sqrt{Z_{h-1}}(\psi^{(\leq h-1)} + \psi^{(h)}); A, \phi)}\;.
\ee
The integration of the scale $h$ is done by expanding the exponential, and taking the Gaussian expectation of all monomials; it will be discussed in Section \ref{sec:tree}. As we shall see, under the inductive assumptions (\ref{eq:rcc}), (\ref{eq:asspt}), we will be able to recover the expression (\ref{eq:leqh}) with $h$ replaced by $h-1$. This allows to iterate the process until the last scale\footnote{The fact that the iteration stops at the scale $h = h_{\beta}$ is a consequence of the fact that $k_{0}\in \frac{2\pi}{\beta}(\mathbb{Z} + \frac{1}{2})$, hence $|k_{0}|\geq \frac{\pi}{\beta}$. That is, the temperature provides a natural infrared cutoff.} $h=h_{\beta}$. As discussed below, the inductive assumption (\ref{eq:asspt}) will be verified for the class of models exhibiting single-channel edge modes.

Finally, following for instance Section 3.4 of \cite{GM}, it is possible to prove that the kernels $Q^{\pm}_{h}$ satisfy the recursion relation:
\bea\label{eq:Q+}
Q^{+}_{h,r, e}(\underline{k}',y_{2}) &=& Q^{+}_{h+1,r, e}(\underline{k}',y_{2}) - \widehat{W}_{2,0,0; e, e}^{(h)}(\underline{k}') \sum_{k=h+1}^{1} \hat g^{(k)}_{e}(\underline{k}') Q^{+}_{k,r, e}(\underline{k}',y_{2})\nn\\
Q^{-}_{h,r,e}(\underline{k}',y_{2}) &=& \overline{Q^{+}_{h,r, e}(\underline{k}',y_{2})}
\eea
with $Q^{\pm}_{h,r,e}(\underline{k}';x_{2})$ given by Eq. (\ref{eq:Q1}). Notice that, in the support of $\hat g^{(h)}_{e}(\underline{k}')$, the interation of Eq. (\ref{eq:Q+}) implies:
\be\label{eq:Qh}
Q^{+}_{h,r, e}(\underline{k}',y_{2}) = Q^{+}_{1,r, e}(\underline{k}',y_{2})[ 1 - \widehat{W}^{(h)}_{2,0,0; e,e}(\underline{k}')\hat g^{(h+1)}_{e}(\underline{k}')]\;.
\ee
As we shall see, Eq. (\ref{eq:Qh}) together with the inductive assumptions (\ref{eq:rcc}), (\ref{eq:asspt}) and the bounds discussed in Section \ref{sec:tree} will allow to prove that $Q^{+}_{h,r, e}(\underline{k}',y_{2}) = Q^{+}_{1,r, e}(\underline{k}',y_{2})(1 + O(\l))$.

\subsection{Tree expansion}\label{sec:tree}

The scale $h$ is integrated expanding the exponential in Eq. (\ref{eq:Vh-1}), and taking the Gaussian expectation. We have:
\be\label{eq:ET}
\mathcal{W}^{( h-1)}(A,\phi) + V^{(h-1)}(\sqrt{Z_{h-1}}\psi^{(\leq h-1)}; A,\phi) = \mathcal{W}^{(h)}(A,\phi) - \beta L t_{h} + \sum_{n\geq 0} \frac{1}{n!}\mathcal{E}^{T}_{h} (\underbrace{\widehat{V}^{(h)}\,; \cdots \,; \widehat{V}^{(h)}}_{\text{$n$ times}})
\ee
where $\mathcal{E}^{T}_{h}$ denotes the {\it truncated expectation} (or cumulant) with respect to the Grassmann Gaussian measure $\widetilde{P}_{h}(d\psi)$. Eq. (\ref{eq:ET}) can be iterated over all scales $h,\, h+1,\, \ldots$, until $h=0$. The result can be expressed as a sum of {\it Gallavotti-Nicol\`o} (GN) {\it trees}. To begin, let us discuss the case $A=0$, $\phi = 0$.

\subsubsection{Tree expansion for the free energy}\label{sec:treef}

In this section we will derive a convergent expansion for the kernels $W_{\G}^{(h)}$, with $\G_{A} = \G_{\phi} = \emptyset$, for all scales $h$. We will discuss later the modifications needed in order to take into account the external fields. To describe the expansion, we need to introduce some definitions.

\begin{enumerate}
\item An {\it unlabeled} tree is a connected graph with no loops connecting a point $r$, the root, with an ordered set of $n\geq 1$ points, the endpoints of the tree, so that $r$ is not a branching point. The number $n$ will be called the {\it order} of the unlabeled tree, and its branching points will be called the {\it non-trivial vertices}. The unlabeled trees are partially ordered from the root to the endpoints in a natural way. Two unlabeled trees are identified if they can be superposed by a continuous deformation, which is compatible with the partial ordering of the nontrivial vertices. It is then easy to see that the number of unlabeled trees with $n$ endpoints is bounded by $4^{n}$. We shall also consider the {\it labeled trees} (or just trees, in the following), see Fig. \ref{fig:GN}; they are defined by associating suitable labels with the unlabeled trees, as explained in the following items.
\item To each vertex $v$ of the labeled tree $\tau$ we associate a scale label $h_{v} = h, h+1, \ldots , 0$, as in Fig. \ref{fig:GN}. The scale of the root is $h$. Note that if $v_{1}$ and $v_{2}$ are two vertices and $v_{1}<v_{2}$, then $h_{v_{1}} < h_{v_{2}}$.
\item There is only one vertex immediately following the root, which will be denoted by $v_{0}$ and cannot be an endpoint. Its scale is $h_{v_{0}} = h+1$.
\item With each endpoint $v$ on scale $h_{v} = 0$ we associate one of the monomials contributing to $\widehat{V}^{(0)}(\sqrt{Z_{-1}}\psi^{(\leq 0)})$. Instead, with each endpoint $v$ on scale $h_{v}<0$ we associate one of the monomials contributing to $\mathcal{L} \widehat{V}^{(h_{v})}(\sqrt{Z_{h_{v} -1}} \psi^{(\leq h_{v})})$.
\item We introduce a {\it field label} $f$ to distinguish the field variables appearing in the monomials associated with the endpoints. The set of field labels associated with the endpoint $v$ will be called $I_{v} = \{ f_{1}, \ldots, f_{|I_{v}|} \}$. Given a field $\psi$ labeled by $f$, we denote by $\underline{x}(f)$ its position, by $e(f)$ its quasiparticle label, and by $\e(f)$ its particle-hole label. If $v$ is not an endpoint, we shall call $I_{v}$ the set of field labels associated with the endpoints following the vertex $v$.
\end{enumerate}

\begin{figure}
\centering
 \begin{tikzpicture} 
[scale=0.9, transform shape]
\foreach \i in {4,5,6,7,8,9,10,11,12,13,14} {%
\draw  (\i,2.9) -- (\i, 11.2); }
\foreach \j in {4,5} {%
\draw [very thick] (\j,7) -- ++ (1,0);
\fill (\j,7) circle (0.1);
\node at (\j, 6.7) {$\mathcal R$};
\fill (6,7) circle (0.1);
\node at (6, 6.7) {$\mathcal R$};
}
\foreach \j in {0,1,2,3,4,5} {%
\draw [very thick] (6+\j, 7 -\j *0.5) -- +(1,-0.5);
\fill (6+\j,7-\j*0.5) circle (0.1);
\node at (6+\j, 6.7-\j*0.5) {$\mathcal R$};}
\fill (6+6, 7-3) circle (0.1);
\node at (12, 4.7) {$\mathcal L$};
\foreach \j in {0,1,2,3} {%
\draw [very thick] (6+\j, 7 +\j *0.5) -- +(1,+0.5);
\fill (6+\j,7+\j*0.5) circle (0.1);
\node at (6+\j, 6.7+\j*0.5) {$\mathcal R$};}
\fill (6+4, 7+2) circle (0.1);
\node at (10, 8.7) {$\mathcal R$};
\foreach \j in {0,1} {%
\draw [very thick] (10+\j, 9 +\j *0.5) -- +(1,+0.5);
\fill (10+\j,9+\j*0.5) circle (0.1);
\node at (10+\j, 8.7+\j*0.5) {$\mathcal R$};}
\fill (12, 10) circle (0.1);
\node at (12, 9.7) {$\mathcal L$};
\foreach \j in {0,1,2} {%
\draw [very thick] (10+\j, 9 -\j *0.5) -- +(1,-0.5);
\fill (10+\j,9-\j*0.5) circle (0.1);
\node at (10+\j, 8.7-\j*0.5) {$\mathcal R$};}
\fill (13,7.5) circle (0.1);
\node at (13, 7.2) {$\mathcal L$};
\foreach \j in {0,1} {%
\draw [very thick] (12+\j, 8 +\j *0.5) -- +(1,+0.5);
\fill (12+\j,8+\j*0.5) circle (0.1);
\node at (12+\j, 7.7+\j*0.5) {$\mathcal R$};
}
\fill(14,9) circle (0.1);
\foreach \j in {0} {%
\draw [very thick] (12+\j, 4 +\j *0.5) -- +(1,+0.5);
\fill (12+\j,4+\j*0.5) circle (0.1);
\node at (12+\j, 3.7+\j*0.5) {$\mathcal R$};
}
\foreach \j in {0,1} {%
\draw [very thick] (12+\j, 4 -\j *0.5) -- +(1,-0.5);
\fill (12+\j,4-\j*0.5) circle (0.1);
\node at (12+\j, 3.7-\j*0.5) {$\mathcal R$};}
\fill (14,3) circle (0.1);
\fill (13,4.5) circle (0.1);
\node at (13, 4.2) {$\mathcal L$};
\draw [very thick] (8,8) -- (9, 7.5);
\fill (9,7.5) circle (0.1);
\node at (9, 7.2) {$\mathcal L$};
\draw [very thick] (11,8.5) -- (12, 9);
\fill (12,9) circle (0.1);
\node at (12, 8.7) {$\mathcal L$};
\draw [very thick] (6,7) -- (11,6);
\fill (11,6) circle (0.1);
\node at (11, 5.7) {$\mathcal R$};
\draw [very thick] (11,6) -- (12, 5);
\fill (12,5) circle (0.1);
\draw [very thick]  (11, 6) -- ++ (2,0);
\fill (13,6) circle (0.1);
\node at (13, 5.7) {$\mathcal L$};
\node at (4,2.7) {$h$};
\node at (5,2.7) {$h+1$};
\node at (6,2.7) {$h+2$};
\foreach \i in {7,8} {%
\node at (\i,2.8) {...};}
\node at (9,2.7) {$h_v$};
\node at (10,2.7) {$h_v+1$};
\foreach \i in {11,12,13} {%
\node at (\i,2.8) {...};}
\node at (14,2.7) {$0$};
\node at (9,8.8) {$ v$};
\node at (4,7.3) {$ r$};
\node at (5,7.3) {$ v_0$};
\fill (7,6.8) circle (0.1);
\node at (7, 7.8) {$\mathcal R$};
\fill (8,6.6) circle (0.1);
\node at (8, 6.3) {$\mathcal R$};
\fill (9,6.4) circle (0.1);
\node at (9, 6.1) {$\mathcal R$};
\fill (10,6.2) circle (0.1);
\node at (10, 5.9) {$\mathcal R$};
\fill (12, 6) circle (0.1);
\node at (12, 5.7) {$\mathcal R$};
\end{tikzpicture}
\caption{Example of a Gallavotti-Nicol\`o tree.}
\label{fig:GN}
\end{figure}
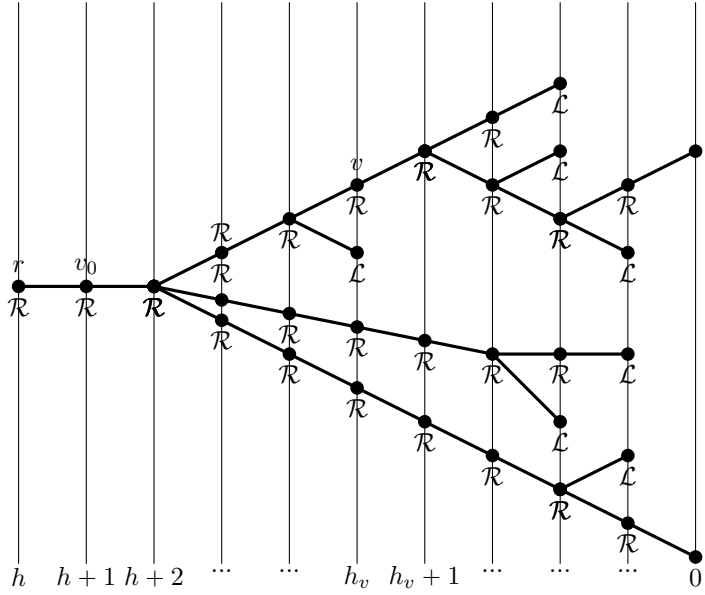

Let $\mathcal{T}_{h,n}$ be the set of labeled trees of order $n$, and with root on scale $h$. We have, setting $\beta L^2 E_{h} \equiv \mathcal{W}^{(h)}(0,0)$:
\be\label{eq:iter}
\beta L^2 E_{h-1} + V^{(h-1)}(\sqrt{Z}_{h-1}\psi^{(\leq h-1)})  = \beta L^2 E_{h} - \beta L t_{h} + \sum_{n=1}^{\infty} \sum_{\tau\in \mathcal{T}_{h,n}}  V^{(h)}(\psi^{(\leq h)}; \tau)\;;
\ee
if $\tau_{1},\, \ldots,\, \tau_{s}$ are the subtrees of $\tau$ with root $v_{0}$, $V^{(h)}(\psi^{(\leq h)}; \tau)$ is defined inductively as follows:
\be\label{eq:iter2}
V^{(h)}(\psi^{(\leq h)}; \tau) = \frac{(-1)^{s+1}}{s!} \mathcal{E}^{T}_{h+1} \big( \overline{V}^{(h+1)}(\psi^{(\leq h)}; \tau_{1})\,;\cdots\,; \overline{V}^{(h+1)}(\psi^{(\leq h+1)}; \tau_{s}) \big)\;,
\ee 
where: $\overline{V}^{(h+1)}(\psi^{(\leq h)}; \tau_{i}) = \mathcal{R} \widehat{V}(\sqrt{Z}_{h} \psi^{(\leq h+1)})$ if $\tau_{i}$ contains more than one endpoint, or if it contains one endpoint but it is not a trivial subtree; $\overline{V}^{(h+1)}(\psi^{(\leq h)}; \tau_{i}) = \mathcal{L} \widehat{V}^{(h +1)}(\sqrt{Z_{h}} \psi^{(\leq h + 1)})$ if $\tau_{i}$ is trivial and $h+1 <0$; or finally $\overline{V}^{(h+1)}(\psi^{(\leq h)}; \tau_{i}) = \widehat{V}^{(0)}(\sqrt{Z_{-1}} \psi^{(\leq 0)})$ if the subtree $\tau_{i}$ is trivial and $h+1 =0$. Using this inductive definition, the right-hand side of Eq. (\ref{eq:iter}) can be further expanded, and in order to describe the resulting expansion we proceed as follows. We associate with any vertex $v$ of the tree a subset $P_{v}$ of $I_{v}$, the {\it external fields} of $v$. These subsets of fields must satisfy the following constraints. First of all, if $v$ is not an endpoint and $v_{1},\ldots, v_{s_{v}}$ are the $s_{v}\geq 1$ vertices immediately following it, then $P_{v} \subseteq \bigcup_{i} P_{v_{i}}$; if $v$ is an endpoint, $P_{v} = I_{v}$. If $v$ is not an endpoint, we shall denote by $Q_{v_{i}} = P_{v} \cap P_{v_{i}}$; this implies that $P_{v} = \bigcup_{i} Q_{v_{i}}$. The union $\mathcal{I}_{v}$ of the subsets $P_{v_{i}} \setminus Q_{v_{i}}$ is, by definition, the set of {\it internal fields} of $v$, and is nonempty if $s_{v} >1$. Given $\tau\in \mathcal{T}_{h,n}$, there are many possible choices of the subsets $P_{v}$, $v\in \tau$, compatible with all the constraints. We shall denote by $\mathcal{P}_{\tau}$ the family of all these choices and ${\bf P}$ the elements of $\mathcal{P}_{\tau}$.

The final expansion is obtained iterating Eq. (\ref{eq:iter2}). In order to better explain the outcome of the iteration, {\it let us suppose for the moment that $\mathcal{R} = 1$:} we will take into account the $\mathcal{R}$ operator in a second moment. Given $v\in \tau$, let $\tau_{v}$ be the subtree of $\tau$ with first nontrivial vertex corresponding to $v$. We have:
\be
V^{(h)}(\psi^{(\leq h)}; \tau_{v}) = \sum_{{\bf P}\in \mathcal{P}_{\tau_{v}}} V^{(h)}({\bf P}; \tau_{v})\;,\qquad V^{(h)}({\bf P}; \tau_{v}) = \int_{\beta, L} D\underline{X}_{v}\, \widetilde{\psi}^{(\leq h)}(P_{v}) K^{(h+1)}_{\tau_{v}, {\bf P}}(\underline{X}_{v})\;,
\ee
where $\underline{X}_{v} = \bigcup_{f\in P_{v}} \underline{x}(f)$ and:
\be\label{eq:tildepsi2}
\widetilde{\psi}^{(\leq h)}(P_{v}) = \prod_{f\in P_{v}} \sqrt{Z_{h, e(f)}} \psi^{(\leq h)\e(f)}_{\underline{x}(f), e(f)}\;.
\ee
%
%
%
The function $K^{(h+1)}_{\tau, {\bf P}}(\underline{X}_{v_{0}})$ is defined inductively by the following equation, valid for any $v\in \tau$ which is not an endpoint:
\bea\label{eq:iterK}
&&K^{(h_{v})}_{\tau, {\bf P}}(\underline{X}_{v})\\&& = \frac{1}{s_{v}!} \Big[\prod_{f\in P_{v}} \frac{\sqrt{Z_{h_{v},e(f)}}}{\sqrt{Z_{h_{v}-1,e(f)}}}\Big] \prod_{i=1}^{s_{v}} [ K^{(h_{v} + 1)}_{v_{i}}(\underline{X}_{v_{i}}) ] \mathcal{E}^{T}_{h_{v}} \big( \widetilde{\psi}^{(h_{v})}(P_{v_{1}} \setminus Q_{v_{1}})\,; \cdots \,; \widetilde{\psi}^{(h_{v})}(P_{v_{s_{v}}} \setminus Q_{v_{s_{v}}} ) \big)\nn
\eea
where $\widetilde{\psi}^{(h_{v})}(P_{v_{i}} \setminus Q_{v_{i}})$ has a definition similar to Eq. (\ref{eq:tildepsi2}). In Eq. (\ref{eq:iterK}), if $v_{i}$ is an endpoint on scale $h_{v} = 0$, $K^{(h_{v} + 1)}_{v_{i}}(\underline{X}_{v_{i}})$ is equal to one of the kernels of the monomials contributing to $\widehat{V}^{(0)}(\sqrt{Z_{-1}}\psi^{(\leq 0)})$; if $v_{i}$ is an endpoint on scale $h_{v} < 0$, $K^{(h_{v} + 1)}_{v_{i}}(\underline{X}_{v_{i}})$  is equal to one of the kernels of the monomials contributing to $\mathcal{L} \widehat{V}^{(h_{v})}(\sqrt{Z_{h_{v}-1}}\psi^{(\leq h_{v})})$; instead, if $v_{i}$ is not an endpoint, $K^{(h_{v} + 1)}_{v_{i}} = K^{(h_{v} + 1)}_{\tau_{i}, {\bf P}_{i}}$, where ${\bf P}_{i} = \{ P_{w}, w\in \tau_{i} \}$. If $v>v_{0}$ then $P_{v} > 0$.

\medskip

Eqs. (\ref{eq:iter})--(\ref{eq:iterK}) are the final form of our expansion. We further decompose $V^{(h)}(\tau, {\bf P})$ by using a convenient representation of the truncated expectation appearing in the right-hand side of Eq. (\ref{eq:iterK}). Let us set $s \equiv s_{v}$, $P_{i} \equiv P_{v_{i}} \setminus Q_{v_{i}}$; moreover, we order in an arbitrary way the sets $P_{i}^{\pm} = \{ f\in P_{i}  \mid \e(f) = \pm\}$, we call $f_{ij}^{\pm}$ their elements and we define $\underline{X}^{-}_{i} = \bigcup_{f\in P_{i}^{-}} \underline{x}(f)$, $\underline{X}^{+}_{i} = \bigcup_{f \in P_{i}^{+}} \underline{x}(f)$, $\underline{x}^{-}_{ij} = \underline{x}(f_{ij}^{-})$, $\underline{x}^{+}_{ij} = \underline{x}(f^{+}_{ij})$. Note that $\sum_{i=1}^{s} |P_{i}^{-}| = \sum_{i=1}^{s} |P_{i}^{+}| \equiv n$, otherwise the truncated expectation vanishes. A pair $\ell = (f^{-}_{ij}, f^{+}_{ij}) \equiv (f^{-}_{\ell}, f^{+}_{\ell})$ will be called a line joining the fields with labels $f^{-}_{ij}$, $f^{+}_{ij}$ and quasi-particle indices $e^{-}_{\ell} = e(f^{-}_{\ell})$, $e^{+}_{\ell} = e(f^{+}_{ij})$, connecting the points $\underline{x}^{-}_{\ell} = \underline{x}^{-}_{ij}$, $\underline{x}^{+}_{\ell} = \underline{x}^{+}_{ij}$, called the endpoints of $\ell$. We then use the {\it Brydges-Battle-Federbush} formula \cite{Bry} for truncated expectations, saying that, up to a sign, if $s>1$:
\be\label{eq:BBF}
\mathcal{E}^{T}_{h}(\widetilde{\psi}^{(h)}(P_{1})\,; \cdots\,; \widetilde{\psi}^{(h)}(P_{s})) = \sum_{T} \prod_{\ell\in T} \delta_{e^{-}_{\ell}e^{+}_{\ell}} Z_{h, e_{\ell}} g^{(h)}_{e_{\ell}}(\underline{x}^{-}_{\ell} - \underline{x}^{+}_{\ell}) \int d \m_{T}({\bf t}) \det G^{h,T}({\bf t})\;,
\ee
where $T$ is a set of lines forming an {\it anchored tree graph} between the clusters of points $\underline{X}_{i} = \{\underline{X}^{-}_{i} \cup \underline{X}^{+}_{i}\}$, that is $T$ is a set of lines which becomes a tree graph if one identifies all the points in the same cluster. Moreover, ${\bf t} = \{ t_{ii'} \in [0,1], 1\leq i,i'\leq s \}$, $d\m_{T}({\bf t})$ is a probability measure with support on a set of ${\bf t}$ such that $t_{ii'} = {\bf u}_{i} \cdot {\bf u}_{i'}$ for some family of vectors ${\bf u}_{i}\in \mathbb{R}^{s}$ of unit norm. Finally, $G^{h,T}(\bf t)$ is a $(n-s+1)\times (n-s+1)$ matrix, whose elements are given by:
\be
G^{h,T}_{ij,i'j'}({\bf t}) = t_{ii'} \delta_{e^{-}_{\ell} e^{+}_{\ell}} Z_{h, e_{\ell}} g^{(h)}_{e_{\ell}}(\underline{x}^{-}_{ij} - \underline{x}^{+}_{i'j'})\;, 
\ee
with $(f^{-}_{ij}, f^{+}_{i'j'})$ not belonging to $T$. In the following, we shall use Eq. (\ref{eq:BBF}) even for $s=1$, when $T$ is empty, by interpreting the r.h.s. as $1$ if $|P_{1}| = 0$, otherwise as equal to $\det G^{h} = \mathcal{E}^{T}_{h} (\widetilde{\psi}^{(h)}(P_{1}))$. Using Eq. (\ref{eq:BBF}) for each vertex of $\tau$ different from its endpoints, we get:
\be\label{eq:sumT}
V^{(h)}({\bf P}; \tau) = \sum_{T \in {\bf T}} \int_{\beta,L} D\underline{X}_{v_{0}} \widetilde{\psi}^{(\leq h)}(P_{v_{0}}) W^{(h)}_{\tau, {\bf P}, T}(\underline{X}_{v_{0}})\;,
\ee
where ${\bf T}$ is a special family of graphs on the set of points $\underline{X}_{v_{0}}$, obtained by putting together an anchored graph $T_{v}$ for each nontrivial vertex $v$. Note that any graph $T\in {\bf T}$ becomes a tree graph on $\underline{X}_{v_{0}}$, if one identifies all the points in the sets $\underline{X}_{v}$, with $v$ an endpoint of the tree. Given $\tau \in \mathcal{T}_{h,n}$ and the labels ${\bf P}, T$, calling $v^{*}_{1},\ldots, v^{*}_{n}$ the endpoints of $\tau$ and setting $h_{i} \equiv h_{v^{*}_{i}}$, the explicit representation of $W^{(h)}_{\tau, {\bf P}, T}(\underline{X}_{v_{0}})$ in Eq. (\ref{eq:sumT}) is:
\bea\label{eq:sumT2}
&&W^{(h)}_{\tau, {\bf P}, T}(\underline{X}_{v_{0}}) = \Big[ \prod_{i=1}^{n} K^{(h_{i})}_{v^{*}_{i}}(\underline{X}_{v^{*}_{i}}) \Big] \\ &&\cdot \prod_{\substack{v \\ \text{not e.p.}}} \prod_{f\in P_{v}}\Big(\frac{\sqrt{Z_{h_{v}, e(f)}}}{\sqrt{Z_{h_{v} -1, e(f)}}}\Big)  \frac{1}{s_{v}!} \int dP_{T_{v}}({\bf t}_{v})\, \det G^{h_{v}, T_{v}}({\bf t}_{v}) \Big[ \prod_{\ell \in T_{v}} \delta_{e^{+}_{\ell} e^{-}_{\ell}} Z_{h_{v}, e_{\ell}} g^{(h_{v})}_{e_{\ell}}(\underline{x}^{-}_{\ell} - \underline{x}^{+}_{\ell})  \Big] \;.  \nn
\eea
Now, recall the inductive assumptions (\ref{eq:rcc}),  (\ref{eq:asspt}):
\be\label{eq:ass}
\Big|\frac{Z_{k,e}}{Z_{k-1,e}}\Big|\leq e^{c|\l|}\;,\qquad |v_{k-1,e} - v_{e}|\leq C|\l|\;,\qquad |\l_{k,\underline{e}}| \leq C|\l|\;,\qquad |\n_{k, e}|\leq C|\l|\;.\nn\\
\ee
These estimates can be used to prove, following \cite{BM}, Section 3.14:
\bea\label{eq:bdGg}
\| \det G^{h_{v}, T_{v}} ({\bf t}_{v}) \|_{\infty} &\leq& C^{\sum_{i=1}^{s_{v}} \frac{|P_{v_{i}}|}{2} - \frac{|P_{v}|}{2} - (s_{v} - 1) } 2^{h_{v}(\sum_{i=1}^{s_{v}} \frac{|P_{v_{i}}|}{2} - \frac{|P_{v}|}{2} - (s_{v} - 1))}\nn\\
| Z_{k, e} \partial_{x_{0}}^{n_1} \partial_{x_{1}}^{n_2} g_{e}^{(k)}(\underline{x} - \underline{y}) | &\leq& \frac{ C_{n+ n_{1} + n_{2}} 2^{k(1 + n_1 + n_2)} }{1 + (2^{k} \| \underline{x} - \underline{y} \|_{\b, L} )^{n+n_{1} + n_{2}}}\;,\qquad \forall n\in \mathbb{N}\;.
\eea
The first bound follows from the Gram-Hadamard inequality for determinants (using that the single-scale propagator admits a Gram representation, Eq. (3.97) of \cite{BM}), while the second follows from the smoothness and  support properties of the single-scale propagator in momentum space. The bounds (\ref{eq:bdGg}) can be used to prove:
\bea\label{eq:treenonrinbd}
&&\frac{1}{\beta L} \int_{\beta,L} D \underline{X}\, | W^{(h)}_{\tau, {\bf P}, T}(\underline{X}_{v_{0}})| \leq \sum_{n\geq 1} C^{n} 2^{h(2 - \frac{|P_{v_{0}}|}{2})}\\&&\qquad \cdot\Big[ \prod_{\text{$v$ not e.p.}} e^{c|\l||P_{v}|} \frac{1}{s_{v}!} 2^{-(h_{v} - h_{v'})(\frac{|P_{v}|}{2} - 2)}  \Big] \Big[ \prod_{i=1}^{n} 2^{h_{v'}(\frac{|I_{v}|}{2} - 2)} \Big]\Big[ \prod_{i=1}^{n} C^{p_{i}} |\l|^{\frac{p_{i}}{2} - 1} \Big]\;,\nn
\eea
where $v'$ is the vertex immediately preceding $v$ on the tree $\t$, and hence $h_{v} - h_{v'} >0$ (in fact, $h_{v} - h_{v'} = 1$). This bound however {\it does not} imply summability in the scale labels for all ${\bf P}$, due to the fact that $P_{v}$ can be smaller than $4$. The contributions to the effective action on scale $h_{v}$ with $|P_{v}| = 2$ are called {\it relevant}; if $|P_{v}| = 4$ they are {\it marginal}, while if $|P_{v}|>4$ they are {\it irrelevant}.

To prove an estimate which is uniform in $h$, we reintroduce the $\mathcal{R}$ operator. The analogous expansion is, see Section 3 of \cite{BM}:
\be\label{eq:sumRT}
V^{(h)}({\bf P}; \tau) = \sum_{T \in {\bf T}}\sum_{\alpha} \int D\underline{X}_{v_{0}} [\widetilde{\psi}^{(\leq h)}(P_{v_{0}})]_{\alpha} W^{(h)}_{\tau, {\bf P}, T,\alpha}(\underline{X}_{v_{0}})\;,
\ee
where:
\be\label{eq:tildepsi}
[\widetilde{\psi}^{(\leq h)}(P_{v})]_{\alpha} = \prod_{f\in P_{v}} \sqrt{Z_{h, e(f)}} \mathcal{D}^{q_{\alpha}(f)}_{\alpha}\psi^{(\leq h)\e(f)}_{\underline{x}(f), e(f)}\;,
\ee
with $q_{\alpha}(f) \in \{ 0,1, 2\}$ and $\mathcal{D}$ the discrete derivative, $\mathcal{D}_{\alpha} \psi^{\e}_{\underline{x},e} = \psi^{\e}_{\underline{x}, e} - \psi^{\e}_{\underline{x}(\alpha), e}$, $\mathcal{D}^2_{\alpha} \psi^{\e}_{\underline{x},e} = \psi^{\e}_{\underline{x}, e} - \psi^{\e}_{\underline{x}(\alpha), e} - (\underline{x} - \underline{x}(\alpha))\cdot \underline{\partial}\psi^{\e}_{\underline{x}, e}$ for a suitable localization point $\underline{x}(\alpha)$; the index $\alpha$ is a multi-index that keeps track of the various terms arising after the action of $\mathcal{R}$ on a Grassmann monomial, see e.g. Eq. (\ref{eq:R}). The kernels $W^{(h)}_{\tau, {\bf P}, T,\alpha}$ admit a representation similar to (\ref{eq:sumT2}), see Eq. (3.81) of \cite{BM}. Proceeding as in Section 3.14 of \cite{BM}, we have:
\bea\label{eq:treebd}
&&\frac{1}{\beta L} \int_{\beta,L} D \underline{X}\, | W^{(h)}_{\tau, {\bf P}, T,\alpha}(\underline{X}_{v_{0}})| \leq \sum_{n\geq 1} C^{n} 2^{h(2 - \frac{|P_{v_{0}}|}{2})}\\&&\qquad \cdot\Big[ \prod_{\text{$v$ not e.p.}} e^{c|\l||P_{v}|} \frac{1}{s_{v}!} 2^{-(h_{v} - h_{v'})(\frac{|P_{v}|}{2} - 2 + z_{v})}  \Big] \Big[ \prod_{i=1}^{n} 2^{h_{v'}(\frac{|I_{v}|}{2} - 2)} \Big]\Big[ \prod_{i=1}^{n} C^{p_{i}} |\l|^{\frac{p_{i}}{2} - 1} \Big]\;,\nn
\eea
where the {\it dimensional gain} $z_{v}$ is due to the presence of the $\mathcal{R}$ operator, and it is:
\be\label{eq:zv0}
z_{v} = \left\{ \begin{array}{cc} 2 & \text{if $|P_{v}| = 2$} \\ 1 & \text{if $|P_{v}| = 4$} \\ 0 & \text{otherwise.} \end{array}\right.
\ee
The number $D_{v} = |P_{v}|/2 - 2 + z_{v}$ is called the {\it scaling dimension} of the vertex $v$. We are now ready to bound the kernel $W^{(h)}_{\Gamma, \alpha}(\underline{X}) = \sum_{\tau \in \mathcal{T}_{h,n}} \sum^*_{{\bf P} \in \mathcal{P}_{\tau}} \sum_{T\in {\bf T}} W^{(h)}_{\tau, {\bf P}, T,\alpha}(\underline{X})$, where, for any $l\geq 1$,  $\G = \{ (e_{1}, \e_{1}), \ldots, (e_{l}, \e_{l}) \}$, and the asterisk denotes the constrains $|P_{v_{0}}| = |\Gamma|\equiv 2l$, $e(f_{i}) = e_{i}$, $\e(f_{i}) = \e_{i}$. Using that the number of unlabeled trees or order $n$ is bounded by $4^{n}$, that the number of addends in $\sum_{T \in {\bf T}}$ is bounded by $C^{n} \prod_{\text{$v$ not e.p.}} s_{v}!$, and definition (\ref{eq:zv0}), we have:
\be\label{eq:bdWh2}
\frac{1}{\beta L} \int_{\beta,L} D\underline{X}\, | W^{(h)}_{2 l, \alpha; \underline{e}}(\underline{X})| \leq 2^{h(2 - l)} \sum_{u\geq 1} C^{u} |\l|^{u}\;.
\ee
In a similar way, one can prove that:
\be\label{eq:bdWh3}
\frac{1}{\beta L} \int_{\beta, L} D \underline{X}\, \prod_{ij}\| \underline{x}_{i} - \underline{x}_{j} \|^{m_{ij}} | W^{(h)}_{2l, \alpha; \underline{e}}(\underline{X})| \leq 2^{h(2 - l - \sum_{ij} m_{ij})} \sum_{u\geq 1} C^{u} |\l|^{u}\;.
\ee
This is the final result of this section. Analyticity in $\l$ can be proven inductively, assuming that the running coupling constants are analytic on all scales $\geq h$, which is true on scale $0$. The bound (\ref{eq:bdWh3}), combined with Eq. (\ref{eq:Qh}), also allows to prove that $Q^{+}_{h,r, e}(\underline{k}',y_{2}) = Q^{+}_{1,r, e}(\underline{k}',y_{2})(1 + O(\l))$.
\begin{rem}{\bf (The short memory property.)}\label{rem:sm} Thanks to the presence of the $z_{v}$ factors in Eq. (\ref{eq:treebd}), each branch between two nontrivial vertices $v>v'$ comes with a factor $2^{-D_{v}(h_{v} - h_{v'})}$, where $D_{v}\geq D = 1 - O(\l)$. This implies that {\em long trees are exponentially suppressed}: if one restricts the sum $\sum_{\tau\in \mathcal{T}_{h, n}}$ to the trees having at least one vertex on scale $k>h$, then the final bound is improved by a factor $2^{\theta(h - k)}$, for $\theta \in (0, D)$. This property is usually referred to as the {\em short memory property of the GN trees.}
\end{rem}
\begin{rem}{\bf (The continuity property.)}\label{rem:cont} Consider a tree $\tau \in \mathcal{T}_{h, n}$ with value $\text{Val}(\tau) \equiv W^{(h)}_{\tau, {\bf P}, T}(\underline{X}_{v_{0}})$. Suppose that $|\text{Val}(\tau)|\leq B_{D}(\tau)$, where $B_{D}(\tau)$ is the dimensional bound of the tree, obtained as in (\ref{eq:treebd}), with $D_{v}\geq D$. Consider a propagator $\hat g^{(h_{v})}_{e}$ arising from the truncated expectation associated with the vertex $v\in \tau$. Suppose that $\hat g^{(h_{v})}_{e}(\underline{k}')$ is replaced by $\hat g^{(h_{v})}_{e}(\underline{k}') + \delta \hat g^{(h_{v})}_{e}(\underline{k}')$, with $|\delta \hat g^{(h_{v})}_{e}(\underline{k}')| \leq C2^{-h_{v}} 2^{\theta h_{v}}$ for some $1/2>\theta >0$, and admitting a Gram representation. Let $\text{Val}(\tau) + \delta \text{Val}(\tau)$ be the new value of tree. The contribution $\delta \text{Val}(\tau)$ can be bounded as follows. Let $\mathcal{C}$ be the path on $\tau$ that connects $v$ to the root. Then, writing $2^{\theta h_{v}} = 2^{\theta h} \prod_{v\in \mathcal{C}} 2^{-\theta}$ and using that $D_{v} \geq 1 - O(\l)$, we have:
\be\label{eq:dV}
|\delta\text{Val}(\tau)|\leq 2^{\theta h} B_{D - \theta}(\tau)\;.
\ee
Analogously, suppose that $Z_{h_{v},e}/ Z_{h_{v,e} - 1}$ is replaced by $Z_{h_{v}, e}/ Z_{h_{v} - 1, e} + \d z_{0, h_{v}, e}$, with $|\d z_{0,h_{v}, e}|\leq C2^{\theta h_{v}}$, or that $\l_{h_{v}, \underline{e}}$ is replaced by $\l_{h_{v}, \underline{e}} + \d \l_{h_{v}, \underline{e}}$, with $|\l_{h_{v}, \underline{e}}|\leq C 2^{\theta h_{v}}$. Then, the value of the corresponding tree is $\text{Val}(\tau) + \d \text{Val}(\tau)$, with $\d \text{Val}(\tau)$ satisfying the bound (\ref{eq:dV}). We shall refer to this property as the {\em continuity property of the GN trees.}
\end{rem}

\subsubsection{Tree expansion for the Schwinger functions}\label{sec:treesc} Let us briefly discuss the changes needed in order to adapt the previous expansion to the Schwinger functions. For each $n\geq 0$ and $m\geq 1$, we introduce a family $\mathcal{T}^{m}_{h,n}$ of rooted labeled trees, defined in a way similar to $\mathcal{T}_{h,n}$, but now allowing for the presence of $m$ special vertices, corresponding to the source terms. We will set $P_{v} = P_{v}^{\psi} + P_{v}^{A} + P_{v}^{\phi}$, where $P_{v}^{A}$, $P_{v}^{\phi}$ collect the external fields of type $A$, $\phi$ associated with the vertex $v$. We then have:
\be\label{eq:Wtree}
\mathcal{W}_{\beta, L}(A,\phi) = \sum_{n\geq 0} \sum_{m\geq 1} \sum_{h\geq h_{\beta}} \sum_{\tau\in \mathcal{T}^{m}_{h, n}} \sum_{{\bf P}\in \mathcal{P}_{\tau}} \sum_{T\in {\bf T}}  \mathcal{W}^{(h)}({\bf P}, T; \tau)\;,
\ee
where the contribution $\mathcal{W}^{(h)}({\bf P}, T; \tau)$ can be further rewritten as:
\be
\mathcal{W}^{(h)}({\bf P}, T; \tau) = \int_{\beta,L} D {\bf Y}_{v_{0}} D{\bf Z}_{v_{0}}\, A(P^{A}_{v_{0}}) \phi(P^{\phi}_{v_{0}}) W^{(h)}_{\tau, {\bf P}, T}({\bf Y}_{v_{0}}, {\bf Z}_{v_{0}})\;,
\ee
where $W^{(h)}_{\tau, {\bf P}, T}({\bf Y}_{v_{0}}, {\bf Z}_{v_{0}})$ admits a tree representation, which can be obtained proceeding as in the case $\phi = A = 0$, with the following differences. If $v$ is a normal endpoint the function $K^{(h_{v})}_{v}$ is defined as before, while if $v$ is a special endpoint it is defined as follows. For a special endpoint of type $\phi^{\pm}_{r,y_{2}}$, then $K^{(h_{v})}_{v}({\bf x})= \int_{\beta, L} d\underline{y}\, \widecheck{Q}^{\pm}_{h_{v},r, e_{v}}(\underline{x} - \underline{y}, x_{2}) g^{(h_{v})}_{e_{v}}(\underline{y})$, with $\widecheck{Q}(\underline{x}, x_{2})$ the inverse Fourier transform of $Q(\underline{k}', x_{2})$; while for a special endpoint of type $A^{\sharp}_{\m}$, $K^{(h_{v})}_{v}({\bf z})  = Z_{h_{v},\sharp, \m, e_{v}}(z_{2})$. The various contributions to Eq. (\ref{eq:Wtree}) can be bounded as in (\ref{eq:treebd}). The main difference is that the scaling dimension of a given vertex (non endpoint) is now $D_{v} = |P_{v}^{\psi}|/2 + |P_{v}^{\phi}|/2 + |P_{v}^{A}| - 2 + z_{v}$, with dimensional gain given by:
\be\label{eq:zv}
z_{v} = \left\{ \begin{array}{cc} 2 & \text{if $|P_{v}^{\psi}| = 2$, $|P^{\phi}_{v}| = 0$, $|P^{A}_{v}| = 0$} \\
1 & \text{if $|P_{v}^{\psi}| = 4$, $|P^{\phi}_{v}| = 0$, $|P^{A}_{v}| = 0$} \\ 1 & \text{if $|P^{\psi}_{v}| = 2$, $|P^{\phi}_{v}| = 0$, $|P^{A}_{v}| = 1$} \\ 0 & \text{otherwise.} \end{array}\right.
\ee
We shall use this expansion to get bounds on the current-current correlation functions. Proceeding in a way analogous to, say, Section 2.2.7, of \cite{GMPcond}, Eqs. (2.79)--(2.84), one finds:
\be\label{eq:JJbd}
\big|\langle {\bf T} j_{\m, \xx}\,; j_{\n, \yy} \rangle_{\beta, L}\big| \leq \frac{C}{1 + \| \underline{x} - \underline{y} \|_{\b, L}^2} \frac{C_{n}}{1 + | x_{2} - y_{2} |^{n}}\;,\qquad \forall n\in \mathbb{N}\;,
\ee
for some $C, C_{n}>0$ independent of $\beta, L$. The $|x_{2} - y_{2}|$-dependence of the bound comes from the decay of the bulk propagators, see Eq. (\ref{eq:gbound}), and from $| Q^{\pm}_{h,r,e}(\underline{k}', x_{2})| \leq C_{n}/(1 + |x_2|_{e}^{n})$ with $|\cdot|_{e} = |\cdot|, |\cdot - L|$, depending on which boundary the edge state is localized, see Eq. (\ref{eq:Qh}).

Concerning the $\|\underline{x} - \underline{y}\|_{\b, L}$-dependence, the bound (\ref{eq:JJbd}) {\it is not enough} to prove the boundedness of $\langle {\bf T} \hat j_{\m, \underline{p}, x_{2}}\,; \hat j_{\n, -\underline{p}, y_{2}} \rangle_{\beta, L}$ uniformly in $\underline{p}$. These correlation functions are the crucial objects entering the Euclidean formulation of the edge transport coefficents, see Eq. (\ref{eq:Dkk'0}). In order to compute the edge transport coefficients, Eq. (\ref{eq:Dkk'0}), we will need to exploit {\it cancellations} in the expansion.

\subsection{The flow of the running coupling constants}\label{sec:rcc}

In order to conclude the discussion of the integration of the single scale, we are left with verifying our inductive assumptions (\ref{eq:ass}). To do this, we shall expoit the recursion relations induced by the multiscale integration, which read as follows:
\bea\label{eq:flow}
\frac{Z_{h-1,e}}{Z_{h,e}} &=& 1+z_{0,h,e} =: 1+ \beta^{z}_{h,e}\nn\\
v_{h-1,e} &=& \frac{Z_{h,e}}{Z_{h-1,e}}(v_{h,e} + z_{1,h,e}) =: v_{h} + \beta^{v}_{h,e}\nn\\
2^{h}\nu_{h,e} &=& \frac{Z_{h,e}}{Z_{h+1,e}} \widehat{W}^{(h)}_{2,0,0; e,e}(\underline{0}) =: 2^{h+1} \nu_{h+1,e} + 2^{h+1}\beta^{\nu}_{h+1,e}\nn\\
\lambda_{h,\underline{e}} &=& \frac{ \sqrt{ Z_{h,e_{1}} Z_{h,e_{2}} Z_{h,e_{3}} Z_{h,e_{4}}}} {\sqrt{Z_{h-1,e_{1}} Z_{h-1,e_{2}} Z_{h-1,e_{3}}Z_{h-1,e_{4}}}}\widehat{W}^{(h)}_{4,0,0;\underline{e}}(\underline{0},\underline{0},\underline{0}) =: \lambda_{h+1,\underline{e}} + \beta^{\l}_{h+1,\underline{e}}\;\nn\\
Z_{h,\sharp,\m, e}(z_{2}) &=& \frac{Z_{h,e}}{Z_{h-1,e}} \widehat{W}^{(h)}_{2,0,1;\m, e,e,\sharp}(\underline{0}, \underline{0},z_{2}) =:  Z_{h+1,\sharp,\m, e}(z_{2}) + \beta^{\sharp}_{h+1,\m,e}(z_{2})\;.
\eea
The function $\beta_{h,\underline{e}} \equiv \beta_{h}(\{Z_{k,e},\, v_{k,e},\, \nu_{k,e},\, \l_{k,\underline{e}}\}_{k=h+1}^{0}) = (\beta^{z}_{h,e}, \beta^{v}_{h,e}, \beta^{\nu}_{e}, \beta^{\l}_{h,\underline{e}}, \beta^{\sharp}_{h,e})$ is called the {\it beta function}, and its properties determine the flow of the running coupling constants. The running coupling constant $\nu_{h,e}$ is relevant, while all the others are marginal. From a mathematical point of view, the main difficulty is to prove that the flow of the effective coupling constant $\l_{h,\underline{e}}$ is {\it bounded}. This is a highly nontrivial property, that needs not to be true in general.

The general solution of this problem is beyond the scope of this paper. Instead, we will restrict the attention to the class of models satisfying Assumpion 2, that is exhibiting single-channel edge currents. We will distinguish two kinds of effective coupling constants, namely:
\be
\l_{h,\underline{e}} = \Big\{ \begin{array}{cc} \l_{h, e} & \text{for } \underline{e} = ((\bar e, \uparrow), (\bar e, \downarrow), (\bar e, \downarrow), (\bar e, \uparrow)) \\ \widetilde \l_{h,\underline{e}} & \text{if  $e_{i}\neq e_{j}$ for some $i,j$.} \end{array}
\ee
We denote by $\beta^{\l}_{h,e}$ the beta function of $\l_{h,e}$, and by $\widetilde{\b}^{\l}_{h,\underline{e}}$ the beta function of $\widetilde{\l}_{h,\underline{e}}$. Notice that for $h=0$, $|\l_{0,e}|\leq C|\l|$ and $|\widetilde \l_{0,\underline{e}}|\leq C |\l| e^{-cL}$. This last bound can be proven from Eq. (\ref{eq:lambda0}), using the exponential decay of the edge modes and of the bulk propagator. The flow of the running coupling constants can be controlled thanks to the following proposition.

\begin{prop}{\bf (Bounds for the beta function.)}\label{prp:flow} Suppose that $|\n_{k,e}|\leq C2^{\theta k}|\l|$, for all $k > h$. Then, under the same assumption of Theorem \ref{thm:1}, the following is true. Let $2^{\theta h_{\beta}} \geq Ce^{-cL}$. The beta function satisfies the following bounds:
\bea\label{eq:beta}
&&|\beta^{z}_{k,e}| \leq C|\l|^22^{\theta k}\,,\qquad |\beta^{v}_{k,e}| \leq C|\l| 2^{\theta k}\;,\qquad |\beta^{\sharp}_{k+1,\m,e}(z_{2})| \leq C|Z_{1,\sharp,\m, e}(z_{2})||\l|^2 2^{\theta (k+1)}\nn\\ &&|\beta^{\l}_{k+1,e}| \leq C|\l|^2 2^{\theta (k+1)}\;,\qquad |\widetilde\beta^{\l}_{k+1,\underline{e}}| \leq C|\l|^2 e^{-2cL}\;,
\eea
for all $k\geq h$, and for some $C,\, c>0$ independent of $h$.
\end{prop}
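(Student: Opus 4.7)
The plan is to combine the tree expansion estimates of Section \ref{sec:treef} with a comparison argument against the chiral Luttinger reference model, exploiting the vanishing of its beta function for the marginal couplings. The starting observation is that, by Eqs. (\ref{eq:flow}) and the tree expansion (\ref{eq:bdWh2})--(\ref{eq:bdWh3}), each $\beta_{k+1,\cdot}$ admits a GN tree representation where the root is on scale $k$ and all endpoints carry vertices from $\mathcal{L}\widehat{V}^{(j)}$ on scales $j \geq k+1$, plus possibly special endpoints from scale $0$ encoding irrelevant terms. A naive dimensional bound gives only $|\beta_{k+1,\cdot}| \leq C|\lambda|$, which is insufficient: one needs to extract an extra factor $2^{\theta k}$ in order to prove boundedness of $\lambda_{h,e}$, $Z_{h,e}$, $v_{h,e}$ along the flow.

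First I would split each beta function as $\beta_{k+1,\cdot} = \beta^{\text{(ref)}}_{k+1,\cdot} + \bar\beta_{k+1,\cdot}$, where $\beta^{\text{(ref)}}_{k+1,\cdot}$ is obtained by replacing, in each tree contributing to $\beta_{k+1,\cdot}$: (i) the single-scale propagators $g^{(j)}_e$ by their relativistic counterparts $\frac{\chi_j(\underline{k}')\,f_j(\underline{k}')}{Z_{j,e}(-ik_0 + v_{j,e} k_1')}$ (i.e.\ dropping $r_{j,e}$ and linearizing the dispersion), (ii) the local kernels associated with the endpoints from $\mathcal{L}\widehat V^{(j)}$ by their local relativistic analogues, (iii) any endpoint on scale $0$ (irrelevant vertex inherited from $V^{(0)}$) by zero. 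By construction $\beta^{\text{(ref)}}_{k+1,\cdot}$ coincides with the beta function of the chiral Luttinger reference model of Section \ref{sec:ref} with couplings $(\lambda_{k+1,e}, v_{k+1,e}, Z_{k+1,e})$. By the anomalous Ward identities of Proposition \ref{prp:WIan} (in particular the nonrenormalization of the chiral anomaly, analogous to Adler--Bardeen, see \cite{FM, M2}), the Luttinger beta function for the marginal couplings $z$, $v$, $\lambda$ vanishes identically in the limit of removed cutoffs, and the local vertex renormalization $\beta^{\sharp, \text{(ref)}}_{k+1,\mu,e}$ is absorbed into the anomalous rescaling of $Z^{\text{(ref)}}_{\sharp,\mu}$ already accounted for by the inductive definition. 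Hence $\beta^{\text{(ref)}}_{k+1,\cdot}$ is bounded by $C|\lambda|^2 2^{\theta(k+1)}$ by the cancellation mechanism of \cite{FM, BM, GM}.

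The remainder $\bar\beta_{k+1,\cdot}$ encodes the difference between the lattice flow and the relativistic one. Each tree contributing to $\bar\beta_{k+1,\cdot}$ contains at least one propagator replacement $g^{(j)}_e - g^{(j)}_{\text{rel},e}$, whose size is $O(2^{-j} 2^{\theta j})$ by the bound $|r_{h,e}(k_1')|\leq C|k_1'|^\theta$ in (\ref{eq:rcc}), or at least one endpoint on scale $0$ carrying an irrelevant kernel. In either case the continuity property of the GN trees (Remark \ref{rem:cont}) yields an extra factor $2^{\theta j}$ on the scale $j$ of the modified vertex, which by the short memory property (Remark \ref{rem:sm}) propagates down to a factor $2^{\theta(k+1)}$ at the root, at the price of reducing the scaling dimension from $D$ to $D-\theta$ on the path from $j$ to $v_0$; since $D>\theta$ for $|\lambda|$ small, summability is preserved and one obtains $|\bar\beta_{k+1,\cdot}|\leq C|\lambda|^2 2^{\theta(k+1)}$. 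The assumed bound $|\nu_{k,e}|\leq C 2^{\theta k}|\lambda|$ is used in the same way to guarantee that insertions of the relevant coupling do not spoil the dimensional gain.

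The bound $|\widetilde\beta^\lambda_{k+1,\underline{e}}|\leq C|\lambda|^2 e^{-2cL}$ for edge states on opposite boundaries requires a separate argument. The initial datum $\widetilde\lambda_{0,\underline{e}} = O(\lambda e^{-cL})$ already carries the exponential factor from Eq.~(\ref{eq:lambda0}) and the localization estimates (\ref{eq:dec22}). Any tree producing $\widetilde\beta^\lambda_{k+1,\underline{e}}$ with external edge labels on opposite sides must contain at least one endpoint mixing edges $(1,\sigma)$ and $(2,\sigma')$; by a connectedness argument on the Brydges--Battle--Federbush tree graph $T$, either such an endpoint is present (contributing $e^{-cL}$), or two distinct endpoints localized on opposite edges are connected by a path of $Q^\pm_{h,\cdot}$ vertices whose decay in $x_2$, $|x_2 - L|$ forces a factor $e^{-cL}$ after integrating the transverse variables. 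The assumption $2^{\theta h_\beta}\geq Ce^{-cL}$ ensures that the scales involved stay above the would-be threshold where these exponential factors degrade. Combining with the $|\lambda|^2$ factor from the smallness of the vertices gives the stated bound.

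The main obstacle is establishing the cancellation encoded in $\beta^{\text{(ref)}}_{k+1,\cdot} = O(|\lambda|^2 2^{\theta(k+1)})$: it requires that the exactly solvable structure of the chiral Luttinger model, via its anomalous Ward identities, forces the marginal relativistic contributions to the lattice beta function to vanish at the level accuracy needed. This is the step where the entire strategy of comparison with the reference model enters substantively; the remaining estimates are essentially dimensional and follow the scheme of \cite{BM, GM}.
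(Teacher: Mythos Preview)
Your overall strategy---splitting the beta function into a relativistic (reference-model) piece plus a remainder, and bounding the remainder via short memory and continuity---matches the paper's. However, the mechanism you invoke for the vanishing of the relativistic part is different from, and heavier than, what the paper actually uses, and your treatment of $\widetilde\beta^\lambda$ has a gap.

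For the vanishing of $\beta^{\lambda,\text{rel}}$, $\beta^{z,\text{rel}}$, $\beta^{v,\text{rel}}$, $\beta^{\sharp,\text{rel}}$, the paper does not appeal to the anomalous Ward identities at all. Instead it exploits a direct rotational covariance of the chiral relativistic propagator: $g^{(k)}_{\omega,\text{rel}}(R_\alpha\underline{k}') = e^{-i\omega\alpha} g^{(k)}_{\omega,\text{rel}}(\underline{k}')$. Since in the single-channel setting all propagators in the relativistic approximation carry the \emph{same} chirality $\omega$, one simply counts propagators in each Feynman graph at order $p$ and picks $\alpha$ so that the accumulated phase is nontrivial. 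For $\beta^\lambda$ at order $p$ there are $2p-2$ propagators; taking $\alpha=\pi/(2p)$ yields $\beta^{\lambda,\text{rel}}_{p} = e^{-i\omega\pi(1-1/p)}\beta^{\lambda,\text{rel}}_{p}=0$ for $p\geq 2$. Analogous counting (using $\partial_{\pm\omega}$ and the identity $\partial_{-\omega}g^{(k)}_{\omega,\text{rel}}=0$) handles $\beta^z$, $\beta^v$, $\beta^\sharp$. This is elementary and self-contained. Your route through Proposition \ref{prp:WIan} and Adler--Bardeen is not wrong in principle, but within the paper's logic it is circular: the reference-model facts you invoke (in particular Proposition \ref{prp:rccref}) are themselves established by this rotation argument, not the other way around.

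For $\widetilde\beta^\lambda_{k+1,\underline{e}}$, your connectedness argument only extracts \emph{one} factor $e^{-cL}$ (from a single mixing endpoint), whereas the claim is $e^{-2cL}$. The paper obtains the square by showing, via the same rotation trick, that the contribution \emph{linear} in $\widetilde\lambda$ to $\widetilde\beta^{\lambda,\text{rel}}$ vanishes: with exactly one $\widetilde\lambda$ vertex and all remaining vertices of type $\lambda$, the internal propagators all lie on a single edge and hence share the same chirality, so the phase-counting applies again. Consequently $\widetilde\beta^{\lambda,\text{rel}}$ is at least quadratic in $\widetilde\lambda_k=O(|\lambda|e^{-cL})$, yielding $|\lambda|^2 e^{-2cL}$; the hypothesis $2^{\theta h_\beta}\geq Ce^{-cL}$ is then used to absorb the $|h|$ from summing scales. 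Without establishing the vanishing of the linear term, your argument stops at $|\lambda|^2 e^{-cL}$ and does not reach the stated bound.
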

This proposition together with the recursive equations (\ref{eq:flow}) immediately allows to prove that:
\bea\label{eq:rccbd}
&&|Z_{h,e} - Z_{0,e}|\leq C|\l|^2\;,\qquad |v_{h,e} - v_{0,e}|\leq C|\l|\;,\qquad |Z_{h,\sharp,\m, e}(z_{2})|\leq C|Z_{1,\sharp,\m, e}(z_{2})|\nn\\
&&|\l_{h,e} - \l_{0,e}|\leq C|\l|^2\;,\qquad |\widetilde{\l}_{h,\underline{e}}|\leq C|\l| |h|e^{-2cL} \leq C|\l| e^{-cL}\;.
\eea
which concludes the check of the inductive assumption for all running coupling constants except $\n_{k,e}$, that will be discussed in Proposition \ref{prp:nu}. Also, Proposition \ref{prp:flow} gives a bound on the speed of convergence of the flow of the running coupling constants:
\bea\label{eq:speed}
&&|\l_{h_{\beta}, e} - \l_{h,e}| \leq C|\l|^2 2^{\theta h}\;,\qquad | Z_{h_{\beta},\sharp, \m, e}(z_{2}) - Z_{h,\sharp, \m, e}(z_{2})|\leq C|\l|^22^{\theta h} Z_{1, \sharp,\m, e}(z_{2}) \nn\\
&&|v_{h_{\beta, e}}- v_{h, e}| \leq C|\l| 2^{\theta h}\;,\qquad  \Big| \frac{Z_{h-1, e}}{Z_{h,e}} - 1 \Big|\leq C|\l|^2 2^{\theta h}\;.
\eea
\begin{proof} We proceed by induction. Suppose that the bounds (\ref{eq:beta}) are true for $k>h$. Consider the beta function of the effective coupling constant $\l_{h}$. We rewrite it as:
\be
\beta^{\l}_{h+1, e} = \beta^{\l,(\leq -1)}_{h+1, e} + \delta \b^{\l}_{h+1, e}
\ee
where $\beta^{\l,(\leq -1)}_{h+1,e}$ is given by a sum of trees with endpoints at most on scale $-1$. By the short-memory property of the Gallavotti-Nicol\`o trees, Remark \ref{rem:sm}, we get $|\delta \b^{\l}_{h+1, e} | \leq C|\l|^2 2^{\theta (h+1)}$. Then, we further rewrite:
\be
\beta^{\l,(\leq -1)}_{h+1, e} = \beta^{\l,(\text{rel})}_{h+1, e} + \delta \beta^{\l,(\leq -1)}_{h+1, e}\;,
\ee
where $\beta^{\l,(\text{rel})}_{h+1, e}$ is obtained from $\beta^{\l,(\leq -1)}_{h+1, e}$ by setting, for all $k>h$:
\be\label{eq:rel}
r_{k, e}(\underline{k}') = 0\;,\quad \nu_{k,e} = 0\;,\quad \frac{Z_{k,e}}{Z_{k-1,e}} = 1\;,\quad v_{k,e} =v_{h,e}\;,\quad \l_{k,e} = \l_{h,e}\;,\quad  \widetilde{\l}_{k,\underline{e}} = 0\;. 
\ee
Now, using the inductive assumption on the beta functions of the running coupling constants we have:
\bea\label{eq:cau}
&&| \l_{k,e} - \l_{h+1,e} | \leq C|\l|^2 2^{\theta k}\;,\quad |\widetilde{\l}_{k,\underline{e}}|\leq C|\l|e^{-cL}\;,\quad| \nu_{k,e} | \leq C2^{\theta k}|\l|\;,\\&& | v_{k,e} - v_{h+1,e} | \leq C|\l|2^{\theta k} \;,\qquad\Big| \frac{Z_{k,e}}{Z_{k-1,e}} - 1 \Big|\leq C|\l|^2 2^{\theta k}\;,\qquad |r_{k,e}(k_{1})|\leq C2^{\theta k}\;.\nn
\eea
Correspondingly, we write $\hat g^{(k)}_{e}(\underline{k}') = \hat g^{(k)}_{e,\text{rel}}(\underline{k}') + \delta \hat g^{(k)}_{e}(\underline{k}')$ where $\hat g^{(k)}_{e,\text{rel}}(\underline{k}')$ is the chiral relativistic propagator specified by (\ref{eq:rel}) and $\delta \hat g^{(k)}_{e}(\underline{k}')$ is a correction term:
\be\label{eq:grel}
\hat g^{(k)}_{e, \text{rel}}(\underline{k}') = \frac{\tilde f_{k,e}(\underline{k}')}{-ik_{0} + v_{h,e} k'_{1}}\;,\qquad |\delta \hat g^{(h)}_{e}(\underline{k}')|\leq C 2^{\theta h} 2^{-h_{\underline{k}}}\;,
\ee
with $\tilde f_{k,e}(\underline{k}') = \chi\big( (2^{-k}/\d'_{e})\sqrt{k_{0}^2 + v_{h,e}^2 {k'_{1}}^2} \big) - \chi\big( (2^{-k+1}/\d'_{e})\sqrt{k_{0}^2 + v_{h,e}^2 {k'_{1}}^2}\big)$. Also, we assume that all sums over momenta in $\beta^{\l(\text{rel})}_{h+1, e}$ are replaced by integrals; the error introduced by this replacement is of order $L^{-1} + \beta^{-1} \leq C2^{\theta h}$, see Lemma 2.6 of \cite{BFMhub1}. Therefore, using the continuity of the GN trees, Remark \ref{rem:cont}, we have $|\delta \beta^{\l(\leq -1)}_{h+1, e}| \leq C|\l|^22^{\theta(h+1)}$. Finally, we are left with $\beta^{\l(\text{rel})}_{h+1, e}$. By construction, this contribution is given by a sum of trees involving the relativistic approximation of the propagator, Eq. (\ref{eq:grel}).
Let $\o = \text{sgn}(v_{e})$, that is $v_{h, e} = \o |v_{h, e}|$. Since the velocity is the same in all propagators, we can perform a change of variables in the integrals, and replace $v_{h,e} = \o$ everywhere. We denote by $\hat g^{(k)}_{\o, \text{rel}}$ the propagator (\ref{eq:grel}) with $v_{h, e} = \o$. Let $R_{\alpha}$ be the rotation matrix:
\be
R_{\a} = \begin{pmatrix} \cos\alpha & \sin\alpha \\ -\sin\alpha & \cos\alpha \end{pmatrix}\;.
\ee
We have:
\be\label{eq:rot}
g^{(k)}_{\o, \text{rel}}(R_{\alpha} \underline{k}) = e^{-i \o \alpha} g^{(k)}_{\o, \text{rel}}(\underline{k})\;.
\ee
By the analyticity of $\beta^{\l,(\text{rel})}_{h+1, e}$ in $\l_{h+1, e}$, we write: $\beta^{\l,(\text{rel})}_{h+1, e} = \sum_{p\geq 2} \beta^{\l,(\text{rel})}_{h+1, e, p} \l_{h+1,e}^{p}$. The Taylor coefficient $\beta^{\l,(\text{rel})}_{h+1, e, p}$ is given by the sum of Feynman graphs obtained from the contraction of $4p - 4$ fields, which means $2p - 2$ propagators. Let us consider the $p$-dependent rotation $R_{p} \equiv R_{\frac{\pi}{2p}}$. By Eq. (\ref{eq:rot}), we get:
\be
\beta^{\l,(\text{rel})}_{h+1, e, p} = e^{-i \o \frac{\pi}{2p}(2p - 2)}\beta^{\l,(\text{rel})}_{h+1, e, p} \equiv e^{-i\o\pi(1 - \frac{1}{p})} \beta^{\l,(\text{rel})}_{h+1, e, p}
\ee
which is zero, for all $p\geq 2$. This concludes the inductive step for the proof of the bound on $\beta^{\l}_{h+1,e}$. Then, consider $\beta^{v}_{h,e}$, $\beta^{z}_{h,e}$. Proceeding as before one gets $\beta^{\alpha}_{h,e} = \beta^{\alpha,(\text{rel})}_{h,e} + \d\beta^{\alpha}_{h,e}$ for $\alpha = v, z$,  with $|\d\beta^{v}_{h,e}| \leq C|\l|2^{\theta h}$, $\d\beta^{z}_{h,e} \leq C|\l|^2 2^{\theta h}$. Let us now study the relativistic contribution. We define the directional derivative:
\be
\partial_{\o} := \frac{1}{2}\Big( i\frac{\partial}{\partial k_{0}} + \o \frac{\partial}{\partial k'_{1}} \Big)\;.
\ee
Clearly, $\partial_{k_{0}} = (-i)(\partial_{+} + \partial_{-})$ and $\partial_{k'_1} = \partial_{+} - \partial_{-}$. We have:
\be\label{eq:rot2}
\partial_{\o}\hat g^{(k)}_{\o,\text{rel}}(R_{\a}\underline{k}') = e^{-2 i\o \alpha} \partial_{\o}\hat g^{(k)}_{\o,\text{rel}}(\underline{k}')\;,\qquad \partial_{-\o}\hat g^{(k)}_{\o,\text{rel}}(\underline{k}') = 0\;.
\ee
The Taylor coefficients $\beta^{\alpha,(\text{rel})}_{h,e,p}$ are given by Feynman graphs containing $2p - 1$ propagators, one of them being differentiated by $\partial_{+} \pm \partial_{-}$. We rewrite: $\beta^{\alpha,(\text{rel})}_{h,e,p} = \beta^{\alpha,(\text{rel}),\o}_{h,e,p} + \beta^{\alpha,(\text{rel}),-\o}_{h,e,p}$, depending on whether the coefficient contains $\partial_{\o}\hat g^{(k)}_{\o}(\underline{k}')$ or $\partial_{-\o}\hat g^{(k)}_{\o}(\underline{k}')$. Using Eqs. (\ref{eq:rot}), (\ref{eq:rot2}) we get:
\bea\label{eq:oo}
\beta^{\alpha,(\text{rel}),\o}_{h,e,p} &=& e^{-i\o \pi} \beta^{\alpha,(\text{rel}),\o}_{h,e,p} = 0\qquad \forall p\geq 1\nn\\
\beta^{\alpha,(\text{rel}),-\o}_{h,e,p} &=& 0\qquad \forall p\geq 1\;.
\eea
Therefore, this implies $\beta^{\alpha,(\text{rel})}_{h,e} = 0$, and concludes the inductive step for the proof of the bound on $\beta^{\alpha}_{h+1,e}$. Consider now $\beta^{\sharp}_{h+1,\m,e}$, $\sharp = c,s$. Repeating the previous argument, we get $\beta^{\sharp}_{h+1,\m,e} = \beta^{\sharp,(\text{rel})}_{h+1,\m,e} + \d\beta^{\sharp}_{h+1,e}$,  with $|\d\beta^{\sharp}_{h+1,\m,e}(z_{2})|\leq C |\l|^2 |Z_{1,\sharp,e,\m}(z_{2})| 2^{\theta (h+1)}$. We now notice that the Taylor coefficient $\beta^{\sharp,(\text{rel})}_{h+1,\m,e,p}$ is given by sums of Feynman graphs obtained contracting $4p$ fields, hence $2p$ propagators. We have:
\be
\beta^{\sharp,(\text{rel})}_{h+1,\m,e,p} = e^{-i\o\pi}\beta^{\sharp,(\text{rel})}_{h+1,\m,e,p} = 0\qquad \forall p\geq 1\;.
\ee
This implies $\beta^{\sharp, (\text{rel})}_{h+1,\m,e} = 0$, and concludes the inductive step for the proof of the bound on $\beta^{\sharp}_{h+1,\m,e}$. Finally, the proof of the last of Eq. (\ref{eq:beta}) follows from the observation that linear contribution in $\tilde \l_{e,k}$ to $\widetilde\beta^{\l,\text{(rel)}}_{h+1,\underline{e}}$ is vanishing: its $p$-th Taylor coefficients $\widetilde\beta^{\l,(\text{rel})}_{h+1,\underline{e, p}}$ is given by a sum of diagrams all containing propagators with the same chirality, which allows to repeat the previous argument and conclude that $\widetilde\beta^{\l,(\text{rel})}_{h+1,\underline{e}, p} = 0$ for all $p$. Thus, $\widetilde{\beta}^{\l,(\text{rel})}_{h+1,\underline{e}}$ is at least quadratic in $\widetilde{\l}_{k,\underline{e}}$, $k>h$; the bound in Eq. (\ref{eq:beta}) follows from $|k| e^{-2cL} \leq e^{-cL}$ for $L$ large enough. This concludes the proof.
\end{proof}

To conclude the section, let us show how to control the flow of the parameters $\nu_{k,e}$.

\begin{prop}\label{prp:nu}{\bf (The flow of the chemical potential.)} Suppose that the bounds (\ref{eq:rccbd}) hold true for all scales $k > h$. Then, under the same assumptions of Theorem \ref{thm:1}, there exist $\nu_{e}$, $|\n_{e}|\leq C'|\l|$, such that:
\be\label{eq:nu}
|\nu_{k,e}|\leq C 2^{\theta k}|\l|\;,\qquad \text{for all $k\geq h$,}
\ee
for some constants $C, \theta >0$ independent of $h$.
\end{prop}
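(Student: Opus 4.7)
The plan is to complete the induction by simultaneously controlling the flow of $\nu_{h,e}$ and choosing the bare counterterm $\nu_e$ via a fixed-point argument. The coupling $\nu_{h,e}$ is the only \emph{relevant} one, as the recursion (see the third line of (\ref{eq:flow}))
\begin{equation*}
\nu_{h,e} = 2\,\nu_{h+1,e} + 2\,\beta^{\nu}_{h+1,e}
\end{equation*}
amplifies $\nu$ by a factor of $2$ per scale toward the infrared; without fine-tuning, $|\nu_{h,e}|$ would blow up as $2^{-h}$. The freedom to adjust the bare $\nu_e$ (equivalently $\nu_{0,e}$) will be exploited to kill this exponential growth.

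First, I would establish the analogue of (\ref{eq:beta}) for $\beta^{\nu}$, namely $|\beta^{\nu}_{k,e}|\leq C|\lambda|^{2} 2^{\theta k}$, by the same scheme as in the proof of Proposition \ref{prp:flow}. Decompose $\beta^{\nu}_{h+1,e}=\beta^{\nu,(\leq -1)}_{h+1,e}+\delta\beta^{\nu}_{h+1,e}$ (short memory, Remark \ref{rem:sm}) and then $\beta^{\nu,(\leq -1)}_{h+1,e}=\beta^{\nu,(\text{rel})}_{h+1,e}+\delta\beta^{\nu,(\leq -1)}_{h+1,e}$ (continuity, Remark \ref{rem:cont}), so that only the purely relativistic piece has to be shown to vanish. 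The $p$-th Taylor coefficient in $\lambda_{h+1,e}$ is built from a two-point graph with $p$ quartic vertices, hence $2p-1$ internal propagators. Under the rotation $R_\alpha$, external momenta are held at zero and each chiral propagator picks up the phase $e^{-i\omega\alpha}$ by (\ref{eq:rot}), while the loop measure is rotation invariant; hence
\begin{equation*}
\beta^{\nu,(\text{rel})}_{h+1,e,p} \;=\; e^{-i\omega(2p-1)\alpha}\,\beta^{\nu,(\text{rel})}_{h+1,e,p},
\end{equation*}
and choosing $\alpha=\pi/(2p-1)$ gives $\beta^{\nu,(\text{rel})}_{h+1,e,p}=-\beta^{\nu,(\text{rel})}_{h+1,e,p}=0$ for all $p\geq 1$. (The $p=1$ tadpole is independently killed by the Wick ordering $(\rho-\tfrac12)w(\rho-\tfrac12)$ in $\mathcal H_L$.) This yields the claimed bound on $\beta^{\nu}$, conditional on $|\nu_{k,e}|\leq C|\lambda|2^{\theta k}$ for $k>h$ and on the other bounds (\ref{eq:rccbd}).

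Iterating the flow from scale $0$ down to scale $h$ gives
\begin{equation*}
\nu_{h,e} \;=\; 2^{-h}\nu_{0,e} \;+\; \sum_{k=h+1}^{0} 2^{k-h}\,\beta^{\nu}_{k,e},
\end{equation*}
so the growing mode $2^{-h}\nu_{0,e}$ can be exactly cancelled by choosing
\begin{equation*}
\nu_{0,e}^{*} \;=\; -\sum_{k=h_\beta+1}^{0} 2^{k}\,\beta^{\nu}_{k,e},
\end{equation*}
after which $\nu_{h,e}=-\sum_{k=h_\beta+1}^{h}2^{k-h}\beta^{\nu}_{k,e}$ is bounded by $C|\lambda|^{2}2^{\theta h}$ using the estimate on $\beta^{\nu}$. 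Since $\beta^{\nu}_{k,e}$ depends on the whole sequence $\bar\nu=\{\nu_{j,e}\}_{h_\beta\leq j\leq 0}$ through the RG (and on the other running couplings, which by Proposition \ref{prp:flow} are fully determined once $\bar\nu$ is fixed and satisfies the a priori bound), the above is a fixed-point equation for $\bar\nu$. I would set it up on the Banach space of sequences with norm $\|\bar\nu\|=\sup_{k}|\bar\nu_{k,e}|/(|\lambda|2^{\theta k})$. The estimate on $\beta^{\nu}$ shows that the associated map $T$ sends a ball of radius $O(|\lambda|)$ into itself; the continuity property (Remark \ref{rem:cont}) together with the explicit form of $T$ gives a Lipschitz constant $O(|\lambda|)$, so $T$ is a contraction for $|\lambda|$ small, uniformly in $\beta,L$ (here the hypothesis $2^{\theta h_\beta}\geq Ce^{-cL}$ is needed so that the $O(e^{-cL})$ remainders coming from $\widetilde\lambda_{k,\underline e}$-type interactions do not ruin the weighted bound at the lowest scale).

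\textbf{Main obstacle.} The delicate point is not the vanishing of $\beta^{\nu,(\text{rel})}$, which is a direct transcription of the argument in Proposition \ref{prp:flow}, but the circularity: the bound on $\beta^{\nu}_{k,e}$ uses (\ref{eq:rccbd}), which in turn is proven under the hypothesis that $|\nu_{k,e}|\leq C|\lambda|2^{\theta k}$. The fixed-point formulation above resolves this circularity by treating $\bar\nu$ as a single unknown in a Banach space where Proposition \ref{prp:flow} holds \emph{uniformly} on the ball; the analyticity in $\lambda$ of the fixed point, and hence of $\nu_e$, then follows from the analyticity of $T$ and the uniqueness of the contraction fixed point.
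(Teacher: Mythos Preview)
Your proposal is correct and follows essentially the same approach as the paper: both use the chiral rotation symmetry (\ref{eq:rot}) to show that the relativistic part of $\beta^{\nu}$ vanishes (the paper uses $\alpha=\pi/(2p)$ rather than your $\alpha=\pi/(2p-1)$, but both work), then iterate the flow equation, choose $\nu_e$ to cancel the growing mode, and close the argument by a contraction-mapping fixed point on the space of sequences $\{\nu_{k,e}\}$ with the weighted norm $\sup_k |\nu_{k,e}|/(|\lambda|2^{\theta k})$. The only cosmetic difference is that the paper states the weaker bound $|\beta^{\nu}_{k+1,e}|\leq C|\lambda|\,2^{\theta k}$ rather than your $C|\lambda|^{2}\,2^{\theta k}$, which does not affect the argument.
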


\begin{proof}
To begin, we notice that $|\beta^{\nu}_{k+1,e}| \leq C|\l| 2^{\theta k}$, for $\theta >0$. This follows from the fact that $\beta_{k+1,e,p}^{\nu(\text{rel})}$ contains $2p - 1$ propagators; therefore, proceeding as in the proof of Proposition \ref{prp:flow} $\beta_{k+1,e,p}^{\nu(\text{rel})} = e^{-i\o\pi(1 - \frac{1}{2p})}\beta_{k+1,e,p}^{\nu(\text{rel})} = 0$ for all $p\geq 1$. Then, we set: $\beta^{\nu}_{k+1,e} \equiv 2^{\theta k} \widetilde{\beta}^{\nu}_{k+1,e}$, with $|\widetilde{\beta}^{\nu}_{k+1,e}| \leq C|\l|$. We rewrite flow equation for $\nu_{k,e}$ as:
\be
\nu_{k,e} = 2 \nu_{k+1,e} + 2 2^{\theta k}\widetilde{\beta}^{\nu}_{k+1,e} \equiv 2^{-k} \n_{e} + \sum_{j=k}^{0} 2^{j-k+1} 2^{\theta j} \widetilde{\beta}^{\n}_{j+1,e}\;.
\ee
Choosing $\nu_{e} = -\sum_{j=h}^{0} 2^{j+1} 2^{\theta j} \widetilde{\beta}^{\n}_{j+1,e}$, we get:
\be\label{eq:fixpt}
\n_{k,e} = -\sum_{j=h}^{k} 2^{j-k+1}2^{\theta j} \widetilde{\beta}^{\nu}_{j+1,e}\;.
\ee
Eq. (\ref{eq:fixpt}) can be understood as a fixed point equation for the sequence $\{ \nu_{k,e},\, \ldots\,, \nu_{0,e} \}$. It is not difficult to show that Eq. (\ref{eq:fixpt}) defines a {\it contraction} in the space of sequences:
\be
\mathfrak{M}_{h} := \{ (\nu_{k,e})_{k=h}^{0} \mid |\nu_{k,e}|\leq C2^{\theta k} |\l| \}\;.
\ee
See, for instance, Section 4.1 of \cite{BM}. Then, the proof follows from a standard application of Banach contraction mapping principle.
\end{proof}

\subsection{Proof of Proposition \ref{prp:relref}}\label{sec:relref}

In this section we shall use the expansion discussed in the previous sections to prove Proposition \ref{prp:relref}, which is the main technical ingredient in the proof of Theorem \ref{thm:1}. The proof will be based on a comparison of the expansions for the lattice and reference models. In fact, the reference model can be constructed using RG methods similar to those introduced in the previous sections; we refer the reader to \cite{FM, BFMhub2} for details. Let us give a brief overview of the outcome of the RG analysis. The integration of the single scale fields is performed in an iterative way, this time starting from momenta on scale $|\underline{k}|\sim 2^{N}$, with $N>0$, down to the infrared scales. As a result, one has to control the flow of suitable running coupling constants, associated with the relevant and marginal terms appearing in the effective action of the reference model on any scale $h_{\beta} \leq h\leq N$. This has been done in \cite{FM}. The running coupling constants on scale $N$ are the bare couplings, $\l^{\text{(ref)}}$, $Z^{\text{(ref)}}$, $v^{\text{(ref)}}$, $Z^{\text{(ref)}}_{\sharp,\m}, Q^{(\text{ref})\pm}_{r}$. On scale $h<N$, the analog of the running coupling constants $\l_{h,e}$, $Z_{h,e}$, $v_{h,e}$, $Z_{h,\sharp,\m,e}, Q^{\pm}_{h,r,e}$ of the lattice model are denoted by $\l^{\text{(ref)}}_{h,\o}$, $Z^{\text{(ref)}}_{h,\o}$, $v^{\text{(ref)}}_{h,\o}$, $Z^{\text{(ref)}}_{h,\sharp,\m,\o}, Q^{(\text{ref})\pm}_{h,r,\o}$, for $\o = \pm$. By symmetry, there is no $\nu_{h,\o}^{\text{(ref)}}$ appearing in the iteration.

All these running coupling constants are {\it marginal}. Their flow in the ultraviolet regime, $1\leq h\leq N$, has been controlled in Theorem 2 of $\cite{FM}$. In the infrared regime, the flow of the running coupling constants can be controlled using the same argument of the proof of Proposition \ref{prp:flow}, to show the vanishing of the relativistic part of the beta function of the lattice model. We summarize these results in the following proposition.

\begin{prop}{\bf (The running coupling constants of the reference model.)}\label{prp:rccref} There exists $\bar \l>0$ such that for $|\l^{\text{(ref)}}|<\bar \l$ the following is true. The running coupling constants of the reference model on scale $0$ satisfy the bounds:
\bea\label{eq:uvref}
&&|\l^{\text{(ref)}} - \l^{\text{(ref)}}_{0, \o}|\leq C|\l^{\text{(ref)}}|^2\;,\qquad |Z^{\text{(ref)}} - Z^{\text{(ref)}}_{0, \o}|\leq C|\l^{\text{(ref)}}|^2\;,\nn\\
&&|Z^{\text{(ref)}}_{\sharp,\m}(z_{2}) - Z^{\text{(ref)}}_{0,\sharp,\m,\o}(z_{2})|\leq C|\l^{\text{(ref)}}|^2 |Z^{\text{(ref)}}_{\sharp,\m}|\;,\qquad | \o v^{\text{(ref)}} - v^{\text{(ref)}}_{0,\o}|\leq C|\l^{\text{(ref)}}|\nn\\ &&|Q^{\text{(ref)}\pm}_{r}(x_{2}) - Q^{\text{(ref)}\pm}_{0,r,\o}(x_{2})|\leq C|\l^{\text{(ref)}}| |Q^{\text{(ref)}\pm}_{r}(x_{2})|\;.\nn\\
\eea
Also, for all $h\leq 0$:
\bea\label{eq:irref}
&&\l^{\text{(ref)}}_{h, \o} = \l^{\text{(ref)}}_{0,\o}\;,\qquad Z^{\text{(ref)}}_{h,\o} = Z^{\text{(ref)}}_{0,\o}\;,\qquad v^{\text{(ref)}}_{h,\o} = v^{\text{(ref)}}_{0,\o}\;,\nn\\ &&Z^{\text{(ref)}}_{h,\sharp,\m,\o} = Z^{\text{(ref)}}_{0,\sharp,\m,\o}\;,\qquad Q^{\text{(ref)}\pm}_{h,r,\o}(x_{2}) = Q^{\text{(ref)}\pm}_{0,r,\o}(x_{2})\;.
\eea
Moreover, the running coupling constants are analytic in $\l^{\text{(ref)}}$ and are smooth functions of their bare counterparts, with first derivatives bounded away from zero uniformly $\l^{\text{(ref)}}$.
\end{prop}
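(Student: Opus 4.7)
The plan is to construct the effective theory of the reference model by the same kind of multiscale integration used for the lattice model in Sections \ref{sec:fint}--\ref{sec:RG}, with the natural splitting into an ultraviolet (UV) regime $1\le h\le N$ and an infrared (IR) regime $h\le 0$. Each of the two regimes contributes one of the two sets of bounds in the statement: the UV flow produces (\ref{eq:uvref}), the IR flow produces (\ref{eq:irref}).

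First I would dispose of the UV. The single-scale relativistic propagator on scale $h\in[1,N]$ obeys exactly the dimensional bounds (\ref{eq:bdGg}), so the tree expansion of Section \ref{sec:tree} gives, scale by scale, a bound of order $C|\lambda^{(\text{ref})}|^2\,2^{-\theta h}$ on each increment of the marginal couplings $\lambda^{(\text{ref})}_{h,\o}$, $v^{(\text{ref})}_{h,\o}$, $Z^{(\text{ref})}_{h,\o}$, and a bound of the same geometric type, weighted by the bare source strengths, on the increments of $Z^{(\text{ref})}_{h,\sharp,\mu,\o}(z_2)$ and $Q^{(\text{ref})\pm}_{h,r,\o}(x_2)$. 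Summing the geometric series from $N$ down to $1$ then yields (\ref{eq:uvref}) uniformly in $N$. This UV analysis is carried out in detail in Theorem 2 of \cite{FM}, to which I would just refer.

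The core of the proof is the IR step, $h\le 0$, where I need to show that the beta function vanishes identically. I would repeat, now in an \emph{exact} form, the chirality cancellation argument already used in the proof of Proposition \ref{prp:flow}. The IR propagator is literally the chiral relativistic one, so $\hat g^{(k)}_{\o}(R_\alpha\underline{k})=e^{-i\o\alpha}\hat g^{(k)}_{\o}(\underline{k})$ holds without any remainder, and the interaction kernel at zero momentum is rotationally invariant. Expanding each marginal beta function in powers of the running coupling and counting the number of propagators in the corresponding Feynman graphs (as in (\ref{eq:oo})--(\ref{eq:rot2})), one gets a rotation phase $e^{-i\o\pi(1-1/p)}$ for the four-fermion beta function, $e^{-i\o\pi}$ for the two-fermion and source beta functions, and $e^{-2i\o\alpha}$ for those containing a single $\partial_\o$ derivative. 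In every case the phase forces the coefficient to equal its own nontrivial rotation, hence to vanish, and (\ref{eq:irref}) follows. The essential difference with respect to Proposition \ref{prp:flow} is that no correction terms $\delta\beta$ survive here, because the reference model is relativistic to begin with: there are no $r_{h,e}$, no velocity mismatches between chiralities, no irrelevant operators, and no lattice-induced remainders to estimate.

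Finally, analyticity in $\lambda^{(\text{ref})}$ is inherited from the convergence of the tree expansion once the uniform bounds above are in hand, and the smooth dependence on the bare parameters with Jacobian bounded away from zero follows by viewing the flow as the composition of $N$ one-step maps, each analytic and each reducing to the identity at $\lambda^{(\text{ref})}=0$; a uniform-in-$N$ implicit function argument on top of (\ref{eq:uvref}) does the job. I expect the delicate point of the whole scheme to be the uniform-in-$N$ control in the UV, since the IR part, once formulated, reduces to the symmetry cancellation already mastered for the lattice model; the novelty on the IR side is only that the cancellation is now complete rather than approximate.
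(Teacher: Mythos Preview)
Your proposal is correct and follows essentially the same approach as the paper: the UV bounds (\ref{eq:uvref}) are referred to Theorem~2 of \cite{FM}, the IR stationarity (\ref{eq:irref}) is obtained by repeating the rotation/chirality cancellation argument of Proposition~\ref{prp:flow} (now exact, since the reference model is purely relativistic), and analyticity/smoothness is inherited from the UV analysis in \cite{FM}. The paper gives no further details beyond this, so your sketch is in fact more explicit than the paper's own justification.
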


The proof of Eq. (\ref{eq:uvref}) is given by Theorem 2 of \cite{FM}, while the proof of Eq. (\ref{eq:irref}) is a repetition of the argument used to prove the vanishing of the relativistic part of the beta function for the lattice model. Finally, the statement about differentiability of the running coupling constants is a byproduct of the proof of Theorem 2 of \cite{FM}.

To conclude the introduction to this section, let us briefly anticipate how the reference model will be used to describe the lattice model. Let $e = (1,\s)$ be the edge states localized around the $x_{2} = 0$ edge. By the implicit function theorem, we will choose the bare parameters of the reference model in such a way that the running coupling constants of the reference model parametrized by $\o = \text{sgn}(v_{e})$ match the running coupling constants of the lattice model on the smallest possible scale $h = h_{\beta}$. For this choice of bare parameters, we will be able to prove that the asymptotic behavior of the correlations of the lattice model is captured by the correlations of the reference model.

\medskip

\noindent{\it Proof of Proposition \ref{prp:relref}.} The proof is based on the tree expansion of Section \ref{sec:tree}. Let us start by proving the first of (\ref{eq:latWI}). The starting point is to separate the trees with endpoints on scale $\leq -1$ from the trees having at least one endpoint on scale $\geq 0$. Let us denote by $\mathcal{T}^{2(\leq- 1)}_{h,n}$ the former set of trees. Setting $\o := \text{sgn}(v_{(1,\s)})$ and $k_{F}^{\o} := k_{F}^{(1,\s)}$, we have:
\be\label{eq:2ptb}
\pmb{\langle} {\bf T} \hat a^{-}_{\underline{k}' + \underline{k}_{F}^{\o},x_{2}, r} \hat a^{+}_{\underline{k}' + \underline{k}_{F}^{\o},y_{2}, r'} \pmb{\rangle}_{\beta, L} = \sum_{n\geq 0}\sum_{h_{\beta} \leq h \leq -1} \sum^*_{\tau \in \mathcal{T}^{2(\leq -1)}_{h,n}} \sum^{*}_{{\bf P} \in \mathcal{P}_{\tau}} \sum_{T \in {\bf T}}  \widehat{W}^{(h)}_{\tau, {\bf P}, T}(\underline{k}'; x_{2}, y_{2}) + \mathcal{E}^{(1)}_{2,0}(\underline{k}'; x_{2}, y_{2})
\ee
where the asterisk on the ${\bf P}$ sum recalls that $P_{v_{0}} = P_{v_0}^{\phi} = \{ f_{1}, f_{2} \}$, $\e(f_{1}) = -\e(f_{2}) = -$, $\underline{k}(f_{1}) = \underline{k}(f_{2}) = \underline{k}' + \underline{k}_{F}^{e}$, $r(f_{1}) = r$, $r(f_{2}) =  r'$, $x_{2}(f_{1}) =x_{2}$, $x_{2}(f_{2}) = y_{2}$. The asterisk on the $\tau$ sum also recalls that only trees decorated with $e = (1,\s)$ labels appear in the sum; the contributions coming from the edge state $(2,\s)$ are collected in the error term. The dimensional bound for the two-point function that can be proven via the tree expansion of Section \ref{sec:treesc} is $O(2^{-h_{\underline{k}'}} |x_{2} - y_{2}|^{-n})$, with $h_{\underline{k}'} = \lfloor \log_{2}|\underline{k}'| \rfloor$. Therefore, by the short memory property of the GN trees (see Remark \ref{rem:sm}), using also $|\tilde \l_{h,\underline{e}}|\leq C|\l|e^{-cL}\leq C|\l|2^{\theta h}$, it is not difficult to see that $|\mathcal{E}^{(1)}_{2,0}(\underline{k}'; x_{2}, y_{2}) | \leq C_{n}2^{\theta h_{\underline{k}'}} 2^{-h_{\underline{k}'}}/(1+ |x_{2} - y_{2}|^{n}) $ for some $\theta >0$. Let us now focus on the first term in the right-hand side of Eq. (\ref{eq:2ptb}). The goal is to rewrite it as the contribution coming from a suitable reference model, plus subleading terms in $\underline{k}'$. We consider a reference model with bare parameters chosen so that:
\be\label{eq:refpar}
\l^{\text{(ref)}}_{h_{\beta}, \o}= \l_{h_{\b},e}\;,\qquad Z^{\text{(ref)}}_{h_{\beta}, \o} = Z_{h_{\beta}, e}\;,\qquad v^{\text{(ref)}}_{h_{\beta},\o} = v_{h_{\beta}, e}\;,\qquad Q^{\text{(ref)}\pm}_{h_{\beta}, r, \o}(x_{2}) = Q^{\pm}_{h_{\beta}, r, e}(\underline{k}_{F}^{e}, x_{2})\;.
\ee
This can be done by the implicit function theorem, thanks to the fact that all running coupling constants are nonconstant, differentiable functions of their bare counterparts, recall Proposition \ref{prp:rccref}. Notice that, by Eq. (\ref{eq:irref}), the flow of the beta function of the reference model is trivial on scales $h\leq 0$. Then, by Eqs. (\ref{eq:speed}) and by Proposition \ref{prp:rccref} we get:
\bea
&&|\l^{\text{(ref)}}_{k, \o} - \l_{k,e}| \leq C|\l|^22^{\theta k}\;,\qquad \Big| \frac{Z_{k-1,e}}{Z_{k,e}} - 1 \Big|\leq C|\l|^22^{\theta k}\;,\\  
&& |v^{\text{(ref)}}_{k,\o} - v_{k,e}|\leq C|\l| 2^{\theta k}\;,\qquad |Q^{\text{(ref)}\pm}_{k,r,\o}(x_{2}) - Q^{\pm}_{k,r,e}(\underline{k}_{F}^{e}, x_{2})|\leq C|\l| 2^{\theta k} Q^{\pm}_{1,r,e}(\underline{k}_{F}^{e}, x_{2})\;.\nn
\eea
Correspondingly, we write $\hat g^{(k)}_{e}(\underline{k}') = \hat g^{(k)}_{e,\text{ref}}(\underline{k}') + \delta \hat g^{(k)}_{e}(\underline{k}')$, where $\hat g^{(k)}_{e,\text{ref}}(\underline{k}')$ is the single-scale propagator of the reference model, given by Eq. (\ref{eq:grel}) with $v_{k,e}$ replaced by $v_{h_{\beta},e}$, and where $|\delta \hat g^{(k)}_{e}(\underline{k}')|\leq C2^{\theta k} 2^{-k}$. Therefore,
\bea\label{eq:2pt2}
&&\pmb{\langle} {\bf T} \hat a^{-}_{\underline{k}' + \underline{k}_{F}^{e},x_{2}, r} \hat a^{+}_{\underline{k}' + \underline{k}_{F}^{e},y_{2}, r'} \pmb{\rangle}_{\beta,L}\nn\\&&\qquad = \sum_{n\geq 0}\sum_{h_{\beta} \leq h \leq 1} \sum_{\tau \in \mathcal{T}^{2(\leq -1)}_{h,n}} \sum^{*}_{{\bf P} \in \mathcal{P}_{\tau}} \sum_{T \in {\bf T}}  \widehat{W}^{(h)(\text{ref})}_{\tau, {\bf P}, T}(\underline{k}'; x_{2}, y_{2}) + \sum_{\alpha = 1,2} \mathcal{E}^{(\alpha)}_{2,0}(\underline{k}'; x_{2}, y_{2})\;,
\eea
where $\widehat{W}^{(h)(\text{ref})}_{\tau, {\bf P}, T}(\underline{k}'; x_{2}, y_{2})$ is the value of the tree of a reference model specified by the previous discussion, and $\mathcal{E}^{(2)}_{2,0}(\underline{k}'; x_{2}, y_{2})$ is a new error term, collecting all the errors introduced with the replacement of the running coupling constants and the propagators with those of the reference model. By the continuity of GN trees, Remark \ref{rem:cont}, $|\mathcal{E}^{(2)}_{2,0}(\underline{k}'; x_{2}, y_{2}) | \leq C_n 2^{\theta h_{\underline{k}'}} 2^{-h_{\underline{k}'}} /(1+ |x_{2} - y_{2}|^{n})$. Therefore, we write:
\bea\label{eq:zom}
&&\sum_{n\geq 0}\sum_{h_{\beta} \leq h \leq -1} \sum_{\tau \in \mathcal{T}^{2(\leq -1)}_{h,n}} \sum^{*}_{{\bf P} \in \mathcal{P}_{\tau}} \sum_{T \in {\bf T}}  \widehat{W}^{(h)(\text{ref})}_{\tau, {\bf P}, T}(\underline{k}'; x_{2}, y_{2})\nn\\&&\qquad = Q^{\text{(ref)}+}_{r}(x_{2})Q^{(\text{ref})-}_{r'}(y_{2}) \pmb{\langle} \hat{\psi}^{-}_{\underline{k}',{\o},\s} \hat{\psi}^{+}_{\underline{k}',\o,\s'} \pmb{\rangle}^{\text{(ref)}}_{\beta, L} + \mathcal{E}^{(3)}_{2,0}(\underline{k}'; x_{2}, y_{2})
\eea
where $\mathcal{E}^{(3)}_{2,0}(\underline{k}'; x_{2}, y_{2})$ is the error term collecting all trees of the reference model with at least one endpoint on scale $\geq 0$. Thus, by the short-memory of the GN trees of the reference model and the decay of $Q^{\pm}$ functions, $|\mathcal{E}^{(3)}_{2,0}(\underline{k}'; x_{2}, y_{2})|\leq C_n2^{\theta h_{\underline{k}'}} 2^{-h_{\underline{k}'}}/(1 + |x_{2} - y_{2}|^{n})$. All in all, by Eqs. (\ref{eq:2pt2}), (\ref{eq:zom}), using that $\pmb{\langle} \hat{\psi}^{-}_{\underline{k}',{\o},\s} \hat{\psi}^{+}_{\underline{k}',\o,\s'} \pmb{\rangle}^{\text{(ref)}}_{\beta, L} \simeq 2^{-h_{\underline{k}'}}$:
\be
\pmb{\langle} {\bf T} \hat a^{-}_{\underline{k}' + \underline{k}_{F}^{\o},x_{2}, r} \hat a^{+}_{\underline{k}' + \underline{k}_{F}^{\o},y_{2}, r'} \pmb{\rangle}_{\beta,L} = \pmb{\langle} \hat{\psi}^{-}_{\underline{k}',{\o},\s} \hat{\psi}^{+}_{\underline{k}',\o,\s'} \pmb{\rangle}^{\text{(ref)}}_{\beta, L} [Q^{\text{(ref)}+}_{r}(x_{2})Q^{(\text{ref})-}_{r'}(y_{2})+ O(2^{\theta h_{\underline{k}'}} |x_{2} - y_{2}|^{-n} )]
\ee
This proves the first of Eq. (\ref{eq:latWI}). Consider now the second of Eq. (\ref{eq:latWI}). Proceeding as in Eqs. (\ref{eq:2ptb})--(\ref{eq:2pt2}) we get, for $\m=0,1$, $\|\underline{k}'\| = \kappa$, $\| \underline{k}' + \underline{p} \|\leq \kappa$, $\| \underline{p} \| \ll \kappa$ and $\kappa$ small enough:
\bea\label{eq:3decomp}
&& \pmb{\langle} {\bf T} \hat j^{\sharp}_{\m, \underline{p}, z_{2}}\,; \hat a^{-}_{\underline{k}' + \underline{k}_{F}^{\o},x_{2}, r} \hat a^{+}_{\underline{k}' + \underline{p} + \underline{k}_{F}^{\o},y_{2}, r'} \pmb{\rangle}_{\beta, L } \\&&= \sum_{n\geq 0}\sum_{h_{\beta} \leq h \leq -1} \sum_{\tau \in \mathcal{T}^{2(\leq -1)}_{h,n}} \sum^{**}_{{\bf P} \in \mathcal{P}_{\tau}} \sum_{T \in {\bf T}}  \widehat{W}^{(h)(\text{ref})}_{\tau, {\bf P}, T}(\underline{k}', \underline{p}; x_{2}, y_{2}, z_{2})  + \sum_{\alpha = 1,2} \mathcal{E}^{(\alpha)}_{2,1}(\underline{k}',\underline{p}; x_{2}, y_{2}, z_{2})\nn
\eea
where the double asterisk recalls the constraint $P_{v_{0}} = P_{v_0}^{A} \cup P_{v_0}^{\phi}$, with $P_{v_{0}}^{\phi}$ as after (\ref{eq:2ptb}), and $P_{v_0}^{A} = \{ f_{3} \}$ with $\m(f_{3}) = \m$, $\underline{p}(f_{3}) = \underline{p}$, $x_{2}(f_{3}) = z_{2}$. The first term in the right-hand side of Eq. (\ref{eq:3decomp}) corresponds to a reference model specified by the conditions (\ref{eq:refpar}), plus the presence of a source term coupled to the $A$ field, with a suitable $z_{2}$-dependent bare parameter $Z^{(\text{ref})}_{\sharp,\m}(z_{2})$, chosen so that $Z^{(\text{ref})}_{h_{\beta}, \sharp,\m, \o}(z_{2}) = Z_{h_{\beta}, \sharp,\m, e}(z_{2})$, which can be done by the implicit function theorem; also, by Eq. (\ref{eq:speed}) and by Proposition \ref{prp:rccref} we get, for $\m = 0, 1$, $|Z_{k,\sharp,\m,e}(z_{2}) - Z^{\text{(ref)}}_{k, \sharp, \m, \o}(z_{2})| \leq C|\l| 2^{\theta k} |Z_{1, \sharp, \m, \o}(z_{2})|$. The error term $\mathcal{E}^{(1)}_{2,1}$ takes into account the trees containing at least one $\widetilde{\l}_{h_{v},\underline{e}}$ endpoint, and the trees with at least one endpoint on scale $\geq 0$, while the error term $\mathcal{E}^{(2)}_{2,1}$ contains the contributions arising from the replacement of the running coupling constants and of the propagators with those of the reference model.  Notice that in the dimensional bound for the tree expansion of this Schwinger function all vertices $v$ have negative scaling dimension except for those corresponding to monomials $A\phi^{+}\psi^{-}$, $A\psi^{+}\phi^{-}$, which have zero scaling dimension. The constraint on the external momenta forces the number of vertices of this type to be of order $1$, therefore no logarithmic divergence arises in the sum over the scale labels.  Proceeding as in Eqs. (\ref{eq:2ptb})--(\ref{eq:zom}), we have:
\bea
&&\sum_{n\geq 0}\sum_{h_{\beta} \leq h \leq 1} \sum_{\tau \in \mathcal{T}^{2(\leq -1)}_{h,n}} \sum^{**}_{{\bf P} \in \mathcal{P}_{\tau}} \sum_{T \in {\bf T}}  \widehat{W}^{(h)(\text{rel})}_{\tau, {\bf P}, T}(\underline{k}', \underline{p}; x_{2}, y_{2}, z_{2})\\ && \qquad = Z^{\text{(ref)}}_{\sharp,\m}(z_{2})Q^{(\text{ref})+}_{r, \o}(x_{2})Q^{(\text{ref})-}_{r',\o}(y_{2}) \pmb{\langle} \hat n_{\underline{p}, \o}\,;  \hat{\psi}^{-}_{\underline{k}',{\o}} \hat{\psi}^{+}_{\underline{k}',\o} \pmb{\rangle}^{\text{(ref)}}_{\beta,  L} + \mathcal{E}^{(3)}_{2,1}(\underline{k}',\underline{p}; x_{2}, y_{2}, z_{2})\;.\nn
\eea
where $\mathcal{E}^{(3)}$ takes into account the trees of the reference model with at least one endpoint on scale $\geq 0$. The fast decay of $Z_{h_{\beta},\sharp,\m,e}(z_{2})$, together with the bound in Eq. (\ref{eq:speed}) and Proposition \ref{prp:rccref}, implies that $|Z^{(\text{ref})}_{\sharp,\m}(z_{2})|\leq C_{n}/(1 + |z_{2}|_{e}^{n})$. By the short-memory and the continuity of the GN trees, the error terms are bounded as: $|\mathcal{E}^{(\alpha)}_{2,1}(\underline{k},\underline{p}; x_{2}, y_{2}, z_{2})|\leq C_{n}\kappa^{-2 + \theta}/(1 + d(x_{2}, y_{2}, z_{2})^{n})$. Using also that $\pmb{\langle} \hat n_{\underline{p}, \o}\,;  \hat{\psi}^{-}_{\underline{k}',{\o}} \hat{\psi}^{+}_{\underline{k}',\o} \pmb{\rangle}^{\text{(ref)}}_{\beta,  L}\simeq 2^{-2h_{\underline{k}'}}$, we get:
\bea
&&\pmb{\langle} {\bf T} \hat j^{\sharp}_{\m, \underline{p}, z_{2}}\,; \hat a^{-}_{\underline{k}' + \underline{k}_{F}^{\o},x_{2}, r} \hat a^{+}_{\underline{k}' + \underline{p} + \underline{k}_{F}^{\o},y_{2}, r'} \pmb{\rangle}_{\beta, L }\\ &&= \pmb{\langle} \hat n_{\underline{p}, \o}\,;  \hat{\psi}^{-}_{\underline{k}',{\o}} \hat{\psi}^{+}_{\underline{k}',\o} \pmb{\rangle}^{\text{(ref)}}_{\beta,  L}\Big[ Z^{\text{(ref)}}_{\sharp,\m}(z_{2})Q^{(\text{ref})+}_{r, \o}(x_{2})Q^{(\text{ref})-}_{r',\o}(y_{2}) + O(\kappa^{\theta} d(x_{2}, y_{2}, z_{2})^{-n}) \Big]\;.\nn
\eea
This proves the second of (\ref{eq:latWI}). To conclude, let us prove the last one of Eq. (\ref{eq:latWI}). Proceeding as for the previous two Schwinger functions one gets, for $\m,\n=0,1$:
\be
\pmb{\langle} {\bf T} \hat j^{\sharp}_{\m, \underline{p}, x_{2}}\,; \hat j^{\sharp'}_{\n, -\underline{p}, y_{2}} \pmb{\rangle}_{\beta, L} = Z^{(\text{ref})}_{\sharp,\m}(x_{2})Z^{(\text{ref})}_{\sharp',\n}(y_{2}) \pmb{\langle} \hat n^{\sharp}_{\underline{p}, \o}\,; \hat n^{\sharp'}_{-\underline{p},\o}  \pmb{\rangle}_{\beta, L}^{(\text{ref})} + \sum_{\alpha = 1,2,3} \mathcal{E}^{(\alpha)}_{0,2}(\underline{p}; x_{2}, y_{2})\;,
\ee
where the error terms have the same meaning as in the previous discussions. We claim that, for some $\theta >0$:
\be\label{eq:bdE}
|\mathcal{E}^{(\alpha)}_{0,2}(\underline{p}; x_{2}, y_{2})| \leq \frac{C_{n}}{1+|x_{2} - y_{2}|^{n}}\;,\quad | \mathcal{E}^{(\alpha)}_{0,2}(\underline{p}; x_{2}, y_{2}) - \mathcal{E}^{(\alpha)}_{0,2}(\underline{0}; x_{2}, y_{2}) | \leq \frac{C_{n} |\underline{p}|^{\theta}}{1 + |x_{2} - y_{2}|^{n}}\;,\quad \forall n\in \mathbb{N}\;.
\ee
These bounds imply the estimates for the error term in the last one of Eq. (\ref{eq:latWI}). To prove them, we proceed as follows. Consider the error terms in coordinate space, $\widecheck{\mathcal{E}}^{(\alpha)}_{0,2}(\xx,\yy)$. Forgetting about the dimensional gains coming from the short-memory and the continuity properties of the GN trees, one gets, proceeding as in Section 2.2.7, of \cite{GMPcond}, Eqs. (2.79)--(2.84), $|\widecheck{\mathcal{E}}^{(\alpha)}_{0,2}(\xx,\yy)|\leq C_{n}\|\underline{x} - \underline{y}\|_{\beta, L}^{-2} |x_{2} - y_{2}|^{-n}$. This bound is not even enough to prove uniform boundedness in $\underline{p}$ of the Fourier transform. Instead, exploiting the dimensional gain $\sim 2^{\tilde \theta h}$ for the trees with root on scale $h$ contributing to $\widecheck{\mathcal{E}}^{(\alpha)}_{0,2}(\xx,\yy)$, implied by the short-memory and the continuity of the GN trees, one gets the improved bound: $|\widecheck{\mathcal{E}}^{(\alpha)}_{0,2}(\xx,\yy)|\leq C_{n}\|\underline{x} - \underline{y}\|_{\beta, L}^{-2-\tilde \theta} |x_{2} - y_{2}|^{-n}$. This bound can be used to prove the Lipshitz continuity in $\underline{p}$ of $|x_{2} - y_{2}|^{n}\mathcal{E}^{(\alpha)}_{0,2}(\underline{p}; x_{2}, y_{2})$ with exponent $0<\theta<\tilde \theta$. This proves the bounds (\ref{eq:bdE}). Finally, the existence of the $\beta, L\to\infty$ limit of the correlation function can be proven as in Lemma 2.6 of \cite{BFMhub1}. This concludes the proof of Proposition \ref{prp:relref}.\qed

\medskip

\noindent{\bf Acknowledgements.} The work of G. A. and M. P. has been carried out thanks to the support of the NCCR SwissMap. Furthermore, the work of G. A. and of M. P. has been supported by the Swiss National Science Foundation via the grant ``Mathematical Aspects of Many-Body Quantum Systems''. V. M. has received funding from the European Research Council (ERC) under the European Union's Horizon 2020 research and innovation programme (ERC CoG UniCoSM, grant agreement n.724939). V. M. also acknowledges support from the Gruppo Nazionale per la Fisica Matematica (GNFM).

\appendix

\section{Details on the integration of the massive modes}\label{app:1d}

\subsection{Bounds on the propagators}\label{app:bulk}

In this Appendix we shall discuss some details of the proof of Proposition \ref{prp:1d}. We will prove two key technical results, that allow to prove Proposition \ref{prp:1d} by repeating the analysis of Section 5.2 of \cite{GMP}. We start with the proof of Eq. (\ref{eq:gbound}).

\begin{prop}{\bf (Decay of the bulk propagator.)}\label{prp:decay} For any $n\in \mathbb{N}$ there exists $C_{n}>0$, independent of $\beta, L, N, \d$ and $c>0$, independent of $\beta, L, N, \d, n$ such that:
\be\label{eq:gbd}
| g^{(\text{bulk})}_{\beta,L,N}(\xx, r; \yy, r')| \leq \frac{C_{n} \d^{-2}}{1 + (\d \| \underline{x} - \underline{y} \|_{\beta, L})^{n}} e^{-c \d |x_{2} - y_{2}|}\qquad \forall\, \xx,\yy\;.
\ee
\end{prop}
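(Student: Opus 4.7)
The plan is to treat the two summands of (\ref{eq:gbulk}) separately and, for each of them, to combine a Combes--Thomas estimate in $x_{2}$ with summation by parts in $(k_{0},k_{1})$, thereby splitting the required decay into its two factors. The key spectral observation is that, on the support of the cutoffs, $\hat H(k_{1})-\m$ has spectrum separated from zero by a constant times $\d$ on the relevant subspace: in the first summand this subspace is $\mathrm{Ran}\,P^{e}_{\perp}(k_{1})$ and the gap follows from Assumption~1 (only the edge state $e$ is close to $\m$ in a neighbourhood of $k_{F}^{e}$, all the other eigenvalues of $\hat H(k_{1})$ staying at a uniform distance from $\m$); in the second summand $\chi_{\geq}(k_{1})>0$ forces every edge eigenvalue to remain at distance $\geq \d/2$ from $\m$, while the bulk spectrum is separated from $\m$ by the bulk gap.

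For fixed $(k_{0},k_{1})$ in the support of the cutoffs, I would use that $\hat H(k_{1})$ is tridiagonal in $x_{2}$ (as displayed in (\ref{eq:Hk1})) to apply a standard Combes--Thomas estimate: conjugation by $e^{\a x_{2}}$ with $\a=c'\d$ perturbs the operator by $O(\d)$, which keeps $ik_{0}+\m$ at distance $\gtrsim\sqrt{k_{0}^{2}+\d^{2}}$ from the spectrum of the conjugated operator on the relevant subspace. This yields
\[
\Big|\big\langle x_{2},r\big|(-ik_{0}+\hat H(k_{1})-\m)^{-1}\Pi\big|y_{2},r'\big\rangle\Big|\leq \frac{C}{\sqrt{k_{0}^{2}+\d^{2}}}\,e^{-c\d|x_{2}-y_{2}|},
\]
with $\Pi$ equal to $P^{e}_{\perp}(k_{1})$ in the first summand and to the identity in the second. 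The Matsubara sum in $k_{0}$ against $\chi_{N}(k_{0})$ can then be evaluated eigenvalue-by-eigenvalue, producing a Fermi factor of size $O(1)$ uniformly in $N,\b$ (because every relevant eigenvalue of $\hat H(k_{1})-\m$ stays at distance $\gtrsim \d$ from zero); integrating over $k_{1}$ on a set of measure $O(\d)$ in the first summand and $O(1)$ in the second gives the $n=0$ estimate $|g^{(\mathrm{bulk})}_{\b,L,N}|\leq C_{0}\,e^{-c\d|x_{2}-y_{2}|}$, which is in particular bounded by $C_{0}\d^{-2}\,e^{-c\d|x_{2}-y_{2}|}$, as required.

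The polynomial decay in $\underline{x}-\underline{y}$ is then produced by discrete summation by parts in $(k_{0},k_{1})$: transferring a difference operator from the oscillatory exponential onto the integrand costs an $O(\d^{-1})$ factor per derivative (from $\partial_{k_{1}}\chi_{e}$, $\partial_{k_{0}}\chi_{N}$ and derivatives of the fiber resolvent on the gapped subspace, whose operator norms are bounded by appropriate powers of the inverse spectral distance) and gains a factor $|x_{\m}-y_{\m}|^{-1}$. After $n$ iterations one acquires an additional $(\d\,|\underline{x}-\underline{y}|)^{-n}$; taking the minimum with the $n=0$ bound yields the stated inequality, while the torus distance $\|\cdot\|_{\b,L}$ arises by picking, at each step, the nearest (anti)periodic representative of $\underline{x}-\underline{y}$ on the Matsubara/Brillouin lattice.

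The main obstacle I anticipate is the Combes--Thomas step in the presence of the non-local projector $P^{e}_{\perp}(k_{1})$, since conjugation by $e^{\a x_{2}}$ does not preserve this spectral projection. The clean way around this is to write
\[
\frac{P^{e}_{\perp}(k_{1})}{-ik_{0}+\hat H(k_{1})-\m}=\frac{1}{-ik_{0}+\hat H(k_{1})-\m}-\frac{|\xi^{e}(k_{1})\rangle\langle\xi^{e}(k_{1})|}{-ik_{0}+\e_{e}(k_{1})-\m},
\]
apply Combes--Thomas directly to the full fiber resolvent (the apparent small denominator at $k_{0}\to 0$, $\e_{e}(k_{1})\to\m$ being cancelled exactly by the subtracted rank-one piece), and treat the subtracted term separately using the exponential decay of $\xi^{e}(k_{1})$ in $x_{2}$ encoded in (\ref{eq:dec1}).
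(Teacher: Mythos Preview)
Your overall strategy matches the paper's almost exactly: Combes--Thomas for the exponential decay in $|x_{2}-y_{2}|$, discrete integration by parts in $(k_{0},k_{1})$ for the polynomial decay in $\|\underline x-\underline y\|_{\b,L}$, and the $n=0$ case handled via the approximate oddness of the resolvent in $k_{0}$ (your ``Fermi factor'' remark). The paper carries out precisely these steps, proving first the pointwise bound $|\partial_{\underline k}^{n}\hat g^{(\mathrm{bulk})}|\le C_{n}(k_{0}^{2}+\d^{2})^{-(n+1)/2}e^{-c\d|x_{2}-y_{2}|}$ and then summing by parts against $d_{L}(x_{1})^{n_{1}}d_{\b}(x_{0})^{n_{0}}$.

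The one genuine gap is in your proposed workaround for the projector $P_{\perp}^{e}(k_{1})$. Writing the projected resolvent as the full resolvent minus the rank-one edge piece is an identity, but bounding the two terms \emph{separately} does not give what you need: when $\e_{e}(k_{1})$ is close to $\m$ and $k_{0}$ is small, Combes--Thomas on the full $\hat H(k_{1})$ only yields decay at rate $\sim\mathrm{dist}(ik_{0}+\m,\s(\hat H(k_{1})))$, which can be arbitrarily small; the cancellation of prefactors between the two pieces is real, but it does not restore the lost exponential \emph{rate}. The paper avoids this by a different manoeuvre: it rewrites
\[
\frac{P_{\perp}^{e}(k_{1})}{-ik_{0}+\hat H(k_{1})-\m}
=\frac{P_{\perp}^{e}(k_{1})}{-ik_{0}+\hat H_{\perp}^{e}(k_{1})-\m},\qquad
\hat H_{\perp}^{e}(k_{1}):=P_{\perp}^{e}(k_{1})\hat H(k_{1})P_{\perp}^{e}(k_{1}),
\]
and applies Combes--Thomas directly to $\hat H_{\perp}^{e}(k_{1})$. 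This operator is gapped by $\d$ on the support of $\chi_{e}$, and its kernel still decays exponentially in $|x_{2}-y_{2}|$ because $P_{\perp}^{e}=1-|\xi^{e}\rangle\langle\xi^{e}|$ does (by the edge-state decay (\ref{eq:dec1})). Combes--Thomas for operators with exponentially decaying hopping then gives the uniform bound $C(k_{0}^{2}+\d^{2})^{-1/2}e^{-c\d|x_{2}-y_{2}|}$ directly, and the same trick propagates through the $k_{1}$-derivatives.
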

\begin{proof} Recall Eq. (\ref{eq:gbulk}),
\bea\label{eq:gbd0}
\hat g^{\text{(bulk)}}_{\beta, L,N}(\underline{k}; x_{2}, r; y_{2},  r') &=& \sum_{e=1}^{n_{\text{edge}}} \Big(\frac{\chi_{N}(k_{0})\chi_{e}(k_{1})}{-ik_{0} + \hat H(k_{1}) - \m}P_{\perp}^{e}(k_{1})\Big)(x_{2},r; y_{2},  r')\nn\\
&& +  \Big(\frac{\chi_{N}(k_{0}) \chi_{\geq}(k_{1})}{-ik_{0} + \hat H(k_{1}) - \m}\Big)(x_{2},r; y_{2},  r')\;.
\eea
We claim that for all $\underline{k}\in \mathbb{D}_{\b, L}$:
\be\label{eq:res}
\big|\partial_{\underline{k}}^{n} \hat g^{\text{(bulk)}}_{\beta, L,N}(\underline{k}; x_{2}, r; y_{2},  r')\big| \leq \frac{C_{n}}{(k_{0}^2 + \d^{2})^{\frac{n+1}{2}}} e^{-c \d |x_{2} - y_{2}|}\;,
\ee
with $\partial_{\underline{k}}$ the discrete gradient: $\partial_{k_{0}} f(k_{0}, k_{1}) = (\beta/2\pi) (f(k_{0} + 2\pi/\beta, k_{1}) - f(k_{0}, k_{1}))$ and $\partial_{k_{1}} f(k_{0}, k_{1}) = (L/2\pi) (f(k_{0}, k_{1} + 2\pi/L) - f(k_{0}, k_{1}))$. If so, the bound (\ref{eq:gbd}) can be easily proven via integration by parts. Let $d_{L}(x_{1}) = \frac{L}{\pi} \sin \Big(\frac{\pi x_{1}}{L}\Big)$, $d_{\b}(x_{0}) = \frac{\b}{\pi} \sin \Big( \frac{\pi x_{0}}{\beta} \Big)$. Notice that $C^{-1}\|\underline{x}\|_{\beta, L}^{2}\leq d_{L}(x_{1})^{2} + d_{\b}(x_{0})^{2}\leq C\|\underline{x}\|_{\beta, L}^{2}$, for some $C>0$. Let $n_{0}, n_{1}\in \mathbb{N}$, and suppose that $n = n_{0} + n_{1}>0$. We have:
\bea\label{eq:res2}
&&\big|d_{L}(x_{1} - y_{1})^{n_{1}} d_{\b}(x_{0} - y_{0})^{n_{0}}  g^{(\text{bulk})}_{\beta,L,N}(\xx, r; \yy, r')\big|  \nn\\&&\qquad = \Big|\frac{1}{\beta L}\sum_{\underline{k}\in \mathbb{M}^{\text{F}}_{\beta}\times S^{1}_{L}} e^{-i \underline{k}\cdot (\underline{x} - \underline{y})} \partial_{k_{1}}^{n_{1}}\partial_{k_{0}}^{n_{0}} \hat g^{\text{(bulk)}}_{\beta, L,N}(\underline{k}; x_{2}, r; y_{2},  r')\Big| \leq C_{n} \d^{-n-2} e^{-c|x_{2} - y_{2}|}\;.
\eea
The final estimate easily follows from the bound (\ref{eq:res}). Suppose now that $n_{1} = n_{0} = 0$. In this case, the bound (\ref{eq:res}) is not enough to bound $| g^{(\text{bulk})}_{\beta,L,N}(\xx, r; \yy, r') |$. Instead, a uniform bound easily follows from the fact that, for large $k_{0}$, the resolvent is approximately odd in $k_{0}$. In fact, $\sum_{k_{0}} \chi_{N}(k_{0})(-ik_{0} + \hat H(k_{1}) - \m)^{-1} = \sum_{k_{0}} \chi_{N}(k_{0}) (-ik_{0} + \hat H(k_{1}) - \m)^{-1} (\hat H(k_{1}) - \m) (ik_{0} + \hat H(k_{1}) - \m)^{-1}$, which is finite. The validity of the bound (\ref{eq:res2}) for all $n_{0}, n_{1}\in \mathbb{N}$ immediately implies Eq. (\ref{eq:gbd}).

Therefore, the proof is reduced to checking (\ref{eq:res}). To begin, let $n=0$. Consider the second term in the right-hand side of Eq. (\ref{eq:gbd0}). Due to the presence of the cutoff function, the resolvent is gapped. Exponential decay follows from a standard Combes-Thomas estimate, see for instance Proposition 10.5 of \cite{AW}: 
\be\label{eq:A7}
\Big| \Big(\frac{ \chi_{N}(k_{0}) \chi_{\geq}(k_{1}) }{-ik_{0} + \hat H(k_{1}) - \m}\Big)(x_{2}, r; y_{2}, r') \Big| \leq \frac{C}{\sqrt{k_{0}^{2} + \d^{2}}} e^{-c\d|x_{2} - y_{2}|}\;.
\ee
Let us now consider the first term in Eq. (\ref{eq:gbd0}). We write:
\be
\Big(\frac{\chi_{N}(k_{0})\chi_{e}(k_{1})}{-ik_{0} + \hat H(k_{1}) - \m} P_{\perp}^{e}(k_{1})\Big)(x_{2}, r; y_{2}, r')\equiv\Big(\frac{\chi_{N}(k_{0})\chi_{e}(k_{1})}{-ik_{0} + \hat H^e_{\perp}(k_{1}) - \m} P_{\perp}^{e}(k_{1})\Big)(x_{2}, r; y_{2}, r')
\ee
with $\hat H^e_{\perp}(k_{1}) := P_{\perp}^{e}(k_{1}) \hat H(k_{1}) P_{\perp}^{e}(k_{1})$. By construction, $\hat H^{e}_{\perp}(k_{1})$ is gapped in the support of $\chi_{e}(k_{1})$. Also, the kernel $\hat H^{e}_{\perp}(k_{1}; x_{2}, y_{2})$ decays exponentially in $|x_{2} - y_{2}|$, as a consequence of the exponential decay of $P_{\perp}^{e}(k_{1}; x_{2}, y_{2})$ and of the short range of $\hat H(k_{1}; x_{2}, y_{2})$. Therefore, we can again apply the Combes-Thomas bound to get:
\bea\label{eq:gbd6}
\Big|\Big(\frac{\chi_{N}(k_{0})\chi_{e}(k_{1})}{-ik_{0} + \hat H^e_{\perp}(k_{1}) - \m} P_{\perp}^{e}(k_{1})\Big)(x_{2}, r; y_{2}, r')\Big| &\leq& \frac{C}{\sqrt{k_{0}^{2} + \d^{2}}}  e^{-c\d |x_{2} - y_{2}|}\;,
\eea
This concludes the proof of (\ref{eq:res}) for $n=0$. Let us now prove it for $n > 0$. Let us take $n$ derivatives of Eq. (\ref{eq:gbd0}). Consider the contribution coming from differentiation of the second term in (\ref{eq:gbd0}). We get a sum of terms of the form:
\be
A_{\underline{m}} = \partial^{m_{1}}_{k_1} (\chi_{N}(k_{0}) \chi_{\geq}(k_{1})) \frac{1}{-ik_{0} + \hat H(k_{1}) - \m} \partial_{k_{1}}^{m_{2}} \hat H(k_{1}) \frac{1}{-ik_{0} + \hat H(k_{1}) - \m} \cdots \partial_{k_{1}}^{m_{n}} \hat H(k_{1})
\ee
with $\sum_{i} m_{i} = n$. The resolvents are gapped in the support of $\chi_{\geq}(k_{1})$, hence we can apply again the Combes-Thomas bound. Moreover, by the assumption on the Hamiltonian, $\partial^{m}_{k_{1}} \hat H(k_{1}; x_{2}, y_{2})$ is compactly supported in $x_{2} - y_{2}$. Thus, one easily gets $|A_{\underline{m}}(x_{2}, r; y_{2},  r') |\leq C_{\underline{n}} (k_{0}^{2} + \d^{2})^{-\frac{n+1}{2}}e^{-c\d |x_{2} - y_{2}|}$. The terms obtained differentiating the second contribution in the right-hand side of Eq. (\ref{eq:gbd0}) can be bounded in a similar way, we omit the details. This concludes the proof. 
\end{proof}
\begin{rem}
The asymptotic oddness of the resolvent could be used to improve the bound (\ref{eq:gbd}) by replacing $\d^{-2}$ with $\d^{-1}$. We will not need such improvement.
\end{rem}

The next proposition proves that the bulk propagator admits a Gram representation. This allows to apply the Brydges-Battle-Federbush formula, as in Section 5.2 of \cite{GMP}, to prove the convergence of the integration of the bulk degrees of freedom, Eqs. (\ref{eq:Weff})--(\ref{eq:wedge}). The proof of Proposition \ref{prp:gram} is by direct inspection, and will be omitted.

\begin{prop}{\bf (Gram representation.)}\label{prp:gram} Let $g^{(\text{bulk})}_{\beta,L,N} = \sum_{e} g^{(\text{bulk})}_{1,e} + g^{(\text{bulk})}_{2}$, with:
\bea
g^{(\text{bulk})}_{1,e}(\xx, r; \yy, r') &:=& \int_{\beta, L} \frac{d\underline{k}}{(2\pi)^2}\, e^{-i\underline{k}\cdot (\underline{x} - \underline{y})}\Big(\frac{\chi_{N}(k_{0})\chi_{e}(k_{1})}{-ik_{0} + \hat H(k_{1}) - \m}P_{\perp}^{e}(k_{1})\Big)(x_{2},r; y_{2},  r')\nn\\
g^{(\text{bulk})}_{2}(\xx, r; \yy, r') &:=& \int_{\beta, L} \frac{d\underline{k}}{(2\pi)^2}\, e^{-i\underline{k}\cdot (\underline{x} - \underline{y})} \Big(\frac{\chi_{N}(k_{0}) \chi_{\geq}(k_{1})}{-ik_{0} + \hat H(k_{1}) - \m}\Big)(x_{2},r; y_{2},  r')\;.
\eea
Then, 
\bea
g^{(\text{bulk})}_{1,e}(\xx, r; \yy, r') &=& \langle A^e_{1,\xx, r}\, , B^e_{1,\yy, r'} \rangle \equiv \sum_{r''} \int_{\beta, L} d\zz\, \overline{A^e_{1,\xx, r}(\zz, r'')} B^e_{1,\yy, r'}(\zz, r'')\;,\nn\\
g^{(\text{bulk})}_{2}(\xx, r; \yy, r') &=& \langle A_{2,\xx, r}\, , B_{2,\yy, r'} \rangle \equiv \sum_{r''} \int_{\beta, L} d\zz\, \overline{A_{2,\xx, r}(\zz, r'')} B_{2,\yy, r'}(\zz, r'')\;,
\eea
where:
\bea
A^e_{1,\xx, r}(\zz, r'') &:=& \int_{\beta, L} \frac{d\underline{k}}{(2\pi)^2}\, e^{i\underline{k}\cdot (\underline{x} - \underline{z})} \Big(\frac{\chi_{N}(k_{0})^{1/2}\chi_{e}(k_{1})^{1/2}}{k_{0}^{2} + (\hat H(k_{1}) - \m)^2} P_{\perp}^{e}(k_{1})\Big)(z_{2},  r''; x_{2}, r)\\
B^e_{1,\yy, r'}(\zz, r'') &:=& \int_{\beta, L} \frac{d\underline{k}}{(2\pi)^2}\, e^{i\underline{k}\cdot (\underline{y} - \underline{z})} \Big(\chi_{N}(k_{0})^{1/2}\chi_{e}(k_{1})^{1/2}( ik_{0} + \hat H(k_{1}) - \m ) \Big)(z_{2},  r''; y_{2}, r)\nn\\
A_{2,\xx, r}(\zz, r'') &:=& \int_{\beta, L} \frac{d\underline{k}}{(2\pi)^2}\, e^{i\underline{k}\cdot (\underline{x} - \underline{z})} \Big(\frac{\chi_{N}(k_{0})^{1/2}\chi_{\geq }(k_{1})^{1/2}}{k_{0}^{2} + (\hat H(k_{1}) - \m)^2}\Big)(z_{2},  r''; x_{2}, r)\nn\\
B_{2,\yy, r'}(\zz, r'') &:=& \int_{\beta, L} \frac{d\underline{k}}{(2\pi)^2}\, e^{i\underline{k}\cdot (\underline{y} - \underline{z})} \Big(\chi_{N}(k_{0})^{1/2}\chi_{\geq}(k_{1})^{1/2}( ik_{0} + \hat H(k_{1}) - \m ) \Big)(z_{2},  r''; y_{2}, r)\;.\nn
\eea

\end{prop}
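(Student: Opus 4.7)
The plan is to view the statement as an application of Plancherel's identity after a judicious symmetric factorization of the momentum-space kernel of each piece of the bulk propagator. Concretely, for a pair of operator-valued symbols $\hat A(\underline{k})$, $\hat B(\underline{k})$ on $\mathbb{C}^{L}\otimes\mathbb{C}^{M}$, the inverse Fourier transforms $A_{\xx,r}(\zz,r''), B_{\yy,r'}(\zz,r'')$ defined as in the statement satisfy
\begin{equation*}
\sum_{r''}\int_{\beta,L}d\zz\,\overline{A_{\xx,r}(\zz,r'')}\,B_{\yy,r'}(\zz,r'') \;=\; \int_{\beta,L}\frac{d\underline{k}}{(2\pi)^{2}}\,e^{-i\underline{k}\cdot(\underline{x}-\underline{y})}\,\bigl(\hat A(\underline{k})^{*}\hat B(\underline{k})\bigr)(x_{2},r;y_{2},r'),
\end{equation*}
by orthogonality of the plane waves in $\underline{z}$ and the identity $(\hat A^{*}\hat B)(x_{2},r;y_{2},r')=\sum_{r'',z_{2}}\overline{\hat A(z_{2},r'';x_{2},r)}\hat B(z_{2},r'';y_{2},r')$. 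So the problem reduces to exhibiting, in momentum space, a factorization $\hat g_{1,e}^{(\text{bulk})}=\hat A^{*}\hat B$ and analogously for $\hat g_{2}^{(\text{bulk})}$, with the $\hat A,\hat B$ matching those dictated by the statement.

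For the second piece, I would use that $\hat H(k_{1})-\mu$ is self-adjoint on $\mathbb{C}^{L}\otimes\mathbb{C}^{M}$ to write the resolvent in the symmetric form
\begin{equation*}
\frac{1}{-ik_{0}+\hat H(k_{1})-\mu}\;=\;\frac{ik_{0}+\hat H(k_{1})-\mu}{k_{0}^{2}+(\hat H(k_{1})-\mu)^{2}},
\end{equation*}
which is well-defined since the denominator is a strictly positive self-adjoint operator. Multiplying and dividing by $\chi_{N}(k_{0})\chi_{\ge}(k_{1})$ and taking square roots, one sets
\begin{equation*}
\hat A_{2}(\underline{k}):=\frac{\chi_{N}(k_{0})^{1/2}\chi_{\ge}(k_{1})^{1/2}}{k_{0}^{2}+(\hat H(k_{1})-\mu)^{2}},\qquad \hat B_{2}(\underline{k}):=\chi_{N}(k_{0})^{1/2}\chi_{\ge}(k_{1})^{1/2}\bigl(ik_{0}+\hat H(k_{1})-\mu\bigr),
\end{equation*}
and observes that $\hat A_{2}$ is self-adjoint, hence $\hat A_{2}^{*}\hat B_{2}=\hat A_{2}\hat B_{2}$ reproduces the desired symbol.

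For the first piece the same trick works, provided I note that $P^{e}_{\perp}(k_{1})=\mathbbm{1}-|\xi^{e}(k_{1})\rangle\langle\xi^{e}(k_{1})|$ commutes with $\hat H(k_{1})$ (as $\xi^{e}(k_{1})$ is an eigenstate), hence also commutes with any Borel functional calculus of $\hat H(k_{1})-\mu$ and with the multiplication by $\chi_{N}(k_{0})\chi_{e}(k_{1})$. Setting
\begin{equation*}
\hat A^{e}_{1}(\underline{k}):=\frac{\chi_{N}(k_{0})^{1/2}\chi_{e}(k_{1})^{1/2}}{k_{0}^{2}+(\hat H(k_{1})-\mu)^{2}}P^{e}_{\perp}(k_{1}),\qquad \hat B^{e}_{1}(\underline{k}):=\chi_{N}(k_{0})^{1/2}\chi_{e}(k_{1})^{1/2}\bigl(ik_{0}+\hat H(k_{1})-\mu\bigr),
\end{equation*}
a direct computation using the commutation $[P^{e}_{\perp},\hat H]=0$ and $(P^{e}_{\perp})^{2}=P^{e}_{\perp}$ yields $\hat A^{e\,*}_{1}\hat B^{e}_{1}=\hat g^{(\text{bulk})}_{1,e}$ in momentum space.

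There is no genuine obstacle here: the argument is a computational check that the proposed $A,B$ reproduce the right symbol, combined with Plancherel. The only non-trivial input is the self-adjointness of $\hat H(k_{1})$ together with $[P^{e}_{\perp}(k_{1}),\hat H(k_{1})]=0$, which is straightforward from the construction of $\xi^{e}(k_{1})$ as an eigenvector. Consequently the result follows by direct inspection, as remarked in the text.
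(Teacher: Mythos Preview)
Your proposal is correct and matches the paper's approach: the paper explicitly states that ``the proof of Proposition~\ref{prp:gram} is by direct inspection, and will be omitted,'' and your Plancherel-plus-factorization argument is precisely that direct inspection. The key observations you identify---the symmetric resolvent identity $(-ik_{0}+\hat H(k_{1})-\mu)^{-1}=(ik_{0}+\hat H(k_{1})-\mu)\bigl(k_{0}^{2}+(\hat H(k_{1})-\mu)^{2}\bigr)^{-1}$ by self-adjointness, and the commutation $[P^{e}_{\perp}(k_{1}),\hat H(k_{1})]=0$---are exactly what is needed, and nothing more.
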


\subsection{Proof of Eq. (\ref{eq:2d1d})}\label{app:2d1d}

The next proposition is the key technical result behind Eq. (\ref{eq:2d1d}). Eq. (\ref{eq:2d1d}) follows by Taylor expanding the exponential in the integral, and applying Proposition \ref{prop:repr} to each term in the sum.

\begin{prop}{\bf (One-dimensional representation of the edge field.)}\label{prop:repr} Let $\check{\xi}^{e}_{x_{2}}(x_{1};r)$, $(\psi^{+} * \check{\xi})_{\xx, r}$, $(\psi^{-} * \check{\xi})_{\xx, r}$ as in Eqs. (\ref{eq:conv}), (\ref{eq:checkxi}). Let $\mathcal{E}^{T}_{\Psi}$, $\mathcal{E}^{T}_{\psi}$ be the truncated expectations with respect to the Gaussian Grassmann fields $\Psi^{(\text{edge})}$, $\psi$. Then:
\be\label{eq:equal}
\mathcal{E}^{T}_{\Psi}(\Psi^{\text{(edge)}}(P_{1})\;; \Psi^{\text{(edge)}}(P_{2})\;; \cdots ; \Psi^{\text{(edge)}}(P_{q})) = \mathcal{E}^{T}_{\psi}((\psi*\check{\xi})(P_{1})\;; (\psi*\check{\xi})(P_{2})\;; \cdots ; (\psi*\check{\xi})(P_{q}))\;,
\ee
where $P_{i} = \{f_{j}\}_{j=1}^{|P_{i}|}$ is a set of field labels, and we used the notations $\Psi(P_{i}) = \prod_{f\in P_{i}} \Psi^{\e(f)}_{\xx(f),r(f)}$, $(\psi*\check{\xi})(P_{i}) = \prod_{f\in P_{i}}^{|P_{i}|} (\psi^{\e(f)}*\check{\xi})_{\xx(f), r(f)}$.
\end{prop}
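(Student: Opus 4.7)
The plan is to reduce the equality of truncated expectations to a single check at the level of two-point functions. The key observation is that $\psi*\check\xi$ is a \emph{linear} combination of the Grassmann variables $\psi^{\pm}_{\underline{x},e}$, so it is itself a Gaussian Grassmann field. For any Gaussian Grassmann measure, the truncated expectations of products of fields are completely determined by the underlying two-point function, via the Brydges--Battle--Federbush formula (equivalently, by Wick's theorem applied to cumulants: every truncated expectation equals the sum over anchored tree graphs of products of covariance entries, exactly as in (\ref{eq:BBF})). Therefore, it suffices to verify
\be\label{eq:planbd}
\int P'_N(d\psi)\,(\psi^-*\check\xi)_{\xx,r}\,(\psi^+*\check\xi)_{\yy,r'} \;=\; g^{\text{(edge)}}_{\beta,L,N}(\xx,r;\yy,r')\;,
\ee
after which both sides of (\ref{eq:equal}) expand, summand by summand, into identical BBF trees.

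First I would substitute the definitions (\ref{eq:conv}) of the convolution into the left-hand side of (\ref{eq:planbd}) and apply the diagonal covariance (\ref{eq:cov1d}) of $\psi$:
\be
\sum_{e}\sum_{h,h'}\overline{\check\xi^{e}_{x_2}(x_1-h;r)}\,\check\xi^{e}_{y_2}(y_1-h';r')\,g^{(1d)}_{e,N}\bigl((x_0-y_0,h-h')\bigr)\;.
\ee
Note that the absence of cross-edge contributions ($e=e'$) is automatic from the Kronecker $\delta_{e,e'}$ in (\ref{eq:cov1d}). Next I would insert the momentum-space representations (\ref{eq:checkxi}) of $\check\xi$ and (\ref{eq:prop1d}) of $g^{(1d)}_{e,N}$, perform the sums over the translation variables $h,h'$ to produce Kronecker deltas that identify the $x_1$-momentum carried by $\check\xi$ with the one carried by $g^{(1d)}_{e,N}$, and collect the exponential factors into a single $e^{-i\underline{k}\cdot(\underline{x}-\underline{y})}$.

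The resulting expression will contain a factor $\chi_e^{(2)}(k_1)^2\chi_e(k_1)$, where $\chi_e^{(2)}(k_1)=\chi((1/\delta)|\e_e(k_1)-\mu_e|)$ is the cutoff used in $\check\xi$ and $\chi_e(k_1)=\chi((2/\delta)|\e_e(k_1)-\mu_e|)$ is the cutoff in $g^{(1d)}_{e,N}$. Since $\chi$ equals $1$ on $[-1,1]$ and vanishes outside $[-2,2]$, the support of $\chi_e$ is contained in the region where $\chi_e^{(2)}\equiv 1$. This gives the cutoff identity $\chi_e^{(2)}(k_1)^2\chi_e(k_1)=\chi_e(k_1)$, reducing the expression to the right-hand side of (\ref{eq:planbd}) upon comparison with (\ref{eq:gedge}).

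The main bookkeeping obstacle will be to verify that all Fourier signs align so that $\xi^{e}_{x_2}(k_1;r)\overline{\xi^{e}_{y_2}(k_1;r')}$ appears at the correct common momentum $k_1$, as in (\ref{eq:gedge}), rather than at $(-k_1)$ or with swapped complex conjugation; this is the role of the complex conjugate chosen on $\check\xi$ in the $\psi^-$ convolution in (\ref{eq:conv}) versus the bare $\check\xi$ in the $\psi^+$ convolution. Once the covariance match (\ref{eq:planbd}) is established, the proposition follows: given any choice of anchored tree $T$ and cluster structure on the multi-index sets $P_1,\ldots,P_q$, each edge-line of $T$ contributes the \emph{same} two-point function on both sides of (\ref{eq:equal}), so the BBF sums coincide term by term.
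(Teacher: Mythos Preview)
Your proposal is correct and is essentially the same argument as the paper's: both reduce the claim to the two-point identity $g^{(\text{edge})}=\sum_{e}\check\xi^{e}\,\overline{\check\xi^{e}}\,g^{(1d)}_{e}$ (including the cutoff identity $\chi_e^{(2)}(k_1)^2\chi_e(k_1)=\chi_e(k_1)$ you flag) and then use that truncated Gaussian expectations are determined by the covariance via the Brydges--Battle--Federbush formula. The only difference is cosmetic: the paper writes out the BBF formula explicitly, expands $\det G^{(\text{edge})}_{T}$ by Leibniz, factors the $\check\xi$'s out of each matrix entry, and then recognizes the result as the BBF expansion of the right-hand side by multilinearity---whereas you invoke the abstract principle that equal covariances yield equal truncated expectations.
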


\begin{proof} The proof is based on a separate evaluation of the left-hand side and of the right-hand side of Eq. (\ref{eq:equal}), via the Brydges-Battle-Federbush formula (\ref{eq:BBF}). Let us start with the left-hand side. We have:
\be
\mathcal{E}^{T}_{\Psi}(\Psi^{\text{(edge)}}(P_{1})\;; \Psi^{\text{(edge)}}(P_{2})\;; \cdots \;; \Psi^{\text{(edge)}}(P_{q})) = \sum_{T\in {\bf T}} \alpha_{T} \prod_{\ell \in T} g^{(\text{edge})}_{\ell} \int dP_{T}({\bf t})\, \det G^{(\text{edge})}_{T}({\bf t})
\ee
where $[G^{(\text{edge})}_{T}({\bf t})]_{f,f'} = t_{i(f),i(f')} g^{(\text{edge})}(\xx(f),r(f); \yy(f'),  r'(f'))$. By the Leibniz formula we get:
\be
\det G^{(\text{edge})}_{T}({\bf t}) = \sum_{\pi \in S_{d}} \text{sgn}(\pi) [G^{(\text{edge})}_{T}({\bf t})]_{1,\pi(1)} [G^{(\text{edge})}_{T}({\bf t})]_{2,\pi(2)}\cdots [G^{(\text{edge})}_{T}({\bf t})]_{d,\pi(d)}\;,
\ee
where $d = n - q + 1$ is the dimension of the matrix, with $2n = \sum_{i=1}^{q} |P_{i}|$, and $S_{d}$ is the set of permutations of $\{1,\,\ldots,\, d\}$. It is useful to rewrite the propagator $g^{\text{(edge)}}$ as:
\bea\label{eq:edgeform}
g^{(\text{edge})}(\xx,r; \yy, r') &=& \sum_{e}\int_{\beta, L} \frac{d\underline{k}}{(2\pi)^{2}}\, e^{-i\underline{k}\cdot (\underline{x} - \underline{y})} \xi^{e}_{x_{2}}(k_{1};  r) \overline{\xi^{e}_{y_{2}}(k_{1};  r)} \frac{\chi_{N}(k_{0}) \chi_{e}(k_{1})}{-ik_{0} + \e_{e}(k_{1}) - \m}\nn\\
&\equiv& \sum_{e,e'}\sum_{w_{1}, z_{1}} \check{\xi}^{e}_{x_{2}}(w_{1}; r) \overline{\check{\xi}^{e'}_{y_{2}}(z_{1};  r')} \d_{e,e'} g^{(\text{1d})}_{e}(\underline{x} - \underline{y} - w_{1} + z_{1})\;,
\eea
with $g^{(\text{1d})}_{e}$ given by Eq. (\ref{eq:prop1d}). By using Eq. (\ref{eq:edgeform}), we can write the determinant as:
\be\label{eq:edge21d1}
\det G^{(\text{edge})}_{T}({\bf t}) =  \sum_{\substack{\underline{e}, \underline{e}' \\ \underline{z_1}, \underline{w_1}}}  \prod_{j =1}^{d} \check{\xi}^{e_{j}}_{x_{2,j}}(w_{1,j}; r_{j}) \overline{\check{\xi}^{e'_{j}}_{y_{2,j}}(z_{1,j};   r'_{j})} \sum_{\pi \in S_{d}} \text{sgn}(\pi) \prod_{f'=1}^{d} [G^{\text{(1d)}}_{T}({\bf t}; e, e', w, z)]_{f',\pi(f')}
\ee
where $[G^{\text{(1d)}}_{T}({\bf t}; e, e', w, z)]_{f,f'} := t_{i(f),i(f')} \d_{e(f), e'(f')}g_{e(f)}^{(\text{1d})}(\underline{x}_{f} - \underline{y}_{f'} - w_{1,f} + z_{1,f'})$. Similarly, we have:
\be\label{eq:edge21d2}
\prod_{\ell \in T} g^{(\text{edge})}_{\ell} = \sum_{\underline{e}, \underline{e}', \underline{z_1}, \underline{w_1}}\big[\prod_{\ell \in T} \check{\xi}^{e(\ell)}_{x_{2}(\ell)}(w_{1}(\ell);  r(\ell))\overline{\check{\xi}^{e'(\ell)}_{y_{2}(\ell)}(z_{1}(\ell);   r'(\ell))}\big] \prod_{\ell\in T} g^{\text{(1d)}}_{\ell}(e,e',w,z)\;,
\ee
where, if $\ell = (f,f')$, $g^{\text{(1d)}}_{\ell}(e,e',w,z) := \d_{e(f), e'(f')} g^{\text{(1d)}}_{e(f)}(\underline{x}(f) - \underline{y}(f') - w_{1}(f) + z_{1}(f'))$. Eqs. (\ref{eq:edge21d1}), (\ref{eq:edge21d2}) imply:
\bea\label{eq:edge21d3}
&&\mathcal{E}^{T}_{\Psi}(\Psi^{\text{(edge)}}(P_{1})\;; \Psi^{\text{(edge)}}(P_{2})\;; \cdots \;; \Psi^{\text{(edge)}}(P_{q})) = \sum_{T\in {\bf T}} \alpha_{T}  \sum_{\substack{\underline{e}, \underline{e}'\\ \underline{z_1}, \underline{w_1}}}\big[\prod_{i=1}^{n}  \check{\xi}^{e_{i}}_{x_{2,i}}(w_{1,i};  r_{i}) \overline{\check{\xi}^{e'_{i}}_{y_{2,i}}(z_{1,i};  r'_{i})}\big] \nn\\ &&\quad\qquad\qquad\qquad\qquad\qquad\qquad \cdot \prod_{\ell\in T} g^{\text{(1d)}}_{\ell}(e,e',w,z) \int dP_{T}({\bf t})\, \det G^{\text{(1d)}}_{T}({\bf t}; e, e', w, z)\;.
\eea
Now, it is not difficult to see that, by the multilinearity of the truncated expectation, (\ref{eq:edge21d3}) is exactly what one gets after applying the Brydges-Battle-Federbush formula to the right-hand side of Eq. (\ref{eq:equal}). This concludes the proof.
\end{proof}

\section{Wick rotation}\label{app:wick}

Here we prove the Wick rotation, Proposition \ref{prp:wick}. The proof is based on the combination of the ideas of \cite{GMP}, with the ideas of the proof of Theorem 5.4.12 of \cite{BR}. The starting point is the following lemma.

\begin{lemma}\label{lem:wick} Let $A,\, B$ be two operators on $\mathcal{F}_{L}$, such that\footnote{Notice that $\langle A(-it) A^{*} \rangle_{\beta, L}\geq 0$, $\langle B^{*}(-it) B \rangle_{\beta, L} \geq 0$. In fact, by cyclicity of the trace, $\langle A(-it) A^{*} \rangle_{\beta, L} = \langle A(-it/2) A(-it/2)^{*} \rangle_{\beta, L} \geq 0$, $\langle B^{*}(-it) B \rangle_{\beta, L} = \langle B^{*}(-it/2) B(-it/2) \rangle_{\beta, L} \geq 0$.}:
\bea\label{eq:AB}
&&\frac{1}{L}{\langle} A(-it) A^*{\rangle}_{\beta, L} \leq C\;,\qquad \frac{1}{L}{\langle} B^*(-it) B {\rangle}_{\beta,L}\leq C\;,\qquad \forall t\in [0,\beta]\nn\\
&& \int_{0}^{\beta} dt\, \frac{1}{L}\langle A(-it) A^{*} \rangle_{\beta, L} \leq C\;,\qquad \int_{0}^{\beta} dt\, \frac{1}{L}\langle B^{*}(-it) B \rangle_{\beta, L} \leq C\;,
\eea
for some constant $C>0$ independent of $\beta, L$. Let $T>0$, $\eta >0$. Then,
\be
\int_{-T}^{0} dt\, e^{\eta t} \frac{1}{L}{\langle} [A(t), B] {\rangle}_{\beta, L} = i\int_{0}^{\beta} dt\, e^{-i\eta_{\beta} t} \frac{1}{L}{\langle} A(-it) B {\rangle}_{\beta, L} + \mathcal{E}_{\beta, L}(T,\eta)\;,
\ee
where $\eta_{\beta} \in \frac{2\pi}{\beta}\mathbb{Z}$ is such that $|\eta - \eta_{\beta}| = \text{min}_{\eta'\in \frac{2\pi}{\beta}\mathbb{Z}} |\eta - \eta'|$, and where the error term satisfies the bound $\big| \mathcal{E}_{\beta, L}(T,\eta) \big| \leq C( 1/(\eta^2 \beta) + e^{-\eta T})$.
\end{lemma}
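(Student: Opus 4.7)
\noindent\textbf{Proof plan for Lemma \ref{lem:wick}.} The strategy is to expand both sides of the identity in the spectral basis of $\mathcal{H}_L - \mu \mathcal{N}_L$, match them term by term using the KMS condition, and control the residual errors coming (i) from truncating the real-time integral at $t = -T$ rather than $t = -\infty$, and (ii) from replacing the continuous frequency $\eta$ by the discrete Matsubara frequency $\eta_\beta$. I will assume without loss of generality that the spectrum is pure point (which holds for finite $L$); let $\{|n\rangle\}$ be an orthonormal eigenbasis, $E_n$ the associated eigenvalues, and $p_n = e^{-\beta E_n}/\mathcal{Z}_{\beta,L}$, so that, denoting matrix elements by $A_{nm} = \langle n|A|m\rangle$,
\begin{equation*}
\langle [A(t), B]\rangle_{\beta,L} = \sum_{n,m}(p_n - p_m)\, e^{i(E_n - E_m)t}\, A_{nm} B_{mn},
\qquad
\langle A(-it) B\rangle_{\beta,L} = \sum_{n,m} p_n\, e^{(E_n - E_m)t}\, A_{nm} B_{mn}.
\end{equation*}

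The first step is to extend the real-time integral to $(-\infty, 0]$: the tail contributes at most
$\int_{-\infty}^{-T} e^{\eta t}\bigl(\|A(-it)A^*\|+\|B^*(-it)B\|\bigr)^{1/2}\ldots$ after a Cauchy--Schwarz bound combined with the first line of (\ref{eq:AB}), which gives an error of order $L\, e^{-\eta T}$, as required. For the extended integral one obtains
\begin{equation*}
\int_{-\infty}^{0} dt\, e^{\eta t}\langle [A(t),B]\rangle_{\beta,L}
= \sum_{n,m}\frac{(p_n - p_m)\, A_{nm} B_{mn}}{\eta + i(E_n - E_m)}.
\end{equation*}
On the imaginary-time side, integrating the exponentials $e^{(E_n - E_m - i\eta_\beta)t}$ over $[0,\beta]$ and using the key KMS identity $p_n\bigl(e^{\beta(E_n - E_m)} - 1\bigr) = p_m - p_n$ together with $e^{-i\eta_\beta\beta} = 1$ (this is exactly where the choice $\eta_\beta \in \tfrac{2\pi}{\beta}\mathbb{Z}$ is used) yields
\begin{equation*}
i\int_0^\beta dt\, e^{-i\eta_\beta t}\langle A(-it) B\rangle_{\beta,L}
= \sum_{n,m}\frac{(p_n - p_m)\, A_{nm} B_{mn}}{\eta_\beta + i(E_n - E_m)}.
\end{equation*}

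Subtracting the two expressions and factoring out the resolvent difference gives
\begin{equation*}
\mathcal{E}_{\beta,L}(T,\eta) = -(\eta - \eta_\beta)\sum_{n,m}\frac{(p_n - p_m) A_{nm} B_{mn}}{(\eta + i(E_n - E_m))(\eta_\beta + i(E_n - E_m))} + O\!\bigl(L\, e^{-\eta T}\bigr).
\end{equation*}
Since $|\eta - \eta_\beta|\leq \pi/\beta$ by construction, it remains to bound the double sum by $CL/\eta^2$ uniformly in $L$. Writing $1/|\eta+ix|\cdot 1/|\eta_\beta+ix|\leq 1/\eta^2$ and performing a second Cauchy--Schwarz gives
$\sum_{n,m}|p_n - p_m|\,|A_{nm}B_{mn}|\leq \bigl(\sum p_n |A_{nm}|^2\bigr)^{1/2}\bigl(\sum p_n |B_{mn}|^2\bigr)^{1/2} + (\text{analogous term})$,
which are controlled by $\langle A A^*\rangle_{\beta,L}^{1/2}\langle B^* B\rangle_{\beta,L}^{1/2}$; the second line of (\ref{eq:AB}) (applied at $t=0$, or alternatively via $\int_0^\beta\!\ldots$) then gives the uniform bound $CL$, producing the claimed $C/(\eta^2\beta)$ error after the division by $L$.

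The main technical point to watch is the \emph{uniformity in $L$} of the spectral sum estimates: both the resolvent bound $1/\eta^2$ and the Cauchy--Schwarz step must be carried out in a way that exploits only the hypotheses (\ref{eq:AB}), which encode the fact that $A$ and $B$ are local enough that their thermal two-point functions grow only linearly in $L$. Once this is secured, the combination of the geometric tail bound in $T$, the KMS matching of the two spectral representations, and the resolvent-difference estimate yields the stated bound $|\mathcal{E}_{\beta,L}(T,\eta)|\leq C(1/(\eta^2\beta) + e^{-\eta T})$, completing the proof.
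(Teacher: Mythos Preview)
Your approach is correct but genuinely different from the paper's. The paper does not use the spectral decomposition at all; instead it proceeds via complex analysis. After writing the commutator as $\langle A(t)B\rangle - \langle A(t-i\beta)B\rangle$ via the KMS relation and replacing $\eta$ by $\eta_\beta$ (this replacement, bounded by $\int_{-T}^0|e^{\eta t}-e^{\eta_\beta t}|\,dt$ times the pointwise hypothesis, produces the $1/(\eta^2\beta)$ error), it observes that $z\mapsto e^{\eta_\beta z}\langle A(z)B\rangle_{\beta,L}$ is entire and applies Cauchy's theorem to the rectangle with corners $0,-i\beta,-T-i\beta,-T$. The two horizontal sides combine into the imaginary-time integral, and the short vertical side at $\Re z=-T$ yields the $e^{-\eta T}$ error, controlled via Cauchy--Schwarz in imaginary time together with the \emph{integrated} hypothesis (second line of \eqref{eq:AB}). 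Your spectral/resolvent route is more explicit and avoids contour integration, while the paper's argument is more structural and would extend verbatim to infinite-dimensional settings where the spectrum is not discrete.

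One small slip to watch: your crude tail estimate $\int_{-\infty}^{-T}e^{\eta t}\,|\langle[A(t),B]\rangle|\,dt$ only gives $C e^{-\eta T}/\eta$, with an extra factor $1/\eta$ compared to the stated bound. To recover $Ce^{-\eta T}$ with $C$ independent of $\eta$ you must write the tail spectrally as $e^{-\eta T}\sum_{n,m}|p_n-p_m|\,|A_{nm}B_{mn}|/|\eta+i(E_n-E_m)|$, bound $|\eta+ix|^{-1}\le|x|^{-1}$, use the identity $|p_n-p_m|/|E_n-E_m|=\int_0^\beta p_n e^{(E_n-E_m)t}\,dt$, and then apply Cauchy--Schwarz in $t$ together with the second line of \eqref{eq:AB}. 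This is exactly where the integrated hypothesis enters in your approach, matching the paper's use of it on the vertical contour side; by contrast, your resolvent-difference term needs only the \emph{first} line of \eqref{eq:AB} (at $t=0$ and $t=\beta$, via $\langle A^*A\rangle=\langle A(-i\beta)A^*\rangle$), so your parenthetical about the second line is misplaced there.
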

\begin{proof} We start by writing:
\bea\label{eq:W1}
\int_{-T}^{0} dt\, e^{\eta t} {\langle} [ A(t) ,B ] {\rangle}_{\beta, L} &=& \int_{-T}^{0} dt\, \big[ e^{\eta t} {\langle} A(t) B {\rangle}_{\beta, L} - e^{\eta t} {\langle} B A(t) {\rangle}_{\beta, L}\nn\\
&=& \int_{-T}^{0} dt\, \big[ e^{\eta t} {\langle} A(t) B {\rangle}_{\beta, L} - e^{\eta t} {\langle} A(t - i\beta) B {\rangle}_{\beta, L} \big]\;,
\eea
where in the last line we used that, by cyclicity of the trace $\Tr e^{-\beta \mathcal{H}} B A(t) = \Tr A(t) e^{-\beta \mathcal{H}} B = \Tr e^{-\beta \mathcal{H}} A(t - i\beta) B$. Let $\eta_{\beta}$ be the closest element of $\frac{2\pi}{\beta}\mathbb{Z}$ to $\eta$: $|\eta - \eta_{\beta}| = \min_{\eta'\in \frac{2\pi}{\beta}\mathbb{Z}} |\eta - \eta'|$. We rewrite (\ref{eq:W1}) as, using that $e^{i\eta_{\beta}\beta} = 1$:
\be\label{eq:W11}
\int_{-T}^{0} dt\, e^{\eta t} \frac{1}{L}{\langle} [ A(t) ,B ] {\rangle}_{\beta, L} = \int_{-T}^{0} dt\, \big[ e^{\eta_{\beta} t} \frac{1}{L}{\langle} A(t) B {\rangle}_{\beta, L} - e^{\eta_{\beta} (t - i\beta)} \frac{1}{L}{\langle} A(t - i\beta) B {\rangle}_{\beta, L} \big] + \mathcal{E}^{(1)}_{\beta, L}(T, \eta)\;,
\ee
where the error term can be estimated as, by Cauchy-Schwarz inequality and thanks to the assumption (\ref{eq:AB}), for some $C>0$:
\bea\label{eq:E1}
\big|\mathcal{E}^{(1)}_{\beta, L}(T, \eta)\big| &\leq& \frac{1}{L}\int_{-T}^{0}dt\, \big|e^{\eta t} - e^{\eta_{\beta} t}\big| \big[ {\langle} A A^{*} {\rangle}_{\beta , L}^{1/2}{\langle} B^* B {\rangle}_{\beta , L}^{1/2} + {\langle} A(-i\beta) A^* {\rangle}_{\beta , L}^{1/2}{\langle} B^*(-i\beta) B {\rangle}_{\beta , L}^{1/2}  \big]\nn\\
&\leq& \frac{C}{\eta^2 \beta}\;.
\eea
Now, we notice that the function $e^{\eta_{\beta} z} {\langle} A(z) B {\rangle}_{\beta,L}$ is {\it entire} in $z\in \mathbb{C}$. Therefore, by Cauchy theorem we conclude that the integral of $e^{\eta_{\beta} z} {\langle} A(z) B {\rangle}_{\beta,L}$ along the boundary of the complex rectangle $(0,0)\to (0,-i\beta)\to (-T,-i\beta) \to (-T,0)$ is equal to zero. The two term in Eq. (\ref{eq:W11}) correspond respectively to the paths $(-T, 0)\to (0,0)$ and $(0, -i\beta) \to (-T, -i\beta)$. Thus we can write:
\be
\int_{-T}^{0} dt\, \big[ e^{\eta_{\beta} t} \frac{1}{L}{\langle} A(t) B {\rangle}_{\beta, L} - e^{\eta_{\beta} (t - i\beta)} \frac{1}{L}{\langle} A(t - i\beta) B {\rangle}_{\beta, L} \big]  =  i\int_{0}^{\beta} dt\, e^{-i\eta_{\beta} t} \frac{1}{L}{\langle} A(-it) B {\rangle}_{\beta, L} + \mathcal{E}^{(2)}_{\beta, L}(T, \eta)
\ee
where the error term collects the contribution of the path $(-T,-i\beta)\to (-T,0)$,
\bea\label{eq:E2}
\mathcal{E}^{(2)}_{\beta, L}(T, \eta) &=& i\int_{0}^{\beta} dt\, e^{\eta_{\beta}(-T -it)} \frac{1}{L}{\langle} A(-T -it) B {\rangle}_{\beta,L}\nn\\
\big| \mathcal{E}^{(2)}_{\beta, L}(T, \eta) \big| &\leq& e^{-\eta T} \frac{1}{L}\int_{0}^{\beta}dt\, {\langle} A(-it)A^* {\rangle}_{\beta,L}^{1/2}{\langle} B^*(-it) B {\rangle}_{\beta,L}^{1/2}\\
&\leq& e^{-\eta T} \Big(\int_{0}^{\beta}dt\, \frac{1}{L} {\langle} A(-it)A^* {\rangle}_{\beta,L}\Big)^{1/2}\Big(\int_{0}^{\beta}dt\, \frac{1}{L} {\langle} B^{*}(-it)B {\rangle}_{\beta,L}\Big)^{1/2} \leq Ce^{-\eta T}\;.\nn
\eea
The first bound follows from Cauchy-Schwarz inequality for $\langle \cdot \rangle_{\beta, L}$, the second bound follows from Cauchy-Schwarz inequality for the time integration, while last follows from the assumptions Eq. (\ref{eq:AB}). All in all we have:
\be
\int_{-T}^{0} dt\, e^{\eta t} \frac{1}{L}{\langle} [ A(t) ,B ] {\rangle}_{\beta, L} = i\int_{0}^{\beta} e^{-i\eta_{\beta} t} \frac{1}{L}{\langle} A(-it) B {\rangle}_{\beta, L} + \sum_{i=1}^{2} \mathcal{E}^{(i)}_{\beta, L}(T, \eta)
\ee
where, by Eqs. (\ref{eq:E1}), (\ref{eq:E2}), $\big| \sum_{i=1}^{2} \mathcal{E}^{(i)}_{\beta, L}(T, \eta) \big| \leq C( 1/(\eta^2 \beta) + e^{-\eta T})$. This concludes the proof.
\end{proof}

We are now ready to prove Proposition \ref{prp:wick}. 

\medskip

{\it Proof of Proposition \ref{prp:wick}.} Let us start by studying the charge transport coefficients. Consider $G^{\underline{a}}_{01}(\underline{p})$. The proof for the other coefficients will be exactly the same, with the exception of $G^{\underline{a}}_{11}(\underline{p})$, that will be discussed below. Let us first check the assumptions of Lemma \ref{lem:wick}. Let $p_{1}\neq 0$. We set $A := \hat \rho_{p_{1}}^{\leq a} - \langle \hat \rho_{p_{1}}^{\leq a}\rangle_{\beta, L}$, $B := \hat j_{1,-p_{1}}^{\leq a'} -\langle \hat j_{1,-p_{1}}^{\leq a'}\rangle_{\beta, L}$ (a shift by a constant does not change the commutators). The first assumption in Eq. (\ref{eq:AB}) can be checked using the bound in Eq. (\ref{eq:JJbd}),
\bea
\frac{1}{L} \langle A(-it) A^{*} \rangle_{\beta, L} \equiv \frac{1}{L} \langle \hat \rho_{p_{1}}^{\leq a}(-it)\,; \hat \rho_{-p_{1}}^{\leq a}  \rangle_{\beta, L} &\leq& \frac{1}{L}\sum_{x_{2}, y_{2}\leq a} \sum_{x_{1}, y_{1} = 0}^{L} \frac{1}{1 + t^{2} + |x_{1} - y_{1}|_{L}^{2}} \frac{C_{n}}{1 + |x_{2} - y_{2}|^{n}}\nn\\ &=& Ca
\eea
A similar bound holds true for $B$. This proves the validity of the first two assumptions in Eq. (\ref{eq:AB}). Consider now the second assumptions in Eq. (\ref{eq:AB}). We notice that, for $\m=0,1$:
\be
\int_{0}^{\beta} dt\, \frac{1}{L}\langle\hat j_{\m,p_{1}}^{\leq a}(-it)\,;  \hat j_{\m, -p_{1}}^{\leq a} \rangle_{\beta, L} \equiv \frac{1}{\beta L}\langle {\bf T}\, \hat j_{\m,(0,p_{1})}^{\leq a}\,;  \hat j_{\m, (0,-p_{1})}^{\leq a} \rangle_{\beta, L}\;,
\ee
where $\hat j_{\m,(0,p_{1})}^{\leq a} = \int_{0}^{\beta} dx_{0} \sum_{x_{1} = 0}^{L}\sum_{x_{2} = 0}^{a} e^{i\underline{p}\cdot \underline{x}} j_{\m,\xx}$. By the existence of the $\beta, L\to \infty$ limit of the Euclidean correlations, Lemma 2.6 of \cite{BFMhub1}, we know that 
\be
\frac{1}{\beta L}\langle {\bf T}\, \hat j_{\m,(0,p_{1})}^{\leq a}\,;  \hat j_{\m, (0,-p_{1})}^{\leq a} \rangle_{\beta, L} = \pmb{\langle} {\bf T}\, \hat j_{\m,(0,p_{1})}^{\leq a}\,;  \hat j_{\m, (0,-p_{1})}^{\leq a} \pmb{\rangle}_{\infty} + O(a \beta^{-\theta})\qquad \text{for some $\theta > 0$}; 
\ee
thus, thanks to Proposition \ref{prp:relref}, this allows to prove that $|\frac{1}{\beta L}\langle {\bf T}\, \hat j_{\m,(0,p_{1})}^{\leq a}\,;  \hat j_{\m, (0,-p_{1})}^{\leq a} \rangle_{\beta, L}|\leq Ca$. A similar bound holds for $B$. This concludes the check of the last two assumptions in Eq. (\ref{eq:AB}). Hence, we can apply Lemma \ref{lem:wick} to analytically continue the real times transport coefficient to imaginary times, for finite $\beta, L$. The existence of the $\beta, L\to \infty$ limit can be proven as in \cite{GMP}, item $ii)$ of Proposition 6.1 (see also Proposition 5.2 of \cite{MP}); it follows from the existence of the infinite volume dynamics, that can be proven via Lieb-Robinson bounds, and from the existence of the zero temperature, infinite volume Gibbs state, that can be proven with the RG methods discussed in this paper. Taking also the $T\to\infty$ limit, Eq. (\ref{eq:Dkk'0}) follows. The proof for the spin transport coefficients $G^{\underline{a},s}_{\m\n}(\underline{p})$ is exactly the same. This concludes the proof of Proposition \ref{prp:wick}.\qed

\end{document}